\crefname{lstlisting}{listing}{listings}
\Crefname{lstlisting}{Listing}{Listings}
\DeclareFontFamily{OMX}{MnSymbolE}{}
\DeclareSymbolFont{MnLargeSymbols}{OMX}{MnSymbolE}{m}{n}
\DeclareFontShape{OMX}{MnSymbolE}{m}{n}{
    <-6>  MnSymbolE5
   <6-7>  MnSymbolE6
   <7-8>  MnSymbolE7
   <8-9>  MnSymbolE8
   <9-10> MnSymbolE9
  <10-12> MnSymbolE10
  <12->   MnSymbolE12
}{}
\DeclareFontShape{OMX}{MnSymbolE}{b}{n}{
    <-6>  MnSymbolE-Bold5
   <6-7>  MnSymbolE-Bold6
   <7-8>  MnSymbolE-Bold7
   <8-9>  MnSymbolE-Bold8
   <9-10> MnSymbolE-Bold9
  <10-12> MnSymbolE-Bold10
  <12->   MnSymbolE-Bold12
}{}
\let\llangle\@undefined
\let\rrangle\@undefined
\DeclareMathDelimiter{\llangle}{\mathopen}%
                     {MnLargeSymbols}{'164}{MnLargeSymbols}{'164}
\DeclareMathDelimiter{\rrangle}{\mathclose}%
                     {MnLargeSymbols}{'171}{MnLargeSymbols}{'171}
\newcommand*\NewstuffPadding{\columnwidth - 2mm}%
\newenvironment{Newstuff}%
{\vspace{\baselineskip - 2mm}\leavevmode\newline\begin{Sbox}%
\setlength{\abovedisplayskip}{0pt}%
\setlength{\belowdisplayskip}{0pt}%
    \begin{minipage}%
    }%
  {\end{minipage}\end{Sbox}\fbox{\TheSbox}\vspace{\baselineskip - 2mm}\leavevmode\newline}
\definecolor{newcolor}{gray}{0.95}
\newcommand\Keyword[1]{\texttt{\textbf{#1}}}
\newcommand\Syntax\texttt
\newcommand\Meta\textit
\newcommand\PolyVGR{PolyVGR\xspace}
\newcommand\semicolon{;\,}
\newcommand\Nchan{\alpha}            
\newcommand\Nchanb\beta
\newcommand\Int{\Terminal{Int}}
\newcommand\Inp[1]{{\Syntax{?}#1}.}
\newcommand\Outp[1]{{\Syntax{!}#1}.}
\newcommand\Chan{\Keyword{Chan}~}
\newcommand\Unit{\Keyword{Unit}}
\newcommand\Ap[1]{{[#1]}} 
\newcommand\KWReceive{\Syntax{receive}}
\newcommand\KWAccept{\Syntax{accept}}
\newcommand\KWSend{\Syntax{send}}
\newcommand\KWNew{\Syntax{new}}
\newcommand\Send[1]{\KWSend\,#1\,\Syntax{on}\,}
\newcommand\Receive{\KWReceive\,}
\newcommand\Accept{\KWAccept\,}
\newcommand\Request{\Syntax{request}\,}
\newcommand\New{\KWNew\,}
\newcommand\Hole\Box
\newcommand\ThreadOpen\langle
\newcommand\ThreadClose\rangle
\newcommand\Fresh{\textup{\textrm{fresh}}\,}
\newcommand\Dual\overline
\newcommand\Dom[1]{\textup{\textrm{dom}} (#1)}
\newcommand\ReduceTo\rightarrow
\newcommand\RLabel\ell          
\newcommand\ELabel\tau          
\newcommand\Pcong\equiv          
\newcommand\FSTequiv\equiv
\newcommand*\Terminal[1]{\operatorname{\mathsf{#1}}}
\newcommand*\Label{\ell}
\newcommand*\LabelA{1}
\newcommand*\LabelB{2}
\newcommand*\Kind{K}
\newcommand*\KType{\Terminal{Type}}
\newcommand*\KSt{\Terminal{State}}
\newcommand*\KSession{\Terminal{Session}}
\newcommand*\KShape{\Terminal{Shape}}
\newcommand*\KDomain[1]{\Terminal{Dom}({#1})}
\newcommand*\KTypeFun[1]{\KDomain{#1} \to \KType}
\newcommand*\KStFun[1]{\KDomain{#1} \to \KSt}
\newcommand*\Cfg{C}
\newcommand*\CExp[1]{#1}
\newcommand*\CPar[2]{{#1}\operatorname{\|}{#2}}
\newcommand*\CBindAP[2]{\nu{#1}\colon\TAccessPoint{#2}.\ }
\newcommand*\CBindChan[3]{\nu\StBind{{#1},{#2}}{#3}.\ }
\newcommand*\ELet[2]{\Terminal{let}{#1}={#2}\Terminal{in}}
\newcommand*\EChan{\Terminal{chan}}
\newcommand*\EFork{\Terminal{fork}}
\newcommand*\Exp{e}
\newcommand*\EApp[2]{{#1}\ {#2}}
\newcommand*\ENew{\Terminal{new}}
\newcommand*\ERequest{\Terminal{request}}
\newcommand*\EAccept{\Terminal{accept}}
\newcommand*\ESend[2]{\Terminal{send}{#1}\Terminal{on}{#2}}
\newcommand*\ERecv{\Terminal{receive}}
\newcommand*\ESelect[1]{\Terminal{select}{#1}\Terminal{on}}
\newcommand*\ECase[3]{\Terminal{case}{#1}\Terminal{of}\{{#2}\semicolon{#3}\}}
\newcommand*\EMatch[3]{\Terminal{match}{#1}\Terminal{of}\{{#2}\semicolon{#3}\}}
\newcommand*\EClose{\Terminal{close}}
\newcommand*\EProj[2]{\mathop{\pi_{#1}}{#2}}
\newcommand*\EInj[2]{\mathop{\Terminal{inj}_{#1}}{#2}}
\newcommand*\VInj[2]{\mathop{\Terminal{vinj}_{#1}}{#2}}
\newcommand*\ETApp[2]{{#1}[{#2}]}
\newcommand*\THL[1]{\textcolor{blue}{#1}}
\newcommand*\Val{v}
\newcommand*\EAbs[3]{\lambda(\THL{#1}\semicolon{#2}\colon{#3}).}
\newcommand*\ETAbs[3]{
  \ifthenelse { \equal {#3} {\CstrTrue} }
    { \Lambda({#1}\colon {#2}). }
    {\Lambda({#1}\colon {#2}).\ {#3}\Rightarrow}
}
\newcommand*\EUnit{\Terminal{unit}}
\newcommand*\EVar{\EVarX}
\newcommand*\EVarX{x}
\newcommand*\EVarY{y}
\newcommand*\EPair[2]{({#1},{#2})}
\newcommand*\Ses{S}
\newcommand*\SRecv[4]{\operatorname?(\exists{#1}\colon{#2}.{#3}\semicolon{#4}).}
\newcommand*\SSend[4]{\operatorname!(\exists{#1}\colon{#2}.{#3}\semicolon{#4}).}
\newcommand*\SSendBegin[2]{\operatorname!(\exists{#1}\colon{#2}.}
\newcommand*\SSendMid[1]{{#1}\semicolon}
\newcommand*\SSendEnd[1]{{#1}).}
\newcommand*\SEnd{\Terminal{End}}
\newcommand*\SChoice[2]{{#1}\oplus{#2}}
\newcommand*\SBranch[2]{{#1}\mathbin{\&}{#2}}
\newcommand*\SVar{\sigma}
\newcommand*\Typ{T}
\newcommand*\TVar{\alpha}
\newcommand*\TVara{\alpha}
\newcommand*\TVarb{\beta}
\newcommand*\TVarc{\gamma}
\newcommand*\TAll[3]{%
  \ifthenelse { \equal {#3} {\CstrTrue} }
  { \forall({#1}\colon {#2}). }
  { \forall({#1}\colon {#2}).\ {#3}\Rightarrow }
}
\newcommand*\TExists[1]{%
  \ifthenelse{ \equal{#1}{\Empty}}
  {}
  {\exists{#1}.}
}
\newcommand*\TArr[5]{(\THL{#1}\semicolon {#2} \rightarrow \TExists {#3} \THL{#4}\semicolon {#5})}
\newcommand*\TArrBeginX[4]{(\THL{#1}\semicolon {#2} \rightarrow \TExists {#3} \THL{#4}\semicolon}
\newcommand*\TArrEndX{)}
\newcommand*\TArrBegin[2]{(\THL{#1}\semicolon {#2} \rightarrow}
\newcommand*\TArrEnd[3]{ \TExists {#1} \THL{#2}\semicolon {#3})}
\newcommand*\TChan[1]{\Terminal{Chan}{#1}}
\newcommand*\TPair[2]{{#1}\times{#2}}
\newcommand*\TSum[2]{{#1}+{#2}}
\newcommand*\TUnit{\Terminal{Unit}}
\newcommand*\TAccessPoint[1]{[{#1}]}
\newcommand*\Cstr{\mathbb{C}}
\newcommand*\CstrTrue{\top}
\DeclareRobustCommand*{\Disjoint}{
  \hspace{0.6mm}\raisebox{0.8pt}{\scaleobj{0.7}\#}\hspace{0.6mm}
}
\DeclareRobustCommand*{\DisjointAppend}{
  \operatorname{,_{\scaleobj{0.7}\#}}
}
\newcommand*\POLYShape{N}
\newcommand*\POLYDom{D}
\newcommand*\TDomFrom[2]{\operatorname{\varphi_{#1}} {#2}}
\newcommand*\TDomProj[2]{\operatorname{\pi_{#1}} {#2}}
\newcommand*\TDomEither[2]{{#1}+{#2}}
\newcommand*\TDomZero{\ensuremath{*}}
\newcommand*\TShapeSum[2]{{#1}+{#2}}
\newcommand*\TShapePair[2]{{#1}\fatsemi{#2}}
\newcommand*\TShapeOne{\ensuremath{\mathbb{X}}}
\newcommand*\TShapeZero {\ensuremath{\mathbb{I}}}
\newcommand*\TFun{\hat\Typ}
\newcommand*\TLam[2]{\lambda ({#1}\colon \KDomain{#2}).}
\newcommand*\StFun{\hat\St}
\newcommand*\St{\Sigma}
\newcommand*\Ctx{\Gamma}
\newcommand*\StEmpty{{}\Empty{}}
\newcommand*\Empty{\cdot}
\newcommand*\StBind[2]{{#1}\mapsto{#2}}
\newcommand*\CtxRestrictNonDom[1]{\lfloor{#1}\rfloor}
\newcommand*\CtxRestrictOnlyDom[1]{\lceil{#1}\rceil}
\newcommand*\TVarShape{n}
\newcommand*\TVarStFun{\hat\Sigma}
\newcommand*\TVarTFun{\hat T}
\newcommand*\TVarChan{c}
\newcommand*\ECtxSym{\mathcal E}
\newcommand*\ECtx[1]{\ECtxSym[{#1}]}
\newcommand*\ECtxA[1]{\ECtxSym_1[{#1}]}
\newcommand*\ECtxB[1]{\ECtxSym_2[{#1}]}
\newcommand*\CCtxSym{\mathcal C}
\newcommand*\CCtx[1]{\CCtxSym[{#1}]}
\newcommand*\CCtxA[1]{\CCtxSym_1[{#1}]}
\newcommand*\CCtxB[1]{\CCtxSym_2[{#1}]}
\newcommand*\CCtxBegin{\CCtxSym[}
\newcommand*\CCtxEnd{]}
\newcommand*\Cong{\equiv}
\newcommand*\ReducesToE{\hookrightarrow_\Exp}
\newcommand*\ReducesToC{\hookrightarrow_\Cfg}
\newcommand*\Subst[2]{\{{#1}/{#2}\}}
\newcommand*\IndCase[1]{\emph{Case} \textsc{#1}.\ }
\newcommand*\OPE[2]{{#1}\overset{\mathrm{OPE}}\Rightarrow{#2}}
\newcommand*\IsOuter[1]{\mathop{\mathrm{Outer}} {#1}}
\newcommand*\IsValue[1]{\mathop{\mathrm{Value}} {#1}}
\newcommand*\IsComm[1]{\mathop{\mathrm{Comm}} {#1}}
\newcommand*\IsFinal[1]{\mathop{\mathrm{Final}} {#1}}
\newcommand*\IsDeadlock[1]{\mathop{\mathrm{Deadlock}} {#1}}
\newcommand\JVGRValue[3]{
  #1; #2 \mapsto #3
}
\newcommand\JVGRExpr[6]{
  #1; #2; #3 \mapsto #4; #5; #6
}
\newcommand\ruleVGRReceiveD{
  \inferrule[C-ReceiveD]{
    \JVGRValue\Gamma v {\Chan\alpha}
  }{
    \JVGRExpr\Gamma{\Sigma, \alpha: \Inp{D}S}
    {\Receive v}
    \Sigma D {\alpha : S}
  }
}
\newcommand\ruleVGRReceiveS{
  \inferrule[C-ReceiveS]{
    \JVGRValue\Gamma v {\Chan\alpha} \\
    \Fresh d
  }{
    \JVGRExpr\Gamma{\Sigma, \alpha: \Inp{S'}S}
    {\Receive v}
    \Sigma {\Chan d} {d : S', \alpha : S}
  }
}
\newcommand\ruleVGRSendD{
  \inferrule[C-SendD]{
    \JVGRValue \Gamma v {D} \\
    \JVGRValue \Gamma {v'}{\Chan\alpha}
  }{
    \JVGRExpr \Gamma{\Sigma, \alpha: \Outp{D}{S}}
    {\Send v{v'}}
    \Sigma {\Unit}{\alpha:S}
  }
}
\newcommand\ruleVGRSendS{
  \inferrule[C-SendS]{
    \JVGRValue \Gamma v {\Chan\beta} \\
    \JVGRValue \Gamma {v'}{\Chan\alpha}
  }{
    \JVGRExpr \Gamma{\Sigma, \alpha: \Outp{S'}{S}, \beta: S'}
    {\Send v{v'}}
    \Sigma {\Unit}{\alpha:S}
  }
}
\newcommand\ruleVGRAccept{
  \inferrule[C-Accept]{
    \JVGRValue\Gamma v{[S]} \\
    \Fresh c
  }{
    \JVGRExpr\Gamma\Sigma
    {\Accept v}
    \Sigma{\Chan c}{\{c:S\}}
  }
}
\newcommand\ruleVGRRequest{
  \inferrule[C-Request]{
    \JVGRValue\Gamma v{[S]} \\
    \Fresh c
  }{
    \JVGRExpr\Gamma\Sigma
    {\Request v}
    \Sigma{\Chan c}{\{c:\Dual S\}}
  }
}
\newcommand*\PVGRIsKind[2]{{#1}\vdash{#2}}
\newcommand*\PVGRIsCtx[1]{\vdash{#1}}
\newcommand*\PVGRCstrEntail[2]{{#1}\vdash{#2}}
\newcommand*\PVGRHasKind[3]{{#1}\vdash{#2}:{#3}}
\newcommand*\PVGRHasExpType[6]{{#1};{#2}\vdash{#3}:\exists{#4}.{#5};{#6}}
\newcommand*\PVGRHasValType[3]{{#1}\vdash{#2}:{#3}}
\newcommand*\PVGRHasConfType[3]{{#1};{#2}\vdash{#3}}
\newcommand*\PVGREvalType[2]{{#1}\Downarrow{#2}}
\newcommand*\PVGRTypeConv[2]{{#1}\Cong{#2}}
\newcommand*\PVGRHasExpTypeBegin[3]{{#1};{#2}\vdash{#3}:}
\newcommand*\PVGRHasExpTypeEnd[3]{\exists{#1}.{#2};{#3}}
\newcommand\rulePVGRIsKindType{
  \inferrule[KF-Type]{}{
    \PVGRIsKind\Ctx\KType
  }
}
\newcommand\rulePVGRIsKindSession{
  \inferrule[KF-Session]{}{
    \PVGRIsKind\Ctx\KSession
  }
}
\newcommand\rulePVGRIsKindState{
  \inferrule[KF-State]{}{
    \PVGRIsKind\Ctx\KSt
  }
}
\newcommand\rulePVGRIsKindShape{
  \inferrule[KF-Shape]{}{
    \PVGRIsKind\Ctx\KShape
  }
}
\newcommand\rulePVGRIsKindDom{
  \inferrule[KF-Dom]{
    \PVGRHasKind\Ctx \POLYShape \KShape
  }{
    \PVGRIsKind\Ctx{\KDomain \POLYShape}
  }
}
\newcommand\rulePVGRIsKindArr{
  \inferrule[KF-Arr]{
    \PVGRIsKind{\Ctx}{\Kind_1} \\
    \PVGRIsKind{\Ctx}{\Kind_2}
  }{
    \PVGRIsKind{\Ctx}{\Kind_1 \to \Kind_2}
  }
}
\newcommand\rulePVGRIsCtxEmpty{
  \inferrule[CF-Empty]{}{
    \PVGRIsCtx\Empty
  }
}
\newcommand\rulePVGRIsCtxConsKind{
  \inferrule[CF-ConsKind]{
    \PVGRIsCtx\Ctx \\
    \PVGRIsKind\Ctx\Kind \\
    \TVar \not\in \Dom\Ctx
  }{
    \PVGRIsCtx{\Ctx,\TVar:\Kind}
  }
}
\newcommand\rulePVGRIsCtxConsType{
  \inferrule[CF-ConsType]{
    \PVGRIsCtx\Ctx \\
    \PVGRHasKind\Ctx\Typ\KType \\
    \EVar \not\in \Dom\Ctx
  }{
    \PVGRIsCtx{\Ctx,\EVar:\Typ}
  }
}
\newcommand\rulePVGRIsCtxConsCstr{
  \inferrule[CF-ConsCstr]{
    \PVGRIsCtx\Ctx \\
    \PVGRHasKind\Ctx{\POLYDom_1}{\KDomain{\POLYShape_1}} \\
    \PVGRHasKind\Ctx{\POLYDom_2}{\KDomain{\POLYShape_2}}
  }{
    \PVGRIsCtx{\Ctx, \POLYDom_1 \Disjoint \POLYDom_2}
  }
}
\newcommand\rulePVGRCstrEntailAxiom{
  \inferrule[CE-Axiom]{}{
    \PVGRCstrEntail{\Ctx,\POLYDom_1 \Disjoint \POLYDom_2}{\POLYDom_1 \Disjoint \POLYDom_2}
  }
}
\newcommand\rulePVGRCstrEntailZero{
  \inferrule[CE-Empty]{}{
    \PVGRCstrEntail{\Ctx}{\POLYDom \Disjoint \TDomZero}
  }
}
\newcommand\rulePVGRCstrEntailSym{
  \inferrule[CE-Sym]{
    \PVGRCstrEntail\Ctx{\POLYDom_2 \Disjoint \POLYDom_1}
  }{
    \PVGRCstrEntail\Ctx{\POLYDom_1 \Disjoint \POLYDom_2}
  }
}
\newcommand\rulePVGRCstrEntailSplit{
  \inferrule[CE-Split]{
    \PVGRCstrEntail\Ctx{\POLYDom \Disjoint (\POLYDom_1,\POLYDom_2)}
  }{
    \PVGRCstrEntail\Ctx{\POLYDom \Disjoint \POLYDom_1} \\
    \PVGRCstrEntail\Ctx{\POLYDom \Disjoint \POLYDom_2}
  }
}
\newcommand\rulePVGRCstrEntailMerge{
  \inferrule[CE-Merge]{
    \PVGRCstrEntail\Ctx{\POLYDom \Disjoint \POLYDom_1} \\
    \PVGRCstrEntail\Ctx{\POLYDom \Disjoint \POLYDom_2}
  }{
    \PVGRCstrEntail\Ctx{\POLYDom \Disjoint (\POLYDom_1,\POLYDom_2)}
  }
}
\newcommand\rulePVGRCstrEntailEmpty{
  \inferrule[CE-Empty]{}{
    \PVGRCstrEntail\Ctx\Empty
  }
}
\newcommand\rulePVGRCstrEntailCons{
  \inferrule[CE-Cons]{
    \PVGRCstrEntail\Ctx{\Cstr} \\
    \PVGRCstrEntail\Ctx{\POLYDom_1 \Disjoint \POLYDom_2}
  }{
    \PVGRCstrEntail\Ctx{\Cstr, \POLYDom_1 \Disjoint \POLYDom_2}
  }
}
\newcommand\rulePVGRCstrEntailProjMerge{
  \inferrule[CE-ProjMerge]{
    \PVGRCstrEntail\Ctx{\POLYDom_1 \Disjoint \TDomProj 1 {\POLYDom_2}} \\
    \PVGRCstrEntail\Ctx{\POLYDom_1 \Disjoint \TDomProj 2 {\POLYDom_2}}
  }{
    \PVGRCstrEntail\Ctx{\POLYDom_1 \Disjoint \POLYDom_2}
  }
}
\newcommand\rulePVGRCstrEntailProjSplit{
  \inferrule[CE-ProjSplit]{
    \PVGRCstrEntail\Ctx{\POLYDom_1 \Disjoint \POLYDom_2} \\
  }{
    \PVGRCstrEntail\Ctx{\POLYDom_1\Disjoint \TDomProj\Label{\POLYDom_2}}
  }
}
\newcommand\rulePVGRKindingVar{
  \inferrule[K-Var]{}{
    \PVGRHasKind{\Ctx,\TVar:\Kind}\TVar\Kind
  }
}
\newcommand\rulePVGRKindingApp{
  \inferrule[K-App]{
    \PVGRHasKind{\Ctx}{\Typ_1}{\Kind_1 \to \Kind_2} \\
    \PVGRHasKind{\Ctx}{\Typ_2}{\Kind_1}
  }{
    \PVGRHasKind{\Ctx}{\Typ_1~\Typ_2}{\Kind_2}
  }
}
\newcommand\rulePVGRKindingLam{
  \inferrule[K-Lam]{
    \PVGRHasKind{\Ctx}{\POLYShape}{\KShape} \\
    \PVGRHasKind{\CtxRestrictNonDom{\Ctx},\TVar : \KDomain \POLYShape}{\Typ}{\Kind} \\
    \Kind \in \{ \KType, \KSt \}
  }{
    \PVGRHasKind{\Ctx}{\TLam\TVar\POLYShape\Typ}{\KDomain \POLYShape \to \Kind}
  }
}
\newcommand\rulePVGRKindingAll{
  \inferrule[K-All]{
    \PVGRIsCtx{\Ctx, \TVar : \Kind, \Cstr} \\
    \PVGRHasKind{\Ctx, \TVar : \Kind, \Cstr}{\Typ}{\KType}
  }{
    \PVGRHasKind{\Ctx}{\TAll\TVar\Kind{\Cstr}\Typ}{\KType}
  }
}
\newcommand\rulePVGRKindingArr{
  \inferrule[K-Arr]{
    \PVGRHasKind{\Ctx_1}{\St_1}{\KSt} \and
    \PVGRHasKind{\Ctx_1}{\Typ_1}{\KType} \\\\
    \PVGRHasKind{\Ctx_1\DisjointAppend\Ctx_2}{\St_2}{\KSt} \and
    \PVGRHasKind{\Ctx_1\DisjointAppend\Ctx_2}{\Typ_2}{\KType} \\\\
    \PVGRIsCtx{\Ctx_1\DisjointAppend\Ctx_2} \and
    \Ctx_2 = \CtxRestrictOnlyDom{\Ctx_2}
  }{
    \PVGRHasKind
      {\Ctx_1}
      {\TArr{\St_1}{\Typ_1}{\Ctx_2}{\St_2}{\Typ_2}}
      {\KType}
  }
}
\newcommand\rulePVGRKindingChan{
  \inferrule[K-Chan]{
    \PVGRHasKind{\Ctx}{\POLYDom}{\KDomain \TShapeOne}
  }{
    \PVGRHasKind{\Ctx}{\TChan\POLYDom}{\KType}
  }
}
\newcommand\rulePVGRKindingAccessPoint{
  \inferrule[K-AccessPoint]{
    \PVGRHasKind{\Ctx}{\Ses}{\KSession}
  }{
    \PVGRHasKind{\Ctx}{\TAccessPoint\Ses}{\KType}
  }
}
\newcommand\rulePVGRKindingUnit{
  \inferrule[K-Unit]{
  }{
    \PVGRHasKind{\Ctx}{\TUnit}{\KType}
  }
}
\newcommand\rulePVGRKindingPair{
  \inferrule[K-Pair]{
    \PVGRHasKind{\Ctx}{\Typ_1}{\KType} \\
    \PVGRHasKind{\Ctx}{\Typ_2}{\KType}
  }{
    \PVGRHasKind{\Ctx}{\TPair{\Typ_1}{\Typ_2}}{\KType}
  }
}
\newcommand\rulePVGRKindingSum{
  \inferrule[K-Sum]{
    \PVGRHasKind{\Ctx}{\Typ_1}{\KType} \\
    \PVGRHasKind{\Ctx}{\Typ_2}{\KType}
  }{
    \PVGRHasKind{\Ctx}{\TSum{\Typ_1}{\Typ_2}}{\KType}
  }
}
\newcommand\rulePVGRKindingSend{
  \inferrule[K-Send]{
    \PVGRHasKind{\Ctx}{\POLYShape}{\KShape} \\
    \PVGRHasKind{\CtxRestrictNonDom{\Ctx},\TVar : \KDomain\POLYShape}{\St}{\KSt} \\
    \PVGRHasKind{\CtxRestrictNonDom{\Ctx},\TVar : \KDomain\POLYShape}{\Typ}{\KType} \\
    \PVGRHasKind{\Ctx}{\Ses}{\KSession}
  }{
    \PVGRHasKind{\Ctx}{\SSend\TVar{\KDomain\POLYShape}\St\Typ\Ses}{\KSession}
  }
}
\newcommand\rulePVGRKindingRecv{
  \inferrule[K-Recv]{
    \PVGRHasKind{\Ctx}{\POLYShape}{\KShape} \\
    \PVGRHasKind{\CtxRestrictNonDom{\Ctx},\TVar : \KDomain\POLYShape}{\St}{\KSt} \\
    \PVGRHasKind{\CtxRestrictNonDom{\Ctx},\TVar : \KDomain\POLYShape}{\Typ}{\KType} \\
    \PVGRHasKind{\Ctx}{\Ses}{\KSession}
  }{
    \PVGRHasKind{\Ctx}{\SRecv\TVar{\KDomain\POLYShape}\St\Typ\Ses}{\KSession}
  }
}
\newcommand\rulePVGRKindingBranch{
  \inferrule[K-Branch]{
    \PVGRHasKind{\Ctx}{\Ses_1}{\KSession} \\
    \PVGRHasKind{\Ctx}{\Ses_2}{\KSession}
  }{
    \PVGRHasKind{\Ctx}{\SBranch{\Ses_1}{\Ses_2}}{\KSession}
  }
}
\newcommand\rulePVGRKindingChoice{
  \inferrule[K-Choice]{
    \PVGRHasKind{\Ctx}{\Ses_1}{\KSession} \\
    \PVGRHasKind{\Ctx}{\Ses_2}{\KSession}
  }{
    \PVGRHasKind{\Ctx}{\SChoice{\Ses_1}{\Ses_2}}{\KSession}
  }
}
\newcommand\rulePVGRKindingEnd{
  \inferrule[K-End]{}{
    \PVGRHasKind{\Ctx}{\SEnd}{\KSession}
  }
}
\newcommand\rulePVGRKindingDual {
  \inferrule[K-Dual]{
    \PVGRHasKind{\Ctx}{\Ses}{\KSession}
  }{
    \PVGRHasKind{\Ctx}{\Dual\Ses}{\KSession}
  }
}
\newcommand\rulePVGRKindingDomMerge{
  \inferrule[K-DomMerge]{
    \PVGRHasKind{\Ctx}{\POLYDom_1}{\KDomain {\POLYShape_1}} \\
    \PVGRHasKind{\Ctx}{\POLYDom_2}{\KDomain {\POLYShape_2}} \\
    \PVGRCstrEntail\Ctx{\POLYDom_1 \Disjoint \POLYDom_2}
  }{
    \PVGRHasKind{\Ctx}{\POLYDom_1,\POLYDom_2}{\KDomain {\TShapePair{\POLYShape_1}{\POLYShape_2}}}
  }
}
\newcommand\rulePVGRKindingDomProj{
  \inferrule[K-DomProj]{
    \PVGRHasKind{\Ctx}{\POLYDom}{\KDomain {\TShapePair{\POLYShape_1}{\POLYShape_2}}}
  }{
    \PVGRHasKind{\Ctx}{\TDomProj \Label \POLYDom}{\KDomain {\POLYShape_\Label}}
  }
}
\newcommand\rulePVGRKindingDomFrom{
  \inferrule[K-DomFrom]{
    \PVGRHasKind{\Ctx}{\POLYDom}{\KDomain {\TShapeSum{\POLYShape_1}{\POLYShape_2}}}
  }{
    \PVGRHasKind{\Ctx}{\TDomFrom \Label \POLYDom}{\KDomain {\POLYShape_\Label}}
  }
}
\newcommand\rulePVGRKindingDomSum{
  \inferrule[K-DomSum]{
    (\forall\Label)\ \PVGRHasKind{\Ctx}{\POLYDom_\Label}{\KDomain {\POLYShape_\Label}}
  }{
    \PVGRHasKind{\Ctx}{\TDomEither{ \POLYDom_\LabelA}{ \POLYDom_\LabelB}}{\KDomain {\TShapeSum{\POLYShape_\LabelA}{\POLYShape_\LabelB}}}
  }
}
\newcommand\rulePVGRKindingDomZero{
  \inferrule[K-DomEmpty]{}{
    \PVGRHasKind{\Ctx}{\TDomZero}{\KDomain {\TShapeZero}}
  }
}
\newcommand\rulePVGRKindingShapeZero{
  \inferrule[K-ShapeEmpty]{}{
    \PVGRHasKind{\Ctx}{\TShapeZero}{\KShape}
  }
}
\newcommand\rulePVGRKindingShapeOne{
  \inferrule[K-ShapeChan]{}{
    \PVGRHasKind{\Ctx}{\TShapeOne}{\KShape}
  }
}
\newcommand\rulePVGRKindingShapePair{
  \inferrule[K-ShapePair]{
    \PVGRHasKind{\Ctx}{\POLYShape_1}{\KShape} \\
    \PVGRHasKind{\Ctx}{\POLYShape_2}{\KShape}
  }{
    \PVGRHasKind{\Ctx}{\TShapePair{\POLYShape_1}{\POLYShape_2}}{\KShape}
  }
}
\newcommand\rulePVGRKindingShapeSum{
  \inferrule[K-ShapeSum]{
    \PVGRHasKind{\Ctx}{\POLYShape_1}{\KShape} \\
    \PVGRHasKind{\Ctx}{\POLYShape_2}{\KShape}
  }{
    \PVGRHasKind{\Ctx}{\TShapeSum{\POLYShape_1}{\POLYShape_2}}{\KShape}
  }
}
\newcommand\rulePVGRKindingStEmpty{
  \inferrule[K-StEmpty]{}{
    \PVGRHasKind{\Ctx}{\Empty}{\KSt}
  }
}
\newcommand\rulePVGRKindingStChan{
  \inferrule[K-StChan]{
    \PVGRHasKind{\Ctx}{\POLYDom}{\KDomain \TShapeOne} \\
    \PVGRHasKind{\Ctx}{\Ses}{\KSession}
  }{
    \PVGRHasKind{\Ctx}{\StBind\POLYDom\Ses}{\KSt}
  }
}
\newcommand\rulePVGRKindingStMerge{
  \inferrule[K-StMerge]{
    \PVGRHasKind{\Ctx}{\St_1}{\KSt} \\
    \PVGRHasKind{\Ctx}{\St_2}{\KSt} \\
    \PVGRCstrEntail{\Ctx}{\Dom{\St_1} \Disjoint \Dom{\St_2}} \\
  }{
    \PVGRHasKind{\Ctx}{\St_1, \St_2}{\KSt}
  }
}
\newcommand\rulePVGRTypingVar{
  \inferrule[T-Var]{}{
    \PVGRHasValType{\Ctx,\EVar:\Typ}\EVar\Typ
  }
}
\newcommand\rulePVGRTypingChan{
  \inferrule[T-Chan]{
    \PVGRHasKind{\Ctx}{\POLYDom}{\KDomain{\TShapeOne}}
  }{
    \PVGRHasValType{\Ctx}{\EChan\POLYDom}{\TChan\POLYDom}
  }
}
\newcommand\rulePVGRTypingAbs{
  \inferrule[T-Abs]{
    \PVGRHasKind
      {\Ctx_1}
      {\TArr {\St_1} {\Typ_1} {\Ctx_2} {\St_2} {\Typ_2}}
      {\KType}
    \\
    \PVGRHasExpType
      {\Ctx_1, \EVar : \Typ_1}
      {\St_1}
      {\Exp}
      {\Ctx_2}
      {\St_2}
      {\Typ_2}
  }{
    \PVGRHasValType
      {\Ctx_1}
      {\EAbs{\St_1}\EVar{\Typ_1}\Exp}
      {\TArr {\St_1} {\Typ_1} {\Ctx_2} {\St_2} {\Typ_2}}
  }
}
\newcommand\rulePVGRTypingTAbs{
  \inferrule[T-TAbs]{
    \PVGRHasKind
      {\Ctx}
      {\TAll {\TVar} {\Kind} {\Cstr} {\Typ}}
      {\KType}
    \\
    \PVGRHasValType
      {\Ctx, \TVar : \Kind, \Cstr}
      {\Val}
      {\Typ}
  }{
    \PVGRHasValType
      {\Ctx}
      {\ETAbs {\TVar} {\Kind} {\Cstr} {\Val}}
      {\TAll {\TVar} {\Kind} {\Cstr} {\Typ}}
  }
}
\newcommand\rulePVGRTypingUnit{
  \inferrule[T-Unit]{}{
    \PVGRHasValType
      {\Ctx}
      {\EUnit}
      {\TUnit}
  }
}
\newcommand\rulePVGRTypingPair{
  \inferrule[T-Pair]{
    \PVGRHasValType
      {\Ctx}
      {\Val_1}
      {\Typ_1}
    \\
    \PVGRHasValType
      {\Ctx}
      {\Val_2}
      {\Typ_2}
  }{
    \PVGRHasValType
      {\Ctx}
      {\EPair {\Val_1} {\Val_2}}
      {\TPair {\Typ_1} {\Typ_2}}
  }
}
\newcommand\rulePVGRTypingVal{
  \inferrule[T-Val]{
    \PVGRHasValType{\Ctx}{\Val}{\Typ}
  }{
    \PVGRHasExpType
      {\Ctx}
      {\St}
      {\Val}
      {\Empty}
      {\St}
      {\Typ}
  }
}
\newcommand\rulePVGRTypingNew{
  \inferrule[T-New]{
    \PVGRHasKind{\Ctx}{\Ses}{\KSession}
  }{
    \PVGRHasExpType
      {\Ctx}
      {\Empty}
      {\ENew \Ses}
      {\Empty}
      {\Empty}
      {\TAccessPoint\Ses}
  }
}
\newcommand\rulePVGRTypingRequest{
  \inferrule[T-Request]{
    \PVGRHasValType{\Ctx}{\Val}{\TAccessPoint\Ses}
  }{
    \PVGRHasExpType
      {\Ctx}
      {\Empty}
      {\ERequest\Val}
      {\TVar:\KDomain\TShapeOne}
      {\StBind\TVar\Ses}
      {\TChan \TVar}
  }
}
\newcommand\rulePVGRTypingAccept{
  \inferrule[T-Accept]{
    \PVGRHasValType{\Ctx}{\Val}{\TAccessPoint\Ses}
  }{
    \PVGRHasExpType
      {\Ctx}
      {\Empty}
      {\EAccept\Val}
      {\TVar:\KDomain\TShapeOne}
      {\StBind\TVar{\Dual\Ses}}
      {\TChan \TVar}
  }
}
\newcommand\rulePVGRTypingLet{
  \inferrule[T-Let]{
    \PVGRHasExpType
      {\Ctx_1}
      {\St_1}
      {\Exp_1}
      {\Ctx_2}
      {\St_2'}
      {\Typ_1}
    \and
    \PVGRHasExpType
      {\Ctx_1 \DisjointAppend \Ctx_2, \EVar:\Typ_1}
      {\St_2,\St_2'}
      {\Exp_2}
      {\Ctx_3}
      {\St_3}
      {\Typ_2}
    \\
    \PVGRHasKind
      {\Ctx_1 \DisjointAppend \Ctx_2, \EVar:\Typ_1}
      {\St_2,\St_2'}
      {\KSt}
  }{
    \PVGRHasExpType
      {\Ctx_1}
      {\St_1,\St_2}
      {\ELet\EVar{\Exp_1}{\Exp_2}}
      {\Ctx_2,\Ctx_3}
      {\St_3}
      {\Typ_2}
  }
}
\newcommand\rulePVGRTypingFork{
  \inferrule[T-Fork]{
    \PVGRHasValType
      {\Ctx}
      {\Val}
      {\TArr{\St}{\TUnit}{\cdot}{\cdot}{\TUnit}}
  }{
    \PVGRHasExpType
      {\Ctx}
      {\St}
      {\EFork\Val}
      {\Empty}
      {\Empty}
      {\TUnit}
  }
}
\newcommand\rulePVGRTypingApp{
  \inferrule[T-App]{
    \PVGRHasValType
      {\Ctx_1}
      {\Val_1}
      {\TArr{\St_1}{\Typ_1}{\Ctx_2}{\St_2}{\Typ_2}}
    \\
    \PVGRHasValType
      {\Ctx_1}
      {\Val_2}
      {\Typ_1}
  }{
    \PVGRHasExpType
      {\Ctx_1}
      {\St_1}
      {\EApp{\Val_1}{\Val_2}}
      {\Ctx_2}
      {\St_2}
      {\Typ_2}
  }
}
\newcommand\rulePVGRTypingSend{
  \inferrule[T-Send]{
    \PVGRHasKind \Ctx {\POLYDom'} {\KDomain\POLYShape} \and
    \PVGRTypeConv {\Subst{\POLYDom'}{\TVar'}{\St'}} \St \and
    \PVGRTypeConv {\Subst{\POLYDom'}{\TVar'}{\Typ'}} \Typ \\
    \PVGRHasKind \Ctx {\POLYDom} {\KDomain\TShapeOne} \and
    \PVGRHasValType{\Ctx}{\Val_1}{\Typ} \and
    \PVGRHasValType{\Ctx}{\Val_2}{\TChan \POLYDom} \\
  }{
    \PVGRHasExpType
      {\Ctx}
      {\St, \StBind\POLYDom{\SSend{\TVar'}{\KDomain\POLYShape}{\St'}{\Typ'}\Ses}}
      {\ESend{\Val_1}{\Val_2}}
      {\Empty}
      {\StBind\POLYDom\Ses}
      {\TUnit}
  }
}
\newcommand\rulePVGRTypingRecv{
  \inferrule[T-Recv]{
    \PVGRHasKind \Ctx {\POLYDom} {\KDomain\TShapeOne} \and
    \PVGRHasValType{\Ctx}{\Val}{\TChan \POLYDom}
  }{
    \PVGRHasExpTypeBegin
      {\Ctx}
      {\StBind\POLYDom{\SRecv{\TVar'}{\KDomain\POLYShape}{\St'}{\Typ'}\Ses}}
      {\ERecv{\Val}}
    \\
    \PVGRHasExpTypeEnd
      {(\TVar' : \KDomain\POLYShape)}
      {\St', \StBind\POLYDom\Ses}
      {\Typ'}
  }
}
\newcommand\rulePVGRTypingSelect{
  \inferrule[T-Select]{
    \PVGRHasValType{\Ctx}{\Val}{\TChan\POLYDom}
  }{
    \PVGRHasExpType
      {\Ctx}
      {\StBind\POLYDom{\SChoice{\Ses_1}{\Ses_2}}}
      {\ESelect \Label \Val}
      {\Empty}
      {\StBind\POLYDom{\Ses_\Label}}
      {\TUnit}
  }
}
\newcommand\rulePVGRTypingCase{
  \inferrule[T-Case]{
    \PVGRHasValType{\Ctx_1}{\Val}{\TChan\POLYDom} \\
    (\forall \Label)\
    \PVGRHasExpType
      {\Ctx_1}
      {\St_1, \StBind\POLYDom{\Ses_\Label}}
      {\Exp_\Label}
      {\Ctx_2}
      {\St_2}
      {\Typ}
  }{
    \PVGRHasExpType
      {\Ctx_1}
      {\St_1, \StBind\POLYDom{\SBranch{\Ses_1}{\Ses_2}}}
      {\ECase \Val {\Exp_1} {\Exp_2}}
      {\Ctx_2}
      {\St_2}
      {\Typ}
  }
}
\newcommand\rulePVGRTypingMatch{
  \inferrule[T-Match]{
    \PVGRHasKind\Ctx{\POLYDom}{\KDomain{\TShapeSum{\POLYShape_\LabelA}{\POLYShape_\LabelB}}}\and
    (\forall\Label)\ \PVGRHasKind\Ctx{\StFun_\ell}{\KStFun{\POLYShape_\Label}} \\
    (\forall\Label)\ \PVGRHasKind\Ctx{\TFun_\ell}{\KTypeFun{\POLYShape_\Label}} \\
    (\forall\Label)\ \PVGRTypeConv {\TFun_\Label(\TDomFrom\Label\POLYDom)} {\Typ_\Label}
    \\
    \PVGRHasValType{\Ctx}{\Val}{\TSum{\Typ_\LabelA}{\Typ_\LabelB}} \and
    (\forall \Label)\
    \PVGRHasExpType
      {\Ctx, x_\Label : \TFun_\Label(\TDomFrom\Label\POLYDom)}
      {\St, \StFun_\Label(\TDomFrom\Label\POLYDom)}
      {\Exp_\Label}
      {\Ctx'}
      {\St'}
      {\Typ}
  }{
    \PVGRHasExpType
      {\Ctx}
      {\St, \StFun_1 (\TDomFrom\LabelA \POLYDom), \StFun_2 (\TDomFrom\LabelB \POLYDom)}
      {\EMatch \Val {x_1:\Exp_1} {x_2:\Exp_2}}
      {\Ctx'}
      {\St'}
      {\Typ}
  }
}
\newcommand\rulePVGRTypingClose{
  \inferrule[T-Close]{
    \PVGRHasValType{\Ctx}{\Val}{\TChan\POLYDom}
  }{
    \PVGRHasExpType
      {\Ctx}
      {\StBind\POLYDom\SEnd}
      {\EClose\Val}
      {\Empty}
      {\Empty}
      {\TUnit}
  }
}
\newcommand\rulePVGRTypingProj{
  \inferrule[T-Proj]{
    \PVGRHasValType{\Ctx}{\Val}{\TPair{\Typ_1}{\Typ_2}}
  }{
    \PVGRHasExpType
      {\Ctx}
      {\Empty}
      {\EProj \Label \Val}
      {\Empty}
      {\Empty}
      {\Typ_\Label}
  }
}
\newcommand\rulePVGRTypingInj{
  \inferrule[T-Inj1]{
    (\forall\Label)\ \PVGRHasKind\Ctx{\StFun_\ell}{\KStFun{\POLYShape_\Label}} \and
    (\forall\Label)\ \PVGRHasKind\Ctx{\TFun_\ell}{\KTypeFun{\POLYShape_\Label}} \\
    \PVGRHasValType{\Ctx}{\Val}{\Typ} \\
    \PVGRHasKind\Ctx\POLYDom{\KDomain{\POLYShape_\LabelA}} \\
    \PVGRTypeConv {\StFun_\LabelA~\POLYDom} \St \\
    \PVGRTypeConv {\TFun_\LabelA~\POLYDom} \Typ
  }{
    \PVGRHasExpType
      {\Ctx}
      {\St}
      {\EInj \LabelA \Val}
      {\Nchanb : \KDomain{{\POLYShape_2}}}
      {\StFun_\LabelA~\POLYDom, \StFun_\LabelB~\Nchanb}
      {\TSum{\TFun_\LabelA~\POLYDom}{\TFun_\LabelB~\Nchanb}}
  }
}
\newcommand\rulePVGRTypingTApp{
  \inferrule[T-TApp]{
    \PVGRHasValType
      {\Ctx}
      {\Val}
      {\TAll\TVar\Kind{\Cstr}\Typ}
    \\
    \PVGRHasKind
      {\Ctx}
      {\Typ'}
      {\Kind}
    \\
    \PVGRCstrEntail
      {\Ctx}
      {\Subst{\Typ'}\TVar{\Cstr}}
    \\
    \PVGRTypeConv
      {\Subst{\Typ'}\TVar\Typ}
      {\Typ''}
  }{
    \PVGRHasExpType
      {\Ctx}
      {\Empty}
      {\ETApp\Val{\Typ'}}
      {\Empty}
      {\Empty}
      {\Typ''}
  }
}
\newcommand\rulePVGRTypingExp{
  \inferrule[T-Exp]{
    \PVGRHasExpType
      {\Ctx}
      {\St}
      {\Exp}
      {\Ctx'}
      {\Empty}
      {\Typ}
  }{
    \PVGRHasConfType{\Ctx}{\St}{\Exp}
  }
}
\newcommand\rulePVGRTypingPar{
  \inferrule[T-Par]{
    \PVGRHasConfType{\Ctx}{\St_1}{\Cfg_1} \\
    \PVGRHasConfType{\Ctx}{\St_2}{\Cfg_2}
  }{
    \PVGRHasConfType{\Ctx}{\St_1,\St_2}{\CPar{\Cfg_1}{\Cfg_2}}
  }
}
\newcommand\rulePVGRTypingBindChan{
  \inferrule[T-NuChan]{
    \TVar,\TVar' \text{ not free in } \Ctx \\
    \PVGRHasKind
      {\Ctx}
      {\Ses}
      {\KSession} \\
    \PVGRHasConfType
      {\Ctx \DisjointAppend \TVar:\KDomain\TShapeOne \DisjointAppend \TVar':\KDomain\TShapeOne}
      {\St,\StBind\TVar\Ses,\StBind{\TVar'}{\Dual\Ses}}
      {\Cfg}
  }{
    \PVGRHasConfType
      {\Ctx}
      {\St}
      {\CBindChan\TVar{\TVar'}\Ses\Cfg}
  }
}
\newcommand\rulePVGRTypingBindChanClosed{
  \inferrule[T-NuChanClosed]{
    \TVar,\TVar' \text{ not free in } \Ctx \\
    \PVGRHasConfType
      {\Ctx \DisjointAppend \TVar:\KDomain\TShapeOne \DisjointAppend \TVar':\KDomain\TShapeOne}
      {\St}
      {\Cfg}
  }{
    \PVGRHasConfType
      {\Ctx}
      {\St}
      {\CBindChan\TVar{\TVar'}\SEnd\Cfg}
  }
}
\newcommand\rulePVGRTypingBindAP{
  \inferrule[T-NuAccess]{
    \EVar \text{ not free in } \Ctx \\
    \PVGRHasKind
      {\Ctx}
      {\Ses}
      {\KSession} \\
    \PVGRHasConfType
      {\Ctx, \EVar:\TAccessPoint\Ses}
      {\St}
      {\Cfg}
  }{
    \PVGRHasConfType
      {\Ctx}
      {\St}
      {\CBindAP\EVar\Ses\Cfg}
  }
}
\newcommand\rulePVGRTypeConvTApp{
  \inferrule[TC-TApp]{}{
    \PVGRTypeConv
      {(\TLam\TVar\POLYShape{\Typ_1})~\Typ_2}
      {\Subst{\Typ_2}{\TVar}{\Typ_1}}
  }
}
\newcommand\rulePVGRTypeConvDualSend{
  \inferrule[TC-DualSend]{}{
    \PVGRTypeConv
      {\Dual{\SSend\TVar{\KDomain\POLYShape}\St\Typ\Ses}}
      {\SRecv\TVar{\KDomain\POLYShape}\St\Typ{\Dual\Ses}}
  }
}
\newcommand\rulePVGRTypeConvDualRecv{
  \inferrule[TC-DualRecv]{}{
    \PVGRTypeConv
      {\Dual{\SRecv\TVar{\KDomain\POLYShape}\St\Typ\Ses}}
      {\SSend\TVar{\KDomain\POLYShape}\St\Typ{\Dual\Ses}}
  }
}
\newcommand\rulePVGRTypeConvDualChoice{
  \inferrule[TC-DualChoice]{}{
    \PVGRTypeConv
      {\Dual{\SChoice{\Ses_1}{\Ses_2}}}
      {\SBranch{\Dual{\Ses_1}}{\Dual{\Ses_2}}}
  }
}
\newcommand\rulePVGRTypeConvDualBranch{
  \inferrule[TC-DualBranch]{}{
    \PVGRTypeConv
      {\Dual{\SBranch{\Ses_1}{\Ses_2}}}
      {\SChoice{\Dual{\Ses_1}}{\Dual{\Ses_2}}}
  }
}
\newcommand\rulePVGRTypeConvDualEnd{
  \inferrule[TC-DualEnd]{}{
    \PVGRTypeConv
      {\Dual\SEnd}
      {\SEnd}
  }
}
\newcommand\rulePVGRTypeConvDualVar{
  \inferrule[TC-DualVar]{}{
    \PVGRTypeConv
      {\Dual{\Dual\TVar}}
      {\TVar}
  }
}
\newcommand\rulePVGRTypeConvProj{
  \inferrule[TC-Proj]{}{
    \PVGRTypeConv
      {\TDomProj \Label (\POLYDom_1,\POLYDom_2)}
      {\POLYDom_\Label}
  }
}
\newcommand\rulePVGRExprRedBetaFun{
  \inferrule[ER-BetaFun]{}{
    (\EAbs{\St}{\EVar}{\Typ}{\Exp_1})~\Val_2 \ReducesToE \Subst{\Val_2}{\EVar} \Exp_1
  }
}
\newcommand\rulePVGRExprRedBetaAll{
  \inferrule[ER-BetaAll]{}{
    \ETApp{(\ETAbs{\TVar}{\Kind}{\Cstr}{\Val})}{\Typ} \ReducesToE \Subst{\Typ}{\TVar} \Val
  }
}
\newcommand\rulePVGRExprRedBetaLet{
  \inferrule[ER-BetaLet]{}{
    \ELet{\EVar}{\Val_1}{\Exp_2} \ReducesToE \Subst{\Val_1}{\EVar} \Exp_2
  }
}
\newcommand\rulePVGRExprRedBetaPair{
  \inferrule[ER-BetaPair]{}{
    \EProj \Label {\EPair{\Val_1}{\Val_2}} \ReducesToE \Val_\Label
  }
}
\newcommand\rulePVGRExprRedLift{
  \inferrule[ER-Lift]{
    \Exp_1 \ReducesToE \Exp_2
  }{
    \ELet\EVar{\Exp_1}\Exp \ReducesToE \ELet\EVar{\Exp_2}\Exp
  }
}
\newcommand\rulePVGRCongNull{
  \inferrule[CC-Null]{}{
    \CPar{\Cfg}{\EUnit} \Cong {\Cfg}
  }
}
\newcommand\rulePVGRCongComm{
  \inferrule[CC-Comm]{}{
    \CPar{\Cfg_1}{\Cfg_2} \Cong \CPar{\Cfg_2}{\Cfg_1}
  }
}
\newcommand\rulePVGRCongAssoc{
  \inferrule[CC-Assoc]{}{
    \CPar{\Cfg_1}{(\CPar{\Cfg_2}{\Cfg_3})} \Cong \CPar{(\CPar{\Cfg_1}{\Cfg_2})}{\Cfg_3}
  }
}
\newcommand\rulePVGRCongScopeChan{
  \inferrule[CC-Scope-Chan]{
    \TVar,\TVar' \text{ not free in } \Cfg_1
  }{
    \CPar{\Cfg_1}{(\CBindChan\TVar{\TVar'}\Ses\Cfg_2)} \Cong \CBindChan\TVar{\TVar'}\Ses{(\CPar{\Cfg_1}{\Cfg_2})}
  }
}
\newcommand\rulePVGRCongSwap{
  \inferrule[CC-Swap]{}{
    {\CBindChan\TVar{\TVar'}\Ses\Cfg} \Cong \CBindChan {\TVar'}\TVar{\Dual\Ses}\Cfg
  }
}
\newcommand\rulePVGRCongScopeAP{
  \inferrule[CC-Scope-Access]{
    \EVar \text{ not free in } \Cfg_1
  }{
    \CPar{\Cfg_1}{(\CBindAP\EVar\Ses\Cfg_2)} \Cong \CBindAP\EVar\Ses{(\CPar{\Cfg_1}{\Cfg_2})}
  }
}
\newcommand\rulePVGRCongLift{
  \inferrule[CC-Lift]{
    \Cfg_1 \Cong \Cfg_2
  }{
    \CCtx{\Cfg_1} \Cong \CCtx{\Cfg_2}
  }
}
\newcommand\rulePVGRCfgRedFork{
  \inferrule[CR-Fork]{}{
    \CCtx{\ECtx{\EFork\Val}} \ReducesToC
    \CCtx{
    \CPar
      {(\EApp{\Val}{\EUnit})}
      {\ECtx{\EUnit}}}
  }
}
\newcommand\rulePVGRCfgRedSendRecv{
  \inferrule[CR-SendRecv]{
    \Cfg \Cong \CCtxBegin
    \CBindChan
      {\TVar} {\TVar'}
      {\SSend\TVar{\KDomain\POLYShape}\St\Typ\Ses}
      {\begin{array}[t]{@{}l@{}l}
         (&\ECtxA{\ESend\Val{\EChan\TVar}}\\
         \CPar{}{}&\CPar{\ECtxB{\ERecv{\EChan\TVar'}}}{\Cfg'}
                    )\CCtxEnd
       \end{array}}
  }{
    {\Cfg}
    \ReducesToC
    \CCtx{
    \CBindChan {\TVar} {\TVar'} {\Ses} {(
      \CPar{
      \CPar
        {\ECtxA{\EUnit}}
        {\ECtxB{\Val}}}{\Cfg'}
    )}}
  }
}
\newcommand\rulePVGRCfgRedSelectCase{
  \inferrule[CR-SelectCase]{
    \Cfg \Cong
    \CCtxBegin
    \CBindChan
      {\TVar} {\TVar'}
      {\SChoice{\Ses_1}{\Ses_2}}
      {\begin{array}[t]{@{}l@{}l}
         (&\ECtxA{\ESelect \Label {\EChan\TVar}}\\
         \CPar{}{}&\CPar{\ECtxB{\ECase{\EChan\TVar'}{\Exp_1}{\Exp_2}}}{\Cfg'}
                    )\CCtxEnd
       \end{array}}
  }{
    {\Cfg}
    \ReducesToC
    \CCtx{
    \CBindChan {\TVar}{\TVar'} {\Ses_\Label} {(
      \CPar{
      \CPar
        {\ECtxA{\EUnit}}
        {\ECtxB{\Exp_\Label}}}{\Cfg'}
    )}}
  }
}
\newcommand\rulePVGRCfgRedClose{
  \inferrule[CR-Close]{
    \Cfg \Cong
    \CCtx{\CBindChan {\TVar} {\TVar'} {\SEnd} {(
      \CPar{
      \CPar
        {\ECtxA{\EClose{\EChan \TVar}}}
        {\ECtxB{\EClose{\EChan \TVar'}}}}{\Cfg'}
    )}}
  }{
    {\Cfg}
    \ReducesToC
    \CCtx{
  \CBindChan {\TVar} {\TVar'} {\SEnd}
    {(\CPar {\CPar {\ECtxA{\EUnit}} {\ECtxB{\EUnit}}} {\Cfg'})}}
  }
}
\newcommand\rulePVGRCfgRedExpr{
  \inferrule[CR-Expr]{
    \Exp_1 \ReducesToE \Exp_2
  }{
    \CCtx{\Exp_1} \ReducesToC \CCtx{\Exp_2}
  }
}
\newcommand\rulePVGRCfgRedNew{
  \inferrule[CR-New]{
    \EVar \text{ fresh}
  }{
    \CCtx{\ECtx{\ENew\Ses}} \ReducesToC \CCtx{\CBindAP\EVar\Ses{\ECtx\EVar}}
  }
}
\newcommand\rulePVGRCfgRedRequestAccept{
  \inferrule[CR-RequestAccept]{
    \TVar, \TVar' \text{ fresh} \\
    \Cfg \Cong \CCtx{\CBindAP\EVar\Ses {(
      \CPar{
      \CPar{
        \ECtxA{\ERequest\EVar}
      }{
        \ECtxB{\EAccept\EVar}
      }}{\Cfg'}
    )}}
  }{
    {\Cfg}
    \ReducesToC
    \CCtx{
    \CBindAP\EVar\Ses{
      \CBindChan\TVar{\TVar'}\Ses{(
        \CPar{
        \CPar{
          \ECtxA{\EChan\TVar}
        }{
          \ECtxB{\EChan\TVar'}
        }}{\Cfg'}
      )}
    }}
  }
}
\begin{document}

\title{Polymorphic Typestate for Session Types}


\author{Hannes Saffrich}
\orcid{0009-0004-7014-754X}             
\affiliation{
  \institution{University of Freiburg}            
  \country{Germany}                    
}
\email{saffrich@informatik.uni-freiburg.de}          

\author{Peter Thiemann}
\orcid{0000-0002-9000-1239}             
\affiliation{
  \institution{University of Freiburg}           
  \country{Germany}                   
}
\email{thiemann@informatik.uni-freiburg.de}         

\begin{abstract}
  Session types provide a principled approach to typed communication
  protocols that guarantee type safety and protocol fidelity.
  Formalizations of session-typed communication are typically
  based on process calculi, concurrent lambda calculi, or linear logic.
  An alternative model based on context-sensitive
  typing and typestate has not received much attention due to its apparent
  restrictions. However, this model is attractive
  because it does not force programmers into particular patterns
  like continuation-passing style or channel-passing style, but rather
  enables them to treat communication channels like mutable  variables.

  Polymorphic typestate is the key that enables a full treatment of
  session-typed communication. Previous work in this direction
  was hampered by its setting in a simply-typed lambda calculus.
  We show that higher-order polymorphism and existential types enable
  us to lift the restrictions imposed by the previous work, thus
  bringing the expressivity of the typestate-based approach on par
  with the competition. On this
  basis, we define \PolyVGR, the system of polymorphic typestate for
  session types, establish its basic metatheory, type preservation and 
  progress, and present a prototype implementation.
\end{abstract}

\begin{CCSXML}
<ccs2012>
<concept>
<concept_id>10011007.10011006.10011008</concept_id>
<concept_desc>Software and its engineering~General programming languages</concept_desc>
<concept_significance>500</concept_significance>
</concept>
</ccs2012>
\end{CCSXML}

\ccsdesc[500]{Software and its engineering~General programming languages}

\keywords{binary session types, typestate, polymorphism, existential types}  

\maketitle

\section{Introduction}
\label{sec:introduction}

When Honda and others
\cite{DBLP:conf/concur/Honda93,DBLP:conf/parle/TakeuchiHK94} proposed
session types, little did they know that their system would become a
cornerstone for type disciplines for communication protocols. Their original
system describes bidirectional, heterogeneously typed communication
channels between two processes in pi-calculus. It also contains facilities for offering and
accepting choices in the protocol.

Subsequent work added a plethora of features to the original
system. One strand of ongoing work considers session-typed embeddings of
communication primitives in functional and object-oriented languages,
both theoretically and practically oriented
\cite{DBLP:journals/jfp/GayV10,DBLP:conf/ecoop/HuKPYH10,lindley17:_light_funct_session_types,DBLP:conf/ecoop/ScalasY16,DBLP:journals/jfp/Padovani17}.
These embeddings impose particular programming styles, following the structure of session types.
For example, embeddings in linear functional languages
\cite{DBLP:journals/jfp/GayV10,lindley17:_light_funct_session_types}
impose writing code in what we call \emph{channel-passing style} as
demonstrated in Listing~\ref{lst:channel-passing-style}.
\begin{lstlisting}[numbers=none,caption={Channel-passing style},label={lst:channel-passing-style}]
let (x, c2) = receive c1 in
let (y, c3) = receive c2 in
let c4 = send (x + y, c3) in ...
\end{lstlisting}
We enter this code with the typing \lstinline/c1 : ?Int.?Int.!Int.s0/, which means
that \lstinline/c1/ is a channel ready to receive two integers,
then send one, and continue the protocol according to session type \lstinline/s0/. In these
systems, channels are linear resources, so \lstinline/c1/ must be used
exactly once: it is consumed in line~1 and cannot be used thereafter. The
operation \lstinline/receive/ has type $\Inp T S \to (T \times
S)$. When it consumes \lstinline/c1/,
it returns \lstinline/c2/ of type \lstinline/?Int.!Int.s0/, which is further transformed to
\lstinline/c3/ of type \lstinline/!Int.s0/ by the next \lstinline/receive/, and finally
to \lstinline/c4 : s0/ by the \lstinline/send/ operation of type $(T
\times \Outp T S) \to S$.

Writing a program in this style is cumbersome as programmers have to
thread the channel explicitly through the program.
This style is not safe for embedding session types in general
programming languages because most
languages do not enforce the linearity needed to avoid aliasing 
of channel ends at compile time (some implementations
check linear use at run time
\cite{DBLP:conf/ecoop/ScalasY16,DBLP:journals/jfp/Padovani17}). Wrapping
the channel passing in a parameterized 
monad \cite{DBLP:journals/jfp/Atkey09} would accommodate the typing
requirements and ensure linearity by encapsulation, but it is again
cumbersome to scale the monadic style to programs handling more than
one channel at the same time. Nevertheless,
\citet{DBLP:conf/haskell/PucellaT08} developed a Haskell
implementation of session types in this style. In object-oriented languages, fluent
interfaces enable the correct chaining of method calls according to a
session type~\cite{DBLP:conf/fase/HuY16}, but have similar issues as
channel-passing style when scaling to multiple channels and new issues
with receiving values which seems to require mutable references as shown in
Listing~\ref{lst:fluent-interface-with-references}.
\begin{lstlisting}[caption={Fluent interface with references},label={lst:fluent-interface-with-references}]
var x = new Ref<Int>();
var y = new Ref<Int>();
var c4 = c1.receive(x).receive(y)
           .send(x.val + y.val);
\end{lstlisting}

An alternative approach is inspired by systems with typestate
\cite{DBLP:journals/tse/StromY86}. \citet{DBLP:journals/tcs/VasconcelosGR06} proposed
a multithreaded functional language on this basis.
Their language, which we call VGR, enables programming in direct
style; it does not
require linear handling of variables; and it scales to multiple
channels.
Listing~\ref{lst:example-server} contains a program fragment in VGR
equivalent to the code in Listing~\ref{lst:channel-passing-style}.
\begin{figure}[tp]
  \begin{minipage}[t]{0.49\linewidth}
\begin{lstlisting}[label={lst:example-server},caption={Example
server},captionpos=b]
fun server u =
  let x = receive u in
  let y = receive u in
  send x + y on u
\end{lstlisting}
  \end{minipage}
  \begin{minipage}[t]{0.49\linewidth}
\begin{lstlisting}[label={lst:example-server-unit},caption={Example server with capture},captionpos=b]
fun server' () =
  let x = receive u in
  let y = receive u in
  send x + y on u
\end{lstlisting}
  \end{minipage}
\end{figure}
The parameter \lstinline/u/ of the \lstinline/server/ function is a
\emph{channel reference} of type $\Chan\Nchan$, where $\Nchan$ is a
variable representing a \emph{channel identity}.
The operation \lstinline/receive/ takes a channel reference associated with
session type \lstinline{!Int.}$S$
and returns an integer. The association is maintained at compile time
in a typestate $\Sigma = \{\Nchan \mapsto \texttt{!Int.}S\}$ that maps channel identities to session types. 
As a compile-time side effect,
\lstinline/receive/ changes the typestate to $\Sigma' = \{\Nchan
\mapsto S\}$. Thus, we can describe the action of
\lstinline/receive/ by the typing:
\begin{align*}
  \mathtt{receive}:{} & \{\Nchan \mapsto \texttt{!Int.}S\}; \Chan\Nchan
                     \to (\Chan\Nchan \times \mathtt{Int}); \{\Nchan \mapsto S\}
\end{align*}
The general shape of a function type in VGR is thus:
$
  \Sigma_1; T_1 \to T_2; \Sigma_2
$. Here, $T_1$ and $T_2$ are argument and return type of the
function. The typestate environments $\Sigma_1$ and $\Sigma_2$ map channel
identities to session types. They reflect
the state (session type) of the channels before ($\Sigma_1$) and after ($\Sigma_2$)
calling the function.  Channels in $T_1$
refer to entries in $\Sigma_1$ and channels in $T_2$ refer to entries
in $\Sigma_2$.

Similarly, the function
\lstinline/send_on_/ takes an integer to transmit and a channel
reference associated with session type \lstinline{!Int.S}.
It returns a unit value and updates the channel's type to
\lstinline{S}. 
Putting these typings together, we obtain the VGR type of the \lstinline/server/ function in
Listing~\ref{lst:example-server}:
\begin{gather}\label{eq:1}
  \{ \Nchan : \Inp\Int\Inp\Int\Outp\Int S \} ; \Chan\Nchan \to \Unit;
  \{ \Nchan : S \} \text,
\end{gather}
for some channel name $\Nchan$ and session type $S$.
Listing~\ref{lst:example-server} also demonstrates that
VGR does \textbf{not} impose linear handling of channel references, as
there are multiple uses of variable \lstinline/u/. Instead, it keeps
track of the current state of every channel using the typestate $\Sigma$, which
is threaded linearly through the typing rules, at compile time.

As VGR is based
on simple types, the typing~(\ref{eq:1}) is severely restricted.
\begin{enumerate}
\item The function \lstinline/server/ is tied to a single continuation session
  type $S$, a restriction shared with many functional systems \cite{DBLP:journals/jfp/GayV10,DBLP:journals/pacmpl/FowlerLMD19}.
\item The function \lstinline/server/ can only be called on the single channel identified
  by $\Nchan$. 
\end{enumerate}

The language \PolyVGR that we propose here fixes all those drawbacks, and
more. The \PolyVGR type for \lstinline/server/ abstracts over
continuation sessions and channel identities:\footnote{Boxes with a frame
    highlight types and expressions of \PolyVGR.}

\begin{Newstuff}{\NewstuffPadding}
  \begin{gather}\label{eq:5}
    \forall (\SVar : \KSession ).~
    \forall (\Nchan : \KDomain{\TShapeOne}).\notag\\
    \{ \Nchan : \Inp\Int\Inp\Int\Outp\Int
    \SVar \} ; \Chan\Nchan \to \exists \cdot.\{ \Nchan : \SVar \};
    \Unit
  \end{gather}
\end{Newstuff}
Quantification over session types, as in $\forall (\SVar:\KSession)$,
has been considered and analyzed in other recent 
work
\cite{lindley17:_light_funct_session_types,ALMEIDA2022104948}. 

The quantification of  $\Nchan$ is novel to \PolyVGR. Its kind,
$\KDomain{\TShapeOne}$, indicates that $\Nchan$ ranges over all
channel identities. We call $\TShapeOne$ a \emph{shape}. Shapes 
allow us to talk about and quantify over the (channel) resources embedded in a
value in \PolyVGR. For example, $\TShapePair\TShapeOne\TShapeOne$ is
the shape to describe a value with two embedded channels.
This facility enables \PolyVGR to provide a single typing rule for the
operations \lstinline/receive/ and \lstinline/send_on_/: In VGR, there
are two separate typing rules, one to transmit data values
and another to transmit one single channel.
Shapes also facilitate an extension of \PolyVGR with algebraic
datatypes like lists, which was not considered in previous work.

A final ingredient of the function type in \PolyVGR is the innocuous
existential right of the function arrow in~(\ref{eq:5}). The
existential addresses another shortcoming exhibited by this VGR type:
\begin{gather}\label{eq:8}
  \{\}; \Unit \to \Chan\Nchan; \{ \Nchan : S \}
\end{gather}
A function of type~(\ref{eq:8}) must create a new channel of session type
$S$. But lacking polymorphism, each invocation of this function has to
create a channel with the same identity $\Nchan$. To avoid this
limitation, \PolyVGR handles newly created channels (and other
resources) using existential quantification:
\begin{Newstuff}{\NewstuffPadding}
  \begin{gather}\label{eq:13}
    \{\}; \Unit \to \exists (\Nchan: \KDomain\TShapeOne).~ \{ \Nchan :
    S \}; \Chan\Nchan
  \end{gather}
\end{Newstuff}
Every use of a function of this type gives rise to a new channel
identity. Thanks to the existential, this identity is renamed as
needed to avoid clashes with any existing channel identity in the context. 

Coming back to the examples, let us have a look at function
\lstinline/server'/ in Listing~\ref{lst:example-server-unit}. This
function contains a free variable \lstinline/u/ with a channel
reference of type $\Chan\Nchan$. It can only be used in a context that
provides the same channel $\Nchan$, which is somewhat hidden in the
VGR type of \lstinline/server'/:
\begin{gather}\label{eq:2}
  \{ \Nchan : \Inp\Int\Inp\Int\Outp\Int S \} ; \Unit \to \Unit;
  \{ \Nchan : S \} \text,
\end{gather}
but which becomes very clear in its \PolyVGR type:
\begin{Newstuff}{\NewstuffPadding}
  \begin{gather}\label{eq:14}
    \forall (\SVar : \KSession ).\notag\\
    \{ \Nchan : \Inp\Int\Inp\Int\Outp\Int \SVar \} ; \Unit \to
    \exists\cdot.~\{ \Nchan : \SVar \}; \Unit \text.
  \end{gather}
\end{Newstuff}
The lack of quantification over $\Nchan$ indicates that it is not safe to use this function with any other
channel, because it is not possible to replace a channel reference
captured in the closure for \lstinline/server'/.  In any case, we can
invoke a function of type~(\ref{eq:2}) or  of type~(\ref{eq:14}) any
time the channel $\Nchan$ is in a state matching the ``before'' session type of the
function.

\subsection*{Contributions}
\label{sec:contributions}

\begin{itemize}
\item We define \PolyVGR, a novel session type system based on
  polymorphic typestate that lifts all restrictions imposed by earlier related
  systems, but still operates on the basis of the same semantics. Our type
  system exhibits a novel use of higher-kinded polymorphism
  to enable quantification over types that contain an a-priori unknown
  number of channel references.
\item We establish type preservation and progress for {\PolyVGR} on
  the basis of a standard synchronous semantics for session types (see~\Cref{sec:metatheory}).
\item Type checking for {\PolyVGR} is decidable and implemented (see~\Cref{sec:implementation}). We plan to submit
  the implementation for artifact evaluation.
\item We informally sketch an extension of {\PolyVGR} for sum types that may contain channels (see~\Cref{sec:extensions}).
\end{itemize}
Proofs and some extra examples may be found in the supplement.

\section{Motivation}
\label{sec:motivation}

We demonstrate how polymorphism in the form of universal and existential quantification lifts
various restrictions of the VGR calculus. In particular, VGR is
monomorphic with respect to channel names and states and it
requires different operations (with different types) to transmit data
and (single) channels. All these restrictions disappear in
{\PolyVGR}. Moreover, universal and existential quantification gives us fine grained
control over channel identity management, channel passing between
processes as well as channel creation.

The next few subsections systematically explain the innovations of
\PolyVGR compared to VGR. Types 
and code fragments for the new calculus  {\PolyVGR} appear in
boxes with a frame.
\PolyVGR{} offers the following key benefits over previous work.
\begin{itemize}
\item A function can be applied to different channel arguments if
  its type is polymorphic over channel names (see~(\ref{eq:5})
  and~(\ref{eq:2}); Section~\ref{sec:channel-abstraction}).
\item A function can abstract over the creation of an arbitrary number
  of channels because the names of newly created channels are
  existentially quantified (see (\ref{eq:7}) and (\ref{eq:10}); Section~\ref{sec:channel-creation}).
\item Arbitrary data structures can be transmitted. Ownership of all channels
  contained in the data structure is transferred to the receiver (see
  (\ref{eq:6}); Section~\ref{sec:data-transmission-vs}).
\item Abstraction over transmission operations is possible. In
  particular, a type can be given to a fully flexible send or receive
  operation (see (\ref{eq:6})).
\end{itemize}


\subsection{Channel Creation}
\label{sec:channel-creation}

Channel creation in VGR works in two steps. First, we create
an \emph{access point} of type $[S]$, where $S$ is a session type. This
access point needs to be known to
all threads that wish to communicate and it can be shared
freely. Second, the client thread requests a connection on the access point and the server must
accept it on the same access point. This rendezvous creates a
communication channel with one end of type $S$ on the server and the
other end of type $\Dual S$ (the dual type of $S$) on the client.
\begin{mathpar}
  \ruleVGRAccept

  \ruleVGRRequest
\end{mathpar}
In the VGR typing rules for these operations, new channels just show up with a fresh name in the outgoing
state of the expression typing. Similarly, if a function of type
$\Sigma_1; T_1 \to T_2; \Sigma_2$ creates a new channel, then its
name and session type just appear in $\Sigma_2$. Incoming
channels described in $\Sigma_1$ are either passed through to
$\Sigma_2$ or they are closed in the function. All channels mentioned
in $\Sigma_2$, but not in $\Sigma_1$ are considered new.

As the channel names in states must all be different, the number
of simultaneously open channels in a VGR
program is bounded by the number of
occurrences of the \TirName{C-Accept} and \TirName{C-Request}
rules. VGR has recursive functions, but they are monomorphic with
respect to incoming and outgoing states.  In consequence, abstraction over channel creation is not possible.

In contrast, {\PolyVGR}'s function type indicates channel creation explicitly using existential
quantification. As an example, consider abstracting over the accept
operation:
\begin{Newstuff}{\NewstuffPadding}
  \begin{gather}
    \label{eq:7}
    \begin{array}[t]{r@{}l}
      \mathit{acc}={}&
      \ETAbs\SVar\KSession{\CstrTrue}
      \EAbs{\StEmpty}\EVar{\TAccessPoint\SVar}\Accept\EVar
      \\
      \colon&
      \TAll\SVar\KSession{\CstrTrue}
      \TArr \StEmpty
      {\TAccessPoint\SVar} {\TVarc\colon \KDomain \TShapeOne} {
        \StBind\TVarc\SVar} {\TChan\TVarc}
    \end{array}
  \end{gather}
\end{Newstuff}
The core of this type still has a shape like the VGR type $\Sigma_1; T_1 \to T_2;
\Sigma_2$, but with some additions and changes. The most prominent
change is that the outgoing type and
state are swapped in a function type resulting in a structure
like this:
\begin{Newstuff}{\NewstuffPadding}
  \begin{gather}
    \TArr{\Sigma_1}{T_1}{\TVar\colon \KDomain
    \TVarShape}{\Sigma_2}{T_2}.
  \end{gather}
\end{Newstuff}
The incoming state $\Sigma_1$
specifies the part of the state that is needed
by the function; it can be applied in any state $\Sigma$ that
provides the required channels or more. 
On return, a function can provide new entries in the state, which are disjointly
added to the calling state, by way of the existential $\exists \TVar : \KDomain \TVarShape$.

The type of \textit{acc} in (\ref{eq:7}) is universally quantified
over a session type, $\sigma:\KSession$, to work with arbitrary access points.
Left of the arrow, the required incoming state is empty $\StEmpty$
and argument of type $\TAccessPoint\SVar$ is an access point for $\SVar$.
Right of the arrow,
the existential quantification $\exists \TVarc : \KDomain \TShapeOne$
abstracts over the created channel. The 
kind $\KDomain \TShapeOne$ indicates abstraction over exactly one channel
name.\footnote{%
  We defer further discussion of other shapes $\TVarShape$ and the meaning of
  $\KDomain \TVarShape$ to Sections~\ref{sec:two-channels} and~\ref{sec:general-case}.}
Hence, the variable $\TVarc$ can be used like a channel
name in constructing a state. The returned value is a channel reference for
$\TVarc$.
The existential serves as a modular alternate of the $\Fresh
c$ constraint. So we can invoke \textit{acc} multiple times and obtain
different channels from every invocation.

\subsection{Channel Abstraction}
\label{sec:channel-abstraction}

The discussion of VGR's function type $\Sigma_1;T_1 \to
T_2;\Sigma_2$ in the introduction shows that a function that takes a
channel as a parameter can only be applied to a single channel. A
function like \lstinline/server/ (\Cref{lst:example-server}) must be applied to the channel of
type $\Chan\Nchan$, for some fixed name $\Nchan$.

To lift this restriction, we apply the standard recipe of universal
quantification, i.e., polymorphism over channel identities as outlined
in the introduction. Thus, the type of \lstinline/server/ generalizes
as shown in~(\ref{eq:5}) so that it can be applied to any channel of type $\Chan\Nchan$ regardless of the name
$\Nchan$ and the type of \lstinline/server'/, which captures a
channel, is shown in~(\ref{eq:2}).

\subsection{Data Transmission vs. Channel Transmission}
\label{sec:data-transmission-vs}

VGR can pass channels
from one thread to another. The session type $\Outp{S'}{S}$ classifies
a channel on which we can send a channel of type $S'$. Here is the VGR
typing rule for the underlying operation:
\begin{mathpar}
  \ruleVGRSendS
\end{mathpar}
The premises are \emph{value typings} that indicate that $v$ and $v'$ are
references to fixed channels $\beta$ and $\alpha$ under variable environment
$\Gamma$. The conclusion is an
\emph{expression typing} of the form $\JVGRExpr\Gamma\Sigma
e{\Sigma_1}{T}{\Sigma_2}$ where $\Sigma$ is the incoming
state, $\Sigma_1$ is the part of $\Sigma$ that is passed
through without change, and $\Sigma_2$ is the outgoing  state after
executing expression $e$ which returns a result of type $T$. 
The rule states that channels $\beta$ and $\alpha$ have session type
$S'$ and $\Outp{S'}{S}$, respectively. The channel $\beta$ is consumed
(because it is sent to the other end of channel $\alpha$) and $\alpha$
gets updated to session type $S$.

Compared to the function type, sending a channel is more flexible.
Any channel of type $S'$ can be passed because $\beta$ is not part
of channel $\alpha$'s session type. If the sender still holds
references to channel $\beta$, then these references can no longer be
exercised as $\beta$ has been removed from $\Sigma$.
So one can say that rule \TirName{C-SendS} passes ownership of channel
$\beta$ to the receiver.

In addition, VGR implicitly transmits a channel reference which is captured
in a closure. To study this phenomenon, we look
at VGR's rules for sending and receiving data of type $D$. 
\begin{mathpar}
  \ruleVGRSendD

  \ruleVGRReceiveD
\end{mathpar}
One possibility for type $D$ is a function type like
$$D_1 = \{\beta: S'\}; \Unit \to \Unit; \{\beta:S''\}.$$ A function of this type
captures a channel named $\beta$ which may or may not occur in
$\Sigma$. It is instructive to see what happens at the receiving end
in rule \TirName{C-ReceiveD}.
If we receive a function of type $D_1$ and $\Sigma$ already contains
channel $\beta$ of appropriate session type, then we will be able to
invoke the function.

If channel $\beta$ is not yet present at the receiver, we may want to send it along
later. However, we find that this is not possible as the received channel
gets assigned a fresh name $d$:
\begin{mathpar}
  \ruleVGRReceiveS
\end{mathpar}
For the same reason, it is impossible to send channel $\beta$ first
and then the closure that refers to it: $\beta$ gets renamed to some
fresh $d$ while the closure still refers to $\beta$.  Sending the channel
effectively cuts all previous connections.

To address this issue, \PolyVGR abstracts over states in session
types and lifts all restrictions on the type of transmitted
values (aka the \emph{payload type}), so that a channel and a function that refers to it can be
transmitted at the same time. Here is the revised grammar of session
types: 
\begin{Newstuff}{\NewstuffPadding}
  \begin{align*}
          \Ses & ::= \SSend\TVar{\KDomain\POLYShape}\St\Typ\Ses
                 \mid \SRecv\TVar{\KDomain\POLYShape}\St\Typ\Ses
                 \mid \dots
  \end{align*}
\end{Newstuff}
A channel package can be instantiated by a state $\Sigma$ and a
payload type $T$. All channels referenced in $T$ must be bound in
$\Sigma$ so that  the sending and the receiving end of the channel
agree about the channels sent along with the value of type $T$. That is,
sending any value that contains channel references also transfers the
underlying referenced channels to the receiver. Thus, sending a
reference transfers ownership of the underlying channel. Moreover, a value may contain several
channel references.

The ``size'' of $\St$ is gauged with $\TVar$ which determines its
domain as indicated in its kind $\KDomain\POLYShape$ where
$\POLYShape$ is the shape of the domain. Shapes range over
$\TShapeZero$ (the empty shape), $\TShapeOne$ (the shape with one
binding), and $\TShapePair{\POLYShape_1}{\POLYShape_2}$ which
forms the disjoint combination of shapes $\POLYShape_1$ and $\POLYShape_2$.

\subsubsection{No channels}
\label{sec:no-channels}

To gain some intuition with this type construction, we start with a
type for sending a primitive value of type $\Int$.  In the general
pattern $\SSend\TVar{\KDomain\POLYShape}\St\Typ\Ses$ we find that
\begin{itemize}
\item $\Typ = \Int$;
\item $\St = \StEmpty$, the empty state, as an $\Int$ value
  contains no channels;
\item the type variable $\TVar$ specifies the domain of $\StEmpty$,
  which is also empty, indicated with $\POLYShape = \TShapeZero$.
\end{itemize}

Here is the resulting term and type, where we quantify over a
continuation session type $\SVar$ and a channel name $\TVarChan$ (its
kind $\KDomain\TShapeOne$ indicates that it is a single channel):
\begin{Newstuff}{\NewstuffPadding}
  \begin{gather}\label{eq:9}
    \begin{array}{r@{\hspace{2pt}}l}
      \hspace{-1mm}\mathit{send0} = &
      \ETAbs\TVarChan{\KDomain\TShapeOne}{\CstrTrue}
      \ETAbs\SVar\KSession{\CstrTrue}
      \\&
      \EAbs{\StEmpty}\EVar{\Int}
      \EAbs{
      \StBind\TVarChan{\SSend\TVar{\KDomain\TShapeZero}{\StEmpty}{\Int}\SVar}}\EVarY{\TChan\TVarChan}
      \\&
      \ESend \EVar \EVarY
      \\
      :
      &
      \TAll\TVarChan{\KDomain\TShapeOne}{\CstrTrue}
      \TAll\SVar\KSession{\CstrTrue}
      \\&
      \TArrBeginX{\StEmpty}{\Int}{\Empty}{\StEmpty}\\&
      \TArrBeginX
        {\StBind\TVarChan{\SSend\TVar{\KDomain\TShapeZero}{\StEmpty}{\Int}\SVar}}
        {\TChan\TVarChan}
        {\Empty}
        {\StBind\TVarChan\SVar}
        {\TUnit}
      \TArrEndX
      \TArrEndX
    \end{array}
  \end{gather}
\end{Newstuff}

\subsubsection{One channel}
\label{sec:one-channel}

We instantiate the general pattern
$$\SSend\TVar{\KDomain\POLYShape}\St\Typ\Ses$$
as follows to send a channel of type $\Ses'$.
\begin{itemize}
\item $\St$ is now a state with a single binding, so that
  $\TVar$ must range over $\KDomain\TShapeOne$;
\item consequently, $\St$ has the form $\StBind\TVar{\Ses'}$; and
\item $\Typ = \TChan\TVar$;
\end{itemize}
We omit the term, which is similar to the one in \eqref{eq:9}, and
just spell out the type. We quantify over the continuation session
type and the names of the two channels involved. There is one novelty:
we declare that the channels $\TVar$ and $\TVarChan$ are
different, so that 
they can be used as keys in the state. The \emph{disjointness
  constraint} $(\TVar \Disjoint \TVarChan)$ 
specifies that names in $\TVar$ are disjoint from names in $\TVarChan$.
\begin{Newstuff}{\NewstuffPadding}\small
  \begin{gather*}
    \begin{array}{r@{}l}
      \mathit{send1}\colon &
      \TAll\TVar{\KDomain\TShapeOne}{\CstrTrue}
      \TAll\TVarChan{\KDomain\TShapeOne}{(\TVar \Disjoint \TVarChan)}
      \TAll\SVar\KSession{\CstrTrue}
      \\&
      \TArrBeginX{\StEmpty}{\TChan\TVar}{\Empty}{\StEmpty}
      \\&
      \TArrBegin
        {\StBind\TVar{\Ses'},
         \StBind\TVarChan{\SSend\TVar{\KDomain\TShapeOne}{\StBind\TVar{\Ses'}}{\TChan\TVar}\SVar}}
        {\TChan\TVarChan}
      \\&\hspace{1.3mm}
      \TArrEnd
        {\Empty}
        {\StBind\TVarChan\SVar}
        {\TUnit}
      \TArrEndX
    \end{array}
  \end{gather*}
\end{Newstuff}

\subsubsection{Two channels}
\label{sec:two-channels}

Sending two channels of type $\Ses'$ and $\Ses''$ requires new ingredients and illustrates the
general case. The instantiation of the pattern   $\SSend\TVar{\KDomain\POLYShape}\St\Typ\Ses$ is as follows:
\begin{itemize}
\item the state $\St$ must have two bindings, one for each
  payload channel, so that $\TVar$ must range over a two element
  domain, e.g., $\KDomain{\TShapePair\TShapeOne\TShapeOne}$;
\item to write down $\St$, we need notation to address the
  $\TShapeOne$-shaped components of $\TVar$ as in $\TDomProj1\TVar$
  and $\TDomProj2\TVar$, so that we have $\St =
  \StBind{\TDomProj1\TVar}{\Ses'}, \StBind{\TDomProj2\TVar}{\Ses''}$;
\item to send a pair of channels: $\Typ
  = \TChan{(\TDomProj1\TVar)} \times \TChan{(\TDomProj2\TVar)}$.
\end{itemize}
\begin{Newstuff}{\NewstuffPadding}
  \begin{gather*}
    \begin{array}{l}
      \mathbf{let}\ \Ses (\SVar) =
      \SSendBegin
        \TVar
        {\KDomain{\TShapePair\TShapeOne\TShapeOne}}
      \SSendMid
        {\StBind{\TDomProj1\TVar}{\Ses'},
         \StBind{\TDomProj2\TVar}{\Ses''}}
      \\
      \hspace{39mm}\SSendEnd
        {\TChan{(\TDomProj1\TVar)} \times
         \TChan{(\TDomProj2\TVar)}}
        \SVar
      \ \mathbf{in}\\
      \TAll\TVara{\KDomain\TShapeOne}{\CstrTrue}
      \TAll\TVarb{\KDomain\TShapeOne}{(\TVara\Disjoint\TVarb)}
      \\
      \TAll\TVarChan{\KDomain\TShapeOne}{(\TVara\Disjoint\TVarChan, \TVarb\Disjoint\TVarChan)}
      \TAll\SVar\KSession{\CstrTrue}
      \\
      \TArrBeginX{\StEmpty}{\TChan\TVara \times \TChan\TVarb}{\Empty}{\StEmpty}
      \\
      \TArr{\StBind\TVara{\Ses'}, \StBind\TVarb{\Ses''},
      \StBind\TVarChan{\Ses (\SVar)}}{\TChan\TVarChan}{\Empty}{\StBind\TVarChan\SVar}{\TUnit}
      \TArrEndX
    \end{array}
  \end{gather*}
\end{Newstuff}
We use the let-notation informally to improve readability. It is not part of the type system. Close study
of the type reveals a discrepancy between the ``curried'' way to pass the arguments
$\TVara\colon\KDomain\TShapeOne$ and $\TVarb\colon\KDomain\TShapeOne$ and the ``uncurried'' kind
$\KDomain{\TShapePair\TShapeOne\TShapeOne}$
expected by the existential. To rectify this discrepancy, the term pairs the two domains to obtain
some $\TVarc = (\TVara, \TVarb)$ with
$\TVarc \colon \KDomain{\TShapePair\TShapeOne\TShapeOne}$ as needed for the existential. This
definition of $\TVarc$ implies that $\TVara = \TDomProj1\TVarc$ and $\TVarb =
\TDomProj2\TVarc$ which are needed to obtain the correct state and type
for  the body of the existential.

\subsubsection{The general case}
\label{sec:general-case}

In general a value can refer to an arbitrary number of
channels, which  should not be fixed  a priori. We exhibit and discuss the type of a general send
function \textit{gsend} and show how to obtain the previous examples by instantiation.
\begin{Newstuff}{\NewstuffPadding}
  \begin{gather}\label{eq:6}
    \begin{array}{@{}r@{~}l@{}}
      \mathit{gsend} \colon
      &
      \TAll\TVarShape\KShape{\CstrTrue}
      \TAll\TVar{\KDomain\TVarShape}{\CstrTrue}
      \\&
      \TAll\TVarStFun{\KStFun\TVarShape}{\CstrTrue}
      \TAll\TVarTFun{\KTypeFun\TVarShape}{\CstrTrue}
      \\&
      \TAll\TVarChan{\KDomain\TShapeOne}{(\TVar\Disjoint\TVarChan)}
      \TAll\SVar\KSession{\CstrTrue}
      \\&
      \TArrBeginX{\StEmpty}{\TVarTFun\ \TVar}{\Empty}{\StEmpty}
      \\&
      \TArrBeginX{\TVarStFun\ \TVar,
      \StBind\TVarChan{\SSend\TVar{\KDomain\TVarShape}{\TVarStFun\
      \TVar}{\TVarTFun\ \TVar}\SVar}}{\TChan\TVarChan}{\Empty}{\StBind\TVarChan\SVar}
      \\&
      {\TUnit}
      \TArrEndX
      \TArrEndX
    \end{array}
    \raisetag{10pt}
  \end{gather}
\end{Newstuff}
We abstract over the shape, $\TVarShape$, and the corresponding domain. As the state
depends on the domain $\TVar$, we supply it as a closed function $\TVarStFun$ from the domain so
that its components can
only be constructed from the domain elements. We supply the type in the same way as a closed
function $\TVarTFun$ from the domain. The remaining quantification over the channel name and the continuation
session is as usual. The disjointness constraint forces the channel name to be
different from any name in $\TVar$.
In the body of the type we have a function that takes an argument of type  $\TVarTFun\ \TVar$. It
returns a function that takes a channel $\TVarChan$ along with the resources provided by the state
$\TVarStFun\ \TVar$. It returns the updated channel type and removes the resources which are on the
way to the receiver.

The previous examples correspond to the following instantiations of \textit{gsend}:
\begin{itemize}
\item $\mathit{send0} = \mathit{gsend}\ \TShapeZero\ \TDomZero\
  (\lambda\_. \StEmpty)\ (\lambda\_. \Int)$ \\
  where $\TDomZero\colon\KDomain\TShapeZero$ is the
  unique value of this type;
\item $\mathit{send1} =
  \begin{array}[t]{@{}l}
    \ETAbs{\TVar}{\KDomain\TShapeOne}\CstrTrue
    \\
    \mathit{gsend}\ \TShapeOne\ \TVar\
    (\lambda\TVar. \StBind\TVar{\Ses'})\ (\lambda\TVar. \TChan\TVar)
    \text{;}
  \end{array}
$
\item $\mathit{send2} =
  \begin{array}[t]{@{}l}
  \ETAbs{\TVara}{\KDomain\TShapeOne}\CstrTrue
  \ETAbs{\TVarb}{\KDomain\TShapeOne}\CstrTrue
    \\
  \mathit{gsend}\
    \begin{array}[t]{@{}l}
      (\TShapePair\TShapeOne\TShapeOne)\ (\TVara, \TVarb)\
    (\lambda\TVarc. \StBind{\TDomProj1\TVarc}{\Ses'}, \StBind{\TDomProj2\TVarc}{\Ses''})\
      \\
    (\lambda\TVarc. \TChan{(\TDomProj1\TVarc)} \times \TChan{(\TDomProj2\TVarc)})
    \end{array}
  \end{array}
  $
\end{itemize}

\section{Formal Syntax and Semantics of \PolyVGR}
\label{sec:polym-imper-sess}
\begin{figure}[tp]
  \begin{align*}
    & \mathrm{Kinds} &
    \Kind ::=\ &
      \KType \mid
      \KSession \mid
      \KSt \mid
      \KShape \mid
      \\&&&
      \KDomain\POLYShape \mid
      \Kind \to \Kind
    \\
    & \mathrm{Labels} &
    \Label ::=\ &
      \LabelA \mid
      \LabelB
    \\
    & \mathrm{Types} &
    \Typ,\Ses,\POLYShape,\POLYDom,\St ::=\ &
      \TVar \mid
      \Typ~\Typ \mid
      \TLam \TVar \POLYShape \Typ \mid
      \\ & \hspace{5mm} \mathrm{Expression\ Types} \hspace{-40mm}&&
      \TAll\TVar\Kind\Cstr\Typ \mid
      \TArr\St\Typ\Ctx\St\Typ \mid
      \\ &&&
      \TChan\POLYDom \mid
      \TAccessPoint\Ses \mid
      \TUnit \mid
      \TPair\Typ\Typ \mid
      \\ & \hspace{5mm} \mathrm{Session\ Types} \hspace{-40mm} &&
      \SSend\TVar{\KDomain\POLYShape}\St\Typ\Ses \mid
      \\ &&&
      \SRecv\TVar{\KDomain\POLYShape}\St\Typ\Ses \mid
      \\ &&&
      \SChoice\Ses\Ses \mid
      \SBranch\Ses\Ses \mid
      \SEnd \mid
      \Dual\Ses \mid
      \\ & \hspace{5mm} \mathrm{Shapes} &&
      \TShapeZero \mid
      \TShapeOne \mid
      \TShapePair\POLYShape\POLYShape \mid
      \\ & \hspace{5mm} \mathrm{Domains} &&
      \TDomZero \mid \POLYDom,\POLYDom \mid \TDomProj \Label \POLYDom \mid
      \\ & \hspace{5mm} \mathrm{Session\ State} \hspace{-40mm} &&
      \Empty \mid
      \StBind\POLYDom\Ses \mid
      \St,\St
    \\
    & \mathrm{Type\ Environments} \hspace{-40mm} &
    \Ctx ::=\ &
      \Empty \mid
      \Ctx, \EVar:\Typ \mid
      \Ctx, \TVar:\Kind \mid
      \Ctx, \POLYDom \Disjoint \POLYDom
    \\
    & \mathrm{Constraints} \hspace{-40mm} &
    \Cstr ::=\ &
      \Empty \mid
      \Ctx, \POLYDom \Disjoint \POLYDom
    \\
    & \mathrm{Expressions} \hspace{-40mm} &
    \Exp ::=\ &
      \Val \mid
      \ELet\EVar\Exp\Exp \mid
      \EApp\Val\Val \mid
      \EProj \Label \Val \mid
      \ETApp\Val\Typ \mid
    \\ &&&
      \EFork\Val \mid
      \ENew\Ses
      \EAccept\Val \mid
      \ERequest\Val \mid
      \\ &&&
      \ESend\Val\Val \mid
      \ERecv\Val \mid
      \ESelect \Label \Val \mid
      \\ &&&
      \ECase\Val\Exp\Exp \mid
      \EClose\Val
    \\
    & \mathrm{Values} &
    \Val ::=\ &
      \EVar \mid
      \EChan\TVar \mid
      \EUnit \mid
      \EPair\Val\Val \mid
      \\ &&&
      \EAbs\St\EVar\Typ\Exp \mid
      \ETAbs\TVar\Kind\Cstr\Val 
    \\
    & \mathrm{Configurations} \hspace{-40mm} &
    \Cfg ::=\ &
      \CExp\Exp \mid
      (\CPar\Cfg\Cfg) \mid
      \CBindChan\TVar\TVar\Ses\Cfg \mid
      \CBindAP\EVar\Ses\Cfg
    \\
    & \mathrm{Expression\ Contexts} \hspace{-40mm} &
    \ECtxSym ::=\ &
      \Hole \mid
      \ELet\EVar\ECtxSym\Exp
    \\
    & \mathrm{Configuration\ Contexts} \hspace{-40mm} &
    \CCtxSym ::=\ &
      \Hole \mid
      \CBindChan\TVar\TVar\Ses\CCtxSym \mid
      \CBindAP\EVar\Ses\CCtxSym \mid
      (\CPar\CCtxSym\Cfg)
  \end{align*}
  \caption{Syntax of \PolyVGR}
  \label{fig:polyvgr-syntax}
\end{figure}


\subsection{Syntax}
\label{sec:syntax}

Figure~\ref{fig:polyvgr-syntax} defines the syntax of \PolyVGR starting with kinds
and types. Different metavariables for types indicate their
kinds with $\Typ$ as a fallback.
Kinds $\Kind$ distinguish between plain types ($\KType$), session
types ($\KSession$ ranged over by metavariable $\Ses$), states
($\KSt$ ranged over by $\St$), shapes
($\KShape$ ranged over by $\POLYShape$), domains ($\KDomain\POLYShape$
ranged over by $\POLYDom$), and arrow kinds.
The kind for domains depends on shapes. This
dependency as well as the introduction rules for arrow kinds are very
limited as they are tailored to express channel references as
discussed in Section~\ref{sec:data-transmission-vs}.

The type language comprises variables $\TVar$, application, and
abstraction over domains to support arrow kinds. Universal
quantification over types of any kind is augmented with constraints
$\Cstr$, function types contain pre- and post-states as well as existential quantification as
explained in Section~\ref{sec:channel-creation}. There are channel
references that refer to a domain, access points that refer to a
session type, the unit type (representing base types), and products to
characterize the values of expressions.
Session type comprise sending and receiving (cf.\
Section~\ref{sec:data-transmission-vs}), as well as choice and branch
types limited to two alternatives, $\SEnd$ to indicate the end of a protocol, and $\Dual\Ses$ to
indicate the dual of a session type (which flips sending and receiving
operations as well as choice and branch).
Shapes comprise the empty shape $\TShapeZero$, the single-channel shape
$\TShapeOne$, and the combination of two shapes
$\TShapePair\_\_$. The corresponding domains are the
empty domain $\TDomZero$, the combination of two domains $\_,\_$, and
the first/second projection of a domain. The latter selects a
component of a combined domain.
A session state can be empty, a binding of a
single-channel domain to a session type, or a combination of states. Most
of the time, the domain in the binding is a variable.

Type environments $\Ctx $ contain bindings for expression variables
and type variables, as well as disjointness constraints between
domains.
Constraints $\Cstr$ are type environments restricted to
bindings of disjointness constraints.

Following VGR \cite{DBLP:journals/tcs/VasconcelosGR06}, the expression
language is presented in A-normal form~\cite{DBLP:conf/pldi/FlanaganSDF93}, which means that
the subterms of each non-value expression are syntactic values $\Val$
and sequencing of execution is expressed using a single
$\Terminal{let}$ expression. This choice simplifies the dynamics as
there is only one kind of evaluation context $\ECtxSym$, which selects
the header expression of a  $\Terminal{let}$. The type system performs
best (i.e., it is most permissive) on expressions in strict A-normal
form, where the body of a $\Terminal{let}$ is either another
$\Terminal{let}$ or a syntactic value. Any expression can be
transformed into strict A-normal form with a simple variation of the
standard transformation from the literature. Strict A-normal form
is closed under reduction.

Besides values and the $\Terminal{let}$ expression, there is function
application, projecting a pair, type application,  fork to
start processes, accepting and requesting a channel, sending and receiving,
selection (i.e.\ sending) of a label and branching on a received
label, and closing a channel.

Values are variables, channel references, the unit value, 
pairs of values, lambda abstractions, and type
abstractions with constraints --- their body is restricted to a syntactic value
to avoid unsoundness in the presence of effects.

Configurations $\Cfg$ describe processes. They are either expression
processes, parallel processes, channel abstraction --- it abstracts
the two ends of a channel at once, and access point creation.

We already discussed expression contexts. Configuration contexts $\CCtxSym$ enable
reduction in any configuration context, also under channel
and access point abstractions.

\subsection{Statics for types}
\label{sec:statics-types}

Many of the judgments defining the type-level statics are mutually
recursive. We start with
\begin{itemize}
\item context formation $\PVGRIsCtx\Ctx$,
\item kind formation $\PVGRIsKind\Ctx\Kind$,
\item type formation $\PVGRHasKind\Ctx\Typ\Kind$.
\end{itemize}
All judgments depend on context formation, which depends on kind and
type formation. Based on these notions we define
\begin{itemize}
\item type conversion $\PVGRTypeConv\Typ\Typ$,
\item constraint entailment $\PVGRCstrEntail\Ctx\Cstr$,
\item context restriction operators $\CtxRestrictNonDom\Ctx$ and $\CtxRestrictOnlyDom\Ctx$,
\item disjoint context extension operator $\Ctx\DisjointAppend\Ctx$.
\end{itemize}

\begin{figure}[tp]
  \begin{mathpar}
    \rulePVGRIsCtxEmpty \and
    \rulePVGRIsCtxConsKind \and
    \rulePVGRIsCtxConsType \and
    \rulePVGRIsCtxConsCstr
  \end{mathpar}
  \caption{Context formation ($\PVGRIsCtx\Ctx$)}
  \label{fig:polyvgr-context-formation}
\end{figure}

Context formation (Figure~\ref{fig:polyvgr-context-formation}) is
standard up to the case for disjointness constraints. For those, we
have to show that each domain is wellformed with respect to the
current context $\Ctx$, which may be needed to construct the shape and
then the domain.

\begin{figure}[tp]
  \begin{mathpar}
    \rulePVGRIsKindType \and
    \rulePVGRIsKindSession \and
    \rulePVGRIsKindState \and
    \rulePVGRIsKindShape \and
    \rulePVGRIsKindDom \and
    \rulePVGRIsKindArr
  \end{mathpar}
  \caption{Kind formation ($\PVGRIsKind\Ctx\Kind$)}
  \label{fig:polyvgr-kind-formation}
\end{figure}

Kind formation is in Figure~\ref{fig:polyvgr-kind-formation}. Most
kinds are constants, domains must be indexed by 
shapes, arrow kinds are standard.

\begin{figure}[tp]
  \begin{mathpar}
    \rulePVGRKindingVar \and
    \rulePVGRKindingApp \and
    \rulePVGRKindingLam \and
    \rulePVGRKindingAll \and
    \rulePVGRKindingChan \and
    \rulePVGRKindingArr \and
    \rulePVGRKindingAccessPoint \and
    \rulePVGRKindingUnit \and
    \rulePVGRKindingPair \and
    \rulePVGRKindingSend \and
    \rulePVGRKindingRecv \and
    \rulePVGRKindingBranch \and
    \rulePVGRKindingDual \and
    \rulePVGRKindingChoice \and
    \rulePVGRKindingEnd
  \end{mathpar}
  \caption{Type formation, Part I ($\PVGRHasKind\Ctx\Typ\Kind$)}
  \label{fig:polyvgr-kinding-I}
\end{figure}
\begin{figure}[tp]
  \begin{mathpar}
    \rulePVGRKindingShapeZero \and
    \rulePVGRKindingShapeOne \and
    \rulePVGRKindingShapePair \and
    \rulePVGRKindingDomZero \and
    \rulePVGRKindingDomMerge \and
    \rulePVGRKindingDomProj \and
    \rulePVGRKindingStEmpty \and
    \rulePVGRKindingStChan \and
    \rulePVGRKindingStMerge
  \end{mathpar}
  \caption{Type formation, Part II ($\PVGRHasKind\Ctx\Typ\Kind$)}
  \label{fig:polyvgr-kinding-II}
\end{figure}
 Figures~\ref{fig:polyvgr-kinding-I} and~\ref{fig:polyvgr-kinding-II}
contain the rules for type formation and kinding.
The rules for variables and application are standard.
Abstractions (rule \TirName{K-Lam}) are severely restricted. Their
argument must be a domain and their result must be $\KType$ or
$\KShape$. Moreover, the body can only refer to the argument domain;
all other domains are removed from the assumptions.
Constrained universal quantification (rule \TirName{K-All}) is
standard.

To form a function type, rule \TirName{K-Arr} asks that the argument
state and type are wellformed with respect to the assumptions. The
return state and type must be wellformed with respect to the assumptions
extended with the state $\Ctx_2$ of channels created by the function. This
state must be disjoint from the assumptions as indicated by
$\Ctx_1\DisjointAppend\Ctx_2$ (see
Figure~\ref{fig:polyvgr-disjoint-ctx-extension}). We also make sure
that $\Ctx_2$ only contains domains.

A channel type can be formed from any single-channel domain of shape
$\TShapeOne$ (rule \TirName{K-Chan}). The rules for access points,
unit, and pairs are straightforward and standard.

The rule \TirName{K-Send} and \TirName{K-Recv} control wellformedness
of sending and receiving types. In both cases, we require that both
the state and the type describing the transmitted value can only
reference the domain abstracted in the existential. This restriction
is necessary to enforce proper transfer of channel ownership between
sender and receiver.

The remaining rules for session types are standard.

Figure~\ref{fig:polyvgr-kinding-II} contains the rules for shapes,
domains, and states. We discussed shapes with their syntax
already. The domain rules are similar to product rules with the
additional disjointness constraint on the components of the combined
domain.
Empty states are trivially wellformed. A single binding is wellformed
if it maps a single-channel domain to a session type.

\begin{figure}[tp]
  \begin{mathpar}
    \rulePVGRTypeConvTApp \and
    \rulePVGRTypeConvProj \and
    \rulePVGRTypeConvDualEnd \and
    \rulePVGRTypeConvDualVar \and
    \rulePVGRTypeConvDualSend \and
    \rulePVGRTypeConvDualRecv \and
    \rulePVGRTypeConvDualChoice \and
    \rulePVGRTypeConvDualBranch
  \end{mathpar}
  \caption{Type conversion ($\PVGRTypeConv\Typ\Typ$)}
  \label{fig:polyvgr-type-conv}
\end{figure}
Figure~\ref{fig:polyvgr-type-conv} defines type conversion, where we
omit the standard rules for reflexivity, transitivity, symmetry, and
congruence. Conversion comprises beta reduction for functions and
pairs, and simplification of the dual operator: $\SEnd$ is
self-dual, the dual operator is involutory, for sending/receiving as
well as for choice/branch the dual operator flips the direction of the
communication.

Conversion is needed in the context of the dual operator, because a
programmer may use the dual operator in a type. If this type is
polymorphic over a session-kinded type variable $\TVar$, then the operator
cannot be fully eliminated as in $\Dual \TVar$. Once a type
application instantiates $\TVar$, we invoke conversion to enable
pushing the dual operator further down into the session type.

The conversion judgment does not destroy the simple inversion
properties of the expression and value typing rules as it is
explicitly invoked in just two expression typing rules:
\TirName{T-Send} for the $\ESend\cdot\cdot$ operation and \TirName{T-TApp} for
type application (see Figure~\ref{fig:polyvgr-exp-typing}).

\begin{figure}[tp]
  \begin{mathpar}
    \rulePVGRCstrEntailAxiom \and
    \rulePVGRCstrEntailSym \and
    \rulePVGRCstrEntailZero \and
    \rulePVGRCstrEntailSplit \and
    \rulePVGRCstrEntailMerge\and
    \rulePVGRCstrEntailProjMerge \and
    \rulePVGRCstrEntailProjSplit \\
    \rulePVGRCstrEntailEmpty \and
    \rulePVGRCstrEntailCons
  \end{mathpar}
  \caption{Constraint entailment ($\PVGRCstrEntail\Ctx\Cstr$)}
  \label{fig:polyvgr-constraint-entailment}
\end{figure}

Constraint entailment is defined structurally in
Figure~\ref{fig:polyvgr-constraint-entailment}. Disjointness of
domains can hold by assumption. Disjointness is symmetric. The empty
domain is disjoint with any other domain. Disjointness distributes over
combination of domains and is compatible with projections. It extends
to conjunctions of constraints in the obvious way.

The context restriction operators, $\CtxRestrictNonDom\Ctx$ and
$\CtxRestrictOnlyDom\Ctx$,  are a technical device. Both 
operators keep only bindings of type variables. One removes all domain
bindings and the other removes all non-domain bindings.

Figure~\ref{fig:polyvgr-disjoint-ctx-extension} defines the operator
$\Ctx_1 \DisjointAppend \Ctx_2$. The assumption is that $\Ctx_1$ is
known to contain disjoint bindings. The generated constraints $\Cstr_2$
make sure that $\Ctx_2$'s bindings are also disjoint and $\Cstr_{12}$
ensures that they are also  disjoint from $\Ctx_1$'s bindings.
\begin{figure}[tp]
  \begin{align*}
    \Ctx_1 \DisjointAppend \Ctx_2 &=
      \Ctx_1, \Ctx_2, \Cstr_2, \Cstr_{12} \text{ where}
  \end{align*}
  \begin{align*}
    \Cstr_2 &= \{ \TVar_1 \Disjoint \TVar_2 \mid  \TVar_1, \TVar_2 \in \Dom{\CtxRestrictOnlyDom{\Ctx_2}}, \TVar_1 \neq \TVar_2 \} \\
    \Cstr_{12} &= \{ \TVar_1 \Disjoint \TVar_2 \mid  \TVar_1 \in \Dom{\CtxRestrictOnlyDom{\Ctx_1}}, \TVar_2 \in  \Dom{\CtxRestrictOnlyDom{\Ctx_2}} \}
  \end{align*}
  \caption{Disjoint context extension ($\Ctx\DisjointAppend\Ctx$)}
  \label{fig:polyvgr-disjoint-ctx-extension}
\end{figure}

\subsection{Statics for expressions and processes}
\label{sec:statics-expressions}

\begin{figure}[tp]
  \begin{mathpar}
    \rulePVGRTypingVar \and
    \rulePVGRTypingUnit \and
    \rulePVGRTypingPair \and
    \rulePVGRTypingTAbs \and
    \rulePVGRTypingChan \and
    \hspace{-5mm}
    \rulePVGRTypingAbs
  \end{mathpar}
  \caption{Value typing  ($\PVGRHasValType\Ctx\Val\Typ$)}
  \label{fig:polyvgr-val-typing}
\end{figure}

As the syntax of expressions obeys A-normal form, there are three main judgments
\begin{itemize}
\item value typing $\PVGRHasValType\Ctx\Val\Typ$,
\item expression typing $\PVGRHasExpType\Ctx\St\Exp\Ctx\St\Typ$, and
\item configuration typing $\PVGRHasConfType\Ctx\St\Cfg$.
\end{itemize}
The rules in Figure~\ref{fig:polyvgr-val-typing} define the value
typing judgment that applies to syntactic values.
The most notable issue with these rules is that they do not handle
states. As syntactic values have no effect, they cannot affect the
state and this restriction is already stated in the typing judgment.

The rules for variables, unit, pairs, and type abstraction are
standard. Channel values refer to single-channel domains. Rule
\TirName{T-Abs} for lambda abstraction checks wellformedness of the
function type and invokes
expression typing to obtain the return state and  type.

\begin{figure}[tp]
  \begin{mathpar}
    \rulePVGRTypingLet \and
    \rulePVGRTypingVal \and
    \rulePVGRTypingProj \and
    \rulePVGRTypingNew \and
    \rulePVGRTypingApp \and
    \rulePVGRTypingTApp \and
    \hspace{-2mm}
    \rulePVGRTypingRequest \and
    \hspace{-2mm}
    \rulePVGRTypingAccept \and
    \rulePVGRTypingSend \and
    \rulePVGRTypingRecv \and
    \rulePVGRTypingFork \and
    \rulePVGRTypingClose \and
    \rulePVGRTypingSelect \and
    \rulePVGRTypingCase
  \end{mathpar}
  \caption{Expression typing ($\PVGRHasExpType\Ctx\St\Exp\Ctx\St\Typ$)}
  \label{fig:polyvgr-exp-typing}
\end{figure}

Figure~\ref{fig:polyvgr-exp-typing} contains the rules for expression
typing. We concentrate on the state-handling aspect as the value level
is mostly standard. Recall that we assume expressions are in
strict A-normal form, which means that every expression consists of a cascade
of $\Terminal{let}$ expressions that ends in a syntactic value. Rule
\TirName{T-Val} embeds values in expression typing. It is
special as it threads the entire state $\St$ even though it makes no
use of it. This special treatment is needed at the end of a
$\Terminal{let}$ cascade because rule \TirName{T-Let} splits the
incoming state for ${\ELet\EVar{\Exp_1}{\Exp_2}}$ into the part
$\St_1$ required by the header expression $\Exp_1$ and $\St_2$ for the
continuation $\Exp_2$, but then it feeds the entire outgoing state of $\Exp_1$
combined with $\St_2$ into the continuation $\Exp_2$. All
remaining rules only take the portion of the incoming state that is processed
by the operation, so they are
designed to be applied in the header position $\Exp_1$ of a
$\Terminal{let}$. Thankfully, this use is guaranteed by strict A-normal form.

The remaining rules all assume the expression is used in header
position of a $\Terminal{let}$. Projection (rule \TirName{T-Proj})
requires no state.
Type application (rule \TirName{T-TApp}) checks the constraints after
instantiation and enables conversion of the instantiated
type. Conversion is needed (among others) to expose the session type
operators (see discussion for Figure~\ref{fig:polyvgr-type-conv}).

Function application (rule \TirName{T-App}) just
rewrites the function type to an expression judgment.
The existential part of this judgment is reintegrated into the
state in the \TirName{T-Let} rule, which inserts the necessary
disjointness constraints via the disjoint append-operator
$\_\DisjointAppend\_$. As the \TirName{T-Let} rule presents the
function application exactly with the state it can handle, we must delay
the creation of the constraints to the $\Terminal{let}$-expression
because it is here that the return state must be merged with the state
for the continuation, which may contain additional  domains.
Given that the existentially bound domains are subject to
$\alpha$-renaming, we can freely impose the corresponding
disjointness constraints to force local freshness of the
domains. Explicit disjointness is required because of the axiomatic
nature of our constraint system.

The $\ENew$ expression creates an access point which requires no state
(rule \TirName{T-New}).
The rules \TirName{T-Request} and \TirName{T-Accept} type the
establishment of a connection via an access point. They return one end
of the freshly created channel, so that the channel's domain is
existentially quantified. The kind of this domain is
$\KDomain\TShapeOne$ (omitted in the rules as it is implied by the
binding).

The rule \TirName{T-Send} for sending is particularly interesting.
It splits the incoming state into the channel $\POLYDom$ on which the
sending takes place and the state $\St$, which will be passed along
with the value. The rule guesses a domain $\POLYDom'$ such that the
state expected in the session type matches the state $\St$ and the
type expected by the session type matches the type of the provided
argument. This matching is achieved with a type conversion judgment
that implements reduction for functions and pairs at the type level (see
Figure~\ref{fig:polyvgr-type-conv}).  The outgoing state only retains
the channel $\POLYDom$ bound to the continuation session $\Ses$.

Receiving (rule \TirName{T-Recv}) is much simpler: we treat the
received channels like new created one in the existential component of
the typing judgment.

Forking (rule \TirName{T-Fork}) starts a new process from a $\TUnit
\to \TUnit$ function. The new process takes ownership of all incoming
state.
Closing a channel (rule \TirName{T-Close}) just requires a single channel with
type $\SEnd$ and returns an empty state.

Rule \TirName{T-Select} performs the standard rewrite of the session
type for selecting a branch in the protocol. The dual rule
\TirName{T-Case} is slightly more subtle. It requires that both
branches end in the same state, that is, they must create channels and
operate on open channels in the same way (or close them before
returning from the branch).

\begin{figure}[tp]
  \begin{mathpar}
    \rulePVGRTypingExp \and
    \rulePVGRTypingPar \and
    \rulePVGRTypingBindChan \and
    \rulePVGRTypingBindChanClosed \and
    \rulePVGRTypingBindAP
  \end{mathpar}
  \caption{Configuration typing ($\PVGRHasConfType\Ctx\St\Cfg$)}
  \label{fig:polyvgr-conf-typing}
\end{figure}
Figure~\ref{fig:polyvgr-conf-typing} contains the typing rules for
$\PolyVGR$ processes. They are straightforward with one exception. In
rule \TirName{T-NuChan}, we need to make sure that the newly
introduced channel ends are disjoint (i.e., different) from each other
and from previously defined domains. Rule \TirName{T-NuChanClosed}
replaces \TirName{T-NuChan} after the channel is closed. The
difference is that it no longer places the channels in the state
$\St$. This way, operations on the closed channel are disabled, but it is
still possible to have references to it in dead code.

\subsection{Dynamics}
\label{sec:dynamics}

\begin{figure}[tp]
  \begin{mathpar}
    \rulePVGRExprRedBetaFun \and
    \rulePVGRExprRedBetaPair \and
    \rulePVGRExprRedBetaAll \and
    \rulePVGRExprRedBetaLet \and
    \rulePVGRExprRedLift
  \end{mathpar}
  \caption{Expression reduction ($\Exp \ReducesToE \Exp$)}
  \label{fig:polyvgr-expr-reduction}
\end{figure}
Figure~\ref{fig:polyvgr-expr-reduction} defines expression reduction,
which is standard for a polymorphic call-by-value lambda
calculus. Recall that an evaluation context just selects the header of
a $\Terminal{let}$ expression.

\begin{figure}[tp]
  \begin{mathpar}
    \rulePVGRCongNull \and
    \rulePVGRCongComm  \and
    \rulePVGRCongLift \and
    \rulePVGRCongAssoc \and
    \rulePVGRCongSwap \and
    \rulePVGRCongScopeChan \and
    \rulePVGRCongScopeAP
  \end{mathpar}
  \caption{Configuration congruence ($\Cfg \Cong \Cfg$)}
  \label{fig:polyvgr-congruence}
\end{figure}
Figure~\ref{fig:polyvgr-congruence} defines a congruence relation on
processes. This standard relation (process composition is commutative,
associative with the unit process as a neutral element, and compatible with channel and scope abstractions)
enables us to reorganize processes such that process reductions are
simple to state. Channel abstraction may swap the channel names.

\begin{figure}[tp]
  \begin{mathpar}
    \rulePVGRCfgRedFork \and
    \rulePVGRCfgRedNew \and
    \rulePVGRCfgRedExpr \and
    \rulePVGRCfgRedRequestAccept \and
    \rulePVGRCfgRedSendRecv \and
    \rulePVGRCfgRedSelectCase \and
    \rulePVGRCfgRedClose
  \end{mathpar}
  \caption{Configuration reduction ($\Cfg \ReducesToC \Cfg$)}
  \label{fig:polyvgr-cfg-reduction}
\end{figure}
Figure~\ref{fig:polyvgr-cfg-reduction} defines reduction for
processes. Rules \TirName{CR-Fork} and \TirName{CR-New} apply to an expression
process. The $\EFork$ expression creates a new process that applies
the $\EFork$'s argument to $\EUnit$ while the old process continues
with $\EUnit$.  The $\ENew$ expression creates a new access point and
leaves its name in the evaluation context.

The remaining rules all concern communication between two processes. Our
rules have explicit assumptions that congruence rearranges processes as needed
for the reductions to apply. All these rules involve binders and
assume an additional process $C'$ running in parallel with the
processes participating in the redex, which keeps the processes with
references to the binder.

Rule \TirName{CR-RequestAccept} creates a channel when there is a
request and an accept on the same access point. The reduction creates
the two ends of the new channel and passes them to the processes.

Rules \TirName{CR-SendRecv} and \TirName{CR-SelectCase} are standard. They could be blocked without
the congruence rule \TirName{CC-Swap} in place.

Rule \TirName{CR-Close} is slightly unusual for readers familiar with
linear session type calculi. The rule does not remove the closed
channel from the configuration because the process under the binder
may still contain (dead) references to the channel. This design makes
reasoning about configurations in final state slightly more involved.

\section{Metatheory}
\label{sec:metatheory}

\newcommand*\SubjectReductionLemma[1]{
  \begin{lemma}[Subject Reduction]
    \label{#1}
    \
    \begin{enumerate}
    \item
      $
      \inferrule{
        \PVGRIsCtx{\Ctx_1} \\
        \PVGRHasKind{\Ctx_1}{\St_1}{\KSt} \\
        \PVGRHasExpType{\Ctx_1}{\St_1}{\Exp}{\Ctx_2}{\St_2}{\Typ} \\
        \Exp \ReducesToE \Exp'
      }{
        \exists \Typ'.\
          \PVGRHasExpType{\Ctx_1}{\St_1}{\Exp'}{\Ctx_2}{\St_2}{\Typ'} \land
          \PVGRTypeConv{\Typ'}{\Typ}
      }
      $
    \item
      $
      \inferrule{
        \PVGRIsCtx{\Ctx} \\
        \PVGRHasKind{\Ctx}{\St}{\KSt} \\
        \PVGRHasConfType{\Ctx}{\St}{\Cfg} \\
        \Cfg \ReducesToC \Cfg'
      }{
        \exists \St'.\ 
        \PVGRHasConfType{\Ctx}{\St'}{\Cfg'}
      }
      $
    \end{enumerate}
  \end{lemma}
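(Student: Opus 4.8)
The plan is to prove both parts by induction on the reduction relation --- $\ReducesToE$ in part~(1), $\ReducesToC$ in part~(2) --- after assembling the usual infrastructure. The supporting lemmas I would establish first are: \textbf{(a)} \emph{inversion} for the value, expression, and configuration typing judgments (each is syntax-directed, but the existential and state components must be tracked with care); \textbf{(b)} \emph{weakening} for unused variable/kind bindings and for disjointness constraints; \textbf{(c)} a \emph{value substitution} lemma --- since values have no effect, substituting a well-typed value for an expression variable preserves expression and configuration typing and leaves the state untouched; \textbf{(d)} a \emph{type substitution} lemma for type variables of arbitrary kind, resting on the sub-lemmas that substituting a well-kinded type or domain preserves kinding and constraint entailment (the disjointness constraints make the latter the only nonroutine point); \textbf{(e)} that type conversion preserves kinds, and that typing \emph{respects conversion of a hypothesis} --- replacing $\EVar : \Typ_1$ by $\EVar : \Typ_1'$ with $\Typ_1 \Cong \Typ_1'$ yields a derivation whose result type is convertible to the original; \textbf{(f)} \emph{congruence preserves typing}: if $\PVGRHasConfType\Ctx\St\Cfg$ and $\Cfg \Cong \Cfg'$ then $\PVGRHasConfType\Ctx\St{\Cfg'}$, where the only delicate rule is \TirName{CC-Swap}, which needs involutivity of the dual operator up to conversion; and \textbf{(g)} \emph{context-plugging} lemmas that split the typing of $\ECtx\Exp$ and $\CCtx\Cfg$ into the typing of the hole plus a reusable frame.

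For part~(1) I would case on the last rule of $\Exp \ReducesToE \Exp'$. The four $\beta$-rules each invert the matching pair of typing rules --- \TirName{T-App}/\TirName{T-Abs}, \TirName{T-TApp}/\TirName{T-TAbs}, \TirName{T-Proj}/\TirName{T-Pair}, and \TirName{T-Let}/\TirName{T-Val} --- and then apply value substitution~(c) or type substitution~(d). The conversion permitted by the statement is actually needed only for \TirName{ER-BetaAll}, where \TirName{T-TApp} has already introduced a conversion, and --- via lemma~(e) --- for \TirName{ER-Lift}, where the inductive hypothesis may change the header's result type up to $\Cong$, so that \TirName{T-Let} must be reapplied with the altered binding type. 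In the \TirName{ER-BetaLet} case, the third premise of \TirName{T-Let} provides exactly the state well-kindedness that part~(1) threads along and that is required to re-invoke the rule.

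For part~(2) I would case on the last rule of $\Cfg \ReducesToC \Cfg'$. \TirName{CR-Expr} and \TirName{CR-Lift} follow from part~(1), the context-plugging lemma~(g), and the induction hypothesis; the outgoing state is unchanged here, matching the empty-state requirement of \TirName{T-Exp}. \TirName{CR-Fork} and \TirName{CR-New} use~(g) to isolate the redex, invert \TirName{T-Fork} resp.\ \TirName{T-New}, hand the required sub-state to the new thread (resp.\ extend the context with the new access point), and reassemble with \TirName{T-Par} and \TirName{T-NuAccess}. The communication rules \TirName{CR-RequestAccept}, \TirName{CR-SendRecv}, \TirName{CR-SelectCase}, and \TirName{CR-Close} all begin by invoking~(f) to bring the configuration into the exact shape demanded by the rule, then invert through the $\nu$-binders and parallel compositions, decompose the two evaluation contexts with~(g), and invert the two communicating operations; the result is put back together with \TirName{T-NuChan} --- or with \TirName{T-NuChanClosed} for \TirName{CR-Close}, which deliberately keeps the now-dead channel names in the context (so residual references in dead code stay well-typed) while dropping them from the state.

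I expect the \TirName{CR-SendRecv} case to be the main obstacle. Rule \TirName{T-Send} has \emph{guessed} a domain $\POLYDom'$ for which $\{\POLYDom'/\TVar'\}\St' \Cong \St$ and $\{\POLYDom'/\TVar'\}\Typ' \Cong \Typ$ (matching the payload state $\St$ and the type $\Typ$ of the transmitted value), whereas on the receiving side \TirName{T-Recv} left the same bound variable $\TVar'$ \emph{existentially} quantified, so the receiver's continuation is typed under $\TVar' : \KDomain\POLYShape$ with its state extended by the session type's payload component $\St'$. Reconciling the two sides means instantiating the receiver's existential with the sender's $\POLYDom'$ via type substitution~(d) --- which turns $\St'$ into something convertible to $\St$ --- and then transferring that state out of the sender's process state into the receiver's while re-establishing every disjointness constraint, all of which must be squared with the fact that the session type on the $\TVar'$ end is $\Dual{(\SSend{\TVar'}{\KDomain\POLYShape}{\St'}{\Typ'}\Ses)}$, which converts to $\SRecv{\TVar'}{\KDomain\POLYShape}{\St'}{\Typ'}{\Dual\Ses}$ so that \TirName{T-Recv} did apply, and whose continuation $\Dual\Ses$ matches the dual of the session $\Ses$ re-installed in the $\nu$-binder. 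Getting the bookkeeping of domains, constraints, and conversions to line up here is where the real work lies; \TirName{CR-RequestAccept} is a gentler instance of the same pattern, with two freshly created channel names whose existential packages are repackaged into a single $\nu$-binder.
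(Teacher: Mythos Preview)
Your proposal is correct and follows essentially the same approach as the paper: both argue by induction on the reduction derivation, relying on the same infrastructure (weakening, a unified substitution lemma, removal of entailed constraints, subject congruence, and evaluation-context split/merge lemmas), and both single out \TirName{CR-SendRecv} as the case requiring the most bookkeeping. One small correction: there is no \TirName{CR-Lift} rule in the calculus---the configuration context $\CCtxSym$ is baked into every configuration-reduction rule directly---so that case simply disappears; otherwise your plan matches the paper's proof, and your remark about threading conversion through \TirName{ER-Lift} via lemma~(e) is in fact more explicit than the paper's own presentation.
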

}

\newcommand*\SubjectCongruenceLemma[1]{
  \begin{lemma}[Subject Congruence]
    \label{#1}
    \begin{align*}
      \inferrule{
        \PVGRIsCtx{\Ctx} \\
        \PVGRHasKind{\Ctx}{\St}{\KSt} \\
        \PVGRHasConfType{\Ctx}{\St}{\Cfg} \\
        \Cfg \Cong \Cfg'
      }{
        \PVGRHasConfType{\Ctx}{\St}{\Cfg'}
      }
    \end{align*}
  \end{lemma}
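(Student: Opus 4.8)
The plan is to induct on the derivation of $\Cfg \Cong \Cfg'$, reading $\Cong$ as the least equivalence relation closed under the rules of Figure~\ref{fig:polyvgr-congruence}. Reflexivity is immediate, transitivity composes two instances of the lemma, and symmetry obliges us to show that every base rule preserves typing in \emph{both} directions. The congruence rule \TirName{CC-Lift} is discharged by an inner induction on the configuration context: with an empty context we invoke the outer induction hypothesis, and for a context of the form $\CBindChan\TVar{\TVar'}\Ses\CCtxSym$, $\CBindAP\EVar\Ses\CCtxSym$, or $(\CPar\CCtxSym\Cfg)$ we invert the matching configuration-typing rule (\TirName{T-NuChan}/\TirName{T-NuChanClosed}, \TirName{T-NuAccess}, \TirName{T-Par}), apply the inner hypothesis to the configuration in the hole under the extended environment and the appropriate portion of the state, and reassemble with the same rule. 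This uses that the configuration-typing rules are syntax-directed, so inversion is unambiguous apart from the \TirName{T-NuChan}/\TirName{T-NuChanClosed} overlap when $\Ses=\SEnd$ (a harmless extra case), and that inverting \TirName{K-StMerge} on the well-kindedness premise $\PVGRHasKind\Ctx\St\KSt$ recovers well-kindedness and disjointness of the components of a merged state.

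Before the case analysis I would collect the usual structural lemmas, each an easy induction on typing derivations: \emph{weakening} of the type environment by fresh type-variable bindings and by well-kinded disjointness constraints (exactly what $\Ctx_1 \DisjointAppend \Ctx_2$ adds to $\Ctx_1$); \emph{strengthening}, deleting a type variable or constraint that occurs free neither in the configuration nor in the state; a \emph{conversion} lemma, that configuration typing is stable under type conversion applied to the state bindings and to the session annotation of a $\nu$-binder, together with the derived equation $\PVGRTypeConv{\Dual{\Dual\Ses}}{\Ses}$ for every session type $\Ses$ (induction on $\Ses$ with the dual-conversion rules); and the convention that states and type environments are identified up to reordering of entries, with disjointness constraints up to \TirName{CE-Sym}, so that the monoid laws for $\St_1,\St_2$ and the symmetry of $\DisjointAppend$ are available implicitly.

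The easy cases are then short. \TirName{CC-Null}: inverting \TirName{T-Par} and then \TirName{T-Exp}, \TirName{T-Val} on the $\EUnit$ summand forces its state to be $\Empty$, so the composite state equals (up to the unit law) that of $\Cfg$; the other direction re-applies \TirName{T-Par} against the trivial judgment $\PVGRHasConfType\Ctx\Empty\EUnit$. \TirName{CC-Comm} and \TirName{CC-Assoc} are the commutativity and associativity of \TirName{T-Par}, using the corresponding laws for state merge and symmetry of the disjointness constraints from \TirName{K-StMerge}. \TirName{CC-Swap}: invert \TirName{T-NuChan} (or \TirName{T-NuChanClosed}), replace $\Ses$ by the well-kinded $\Dual\Ses$ (by \TirName{K-Dual}), use $\PVGRTypeConv{\Dual{\Dual\Ses}}{\Ses}$ with the conversion lemma to rewrite the second binding's type from $\Dual{\Dual\Ses}$ back to $\Ses$, and observe that exchanging the two binder slots merely reorders the appended environment and flips one $\Disjoint$-constraint.

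The hard cases --- and the principal obstacle --- are \TirName{CC-Scope-Chan} and \TirName{CC-Scope-Access}. The outward (scope-extrusion) direction, widening a $\nu$-binder's scope over a new parallel sibling $\Cfg_1$, is easy: invert \TirName{T-Par} and \TirName{T-NuChan}, weaken the typing of $\Cfg_1$ by the freshly bound domains and the constraints relating them to $\Ctx$, and rebuild with \TirName{T-Par} inside \TirName{T-NuChan}; the enlarged state stays well-kinded because the new domains are constrained disjoint from every domain of $\Ctx$ and hence, by the constraint-entailment rules, from every domain occurring in $\St$. The inward direction, narrowing a $\nu$-binder's scope to exclude a parallel sibling $\Cfg_1$, is the delicate one: inverting \TirName{T-NuChan} and then \TirName{T-Par} gives a split of $\St,\StBind\TVar\Ses,\StBind{\TVar'}{\Dual\Ses}$ into the state feeding $\Cfg_1$ and the rest, and we must argue that the two new channel bindings land entirely in the rest. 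This needs a \emph{relevance} lemma: in any derivation of $\PVGRHasConfType\Ctx\St\Cfg$, every domain variable appearing in $\St$ is reachable from the free variables of $\Cfg$ --- it either occurs syntactically in $\Cfg$ or occurs in the $\Ctx$-type of some free term variable of $\Cfg$ --- which holds because \TirName{T-Exp} demands that the final expression state be $\Empty$, so every channel carried in the state is eventually operated on. Since the side condition of \TirName{CC-Scope-Chan} supplies $\TVar,\TVar'\notin\mathrm{fv}(\Cfg_1)$ and the binder's freshness for $\Ctx$ excludes the indirect route, $\TVar$ and $\TVar'$ do not occur in $\Cfg_1$'s state, which is therefore a sub-state of $\St$; strengthening removes $\TVar,\TVar'$ from $\Cfg_1$'s environment, and re-applying \TirName{T-NuChan} to the other summand and then \TirName{T-Par} finishes the case. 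Propagating the relevance invariant through the mutual recursion with expression typing --- in particular tracking how existentially bound domains in function types relate to free variables, and checking operations like $\EFork$ and $\ESend$ that move state --- is the fussiest ingredient; the $\SEnd$ versus $\Dual\SEnd$ mismatch between \TirName{T-NuChan} and \TirName{T-NuChanClosed} is a minor wrinkle mopped up by the conversion lemma.
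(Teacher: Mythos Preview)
Your approach---induction on the congruence derivation, with \TirName{CC-Lift} handled by an inner induction on the configuration context---is exactly what the paper does. The paper's proof is three lines long: for \TirName{CC-Lift} it invokes its Lemma on evaluation-context typings for configurations (Lemma~\ref{lem:eval-ctx-typings-cfg}), which packages precisely the inner induction you spell out, and it dispatches \emph{all} remaining cases with the single sentence ``straightforward by reordering the derivation trees of the configuration typings.''

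Your write-up is therefore not a different argument but a much more honest one. In particular, the paper never surfaces the relevance issue you isolate for the inward direction of \TirName{CC-Scope-Chan} and \TirName{CC-Scope-Access}: that when you invert \TirName{T-NuChan} and then \TirName{T-Par}, nothing \emph{a priori} forces the two new channel bindings to land on the $\Cfg_2$ side of the split, and one needs to argue from the side condition $\TVar,\TVar'\notin\mathrm{fv}(\Cfg_1)$ together with the freshness of $\TVar,\TVar'$ for $\Ctx$ that they cannot participate in $\Cfg_1$'s state. Your relevance lemma is a reasonable way to close this; the paper simply sweeps it under ``reordering.'' Likewise the paper does not isolate the weakening/strengthening lemmas or the $\Dual{\Dual\Ses}\equiv\Ses$ conversion step needed for \TirName{CC-Swap}, all of which you flag. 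So: same skeleton, but you have filled in the parts the paper left implicit.
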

}

We establish session fidelity and type soundness by applying the usual
syntactic methods based on subject reduction and progress. Our subject
reduction result for expressions 
applies in any context. As the type system of \PolyVGR includes a
conversion judgment, we can only prove subject reduction up to
conversion.
Subject reduction also holds for configurations.

All proofs along with additional lemmas etc may be found in the
supplemental material.

\SubjectReductionLemma{lem:subject-reduction}

As configuration reduction is applied modulo the congruence relation,
we also need to show that congruence preserves typing.

\SubjectCongruenceLemma{lem:subject-congruence}

It is tricky to state a progress property in the context of
processes,  in particular when deadlocks may occur. Hence,
we define several predicates on expressions to state progress
concisely. The $\IsValue\Exp$ predicate should be self
explanatory. The $\IsComm\Exp$ predicate characterizes expressions
that cannot reduce at the expression level, but require reduction at
the level of configurations. Of those, the $\EFork\_$ case is
harmless, but the other cases require interaction with other processes
to reduce.
\begin{definition}
  The predicates $\IsValue\Exp$ and $\IsComm\Exp$ are defined inductively.
  \begin{itemize}
  \item $\IsValue\Exp$ if exists $\Val$ such that $\Exp=\Val$.
  \item $\IsComm\Exp$ if one of the following cases applies
    \begin{multicols}{2}
    \begin{itemize}
    \item $\Exp = \EFork{\EAbs{\St}\EVar{\Typ}\Exp_1}$,
    \item $\Exp = \ENew\Ses$,
    \item $\Exp = \EAccept\Val $,
    \item $\Exp = \ERequest\Val $,
    \item $\Exp = \ESend{\Val}{\EChan\POLYDom} $,
    \item $\Exp = \ERecv {\EChan\POLYDom}$,
    \item $\Exp = \ESelect\Label {\EChan\POLYDom}$,
    \item $\Exp = \ECase {\EChan\POLYDom}{\Exp_1}{\Exp_2}$,
    \item $\Exp = \EClose {\EChan\POLYDom} $, or
    \item $\Exp = \ELet\EVar{\Exp_1}{\Exp_2}$\\ where $\IsComm{\Exp_1}$.
    \end{itemize}
    \end{multicols}
  \end{itemize}
\end{definition}
We also need a predicate that characterizes contexts built in a
configuration. Besides type variables and constraints, they can only
bind access points.
\begin{definition}
  The predicate $\IsOuter\Ctx$ is defined by
  \begin{itemize}
  \item $\IsOuter\Empty$,
  \item $\IsOuter{(\Ctx, \TVar : \Kind)}$ if $\IsOuter\Ctx$,
  \item $\IsOuter{(\Ctx, \EVar : \Typ)}$ if $\IsOuter\Ctx$ and $\Typ =
    \Ap\Ses$, and
  \item $\IsOuter{(\Ctx, \POLYDom_1 \DisjointAppend \POLYDom_2)}$ if
    $\IsOuter\Ctx$. 
  \end{itemize}
\end{definition}

We are now ready to state progress for expressions. A typed expression
is either a value, stuck on a communication (or fork), or it reduces.
\begin{restatable}[Progress for expressions]{lemma}{ProgressLemma}\label{lem:progress-congruence}
  \begin{mathpar}
    \inferrule{
      \PVGRIsCtx{\Ctx} \\
      \IsOuter\Ctx \\
      \PVGRHasKind{\Ctx}{\St}{\KSt} \\
      \PVGRHasExpType{\Ctx}{\St}{\Exp}{\Ctx'}{\St'}{\Typ'}
    }{
      \IsValue\Exp \lor
      \IsComm\Exp \lor
      \exists \Exp'.\ \Exp \ReducesToE \Exp'
    }
  \end{mathpar}
\end{restatable}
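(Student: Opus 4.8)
The plan is to prove the statement by induction on the derivation of $\PVGRHasExpType{\Ctx}{\St}{\Exp}{\Ctx'}{\St'}{\Typ'}$, with a case analysis on the last rule applied. Since expression typing is syntax-directed — each expression former is the subject of exactly one rule, and \TirName{T-Val} covers precisely the syntactic values — this amounts to a case analysis on the shape of $\Exp$, and the only case in which the induction hypothesis is actually used is \TirName{T-Let}. The one auxiliary fact I would need is a \emph{canonical forms} property: if $\PVGRHasValType\Ctx\Val\Typ$ with $\IsOuter\Ctx$, then the top constructor of $\Typ$ determines the shape of $\Val$ — a value of arrow type is a $\lambda$-abstraction, a value of a pair type $\TPair{\Typ_1}{\Typ_2}$ is a pair, a value of a type $\TAll\TVar\Kind\Cstr{\Typ_1}$ is a type abstraction, and a value of type $\TChan\POLYDom$ is a channel reference $\EChan\POLYDom$. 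This would follow by inversion on the value typing derivation: value typing contains no conversion rule and is therefore entirely syntax-directed, so the only way a value that is not a $\lambda$-abstraction could receive an arrow type is via \TirName{T-Var}; but $\IsOuter\Ctx$ permits expression-variable bindings only at access-point types $\Ap\Ses$, which rules that out, and likewise for the other shapes.

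With canonical forms available, the non-recursive cases are immediate. For \TirName{T-Val}, $\Exp$ is a value, so $\IsValue\Exp$. For \TirName{T-App} the function value has an arrow type under the outer context $\Ctx$, hence is a $\lambda$-abstraction, so \TirName{ER-BetaFun} fires and $\Exp$ reduces; likewise \TirName{T-Proj} reduces via \TirName{ER-BetaPair} and \TirName{T-TApp} via \TirName{ER-BetaAll} (here the argument value has a $\forall$-type). For \TirName{T-Fork} the argument has an arrow type, hence is a $\lambda$-abstraction, so $\Exp$ matches the $\EFork{\EAbs{\St}\EVar{\Typ}\Exp_1}$ clause of $\IsComm$. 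The expressions $\ENew\Ses$, $\EAccept\Val$ and $\ERequest\Val$ lie in $\IsComm$ directly. For each of \TirName{T-Send}, \TirName{T-Recv}, \TirName{T-Select}, \TirName{T-Case} and \TirName{T-Close}, the relevant value premise has channel type $\TChan\POLYDom$, so canonical forms forces that value to be $\EChan\POLYDom$, placing $\Exp$ in the matching $\IsComm$ clause.

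The one recursive case is \TirName{T-Let}, with $\Exp = \ELet\EVar{\Exp_1}{\Exp_2}$ and incoming state $\St = \St_1,\St_2$, where inversion gives $\PVGRHasExpType{\Ctx}{\St_1}{\Exp_1}{\Ctx_2}{\St_2'}{\Typ_1}$. To apply the induction hypothesis to $\Exp_1$ I would supply $\PVGRIsCtx\Ctx$ and $\IsOuter\Ctx$ (both available, since the input context of $\Exp_1$ is again $\Ctx$) together with $\PVGRHasKind\Ctx{\St_1}\KSt$, which I would obtain from the hypothesis $\PVGRHasKind\Ctx{\St_1,\St_2}\KSt$ by a routine state-splitting lemma (inversion on \TirName{K-StMerge}; kinding has no conversion rule). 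The hypothesis then gives three subcases: if $\Exp_1 = \Val_1$ is a value, then $\Exp \ReducesToE \Subst{\Val_1}{\EVar}\Exp_2$ by \TirName{ER-BetaLet}; if $\IsComm{\Exp_1}$, then $\IsComm\Exp$ by the last clause of the definition of $\IsComm$; and if $\Exp_1 \ReducesToE \Exp_1'$, then $\Exp \ReducesToE \ELet\EVar{\Exp_1'}{\Exp_2}$ by \TirName{ER-Lift}. Note that progress for $\Exp_2$ is never needed, so the non-outer context in which the continuation is typed is harmless.

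I do not expect any deep obstacle: the argument is essentially routine, and the two places requiring care are the canonical forms lemma and the state-splitting lemma that feeds the \TirName{T-Let} induction hypothesis. Of these, canonical forms is where the $\IsOuter\Ctx$ hypothesis does real work — without it a free expression variable could carry a channel, arrow, pair, or $\forall$-type and the associated redex would be stuck, so progress genuinely fails in an arbitrary context. The type-conversion judgment is a non-issue, since conversion is invoked only in \TirName{T-Send} and \TirName{T-TApp}, never at the value level or in kinding, and hence does not disturb the inversion principles the argument relies on.
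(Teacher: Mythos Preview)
Your proposal is correct and matches the paper's proof essentially step for step: the paper also proceeds by induction on $\Exp$ (equivalently, on the typing derivation), invokes a canonical-forms lemma under the $\IsOuter\Ctx$ hypothesis for the elimination and communication cases, and handles \TirName{T-Let} by the same three-way split on the inductive hypothesis for $\Exp_1$. If anything, your write-up is slightly more explicit than the paper's in noting that the state-kinding premise $\PVGRHasKind\Ctx{\St_1}\KSt$ must be extracted by inversion on \TirName{K-StMerge} before the IH applies.
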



We also need to characterize configurations. A final configuration
cannot reduce in a good way: All processes are reduced to values, all
protocols on channels have concluded as indicated by their session
type $\SEnd$, and there may be access points.
\begin{definition}\label{def:is-final}
  The predicate $\IsFinal\Cfg$ is defined inductively by the following cases:
  \begin{itemize}
  \item $\IsFinal \Val$ (an expression process reduced to a value),
  \item $\IsFinal{ (\CPar{\Cfg_1}{\Cfg_2})}$ if $\IsFinal{\Cfg_1}$ and
    $\IsFinal{\Cfg_2}$, 
  \item $\IsFinal{ (\CBindAP\EVar\Ses\Cfg_1)}$ if $\IsFinal{\Cfg_1}$,
    or
  \item $\IsFinal{ (\CBindChan\TVar{\TVar'}{\SEnd}\Cfg_1)}$ if $\IsFinal{\Cfg_1}$.
  \end{itemize}
\end{definition}
The other possibility is that a configuration is deadlocked. The
following definition lists all the ways in which reduction of a
configuration may be disabled.
\begin{definition}\label{def:is-deadlock}
  The predicate $\IsDeadlock\Cfg$ holds for a configuration $\Cfg$ iff:
  \begin{enumerate}
  \item\label{item:1} For all configuration contexts $\CCtxSym$, if $\Cfg =
    \CCtx\Exp$, then either $\IsValue\Exp$ or $\IsComm\Exp$ and $\Exp
    \ne \EFork\Val$ and $\Exp \ne \ENew\Ses$.
  \item\label{item:2} For all configuration contexts $\CCtxSym$, if $\Cfg =
    \CCtx{\CBindAP\EVar\Ses\Cfg'}$, then
    \begin{itemize}
    \item if $\Cfg' = \CCtxA{\ECtxA{\ERequest\EVar}}$, then there is
      no $\CCtxSym_2$, $\ECtxSym_2$ such that $\Cfg' = \CCtxB{\ECtxB{\EAccept\EVar}}$,
    \item if $\Cfg' = \CCtxA{\ECtxA{\EAccept\EVar}}$, then there is
      no $\CCtxSym_2$, $\ECtxSym_2$ such that $\Cfg' = \CCtxB{\ECtxB{\ERequest\EVar}}$.
    \end{itemize}
  \item\label{item:3} For all configuration contexts $\CCtxSym$, if $\Cfg =
    \CCtx{\CBindChan{\TVar_1}{\TVar_2}\Ses \Cfg'}$, then
\begin{itemize}
    \item if $\Cfg' = \CCtxA{\ECtxA{\ESend\Val{\EChan{\TVar_\Label}}}}$, then there is
      no $\CCtxSym_2$, $\ECtxSym_2$ such that $\Cfg' = \CCtxB{\ECtxB{\ERecv{\EChan{\TVar_{3-\Label}}}}}$,
    \item if $\Cfg' = \CCtxA{\ECtxA{\ERecv{\EChan{\TVar_{\Label}}}}}$, then there is
      no $\CCtxSym_2$, $\ECtxSym_2$ such that $\Cfg' = \CCtxB{\ECtxB{\ESend\Val{\EChan{\TVar_{3-\Label}}}}}$,
    \item if $\Cfg' = \CCtxA{\ECtxA{\ESelect{\Label'}{\EChan{\TVar_\Label}}}}$, then there is
      no $\CCtxSym_2$, $\ECtxSym_2$ such that $\Cfg' = \CCtxB{\ECtxB{\ECase{\EChan{\TVar_{3-\Label}}}{\Exp_1}{\Exp_2}}}$,
    \item if $\Cfg' = \CCtxA{\ECtxA{\ECase{\EChan{\TVar_{\Label}}}{\Exp_1}{\Exp_2}}}$, then there is
      no $\CCtxSym_2$, $\ECtxSym_2$ such that $\Cfg' = \CCtxB{\ECtxB{\ESelect{\Label'}{\EChan{\TVar_{3-\Label}}}}}$,
    \item if $\Cfg' = \CCtxA{\ECtxA{\EClose{\EChan{\TVar_\Label}}}}$, then there is
      no $\CCtxSym_2$, $\ECtxSym_2$ such that $\Cfg' = \CCtxB{\ECtxB{\EClose{\EChan{\TVar_{3-\Label}}}}}$.
    \end{itemize}
  \end{enumerate}
\end{definition}

\vspace{-3mm}
\begin{restatable}[Progress for configurations]{lemma}{ProgressConfigurations}\label{lem:progress-configurations}
  \begin{mathpar}
    \inferrule{
      \PVGRIsCtx{\Ctx} \\
      \IsOuter\Ctx \\
      \PVGRHasKind{\Ctx}{\St}{\KSt} \\
      \PVGRHasConfType{\Ctx}{\St}{\Cfg} \\
    }{
      \IsFinal\Cfg \lor
      \IsDeadlock\Cfg \lor
      \exists\Cfg'.\ \Cfg \ReducesToC \Cfg'
    }
  \end{mathpar}
\end{restatable}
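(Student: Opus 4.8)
The plan is to proceed by induction on the derivation of $\PVGRHasConfType\Ctx\St\Cfg$, case-analyzing the last configuration typing rule, and to discharge the base case with \Cref{lem:progress-congruence}. In case \TirName{T-Exp} we have $\Cfg=\Exp$ with $\PVGRHasExpType\Ctx\St\Exp{\Ctx'}\Empty\Typ$, and \Cref{lem:progress-congruence} (whose hypotheses $\IsOuter\Ctx$ and $\PVGRHasKind\Ctx\St\KSt$ we are given) leaves three subcases. If $\IsValue\Exp$ then $\IsFinal\Cfg$; if $\Exp\ReducesToE\Exp'$ then $\Cfg\ReducesToC\Exp'$ by \TirName{CR-Expr}; if $\IsComm\Exp$ then either $\Exp=\EFork\Val$ or $\Exp=\ENew\Ses$, so \TirName{CR-Fork} resp.\ \TirName{CR-New} fires, or $\Exp$ is one of the channel/access-point operations, in which case $\IsDeadlock\Cfg$ holds: $\Cfg$ has no $\nu$-binder, so clauses~\ref{item:2} and~\ref{item:3} have no instances and clause~\ref{item:1} holds with the only context $\CCtxSym=\Box$.

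For \TirName{T-Par}, $\Cfg=\CPar{\Cfg_1}{\Cfg_2}$ with $\St=\St_1,\St_2$ and $\PVGRHasConfType\Ctx{\St_i}{\Cfg_i}$; inverting \TirName{K-StMerge} on $\PVGRHasKind\Ctx\St\KSt$ gives $\PVGRHasKind\Ctx{\St_i}\KSt$, so the induction hypothesis applies to both. If some $\Cfg_i$ reduces, so does $\Cfg$ by \TirName{CR-Lift} under a context $\CPar\CCtxSym{\Cfg_{3-i}}$ (for $i=2$ after \TirName{CC-Comm}); if both are final, $\Cfg$ is final; in every other combination $\Cfg$ is deadlocked, because a bare parallel composition exposes no $\nu$-binder (so clauses~\ref{item:2}/\ref{item:3} again have no instances), and clause~\ref{item:1} is inherited from the components — the operations they are stuck on cannot fire without a surrounding binder, which is absent here.

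The three $\nu$-binder rules are the interesting cases; take \TirName{T-NuChan}, $\Cfg=\CBindChan\TVar{\TVar'}\Ses\Cfg'$ (the others are analogous). Its premise types $\Cfg'$ in $\Ctx''=\Ctx\DisjointAppend\TVar:\KDomain\TShapeOne\DisjointAppend\TVar':\KDomain\TShapeOne$ with state $\St''=\St,\StBind\TVar\Ses,\StBind{\TVar'}{\Dual\Ses}$; the extension adds only type-variable and disjointness bindings so $\IsOuter{\Ctx''}$ holds, and $\St''$ is well-kinded by \TirName{K-StChan}, \TirName{K-Dual}, \TirName{K-StMerge} (from $\PVGRHasKind\Ctx\St\KSt$ and the premise $\PVGRHasKind\Ctx\Ses\KSession$), so the induction hypothesis applies to $\Cfg'$. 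If $\Cfg'$ reduces, $\Cfg$ reduces by \TirName{CR-Lift}. If $\Cfg'$ is final: for \TirName{T-NuChanClosed} and \TirName{T-NuAccess} so is $\Cfg$, but for \TirName{T-NuChan} this cannot happen, since a final configuration is typable only with an empty state (a value process has empty state by \TirName{T-Val} and \TirName{T-Exp}, and $\|$ and binders of a final configuration add none) whereas $\Cfg'$ carries the non-empty $\St''$. If $\Cfg'$ is deadlocked, we case-split on whether the bound ends $\TVar,\TVar'$ carry a matched pair of pending operations somewhere in $\Cfg'$ — a send opposite a receive, a select opposite a case, or two closes (for \TirName{T-NuAccess}, a request opposite an accept). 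If so, the congruence rules let us float the binder into position and rearrange the two processes so that $\Cfg$ matches the left-hand side of \TirName{CR-SendRecv}, \TirName{CR-SelectCase}, \TirName{CR-Close}, or \TirName{CR-RequestAccept}, and $\Cfg$ reduces. If not, $\IsDeadlock\Cfg$: clause~\ref{item:1} and the inner instances of clauses~\ref{item:2}/\ref{item:3} come from $\IsDeadlock{\Cfg'}$ (for \TirName{T-NuChanClosed}, $\EClose$ on a bound end would require that end in the state, which that rule withholds, so its outer instance of clause~\ref{item:3} is vacuous), and the outer instance of clause~\ref{item:2}/\ref{item:3} is precisely the ``no matched pair'' assumption, once well-typedness rules out mismatched pairs — a send cannot face a select because the session types stored for $\TVar$ and $\TVar'$ in $\St''$ are dual and each operation's typing rule pins down the top constructor of its channel's session type.

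The step I expect to be the main obstacle is the deadlocked-body subcase above: showing that the presence of a matched communication pair really does let us reassemble $\Cfg$, modulo $\Cong$, into the rigid shape demanded by the relevant \TirName{CR-*} rule — with the bound channel binder wrapping exactly a parallel pool that holds the two participating processes at top level — while the absence of such a pair is faithfully captured by $\IsDeadlock$. This hinges on an auxiliary normal-form observation (obtained from \TirName{CC-Assoc}, \TirName{CC-Comm}, the scope-extrusion rules, \TirName{CC-Swap}, and $\alpha$-renaming) that a well-typed configuration, after repeatedly inverting \TirName{T-Par} and applying \Cref{lem:progress-congruence} to each expression process, is congruent to a pool of value- and communication-processes under a layer of binders, with the state dictating which channel ends may meet. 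Once that bookkeeping is in place the remaining work — inheriting the deadlock clauses and checking each \TirName{CR-*} side condition — is routine.
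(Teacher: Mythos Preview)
Your approach is genuinely different from the paper's. The paper does not induct on the typing derivation at all; it argues by contraposition. Assuming $\neg\IsFinal\Cfg$ and $\neg\IsDeadlock\Cfg$, the negation of $\IsDeadlock$ means one of the three clauses in Definition~\ref{def:is-deadlock} fails, and the proof simply shows that each failure directly exhibits a reduction: a failure of clause~\ref{item:1} yields a $\CCtx\Exp$ with $\Exp$ either expression-reducible or a $\EFork$/$\ENew$ (firing \TirName{CR-Expr}/\TirName{CR-Fork}/\TirName{CR-New}); a failure of clause~\ref{item:2} or~\ref{item:3} yields a binder with a matched pair inside, and congruence rearranges to fire the corresponding \TirName{CR-}rule. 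The definition of $\IsDeadlock$ is tailored precisely so that its negation is ``some configuration rule applies,'' and the paper exploits this.

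Your inductive route is in principle workable, but it forces you to prove that $\IsDeadlock$ is \emph{compositional}---that deadlock of the pieces implies deadlock of the whole---and your justifications for this are not right as stated. In the \TirName{T-Par} case you claim ``a bare parallel composition exposes no $\nu$-binder,'' but configuration contexts do descend into the left component via $(\CPar\CCtxSym\Cfg)$, so $\CPar{\Cfg_1}{\Cfg_2}$ exposes every binder that $\Cfg_1$ does; you would need to argue that those instances are covered by $\IsDeadlock{\Cfg_1}$, not that they are absent. More seriously, you invoke a rule \TirName{CR-Lift} to propagate a reduction of $\Cfg_2$ to $\CPar{\Cfg_1}{\Cfg_2}$, but no such rule is in the calculus (Figure~\ref{fig:polyvgr-cfg-reduction}); the context-carrying rules \TirName{CR-Fork}/\TirName{CR-New}/\TirName{CR-Expr} only descend left, and only the communication rules have a built-in $\Cong$ premise. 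So ``$\Cfg_2$ reduces'' does not obviously give ``$\CPar{\Cfg_1}{\Cfg_2}$ reduces,'' and you would instead have to argue that in that situation $\IsDeadlock{(\CPar{\Cfg_1}{\Cfg_2})}$ holds---which is plausible (the right component is invisible to configuration contexts) but is the opposite of what you wrote. The paper's contrapositive argument sidesteps all of this bookkeeping entirely.
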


\section{Implementation}
\label{sec:implementation}

We have implemented a type checker and an interpreter for \PolyVGR in Haskell.
The syntax accepted by the implementation is exactly as presented in
this paper, i.e., type annotations are required at lambda abstractions
for input type and input state. 

The implementation of the type checker requires an algorithmic
formulation of the typing.
We briefly sketch how to make the declarative typing presented in this
paper algorithmic.
\begin{itemize}
  \item The rules for type conversion give rise to a
    type normalization function. Type conversion can then be decided by checking
    alpha-equivalence of normalized types.
  \item Constraint solving $\PVGRCstrEntail{\Ctx}{\Cstr}$ is decidable
    by normalizing and decomposing $\Ctx$ and $\Cstr$ into closed sets of atomic
    constraints $A_\Ctx$ and $A_\Cstr$ and checking $A_\Ctx \supseteq
    A_\Cstr$. Decomposition is done according to \textsc{CE-Split} and
    \textsc{CE-Sym}, yielding constraints of form $d_1 \Disjoint d_2$ where 
    $d_i = \pi_{\Label_1} \ldots \pi_{\Label_{n_i}} \alpha$. Then the
    closure is taken with respect to \textsc{CE-ProjMerge},
    \textsc{CE-ProjSplit}, and \textsc{CE-Sym}.
  \item The \textsc{T-Let} and \textsc{T-Par} rules
    non-deterministically split the input state $\St_1,\St_2$ between the
    subterms.
    The implementation threads the entire input state
    through the first subterm and uses the resulting output state as
    the input for the second subterm.
  \item In \textsc{T-Case} we have to check if the existential parts
    $\exists \Gamma_i. \St_i; \Typ_i$ of both branch typings are
    equal. This equality should be up to alpha-renaming and reordering of
    the variables bound in $\Gamma_1$ and $\Gamma_2$ and up the reordering
    of the bindings in $\St_1$ and $\St_2$. This equality can be decided
    by computing a renaming  $\rho$ on type variables such that $ \rho\Typ_1 = \Typ_2$. If all
    variables in the domain of $\rho$ are bound in $\Gamma_1$ and
    $\rho\Gamma_1 = \Gamma_2$ and $\rho\St_1 = \St_2$ up to
    reordering, then both existential packages are equal.
    A similar approach is used in \textsc{T-Send}, where the
    existential package of the session type needs to be matched against
    the current context and state.
\end{itemize}

\section{Extensions}
\label{sec:extensions}

Typestate is notoriously difficult to scale up to sum types or, more
generally, to algebraic datatypes. In this section, we sketch our
approach to add sum types to \PolyVGR and offer some insights into the
additional problems involved in handling recursive datatypes like lists.


To understand the issues arising with sum types, consider the type
$\Chan\Nchan + \Chan\Nchanb$ in the context of state $\St$. The situation is clear at run time: we
either have a channel described by $\Nchan$ or one described by
$\Nchanb$. But which channel should be described in the state $\St$?
Clearly, $\St = \Nchan\mapsto S_1; \Nchanb\mapsto S_2$ does not work
because it does not express the mutual exclusiveness of the presence
of $\Nchan$ and $\Nchanb$. In fact, if we matched against a value of
type $\Chan\Nchan + \Chan\Nchanb$, we would only consume one of
$\Nchan$ or $\Nchanb$ and leave the other channel identity dangling in
the outgoing state.

Instead, we propose to add new shapes and domains to the type
system along with the sum type. As we will see, the remaining features
needed to deal with sum types are already provided for.
\begin{align*}
    & \text{Types} &
    \Typ,\POLYShape,\POLYDom ::=\ &
                                             \dots
                                             \mid \TSum\Typ\Typ
                                             \mid \TShapeSum\POLYShape\POLYShape
                                    \mid \TDomEither\POLYDom\POLYDom
                                             \mid \TDomFrom \Label \POLYDom
    \\
    & \mathrm{Expressions} &
    \Exp ::=\ &
      \dots \mid
                \EInj \Label \Val \mid
                \EMatch\Val{x:\Exp}{x:\Exp}
\end{align*}
Sum types come with the usual introduction and elimination
forms, a sum shape $\TSum\POLYShape\POLYShape$, the domain of a sum
shape, and two sum extractors $\TDomFrom \LabelA \POLYDom$ and $\TDomFrom
\LabelB \POLYDom$ pronounced ``from''. The domain of a sum shape is a
pair of the domains of the two alternatives of the sum. The extractors are only
applicable to domains of sum shape and behave like projections as
becomes clear from the 
formation rules (extending \Cref{fig:polyvgr-kinding-I}):
\begin{mathpar}
  \rulePVGRKindingSum

  \rulePVGRKindingDomFrom
  \\
  \rulePVGRKindingShapeSum
  \hspace{2mm}
  \rulePVGRKindingDomSum
\end{mathpar}
The typing of sum introduction and elimination needs to be adapted to
account for shapes.
\begin{mathpar}
  \rulePVGRTypingInj
\end{mathpar}
In rule \TirName{T-Inj1}, we are given a value $\Val:\Typ$ along with
some $\St$ that describes the channels contained in $\Val$.
We assume that the shape of $\St$ is described by $\POLYShape_\LabelA$ and
corresponding domain $\POLYDom$. We further assume that the
alternatives of the sum are described by type functions $\StFun_\LabelA,
\TFun_\LabelA$ and $\StFun_\LabelB,
\TFun_\LabelB$. The point is that the pair labeled $\LabelA$ describes the real
resources in $\Val$ represented by $\St$ and the pair labeled
$\LabelB$ describes virtual
resources that serve as placeholders to describe the (non-existent)
other alternative of the sum. The two conversions determine the
connection to the real resources. 

Injecting the value into the sum type creates a virtual resource for
the non-existing alternative, which is represented by domain
$\Nchanb$. The real part---labeled $\LabelA$---continues to refer to
the same resources $\POLYDom$, so that the sharing semantics of
further channel references for those resources is preserved. The
virtual part---labeled $\LabelB$---is never exercised because the
run-time value has the form $\EInj\LabelA\Val$.

The alert reader might wonder why we do not treat $\EInj\LabelA\Val$
as a value. Indeed, $\EInj\LabelA\Val$ comes with a reduction to
create the virtual resource $\Nchanb$, which returns a syntactic value
$\VInj\LabelA\Val$. We elide the corresponding value typing rule,
which is obtained from \TirName{T-Inj1} by stripping the $\St$
components and assuming the presence of both domains in $\Ctx$.
\begin{mathpar}
  \rulePVGRTypingMatch
\end{mathpar}
To match on a value $\Val$ of sum type the elimination rule \TirName{T-Match} requires a
corresponding domain $\POLYDom$ of sum shape and we must be able to
partition the incoming state according to its two alternatives. (If
one of the alternatives carries no channels, then its shape is
$\TShapeZero$ and the corresponding state is empty.) As in the
introduction rule, the type and state functions $\TFun_\Label$ and
$\StFun_\Label$ describe the partitioning.
The match keeps the selected part of the state, which corresponds to
the real resources, and drops the other part, which corresponds to the
virtual resources.

The same general approach would also work for lists. However, due to
the recursion in the list type, we cannot allow sharing
between values in the list and outside of it. Essentially, a channel
value that is incorporated in a list has to give up its identity, but
at the same time the identity has to be remembered so that the channel
can be reconnected when extracted from the list.

\section{Related Work}
\label{sec:related-work}

We do not attempt to survey the vast amount of work in the session
type community, but refer the reader to recent survey papers and books
\cite{DBLP:journals/csur/HuttelLVCCDMPRT16,DBLP:journals/jlp/BartolettiCDDGP15,DBLP:journals/ftpl/AnconaBB0CDGGGH16,gay17:_behav_types}. Instead
we comment on the use of polymorphism in session types,
the modeling of disjointness in the context of polymorphism, and
potential connections to other work.

\subsection{Polymorphism and Session Types}
\label{sec:polym-sess-types}

Polymorphism for session types was ignored for quite a while, although
there are low-hanging fruit like parameterizing over the continuation
session. The story starts with an investigation of  bounded
polymorphism over the type of transmitted values to avoid problems
with subtyping in a $\pi$-calculus setting  \cite{DBLP:journals/mscs/Gay08}.

\citet{DBLP:journals/jfp/Wadler14} includes polymorphism on session
types where the quantifiers $\forall$ and $\exists$ are interpreted as sending
and receiving types, similar to Turner's polymorphic $\pi$-calculus
\cite{DBLP:phd/ethos/Turner96}.
\citet{DBLP:conf/esop/CairesPPT13,DBLP:journals/iandc/PerezCPT14}
consider impredicative quantifiers with session types using the same interpretation.

\citet{DBLP:journals/iandc/DardhaGS17}
extend an encoding of session types into $\pi$-types with parametric
and bounded polymorphism.
\citet{lindley17:_light_funct_session_types} rely on row
polymorphism  to abstract over the irrelevant labels in a choice,
thereby eliding the need for supporting subtyping. Their
calculus FST (lightweight functional session types) supports
polymorphism over kinded type variables $\alpha :: K (Y,Z)$ where
$K=\mathit{Type}$, $Y=\circ$, and $Z=\pi$ indicates a variable ranging
over session types; choosing $K=\mathit{Row}$ yields a row variable.
%
\citet{ALMEIDA2022104948} consider impredicative polymorphism in the
context of context-free session types. Their main contribution is the
integration of algorithmic type checking for context-free sessions
with polymorphism.

All practically oriented works
\cite{lindley17:_light_funct_session_types,ALMEIDA2022104948}
rely on an elaborate kind system to distinguish linear from
non-linear values, session types from non-session types, and rows from
types (in the case of FST). \PolyVGR follows suit in that its kinds
distinguish session types and non-session types. Linearity is elided,
but kinds for states, shapes, and domains are needed to handle
channels. As a major novelty, \PolyVGR includes arrow kinds and type-level
lambda abstraction, but restricted such that abstraction ranges solely
over domains.


\subsection{Polymorphism and Disjointness}
\label{sec:disjointness}

Alias types
\cite{DBLP:conf/esop/SmithWM00} is
a type system for a low-level language where the type of a function expresses the shape of the store
on which the function operates. For generality, function types can abstract over store locations
$\alpha$ and the shape of the store is described by \emph{aliasing
  constraints} of the form $\{\alpha\mapsto T\}$. Constraint
composition resembles separating conjunction
\cite{DBLP:conf/lics/Reynolds02} and ensures that locations are
unique. Analogous to the channel types in our system, pointers in the
alias types system have a singleton type that indicates their
(abstract) store location and they can be duplicated. Alias types also
include non-linear constraints, which are not required in our
system. Alias types do not provide the means to abstract over groups
of store locations as is possible with our domain/shape approach. It
would be interesting to investigate such an extension to alias types.

Low-level liquid types \cite{DBLP:conf/popl/RondonKJ10} use a similar
setup as alias types with location-sensitive types to track pointers and pointer
arithmetic as well as to enable strong updates in a verification
setting for a C-like language. They also provide a mechanism of
unfolding and folding to temporarily strengthen pointers so that they
can be strongly updated. Such a mechanism is not needed for our
calculus as channel resources are never aliased.


Disjoint intersection types \cite{DBLP:conf/icfp/OliveiraSA16} have been
conceived to address the coherence problem of intersection type
systems with an explicit merge operator: if the two ``components'' of
the merge have the same type, then it is not clear which value should
be chosen by the semantics. They rule
out this scenario by requiring different types for all components of
an intersection. Disjoint polymorphism
\cite{DBLP:conf/esop/AlpuimOS17} lifts this idea to a
polymorphic calculus where type variables are introduced with
disjointness constraints that rule out overlapping instantiations.
\citet{DBLP:conf/ecoop/XieOBS20} show that calculi for disjoint
polymorphic intersection types are closely related to polymorphic
record calculi with symmetric concatenation
\cite{DBLP:conf/popl/HarperP91}.

Disjointness constraints for record types are related to
our setting, but the labels in the records types are fixed and two records are still deemed
disjoint if they share labels, as long as the corresponding field
types are disjoint. In contrast,  we have universal and existential
quantification over domains (generalizing channel names) and
single-channel domains disjoint by our axiomatic construction when
composing states.

\citet{DBLP:journals/pacmpl/MorrisM19} propose a generic system Rose for
typing features based on row types. Its basis is a partial monoid of
rows, which is chosen according to the application. Using rows for
record types, Rose can be instantiated to support symmetric
concatenation of records, shadowing concatenation, or even to allow
several occurrences of the same label.
While channel names are loosely related to record label and states
might be represented as records, our axiomatic approach to maintaining
disjointness is significantly different from their Rose system.

\subsection{Diverse Topics}
\label{sec:diverse-topics}

\citet{DBLP:conf/haskell/PucellaT08} give an embedding of a session
type calculus in Haskell. Their embedding is based on Atkey's parameterized
monads \cite{DBLP:journals/jfp/Atkey09}, layered on top of the IO monad
using phantom types. Their phantom type structure resembles our states
where de Bruijn indices serve as channels names. Linear handling of
the state is enforced by the monad abstraction, while channel
references can be handled freely. The paper
comes with a formalization and a soundness proof of the implementation.
\citet{SackmanE08} also encode session types for a single channel in
Haskell using an indexed (parameterized) monad.

A similar idea is the basis for work by
\citet{DBLP:conf/coordination/Saffrich021,DBLP:journals/lmcs/SaffrichT22},
which is closely related to our investigation. They also start from VGR,
point out some of its restrictions, but then continue to define a
translation into a linear parameterized monad, which can be
implemented in an existing monomorphic functional session type calculus
\cite{DBLP:journals/jfp/GayV10}, extended with some syntactic sugar in
the form of linear records. They prove that there are semantics- and
typing-preserving translations forth and back, provided the typing of
the functional calculus is severely restricted. Our work removes most
of the restrictions of VGR's type system by using higher-order
polymorphism. It remains to complete the diagram and identify a
polymorphic functional session type calculus (most likely FST) which
is suitable as a translation target.



\citet{DBLP:conf/cpp/HinrichsenLKB21} describe semantic session typing
as an alternative way to establish sound session type regimes. Instead
of delving into syntactic type soundness proofs, they suggest to
define a semantic notion of types and typing on top of an untyped
semantics. Their proposal is based on (step-indexed) logical relations
defined in terms of a suitable program logic
\cite{DBLP:journals/corr/abs-1103-0510} and it is fully mechanized in Coq.  Starting from a simple
session type system, they add polymorphism, subtyping, recursion, and
more. It seems plausible that their model would scale to provide
mechanized soundness proofs for \PolyVGR.

\citet{DBLP:journals/pacmpl/BalzerP17} considers a notion of manifest sharing
in session types. Their notion is substantially different from our
work. \PolyVGR facilitates (local) variables, not constrained by
linearity, bound to channel references. Thanks to typestate, the same
reference can refer to a channel in different states at different
points in a program. In manifest sharing, there are globally shared
channels which always offer the same state. Processes can pick up a
shared channel, run an unshared protocol on it, and return it in the
same shared state as before. 

\section{Conclusion}
\label{sec:conclusion}

We started this work on two premises:
\begin{itemize}
\item We believe it is important to map the unexplored part of
  the design space of session type systems based on typestate.
\item We believe that there are practical advantages in
  being able to write programs with session types in direct style as in
  Listing~\ref{lst:example-server}.
\end{itemize}
Looking back, we find that the direct style is scalable, it should be
on the map as it more easily integrates with imperative programming
styles and languages, and \PolyVGR explains in depth the type system
ingredients needed to decently program with session types in direct
style.
On the other hand, the amount of
parameterization required in \PolyVGR is significant and may be burdensome for
programmers. We are just starting to gather practical experience with
our implementation of \PolyVGR, so we cannot offer a final verdict at
this point.



\clearpage{}


\bibliography{biblio,Allais17,Atkey09,HuKPYH10,Honda93,GayV10,ScalasY16,StromY86,TakeuchiHK94,VasconcelosRG04,VasconcelosGR06,MorrisM19,SmithWM00,Reynolds02,AlpuimOS17,XieOBS20,OliveiraSA16,ChughHJ12,RondonKJ10,HarperP91,PucellaT08,SackmanE08,Padovani17,HuY16,FowlerLMD19,SabryF93,FlanaganSDF93}


\begin{thebibliography}{39}


\ifx \showCODEN    \undefined \def \showCODEN     #1{\unskip}     \fi
\ifx \showDOI      \undefined \def \showDOI       #1{#1}\fi
\ifx \showISBNx    \undefined \def \showISBNx     #1{\unskip}     \fi
\ifx \showISBNxiii \undefined \def \showISBNxiii  #1{\unskip}     \fi
\ifx \showISSN     \undefined \def \showISSN      #1{\unskip}     \fi
\ifx \showLCCN     \undefined \def \showLCCN      #1{\unskip}     \fi
\ifx \shownote     \undefined \def \shownote      #1{#1}          \fi
\ifx \showarticletitle \undefined \def \showarticletitle #1{#1}   \fi
\ifx \showURL      \undefined \def \showURL       {\relax}        \fi
\providecommand\bibfield[2]{#2}
\providecommand\bibinfo[2]{#2}
\providecommand\natexlab[1]{#1}
\providecommand\showeprint[2][]{arXiv:#2}

\bibitem[Almeida et~al\mbox{.}(2022)]%
        {ALMEIDA2022104948}
\bibfield{author}{\bibinfo{person}{Bernardo Almeida}, \bibinfo{person}{Andreia
  Mordido}, \bibinfo{person}{Peter Thiemann}, {and} \bibinfo{person}{Vasco~T.
  Vasconcelos}.} \bibinfo{year}{2022}\natexlab{}.
\newblock \showarticletitle{Polymorphic lambda calculus with context-free
  session types}.
\newblock \bibinfo{journal}{\emph{Information and Computation}}
  \bibinfo{volume}{289}, \bibinfo{number}{Part} (\bibinfo{year}{2022}),
  \bibinfo{pages}{104948}.
\newblock
\showISSN{0890-5401}
\urldef\tempurl%
\url{https://doi.org/10.1016/j.ic.2022.104948}
\showDOI{\tempurl}


\bibitem[Alpuim et~al\mbox{.}(2017)]%
        {DBLP:conf/esop/AlpuimOS17}
\bibfield{author}{\bibinfo{person}{Jo{\~{a}}o Alpuim},
  \bibinfo{person}{Bruno~C. d. S.~Oliveira}, {and} \bibinfo{person}{Zhiyuan
  Shi}.} \bibinfo{year}{2017}\natexlab{}.
\newblock \showarticletitle{Disjoint Polymorphism}. In
  \bibinfo{booktitle}{\emph{Programming Languages and Systems - 26th European
  Symposium on Programming, {ESOP} 2017}} \emph{(\bibinfo{series}{Lecture Notes
  in Computer Science}, Vol.~\bibinfo{volume}{10201})},
  \bibfield{editor}{\bibinfo{person}{Hongseok Yang}} (Ed.).
  \bibinfo{publisher}{Springer}, \bibinfo{address}{Uppsala, Sweden},
  \bibinfo{pages}{1--28}.
\newblock
\urldef\tempurl%
\url{https://doi.org/10.1007/978-3-662-54434-1\_1}
\showDOI{\tempurl}


\bibitem[Ancona et~al\mbox{.}(2016)]%
        {DBLP:journals/ftpl/AnconaBB0CDGGGH16}
\bibfield{author}{\bibinfo{person}{Davide Ancona}, \bibinfo{person}{Viviana
  Bono}, \bibinfo{person}{Mario Bravetti}, \bibinfo{person}{Joana Campos},
  \bibinfo{person}{Giuseppe Castagna}, \bibinfo{person}{Pierre{-}Malo
  Deni{\'{e}}lou}, \bibinfo{person}{Simon~J. Gay}, \bibinfo{person}{Nils
  Gesbert}, \bibinfo{person}{Elena Giachino}, \bibinfo{person}{Raymond Hu},
  \bibinfo{person}{Einar~Broch Johnsen}, \bibinfo{person}{Francisco Martins},
  \bibinfo{person}{Viviana Mascardi}, \bibinfo{person}{Fabrizio Montesi},
  \bibinfo{person}{Rumyana Neykova}, \bibinfo{person}{Nicholas Ng},
  \bibinfo{person}{Luca Padovani}, \bibinfo{person}{Vasco~T. Vasconcelos},
  {and} \bibinfo{person}{Nobuko Yoshida}.} \bibinfo{year}{2016}\natexlab{}.
\newblock \showarticletitle{Behavioral Types in Programming Languages}.
\newblock \bibinfo{journal}{\emph{Found. Trends Program. Lang.}}
  \bibinfo{volume}{3}, \bibinfo{number}{2-3} (\bibinfo{year}{2016}),
  \bibinfo{pages}{95--230}.
\newblock
\urldef\tempurl%
\url{https://doi.org/10.1561/2500000031}
\showDOI{\tempurl}


\bibitem[Atkey(2009)]%
        {DBLP:journals/jfp/Atkey09}
\bibfield{author}{\bibinfo{person}{Robert Atkey}.}
  \bibinfo{year}{2009}\natexlab{}.
\newblock \showarticletitle{Parameterised Notions of Computation}.
\newblock \bibinfo{journal}{\emph{J. Funct. Program.}} \bibinfo{volume}{19},
  \bibinfo{number}{3-4} (\bibinfo{year}{2009}), \bibinfo{pages}{335--376}.
\newblock
\urldef\tempurl%
\url{https://doi.org/10.1017/S095679680900728X}
\showDOI{\tempurl}


\bibitem[Balzer and Pfenning(2017)]%
        {DBLP:journals/pacmpl/BalzerP17}
\bibfield{author}{\bibinfo{person}{Stephanie Balzer} {and}
  \bibinfo{person}{Frank Pfenning}.} \bibinfo{year}{2017}\natexlab{}.
\newblock \showarticletitle{Manifest sharing with session types}.
\newblock \bibinfo{journal}{\emph{Proc. {ACM} Program. Lang.}}
  \bibinfo{volume}{1}, \bibinfo{number}{{ICFP}} (\bibinfo{year}{2017}),
  \bibinfo{pages}{37:1--37:29}.
\newblock
\urldef\tempurl%
\url{https://doi.org/10.1145/3110281}
\showDOI{\tempurl}


\bibitem[Bartoletti et~al\mbox{.}(2015)]%
        {DBLP:journals/jlp/BartolettiCDDGP15}
\bibfield{author}{\bibinfo{person}{Massimo Bartoletti}, \bibinfo{person}{Ilaria
  Castellani}, \bibinfo{person}{Pierre{-}Malo Deni{\'{e}}lou},
  \bibinfo{person}{Mariangiola Dezani{-}Ciancaglini}, \bibinfo{person}{Silvia
  Ghilezan}, \bibinfo{person}{Jovanka Pantovic}, \bibinfo{person}{Jorge~A.
  P{\'{e}}rez}, \bibinfo{person}{Peter Thiemann}, \bibinfo{person}{Bernardo
  Toninho}, {and} \bibinfo{person}{Hugo~Torres Vieira}.}
  \bibinfo{year}{2015}\natexlab{}.
\newblock \showarticletitle{Combining behavioural types with security
  analysis}.
\newblock \bibinfo{journal}{\emph{J. Log. Algebraic Methods Program.}}
  \bibinfo{volume}{84}, \bibinfo{number}{6} (\bibinfo{year}{2015}),
  \bibinfo{pages}{763--780}.
\newblock
\urldef\tempurl%
\url{https://doi.org/10.1016/j.jlamp.2015.09.003}
\showDOI{\tempurl}


\bibitem[Caires et~al\mbox{.}(2013)]%
        {DBLP:conf/esop/CairesPPT13}
\bibfield{author}{\bibinfo{person}{Lu{\'{\i}}s Caires},
  \bibinfo{person}{Jorge~A. P{\'{e}}rez}, \bibinfo{person}{Frank Pfenning},
  {and} \bibinfo{person}{Bernardo Toninho}.} \bibinfo{year}{2013}\natexlab{}.
\newblock \showarticletitle{Behavioral Polymorphism and Parametricity in
  Session-Based Communication}. In \bibinfo{booktitle}{\emph{ESOP}}
  \emph{(\bibinfo{series}{LNCS}, Vol.~\bibinfo{volume}{7792})},
  \bibfield{editor}{\bibinfo{person}{Matthias Felleisen} {and}
  \bibinfo{person}{Philippa Gardner}} (Eds.). \bibinfo{publisher}{Springer},
  \bibinfo{pages}{330--349}.
\newblock
\urldef\tempurl%
\url{https://doi.org/10.1007/978-3-642-37036-6\_19}
\showDOI{\tempurl}


\bibitem[d.~S.~Oliveira et~al\mbox{.}(2016)]%
        {DBLP:conf/icfp/OliveiraSA16}
\bibfield{author}{\bibinfo{person}{Bruno~C. d. S.~Oliveira},
  \bibinfo{person}{Zhiyuan Shi}, {and} \bibinfo{person}{Jo{\~{a}}o Alpuim}.}
  \bibinfo{year}{2016}\natexlab{}.
\newblock \showarticletitle{Disjoint Intersection Types}. In
  \bibinfo{booktitle}{\emph{Proceedings of the 21st {ACM} {SIGPLAN}
  International Conference on Functional Programming, {ICFP} 2016, Nara, Japan,
  September 18-22, 2016}}, \bibfield{editor}{\bibinfo{person}{Jacques
  Garrigue}, \bibinfo{person}{Gabriele Keller}, {and} \bibinfo{person}{Eijiro
  Sumii}} (Eds.). \bibinfo{publisher}{{ACM}}, \bibinfo{pages}{364--377}.
\newblock
\urldef\tempurl%
\url{https://doi.org/10.1145/2951913.2951945}
\showDOI{\tempurl}


\bibitem[Dardha et~al\mbox{.}(2017)]%
        {DBLP:journals/iandc/DardhaGS17}
\bibfield{author}{\bibinfo{person}{Ornela Dardha}, \bibinfo{person}{Elena
  Giachino}, {and} \bibinfo{person}{Davide Sangiorgi}.}
  \bibinfo{year}{2017}\natexlab{}.
\newblock \showarticletitle{Session Types Revisited}.
\newblock \bibinfo{journal}{\emph{IC}}  \bibinfo{volume}{256}
  (\bibinfo{year}{2017}), \bibinfo{pages}{253--286}.
\newblock
\urldef\tempurl%
\url{https://doi.org/10.1016/j.ic.2017.06.002}
\showDOI{\tempurl}


\bibitem[Dreyer et~al\mbox{.}(2011)]%
        {DBLP:journals/corr/abs-1103-0510}
\bibfield{author}{\bibinfo{person}{Derek Dreyer}, \bibinfo{person}{Amal Ahmed},
  {and} \bibinfo{person}{Lars Birkedal}.} \bibinfo{year}{2011}\natexlab{}.
\newblock \showarticletitle{Logical Step-Indexed Logical Relations}.
\newblock \bibinfo{journal}{\emph{Log. Methods Comput. Sci.}}
  \bibinfo{volume}{7}, \bibinfo{number}{2} (\bibinfo{year}{2011}).
\newblock
\urldef\tempurl%
\url{https://doi.org/10.2168/LMCS-7(2:16)2011}
\showDOI{\tempurl}


\bibitem[Flanagan et~al\mbox{.}(1993)]%
        {DBLP:conf/pldi/FlanaganSDF93}
\bibfield{author}{\bibinfo{person}{Cormac Flanagan}, \bibinfo{person}{Amr
  Sabry}, \bibinfo{person}{Bruce~F. Duba}, {and} \bibinfo{person}{Matthias
  Felleisen}.} \bibinfo{year}{1993}\natexlab{}.
\newblock \showarticletitle{The Essence of Compiling with Continuations}. In
  \bibinfo{booktitle}{\emph{Proceedings of the {ACM} SIGPLAN'93 Conference on
  Programming Language Design and Implementation (PLDI), Albuquerque, New
  Mexico, USA, June 23-25, 1993}}, \bibfield{editor}{\bibinfo{person}{Robert
  Cartwright}} (Ed.). \bibinfo{publisher}{{ACM}}, \bibinfo{pages}{237--247}.
\newblock
\urldef\tempurl%
\url{https://doi.org/10.1145/155090.155113}
\showDOI{\tempurl}


\bibitem[Fowler et~al\mbox{.}(2019)]%
        {DBLP:journals/pacmpl/FowlerLMD19}
\bibfield{author}{\bibinfo{person}{Simon Fowler}, \bibinfo{person}{Sam
  Lindley}, \bibinfo{person}{J.~Garrett Morris}, {and}
  \bibinfo{person}{S{\'{a}}ra Decova}.} \bibinfo{year}{2019}\natexlab{}.
\newblock \showarticletitle{Exceptional Asynchronous Session Types: Session
  Types Without Tiers}.
\newblock \bibinfo{journal}{\emph{Proc. {ACM} Program. Lang.}}
  \bibinfo{volume}{3}, \bibinfo{number}{{POPL}} (\bibinfo{year}{2019}),
  \bibinfo{pages}{28:1--28:29}.
\newblock
\urldef\tempurl%
\url{https://doi.org/10.1145/3290341}
\showDOI{\tempurl}


\bibitem[Gay and Ravara(2017)]%
        {gay17:_behav_types}
\bibfield{editor}{\bibinfo{person}{Simon Gay} {and} \bibinfo{person}{António
  Ravara}} (Eds.). \bibinfo{year}{2017}\natexlab{}.
\newblock \bibinfo{booktitle}{\emph{Behavioural Types: from Theory to Tools}}.
\newblock \bibinfo{publisher}{River Publishers}.
\newblock
\showISBNx{9788793519824}
\urldef\tempurl%
\url{https://doi.org/10.13052/rp-9788793519817}
\showDOI{\tempurl}


\bibitem[Gay(2008)]%
        {DBLP:journals/mscs/Gay08}
\bibfield{author}{\bibinfo{person}{Simon~J. Gay}.}
  \bibinfo{year}{2008}\natexlab{}.
\newblock \showarticletitle{Bounded Polymorphism in Session Types}.
\newblock \bibinfo{journal}{\emph{Math. Struct. Comput. Sci.}}
  \bibinfo{volume}{18}, \bibinfo{number}{5} (\bibinfo{year}{2008}),
  \bibinfo{pages}{895--930}.
\newblock
\urldef\tempurl%
\url{https://doi.org/10.1017/S0960129508006944}
\showDOI{\tempurl}


\bibitem[Gay and Vasconcelos(2010)]%
        {DBLP:journals/jfp/GayV10}
\bibfield{author}{\bibinfo{person}{Simon~J. Gay} {and}
  \bibinfo{person}{Vasco~Thudichum Vasconcelos}.}
  \bibinfo{year}{2010}\natexlab{}.
\newblock \showarticletitle{Linear Type Theory for Asynchronous Session Types}.
\newblock \bibinfo{journal}{\emph{J. Funct. Program.}} \bibinfo{volume}{20},
  \bibinfo{number}{1} (\bibinfo{year}{2010}), \bibinfo{pages}{19--50}.
\newblock
\urldef\tempurl%
\url{https://doi.org/10.1017/S0956796809990268}
\showDOI{\tempurl}


\bibitem[Harper and Pierce(1991)]%
        {DBLP:conf/popl/HarperP91}
\bibfield{author}{\bibinfo{person}{Robert Harper} {and}
  \bibinfo{person}{Benjamin~C. Pierce}.} \bibinfo{year}{1991}\natexlab{}.
\newblock \showarticletitle{A Record Calculus Based on Symmetric
  Concatenation}. In \bibinfo{booktitle}{\emph{Conference Record of the
  Eighteenth Annual {ACM} Symposium on Principles of Programming Languages,
  Orlando, Florida, USA, January 21-23, 1991}},
  \bibfield{editor}{\bibinfo{person}{David~S. Wise}} (Ed.).
  \bibinfo{publisher}{{ACM} Press}, \bibinfo{pages}{131--142}.
\newblock
\urldef\tempurl%
\url{https://doi.org/10.1145/99583.99603}
\showDOI{\tempurl}


\bibitem[Hinrichsen et~al\mbox{.}(2021)]%
        {DBLP:conf/cpp/HinrichsenLKB21}
\bibfield{author}{\bibinfo{person}{Jonas~Kastberg Hinrichsen},
  \bibinfo{person}{Dani{\"{e}}l Louwrink}, \bibinfo{person}{Robbert Krebbers},
  {and} \bibinfo{person}{Jesper Bengtson}.} \bibinfo{year}{2021}\natexlab{}.
\newblock \showarticletitle{Machine-checked semantic session typing}. In
  \bibinfo{booktitle}{\emph{{CPP} '21: 10th {ACM} {SIGPLAN} International
  Conference on Certified Programs and Proofs, Virtual Event, Denmark, January
  17-19, 2021}}, \bibfield{editor}{\bibinfo{person}{Catalin Hritcu} {and}
  \bibinfo{person}{Andrei Popescu}} (Eds.). \bibinfo{publisher}{{ACM}},
  \bibinfo{pages}{178--198}.
\newblock
\urldef\tempurl%
\url{https://doi.org/10.1145/3437992.3439914}
\showDOI{\tempurl}


\bibitem[Honda(1993)]%
        {DBLP:conf/concur/Honda93}
\bibfield{author}{\bibinfo{person}{Kohei Honda}.}
  \bibinfo{year}{1993}\natexlab{}.
\newblock \showarticletitle{Types for Dyadic Interaction}. In
  \bibinfo{booktitle}{\emph{{CONCUR} '93, 4th International Conference on
  Concurrency Theory, Hildesheim, Germany, August 23-26, 1993, Proceedings}}
  \emph{(\bibinfo{series}{Lecture Notes in Computer Science},
  Vol.~\bibinfo{volume}{715})}, \bibfield{editor}{\bibinfo{person}{Eike Best}}
  (Ed.). \bibinfo{publisher}{Springer}, \bibinfo{pages}{509--523}.
\newblock
\urldef\tempurl%
\url{https://doi.org/10.1007/3-540-57208-2\_35}
\showDOI{\tempurl}


\bibitem[Hu et~al\mbox{.}(2010)]%
        {DBLP:conf/ecoop/HuKPYH10}
\bibfield{author}{\bibinfo{person}{Raymond Hu}, \bibinfo{person}{Dimitrios
  Kouzapas}, \bibinfo{person}{Olivier Pernet}, \bibinfo{person}{Nobuko
  Yoshida}, {and} \bibinfo{person}{Kohei Honda}.}
  \bibinfo{year}{2010}\natexlab{}.
\newblock \showarticletitle{Type-Safe Eventful Sessions in {Java}}. In
  \bibinfo{booktitle}{\emph{{ECOOP} 2010 - Object-Oriented Programming, 24th
  European Conference, Maribor, Slovenia, June 21-25, 2010. Proceedings}}
  \emph{(\bibinfo{series}{Lecture Notes in Computer Science},
  Vol.~\bibinfo{volume}{6183})}, \bibfield{editor}{\bibinfo{person}{Theo
  D'Hondt}} (Ed.). \bibinfo{publisher}{Springer}, \bibinfo{pages}{329--353}.
\newblock
\urldef\tempurl%
\url{https://doi.org/10.1007/978-3-642-14107-2\_16}
\showDOI{\tempurl}


\bibitem[Hu and Yoshida(2016)]%
        {DBLP:conf/fase/HuY16}
\bibfield{author}{\bibinfo{person}{Raymond Hu} {and} \bibinfo{person}{Nobuko
  Yoshida}.} \bibinfo{year}{2016}\natexlab{}.
\newblock \showarticletitle{Hybrid Session Verification Through Endpoint {API}
  Generation}. In \bibinfo{booktitle}{\emph{Fundamental Approaches to Software
  Engineering - 19th International Conference, {FASE} 2016, Held as Part of the
  European Joint Conferences on Theory and Practice of Software, {ETAPS} 2016,
  Eindhoven, The Netherlands, April 2-8, 2016, Proceedings}}
  \emph{(\bibinfo{series}{Lecture Notes in Computer Science},
  Vol.~\bibinfo{volume}{9633})}, \bibfield{editor}{\bibinfo{person}{Perdita
  Stevens} {and} \bibinfo{person}{Andrzej Wasowski}} (Eds.).
  \bibinfo{publisher}{Springer}, \bibinfo{pages}{401--418}.
\newblock
\urldef\tempurl%
\url{https://doi.org/10.1007/978-3-662-49665-7\_24}
\showDOI{\tempurl}


\bibitem[H{\"{u}}ttel et~al\mbox{.}(2016)]%
        {DBLP:journals/csur/HuttelLVCCDMPRT16}
\bibfield{author}{\bibinfo{person}{Hans H{\"{u}}ttel}, \bibinfo{person}{Ivan
  Lanese}, \bibinfo{person}{Vasco~T. Vasconcelos}, \bibinfo{person}{Lu{\'{\i}}s
  Caires}, \bibinfo{person}{Marco Carbone}, \bibinfo{person}{Pierre{-}Malo
  Deni{\'{e}}lou}, \bibinfo{person}{Dimitris Mostrous}, \bibinfo{person}{Luca
  Padovani}, \bibinfo{person}{Ant{\'{o}}nio Ravara}, \bibinfo{person}{Emilio
  Tuosto}, \bibinfo{person}{Hugo~Torres Vieira}, {and}
  \bibinfo{person}{Gianluigi Zavattaro}.} \bibinfo{year}{2016}\natexlab{}.
\newblock \showarticletitle{Foundations of Session Types and Behavioural
  Contracts}.
\newblock \bibinfo{journal}{\emph{{ACM} Comput. Surv.}} \bibinfo{volume}{49},
  \bibinfo{number}{1} (\bibinfo{year}{2016}), \bibinfo{pages}{3:1--3:36}.
\newblock
\urldef\tempurl%
\url{https://doi.org/10.1145/2873052}
\showDOI{\tempurl}


\bibitem[Lindley and Morris(2017)]%
        {lindley17:_light_funct_session_types}
\bibfield{author}{\bibinfo{person}{Sam Lindley} {and}
  \bibinfo{person}{J.~Garrett Morris}.} \bibinfo{year}{2017}\natexlab{}.
\newblock \showarticletitle{Lightweight Functional Session Types}.
\newblock In \bibinfo{booktitle}{\emph{Behavioural Types: from Theory to
  Tools}}, \bibfield{editor}{\bibinfo{person}{Simon Gay} {and}
  \bibinfo{person}{Ant\'{o}nio Ravara}} (Eds.). \bibinfo{publisher}{River
  Publishers}.
\newblock
\newblock
\shownote{Extended version at
  \url{https://homepages.inf.ed.ac.uk/slindley/papers/fst-extended.pdf}}.


\bibitem[Morris and McKinna(2019)]%
        {DBLP:journals/pacmpl/MorrisM19}
\bibfield{author}{\bibinfo{person}{J.~Garrett Morris} {and}
  \bibinfo{person}{James McKinna}.} \bibinfo{year}{2019}\natexlab{}.
\newblock \showarticletitle{Abstracting Extensible Data Types: or, Rows by Any
  Other Name}.
\newblock \bibinfo{journal}{\emph{Proc. {ACM} Program. Lang.}}
  \bibinfo{volume}{3}, \bibinfo{number}{{POPL}} (\bibinfo{year}{2019}),
  \bibinfo{pages}{12:1--12:28}.
\newblock
\urldef\tempurl%
\url{https://doi.org/10.1145/3290325}
\showDOI{\tempurl}


\bibitem[Padovani(2017)]%
        {DBLP:journals/jfp/Padovani17}
\bibfield{author}{\bibinfo{person}{Luca Padovani}.}
  \bibinfo{year}{2017}\natexlab{}.
\newblock \showarticletitle{A Simple Library Implementation of Binary
  Sessions}.
\newblock \bibinfo{journal}{\emph{J. Funct. Program.}}  \bibinfo{volume}{27}
  (\bibinfo{year}{2017}), \bibinfo{pages}{e4}.
\newblock
\urldef\tempurl%
\url{https://doi.org/10.1017/S0956796816000289}
\showDOI{\tempurl}


\bibitem[P{\'{e}}rez et~al\mbox{.}(2014)]%
        {DBLP:journals/iandc/PerezCPT14}
\bibfield{author}{\bibinfo{person}{Jorge~A. P{\'{e}}rez},
  \bibinfo{person}{Lu{\'{\i}}s Caires}, \bibinfo{person}{Frank Pfenning}, {and}
  \bibinfo{person}{Bernardo Toninho}.} \bibinfo{year}{2014}\natexlab{}.
\newblock \showarticletitle{Linear logical relations and observational
  equivalences for session-based concurrency}.
\newblock \bibinfo{journal}{\emph{IC}}  \bibinfo{volume}{239}
  (\bibinfo{year}{2014}), \bibinfo{pages}{254--302}.
\newblock
\urldef\tempurl%
\url{https://doi.org/10.1016/j.ic.2014.08.001}
\showDOI{\tempurl}


\bibitem[Pucella and Tov(2008)]%
        {DBLP:conf/haskell/PucellaT08}
\bibfield{author}{\bibinfo{person}{Riccardo Pucella} {and}
  \bibinfo{person}{Jesse~A. Tov}.} \bibinfo{year}{2008}\natexlab{}.
\newblock \showarticletitle{Haskell Session Types With (Almost) no Class}. In
  \bibinfo{booktitle}{\emph{Proceedings of the 1st {ACM} {SIGPLAN} Symposium on
  Haskell, Haskell 2008, Victoria, BC, Canada, 25 September 2008}},
  \bibfield{editor}{\bibinfo{person}{Andy Gill}} (Ed.).
  \bibinfo{publisher}{{ACM}}, \bibinfo{pages}{25--36}.
\newblock
\urldef\tempurl%
\url{https://doi.org/10.1145/1411286.1411290}
\showDOI{\tempurl}


\bibitem[Reynolds(2002)]%
        {DBLP:conf/lics/Reynolds02}
\bibfield{author}{\bibinfo{person}{John~C. Reynolds}.}
  \bibinfo{year}{2002}\natexlab{}.
\newblock \showarticletitle{Separation Logic: {A} Logic for Shared Mutable Data
  Structures}. In \bibinfo{booktitle}{\emph{17th {IEEE} Symposium on Logic in
  Computer Science {(LICS} 2002), 22-25 July 2002, Copenhagen, Denmark,
  Proceedings}}. \bibinfo{publisher}{{IEEE} Computer Society},
  \bibinfo{pages}{55--74}.
\newblock
\urldef\tempurl%
\url{https://doi.org/10.1109/LICS.2002.1029817}
\showDOI{\tempurl}


\bibitem[Rondon et~al\mbox{.}(2010)]%
        {DBLP:conf/popl/RondonKJ10}
\bibfield{author}{\bibinfo{person}{Patrick~Maxim Rondon}, \bibinfo{person}{Ming
  Kawaguchi}, {and} \bibinfo{person}{Ranjit Jhala}.}
  \bibinfo{year}{2010}\natexlab{}.
\newblock \showarticletitle{Low-Level Liquid Types}. In
  \bibinfo{booktitle}{\emph{Proceedings of the 37th {ACM} {SIGPLAN-SIGACT}
  Symposium on Principles of Programming Languages, {POPL} 2010, Madrid, Spain,
  January 17-23, 2010}}, \bibfield{editor}{\bibinfo{person}{Manuel~V.
  Hermenegildo} {and} \bibinfo{person}{Jens Palsberg}} (Eds.).
  \bibinfo{publisher}{{ACM}}, \bibinfo{pages}{131--144}.
\newblock
\urldef\tempurl%
\url{https://doi.org/10.1145/1706299.1706316}
\showDOI{\tempurl}


\bibitem[Sackman and Eisenbach(2008)]%
        {SackmanE08}
\bibfield{author}{\bibinfo{person}{Matthew Sackman} {and}
  \bibinfo{person}{Susan Eisenbach}.} \bibinfo{year}{2008}\natexlab{}.
\newblock \bibinfo{title}{Session Types in Haskell Updating Message Passing for
  the 21st Century}.  (\bibinfo{year}{2008}).
\newblock
\newblock
\shownote{\url{https://spiral.imperial.ac.uk:8443/handle/10044/1/5918}}.


\bibitem[Saffrich and Thiemann(2021)]%
        {DBLP:conf/coordination/Saffrich021}
\bibfield{author}{\bibinfo{person}{Hannes Saffrich} {and}
  \bibinfo{person}{Peter Thiemann}.} \bibinfo{year}{2021}\natexlab{}.
\newblock \showarticletitle{Relating Functional and Imperative Session Types}.
  In \bibinfo{booktitle}{\emph{Coordination Models and Languages - 23rd {IFIP}
  {WG} 6.1 International Conference, {COORDINATION} 2021}}
  \emph{(\bibinfo{series}{Lecture Notes in Computer Science},
  Vol.~\bibinfo{volume}{12717})}, \bibfield{editor}{\bibinfo{person}{Ferruccio
  Damiani} {and} \bibinfo{person}{Ornela Dardha}} (Eds.).
  \bibinfo{publisher}{Springer}, \bibinfo{pages}{61--79}.
\newblock
\urldef\tempurl%
\url{https://doi.org/10.1007/978-3-030-78142-2\_4}
\showDOI{\tempurl}


\bibitem[Saffrich and Thiemann(2022)]%
        {DBLP:journals/lmcs/SaffrichT22}
\bibfield{author}{\bibinfo{person}{Hannes Saffrich} {and}
  \bibinfo{person}{Peter Thiemann}.} \bibinfo{year}{2022}\natexlab{}.
\newblock \showarticletitle{Relating Functional and Imperative Session Types}.
\newblock \bibinfo{journal}{\emph{Log. Methods Comput. Sci.}}
  \bibinfo{volume}{18}, \bibinfo{number}{3} (\bibinfo{year}{2022}).
\newblock
\urldef\tempurl%
\url{https://doi.org/10.46298/lmcs-18(3:33)2022}
\showDOI{\tempurl}


\bibitem[Scalas and Yoshida(2016)]%
        {DBLP:conf/ecoop/ScalasY16}
\bibfield{author}{\bibinfo{person}{Alceste Scalas} {and}
  \bibinfo{person}{Nobuko Yoshida}.} \bibinfo{year}{2016}\natexlab{}.
\newblock \showarticletitle{Lightweight Session Programming in {Scala}}. In
  \bibinfo{booktitle}{\emph{30th European Conference on Object-Oriented
  Programming, {ECOOP} 2016, July 18-22, 2016, Rome, Italy}}
  \emph{(\bibinfo{series}{LIPIcs}, Vol.~\bibinfo{volume}{56})},
  \bibfield{editor}{\bibinfo{person}{Shriram Krishnamurthi} {and}
  \bibinfo{person}{Benjamin~S. Lerner}} (Eds.). \bibinfo{publisher}{Schloss
  Dagstuhl - Leibniz-Zentrum f{\"{u}}r Informatik},
  \bibinfo{pages}{21:1--21:28}.
\newblock
\urldef\tempurl%
\url{https://doi.org/10.4230/LIPIcs.ECOOP.2016.21}
\showDOI{\tempurl}


\bibitem[Smith et~al\mbox{.}(2000)]%
        {DBLP:conf/esop/SmithWM00}
\bibfield{author}{\bibinfo{person}{Frederick Smith}, \bibinfo{person}{David
  Walker}, {and} \bibinfo{person}{J.~Gregory Morrisett}.}
  \bibinfo{year}{2000}\natexlab{}.
\newblock \showarticletitle{Alias Types}. In
  \bibinfo{booktitle}{\emph{Programming Languages and Systems, 9th European
  Symposium on Programming, {ESOP} 2000, Held as Part of the European Joint
  Conferences on the Theory and Practice of Software, {ETAPS} 2000, Berlin,
  Germany, March 25 - April 2, 2000, Proceedings}}
  \emph{(\bibinfo{series}{Lecture Notes in Computer Science},
  Vol.~\bibinfo{volume}{1782})}, \bibfield{editor}{\bibinfo{person}{Gert
  Smolka}} (Ed.). \bibinfo{publisher}{Springer}, \bibinfo{pages}{366--381}.
\newblock
\urldef\tempurl%
\url{https://doi.org/10.1007/3-540-46425-5\_24}
\showDOI{\tempurl}


\bibitem[Strom and Yemini(1986)]%
        {DBLP:journals/tse/StromY86}
\bibfield{author}{\bibinfo{person}{Robert~E. Strom} {and}
  \bibinfo{person}{Shaula Yemini}.} \bibinfo{year}{1986}\natexlab{}.
\newblock \showarticletitle{Typestate: {A} Programming Language Concept for
  Enhancing Software Reliability}.
\newblock \bibinfo{journal}{\emph{{IEEE} Trans. Software Eng.}}
  \bibinfo{volume}{12}, \bibinfo{number}{1} (\bibinfo{year}{1986}),
  \bibinfo{pages}{157--171}.
\newblock
\urldef\tempurl%
\url{https://doi.org/10.1109/TSE.1986.6312929}
\showDOI{\tempurl}


\bibitem[Takeuchi et~al\mbox{.}(1994)]%
        {DBLP:conf/parle/TakeuchiHK94}
\bibfield{author}{\bibinfo{person}{Kaku Takeuchi}, \bibinfo{person}{Kohei
  Honda}, {and} \bibinfo{person}{Makoto Kubo}.}
  \bibinfo{year}{1994}\natexlab{}.
\newblock \showarticletitle{An Interaction-based Language and its Typing
  System}. In \bibinfo{booktitle}{\emph{{PARLE} '94: Parallel Architectures and
  Languages Europe, 6th International {PARLE} Conference, Athens, Greece, July
  4-8, 1994, Proceedings}} \emph{(\bibinfo{series}{Lecture Notes in Computer
  Science}, Vol.~\bibinfo{volume}{817})},
  \bibfield{editor}{\bibinfo{person}{Constantine Halatsis},
  \bibinfo{person}{Dimitris~G. Maritsas}, \bibinfo{person}{George Philokyprou},
  {and} \bibinfo{person}{Sergios Theodoridis}} (Eds.).
  \bibinfo{publisher}{Springer}, \bibinfo{pages}{398--413}.
\newblock
\urldef\tempurl%
\url{https://doi.org/10.1007/3-540-58184-7\_118}
\showDOI{\tempurl}


\bibitem[Turner(1996)]%
        {DBLP:phd/ethos/Turner96}
\bibfield{author}{\bibinfo{person}{David~N. Turner}.}
  \bibinfo{year}{1996}\natexlab{}.
\newblock \emph{\bibinfo{title}{The polymorphic Pi-calculus : theory and
  implementation}}.
\newblock \bibinfo{thesistype}{Ph.\,D. Dissertation}.
  \bibinfo{school}{University of Edinburgh, {UK}}.
\newblock
\urldef\tempurl%
\url{http://hdl.handle.net/1842/395}
\showURL{%
\tempurl}


\bibitem[Vasconcelos et~al\mbox{.}(2006)]%
        {DBLP:journals/tcs/VasconcelosGR06}
\bibfield{author}{\bibinfo{person}{Vasco~Thudichum Vasconcelos},
  \bibinfo{person}{Simon~J. Gay}, {and} \bibinfo{person}{Ant{\'{o}}nio
  Ravara}.} \bibinfo{year}{2006}\natexlab{}.
\newblock \showarticletitle{Type Checking a Multithreaded Functional Language
  With Session Types}.
\newblock \bibinfo{journal}{\emph{Theor. Comput. Sci.}} \bibinfo{volume}{368},
  \bibinfo{number}{1-2} (\bibinfo{year}{2006}), \bibinfo{pages}{64--87}.
\newblock
\urldef\tempurl%
\url{https://doi.org/10.1016/j.tcs.2006.06.028}
\showDOI{\tempurl}


\bibitem[Wadler(2014)]%
        {DBLP:journals/jfp/Wadler14}
\bibfield{author}{\bibinfo{person}{Philip Wadler}.}
  \bibinfo{year}{2014}\natexlab{}.
\newblock \showarticletitle{Propositions as Sessions}.
\newblock \bibinfo{journal}{\emph{JFP}} \bibinfo{volume}{24},
  \bibinfo{number}{2-3} (\bibinfo{year}{2014}), \bibinfo{pages}{384--418}.
\newblock
\urldef\tempurl%
\url{http://dx.doi.org/10.1017/S095679681400001X}
\showURL{%
\tempurl}


\bibitem[Xie et~al\mbox{.}(2020)]%
        {DBLP:conf/ecoop/XieOBS20}
\bibfield{author}{\bibinfo{person}{Ningning Xie}, \bibinfo{person}{Bruno~C. d.
  S.~Oliveira}, \bibinfo{person}{Xuan Bi}, {and} \bibinfo{person}{Tom
  Schrijvers}.} \bibinfo{year}{2020}\natexlab{}.
\newblock \showarticletitle{Row and Bounded Polymorphism via Disjoint
  Polymorphism}. In \bibinfo{booktitle}{\emph{34th European Conference on
  Object-Oriented Programming, {ECOOP} 2020, July, 2020, Potsdam, Germany}}
  \emph{(\bibinfo{series}{LIPIcs}, Vol.~\bibinfo{volume}{109})},
  \bibfield{editor}{\bibinfo{person}{Robert Hirschfeld}} (Ed.).
  \bibinfo{publisher}{Schloss Dagstuhl - Leibniz-Zentrum f{\"{u}}r Informatik},
  \bibinfo{pages}{27:1--27:29}.
\newblock
\urldef\tempurl%
\url{https://doi.org/10.4230/LIPIcs.ECOOP.2020.27}
\showDOI{\tempurl}


\end{thebibliography}

\clearpage
\appendix
\onecolumn
\section{Appendix}


\subsection{Channel Aliasing}
\label{sec:channel-aliasing}

The VGR paper proposes the following function \lstinline/sendSend/.
\begin{lstlisting}
fun sendSend u v = send 1 on u; send 2 on v
\end{lstlisting}
It takes two channels and sends a
number on each. This use is reflected in the following typing.
\begin{gather}\label{eq:3}
  \mathtt{sendSend} :
 \Sigma_1; \Chan u \to (\Sigma_1; \Chan v \to \Unit; \Sigma_2); \Sigma_1
\end{gather}
with $\Sigma_1 = \{u : \Outp\Int{S_u}, v : \Outp\Int{S_v}\}$
and  $\Sigma_2 = \{u : {S_u}, v : {S_v}\}$.

Ignoring the types we observe that it would be semantically sound to pass a reference to
the same channel \lstinline/w/, say, of session type \lstinline/!Int.!Int.End/
for \lstinline/u/ and \lstinline/v/. However, \lstinline/sendSend w w/
does not type check with the type in~(\ref{eq:3}) because \lstinline/w/ would
have to have identity $u$ and $v$ at the same time, but  state
formation mandates they must be different.

In {\PolyVGR}, the type for \lstinline/sendSend/ would be universally quantified over the channel
names:
\begin{Newstuff}{\NewstuffPadding}
  \begin{gather*}
    \begin{array}{@{}l}
    \TAll\TVara{\KDomain\TShapeOne}{\CstrTrue}
    \TAll\TVarb{\KDomain\TShapeOne}{(\TVara\Disjoint\TVarb)}
    \TAll{\SVar_1}{\KSession}{\CstrTrue}
    \TAll{\SVar_2}{\KSession}{\CstrTrue}
      \\
    \TArr{\StEmpty}{\TChan\TVara}{\Empty}{\StEmpty}{
      \TArr{
      \left\{
      \begin{array}{l}
      \StBind\TVara{{\SSend\TVar{\KDomain\TShapeZero}{\StEmpty}{\Int}\SVar_1}}, \\
      \StBind\TVarb{{\SSend\TVar{\KDomain\TShapeZero}{\StEmpty}{\Int}\SVar_2}}
      \end{array}
      \right\}
      }{\TChan\TVarb}{\Empty}{
      \left\{
      \begin{array}{l}
        \StBind\TVara{\SVar_1}, \\
        \StBind\TVarb{\SVar_2}
      \end{array}
      \right\}
      }{
          \TUnit
      }
    }
    \end{array}
  \end{gather*}
\end{Newstuff}
Wellformedness of states requires that $\TVara$ and $\TVarb$ are different because
they index the same state. Hence, \lstinline/sendSend w w/ does not type check in \PolyVGR, either.

Another VGR typing of the same code would be \lstinline/sendSend/ :
$\Sigma_1;\Chan w \to (\Sigma_1; \Chan w \to \Unit; \Sigma_2);
\Sigma_1$ with $\Sigma_1 = \{ w : \Outp\Int{\Outp\Int{S_w}} \}$ and
$\Sigma_2 = \{ w : S_w \}$. With this typing, \lstinline/sendSend w w/
type checks. Indeed, the typing forces the two arguments to be aliases!

A similar type could be given in \PolyVGR:
\begin{Newstuff}{\NewstuffPadding}\small
  \begin{gather*}
    \begin{array}{@{}l}
    \TAll\TVara{\KDomain\TShapeOne}{\CstrTrue}
    \TAll{\SVar}{\KSession}{\CstrTrue}
      \\
    \TArr{\StEmpty}{\TChan\TVara}{\Empty}{\StEmpty}{
      \TArr{
      \StBind\TVara{{\SSend\TVar{\KDomain\TShapeZero}{\StEmpty}{\Int}{\SSend\TVar{\KDomain\TShapeZero}{\StEmpty}{\Int}\SVar}}}
      }{\TChan\TVara}{\Empty}{
        \StBind\TVara{\SVar}
      }{
          \TUnit
      }
    }
    \end{array}
  \end{gather*}
\end{Newstuff}

Presently it is not possible
to give a single type to both aliased and unaliased uses of the
function.

\subsection{Higher-Order Functions}
\label{sec:high-order-functions}

Section~\ref{sec:motivation} has shown that VGR lacks facilities for
abstraction. In this subsection, we give further indication of the
flexibility of our system by discussing different typings for a simple higher-order
function.

Consider a higher-order function that is a prototype for a protocol
adapter. Given an argument function that runs a protocol, the adapter
adds a prefix to the protocol, perhaps for authentication
or accounting. To keep our example simple, the prefix
consists of sending a single number, but more elaborate protocols are
possible. The implementation is straightforward:
\begin{lstlisting}
fun adapter f c =
  send 32168 on c;
  f c
\end{lstlisting}
The first type for \lstinline/f/ combines the channel creation pattern from
Section~\ref{sec:channel-creation} with the flexibility of supporting
arbitrarily many channels as in Section~\ref{sec:general-case}.
\begin{Newstuff}{\NewstuffPadding}
  \begin{gather}\label{eq:10}
    \begin{array}{@{}l}
      \TAll\TVarShape\KShape{\CstrTrue}
      \TAll\TVarStFun{\KStFun\TVarShape}{\CstrTrue}
      \TAll\TVarTFun{\KTypeFun\TVarShape}{\CstrTrue}
      \\
      \TAll\TVarc{\KDomain\TShapeOne}{\CstrTrue}
      \TAll{\SVar}{\KSession}{\CstrTrue}
      \TAll{\SVar'}{\KSession}{\CstrTrue}
      \\
      \TArrBegin{\StEmpty}{
      \TArr{\StBind\TVarc{\SVar}}{\TChan\TVarc}{\TVar\colon
      \KDomain\TVarShape}{\TVarStFun\ \TVar,
      \StBind\TVarc{\SVar'}}{\TVarTFun\ \TVar}
      }
      \\
      \qquad\TArrEnd {\Empty} {\StEmpty}{
      \TArr{
      \StBind\TVarc{\SSend\TVar{\KDomain\TShapeZero}{\StEmpty}{\Int}{\SVar}}
      }{\TChan\TVarc}{\TVar\colon \KDomain\TVarShape}{
      \TVarStFun\ \TVar,\StBind\TVarc{\SVar'}
      }{
          \TVarTFun\ \TVar
      }
    }
    \end{array}
  \end{gather}
\end{Newstuff}
With this  \PolyVGR{} type, the function parameter \lstinline/f/ can
create arbitrary many channels and it can return arbitrary values that
may or may not include channels. In the degenerate case where
$\TVarShape$ is $\TShapeZero$, the universal quantification
$\TAll\TVarTFun{\KTypeFun\TShapeZero}{\CstrTrue}\dots$ quantifies over
types that do not contain channel references. Existentially
quantified variables, like $\TVar$, carry an implicit disjointness
constraint with any other domain variable in scope. This constraint
ensures that $\TVarStFun\ \TVar,\StBind\TVarc{\SVar'}$ is wellformed.

However, there is an issue that the type in~(\ref{eq:10}) does
not address. The function parameter \lstinline/f/ cannot be closed over further
channels! To address that shortcoming requires another style of
parameterization over the incoming and the outgoing state of
\lstinline/f/. The shapes of these states are unknown and the may be
different, so we need two shape parameters. Moreover, these two states
never mix, so their domains $\TVar'$ and $\TVar''$ need \emph{not} be disjoint. For simplicity, we first
consider functions that do not create new channels, although both parameterizations can be
combined.

\begin{Newstuff}{\NewstuffPadding}
  \begin{gather}\label{eq:11}
    \begin{array}{@{}l}
      \TAll{\TVarShape'}\KShape{\CstrTrue}
      \TAll{\TVarStFun'}{\KStFun{\TVarShape'}}{\CstrTrue}
      \\
      \TAll{\TVarShape''}\KShape{\CstrTrue}
      \TAll{\TVarStFun''}{\KStFun{\TVarShape''}}{\CstrTrue}
      \TAll{\TVarTFun''}{\KTypeFun{\TVarShape''}}{\CstrTrue}
      \\
      \TAll{\TVar'}{\KDomain{\TVarShape'}}{\CstrTrue}
      \TAll{\TVar''}{\KDomain{\TVarShape''}}{\CstrTrue}
      \TAll\TVarc{\KDomain\TShapeOne}{(\TVar'\Disjoint\TVarc, \TVar''\Disjoint\TVarc)}
      \\
      \TAll{\SVar}{\KSession}{\CstrTrue}
      \TAll{\SVar'}{\KSession}{\CstrTrue}
      \\
      \TArrBegin{\StEmpty}{
      \TArr{\TVarStFun'\ \TVar', \StBind\TVarc{\SVar}}{\TChan\TVarc}{\Empty}{\TVarStFun''\ \TVar'',
      \StBind\TVarc{\SVar'}}{\TVarTFun''\ \TVar''}
      }
      \\
      \qquad\TArrEnd {\Empty} {\StEmpty}{
      \TArr{
      \TVarStFun'\ \TVar',
      \StBind\TVarc{\SSend\TVar{\KDomain\TShapeZero}{\StEmpty}{\Int}{\SVar}}
      }{\TChan\TVarc}{\Empty}{
      \TVarStFun''\ \TVar'',\StBind\TVarc{\SVar'}
      }{
          \TVarTFun''\ \TVar''
      }
    }
    \end{array}
  \end{gather}
\end{Newstuff}

To parameterize over functions with arbitrary free
channels and which may create channels, we need one more
ingredient to describe the shape of the new state that contains
the descriptions of the newly created channels. This extra shape and
its use is highlighted in the type.
\begin{Newstuff}{\NewstuffPadding}\small
  \begin{gather}\label{eq:12}
    \begin{array}{@{}l}
      \TAll{\TVarShape'}\KShape{\CstrTrue}
      \TAll{\TVarStFun'}{\KStFun{\TVarShape'}}{\CstrTrue}
      \\
      \TAll{\TVarShape''}\KShape{\CstrTrue}
      \\
      \textcolor{red}{\TAll{\TVarShape}\KShape{\CstrTrue}}
      {\TAll{\TVarStFun''}{\KStFun{\textcolor{red}{\TShapePair\TVarShape{\TVarShape''}}}}{\CstrTrue}}
      \TAll{\TVarTFun''}{\KTypeFun{\textcolor{red}{\TShapePair\TVarShape{\TVarShape''}}}}{\CstrTrue}
      \\
      \TAll{\TVar'}{\KDomain{\TVarShape'}}{\CstrTrue}
      \TAll{\TVar''}{\KDomain{\TVarShape''}}{\CstrTrue}
      \TAll\TVarc{\KDomain\TShapeOne}{(\TVar'\Disjoint\TVarc, \TVar''\Disjoint\TVarc)}
      \\
      \TAll{\SVar}{\KSession}{\CstrTrue}
      \TAll{\SVar'}{\KSession}{\CstrTrue}
      \\
      \TArrBegin{\StEmpty}{
      \TArr{\TVarStFun'\ \TVar',
      \StBind\TVarc{\SVar}}{\TChan\TVarc}{\TVar\colon
      \KDomain\TVarShape}{\TVarStFun''\ \textcolor{red}{(\TVar,\TVar'')},
      \StBind\TVarc{\SVar'}}{\TVarTFun''\ \textcolor{red}{(\TVar,\TVar'')}}
      }
      \\
      \qquad\TArrEnd {\Empty} {\StEmpty}{
      \TArr{
      \TVarStFun'\ \TVar',
      \StBind\TVarc{\SSend\TVar{\KDomain\TShapeZero}{\StEmpty}{\Int}{\SVar}}
      }{\TChan\TVarc}{\TVar\colon \KDomain\TVarShape}{
      \TVarStFun''\ \textcolor{red}{(\TVar,\TVar'')},\StBind\TVarc{\SVar'}
      }{
          \TVarTFun''\ \textcolor{red}{(\TVar,\TVar'')}
      }
    }
    \end{array}
  \end{gather}
\end{Newstuff}
This example relies on shape combination with the
$\TShapePair\_\_$ operator: in this
case, the shape of the state captured in the closure and the
shape of the state of newly created channels. Domain variables
are combined accordingly using the $\_,\_$ operator.

A remaining restriction is that the number of channels that are
handled is always fixed at compile time. Lifting this restriction
would go along with support for recursive datatypes, a topic of future
work. 

\subsection{Context Restriction Operators}
\label{sec:cont-restr-oper}

\begin{figure}[tp]
  Removing bindings, which might contain free domain variables
  \begin{align*}
    \CtxRestrictNonDom\Ctx &= \begin{cases}
      \Empty                                  & \text{if } \Ctx = \Empty \\
      \CtxRestrictNonDom{\Ctx'},\TVar : \Kind & \text{if } \Ctx = \Ctx',\TVar : \Kind \land \Kind \in \{ \KShape, \KSession, \KDomain\POLYShape \to \KType, \KDomain\POLYShape \to \KSt \} \\
      \CtxRestrictNonDom{\Ctx'}               & \text{if } \Ctx = \Ctx',\TVar : \Kind \land \Kind \in \{ \KDomain\POLYShape, \KSt, \KType \} \\
      \CtxRestrictNonDom{\Ctx'}               & \text{if } \Ctx = \Ctx',\EVar : \Typ \\
      \CtxRestrictNonDom{\Ctx'}               & \text{if } \Ctx = \Ctx',\POLYDom_1 \Disjoint \POLYDom_2 \\
    \end{cases}
  \end{align*}
  Removing non-domain bindings \hspace{45mm}
  \begin{align*}
    \CtxRestrictOnlyDom{\Ctx} &= \begin{cases}
      \Empty                                               & \text{if } \Ctx = \Empty \\
      \CtxRestrictOnlyDom{\Ctx'},\TVar : \KDomain\POLYShape & \text{if } \Ctx = \Ctx',\TVar : \KDomain\POLYShape \\
      \CtxRestrictOnlyDom{\Ctx'}                            & \text{if } \Ctx = \Ctx',\TVar : \Kind \land \Kind \neq \KDomain\POLYShape \\
      \CtxRestrictOnlyDom{\Ctx'}                            & \text{if } \Ctx = \Ctx',\EVar : \Typ \\
      \CtxRestrictOnlyDom{\Ctx'}                            & \text{if } \Ctx = \Ctx',\POLYDom_1 \Disjoint \POLYDom_2 \\
    \end{cases}
  \end{align*}
  \caption{Context restriction ($\CtxRestrictNonDom\Ctx$ and $\CtxRestrictOnlyDom\Ctx$)}
  \label{fig:context-restriction}
\end{figure}
Figure~\ref{fig:context-restriction} contains the definition of the
context restriction operators, which are mainly technical. Both
operators keep only bindings of type variables. One removes all domain
bindings and the other removes all non-domain bindings.

\subsection{Metatheory}

In this formalization we use inference rule notation to state
lemmas and use proved lemmas in proof trees. While this
notation is unconventional, we found that it significantly improves
readability.

\begin{lemma}[Context Restriction preserves Kind Formation]
  \label{lem:context-restriction-preserves-kind-formation}
  \begin{mathpar}
      \inferrule*[Left={\normalfont(1)}]{
        \PVGRHasKind{\Ctx}{\Typ}{\KShape}
      }{
        \PVGRHasKind{\CtxRestrictNonDom\Ctx}{\Typ}{\KShape}
      }
      \and
      \inferrule*[Left={\normalfont(2)}]{
        \PVGRIsKind{\Ctx}{\Kind}
      }{
        \PVGRIsKind{\CtxRestrictNonDom\Ctx}{\Kind}
      }
      \and
      \inferrule*[Left={\normalfont(3)}]{
        \PVGRIsCtx{\Ctx}
      }{
        \PVGRIsCtx{\CtxRestrictNonDom\Ctx}
      }
  \end{mathpar}
\end{lemma}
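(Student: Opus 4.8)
The plan is to prove all three parts simultaneously by mutual structural induction: part (3) on the derivation of $\PVGRIsCtx\Ctx$, part (2) on the derivation of $\PVGRIsKind\Ctx\Kind$, and part (1) on the derivation of $\PVGRHasKind\Ctx\Typ\KShape$. The dependency structure is essentially one-directional: part (1) is self-contained, part (2) invokes part (1) in the case of rule \TirName{KF-Dom}, and part (3) invokes part (2) in the case of \TirName{CF-ConsKind}. The single fact doing the real work is that $\CtxRestrictNonDom\cdot$ only ever \emph{deletes} bindings from a context — it never alters the kind of a binding it keeps — and the kinds it keeps (namely $\KShape$, $\KSession$, $\KDomain\POLYShape\to\KType$, and $\KDomain\POLYShape\to\KSt$) are exactly the ones needed below.

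For part (1), inversion on $\PVGRHasKind\Ctx\Typ\KShape$ leaves only four cases: $\Typ=\TShapeZero$ via \TirName{K-ShapeEmpty}, $\Typ=\TShapeOne$ via \TirName{K-ShapeChan}, $\Typ=\TShapePair{\POLYShape_1}{\POLYShape_2}$ via \TirName{K-ShapePair}, or $\Typ=\TVar$ with $\TVar:\KShape\in\Ctx$ via \TirName{K-Var}. The two constant cases reapply the same premise-free rule. The pair case applies the induction hypothesis to each sub-derivation $\PVGRHasKind\Ctx{\POLYShape_i}\KShape$ and reapplies \TirName{K-ShapePair}. The variable case succeeds because $\CtxRestrictNonDom\cdot$ retains every $\KShape$-binding, so $\TVar:\KShape\in\CtxRestrictNonDom\Ctx$ and \TirName{K-Var} still applies. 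For part (2), the premise-free rules \TirName{KF-Type}, \TirName{KF-Session}, \TirName{KF-State}, \TirName{KF-Shape} reapply directly; \TirName{KF-Arr} uses the induction hypothesis (2) on both components; \TirName{KF-Dom} uses part (1) on its premise $\PVGRHasKind\Ctx\POLYShape\KShape$ and reapplies \TirName{KF-Dom}. For part (3): \TirName{CF-Empty} is immediate since $\CtxRestrictNonDom\Empty=\Empty$. For \TirName{CF-ConsKind} with $\Ctx=\Ctx',\TVar:\Kind$, case-split on whether $\Kind$ is one of the retained kinds. If it is, then $\CtxRestrictNonDom\Ctx=\CtxRestrictNonDom{\Ctx'},\TVar:\Kind$, and we reapply \TirName{CF-ConsKind} using induction hypothesis (3) for $\PVGRIsCtx{\CtxRestrictNonDom{\Ctx'}}$, part (2) for $\PVGRIsKind{\CtxRestrictNonDom{\Ctx'}}\Kind$, and $\TVar\notin\Dom{\CtxRestrictNonDom{\Ctx'}}$ which follows from $\TVar\notin\Dom{\Ctx'}$ and $\Dom{\CtxRestrictNonDom{\Ctx'}}\subseteq\Dom{\Ctx'}$. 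If $\Kind$ is not retained, $\CtxRestrictNonDom\Ctx=\CtxRestrictNonDom{\Ctx'}$ and induction hypothesis (3) finishes the case. The cases \TirName{CF-ConsType} and \TirName{CF-ConsCstr} both drop their binding, so $\CtxRestrictNonDom\Ctx=\CtxRestrictNonDom{\Ctx'}$ and induction hypothesis (3) suffices (the premises that typecheck the dropped type or domains are simply discarded).

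The main obstacle is not depth but getting the bookkeeping right: the case analysis in part (3) on whether the bound kind is retained, and, underneath it, confirming the inversion claim in part (1) — that no type of kind $\KShape$ can be formed by \TirName{K-App} (since \TirName{K-Lam} only builds functions into $\KType$ or $\KSt$), so that inversion genuinely leaves just the four shape-forming cases and the argument does not need to be strengthened to arbitrary kinds. Once that inversion fact is pinned down, every remaining case is a mechanical reapplication of the originating rule with the induction hypotheses plugged in.
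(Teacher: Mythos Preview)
Your proposal is correct and follows essentially the same approach as the paper's proof: the same sequential dependency (part (1) is self-contained, (2) uses (1) for \TirName{KF-Dom}, (3) uses (2) for \TirName{CF-ConsKind}), and the same key inversion observation that \TirName{K-App} cannot yield kind $\KShape$ because \TirName{K-Lam} only produces arrow kinds with codomain $\KType$ or $\KSt$. Your write-up is simply more detailed than the paper's terse version, which reads ``straightforward induction'' in each part and explicitly notes only the \TirName{K-App} impossibility.
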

\begin{proof}
  \
  \begin{enumerate}
  \item
    Straightforward induction on the derivations with kind $\KShape$.
    The case of type application is not possible, since type lambdas cannot have codomain $\KShape$.
  \item
    Straightforward induction on the kind formation using (1) in the case $\textsc{KF-Dom}$.
  \item
    Straightforward induction on $\PVGRIsCtx{\Ctx}$.
    Since $\CtxRestrictNonDom\Ctx$ removes all value-level and
    constraint bindings, only the well-kindedness of bound type-variables
    needs to be preserved, which follows via (2).
    \qedhere
  \end{enumerate}
\end{proof}



\begin{definition}[Order Preserving Embedding (OPE)]
  \label{def:order-preserving-embedding}
  $\Ctx_2$ is an \emph{Order Preserving Embedding} of $\Ctx_1$, written as $\OPE{\Ctx_1}{\Ctx_2}$, iff
  \begin{enumerate}
  \item $\PVGRIsCtx{\Ctx_1}$
  \item $\PVGRIsCtx{\Ctx_2}$
  \item $\forall b \in \Ctx_1. b \in \Ctx_2$
  \end{enumerate}
\end{definition}

\begin{lemma}[Context Restriction Preserves OPE]
  \label{lem:context-restriction-preserves-ope}
  \begin{align*}
    \inferrule{
      \OPE{\Ctx_1}{\Ctx_2}
    } {
      \OPE{\CtxRestrictNonDom{\Ctx_1}}{\CtxRestrictNonDom{\Ctx_2}}
    }
  \end{align*}
\end{lemma}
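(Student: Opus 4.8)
The plan is to unfold \Cref{def:order-preserving-embedding} and discharge its three clauses in turn. Write $\mathcal K = \{ \KShape, \KSession, \KDomain\POLYShape \to \KType, \KDomain\POLYShape \to \KSt \}$ for the set of kinds retained by $\CtxRestrictNonDom{\cdot}$, as read off from \Cref{fig:context-restriction}. From the hypothesis $\OPE{\Ctx_1}{\Ctx_2}$ we have $\PVGRIsCtx{\Ctx_1}$ and $\PVGRIsCtx{\Ctx_2}$, so \Cref{lem:context-restriction-preserves-kind-formation}(3) immediately gives $\PVGRIsCtx{\CtxRestrictNonDom{\Ctx_1}}$ and $\PVGRIsCtx{\CtxRestrictNonDom{\Ctx_2}}$, which are the first two OPE obligations for the restricted contexts.

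It remains to establish $\forall b \in \CtxRestrictNonDom{\Ctx_1}.\ b \in \CtxRestrictNonDom{\Ctx_2}$. For this I would first prove a ``locality'' characterization of membership after restriction: for every context $\Ctx$ and every binding $b$, one has $b \in \CtxRestrictNonDom{\Ctx}$ iff $b \in \Ctx$ and $b$ has the form $\TVar : \Kind$ with $\Kind \in \mathcal K$. This is a routine induction on the structure of $\Ctx$, checking each clause of the definition in \Cref{fig:context-restriction}: value bindings $\EVar : \Typ$, constraint bindings $\POLYDom_1 \Disjoint \POLYDom_2$, and type-variable bindings whose kind lies outside $\mathcal K$ are discarded regardless of the rest of the context, whereas a type-variable binding whose kind lies in $\mathcal K$ is always preserved. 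In particular, whether a binding survives the restriction depends only on the binding itself. Granting this, the inclusion follows at once: if $b \in \CtxRestrictNonDom{\Ctx_1}$ then $b \in \Ctx_1$ and $b = (\TVar : \Kind)$ with $\Kind \in \mathcal K$; by $\OPE{\Ctx_1}{\Ctx_2}$ we get $b \in \Ctx_2$; and since the form condition on $b$ still holds, the characterization yields $b \in \CtxRestrictNonDom{\Ctx_2}$.

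I do not expect a real obstacle here. The only mildly delicate point is phrasing the locality property of $\CtxRestrictNonDom{\cdot}$ precisely; once it is available, the rest is bookkeeping. One could instead argue by a direct induction on the derivation of $\PVGRIsCtx{\Ctx_2}$ together with the inclusion $\Ctx_1 \subseteq \Ctx_2$, but that would require threading the freshness side conditions of \textsc{CF-ConsKind}, \textsc{CF-ConsType}, and \textsc{CF-ConsCstr} through the induction, so the membership-based argument is cleaner. An entirely analogous statement and proof hold for $\CtxRestrictOnlyDom{\cdot}$ should it be needed.
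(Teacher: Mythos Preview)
Your proposal is correct and matches the paper's approach: both use \Cref{lem:context-restriction-preserves-kind-formation}(3) for the well-formedness clauses and then argue the inclusion clause from the observation that $\CtxRestrictNonDom{\cdot}$ acts locally on bindings. The paper states this last step tersely as a contrapositive (``if $\CtxRestrictNonDom{\cdot}$ removes a binding from $\Ctx_2$, then that binding is either not present in $\Ctx_1$ or also removed in $\CtxRestrictNonDom{\Ctx_1}$''), whereas you spell out the underlying characterization explicitly; the content is the same.
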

\begin{proof}
  Axioms (1) and (2) follow via Lemma~\ref{lem:context-restriction-preserves-kind-formation}.3;
  axiom (3) holds, because if $\CtxRestrictNonDom{\cdot}$
  removes a binding from $\Ctx_2$, then that binding is either not
  present in $\Ctx_1$ or also removed in $\CtxRestrictNonDom{\Ctx_1}$.
\end{proof}

\begin{lemma}[Context Extension Preserves OPE]
  \label{lem:context-extension-preserves-ope}
  \begin{mathpar}
    \inferrule*[Left={\normalfont(1)}]{
      \OPE{\Ctx_1}{\Ctx_2} \\
      \PVGRIsCtx{\Ctx_1,\Ctx_3} \\
      \PVGRIsCtx{\Ctx_2,\Ctx_3}
    } {
      \OPE{(\Ctx_1,\Ctx_3)}{(\Ctx_2,\Ctx_3)}
    }
    \and
    \inferrule*[Left={\normalfont(2)}]{
      \OPE{\Ctx_1}{\Ctx_2} \\
      \PVGRIsCtx{\Ctx_1 \DisjointAppend \Ctx_3} \\
      \PVGRIsCtx{\Ctx_2 \DisjointAppend \Ctx_3}
    } {
      \OPE{(\Ctx_1 \DisjointAppend \Ctx_3)}{(\Ctx_2 \DisjointAppend \Ctx_3)}
    }
  \end{mathpar}
\end{lemma}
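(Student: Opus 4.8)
The plan is to unfold Definition~\ref{def:order-preserving-embedding} and discharge its three clauses one at a time, observing that clauses (1) and (2)—well-formedness of the two contexts—are supplied verbatim by the hypotheses $\PVGRIsCtx{\Ctx_1,\Ctx_3}$ and $\PVGRIsCtx{\Ctx_2,\Ctx_3}$ in part~(1) (resp.\ $\PVGRIsCtx{\Ctx_1\DisjointAppend\Ctx_3}$ and $\PVGRIsCtx{\Ctx_2\DisjointAppend\Ctx_3}$ in part~(2)). So all the real work is clause (3): every binding of the left context must reappear in the right context.

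For part~(1), I would take a binding $b\in\Ctx_1,\Ctx_3$ and split cases. If $b\in\Ctx_1$, then $b\in\Ctx_2$ by clause (3) of the assumed $\OPE{\Ctx_1}{\Ctx_2}$, hence $b\in\Ctx_2,\Ctx_3$. If $b\in\Ctx_3$, then $b\in\Ctx_2,\Ctx_3$ immediately. That closes part~(1), and it can be reused as the $\Ctx_i,\Ctx_3$-prefix subcase in part~(2).

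For part~(2), I would expand $\Ctx_i\DisjointAppend\Ctx_3 = \Ctx_i,\Ctx_3,\Cstr_3,\Cstr_{i3}$ following Figure~\ref{fig:polyvgr-disjoint-ctx-extension}, where $\Cstr_3 = \{\,\TVar\Disjoint\TVar' \mid \TVar,\TVar'\in\Dom{\CtxRestrictOnlyDom{\Ctx_3}},\ \TVar\neq\TVar'\,\}$ and $\Cstr_{i3} = \{\,\TVar\Disjoint\TVar' \mid \TVar\in\Dom{\CtxRestrictOnlyDom{\Ctx_i}},\ \TVar'\in\Dom{\CtxRestrictOnlyDom{\Ctx_3}}\,\}$. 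The key observation is that $\Cstr_3$ is \emph{literally the same set} on both sides, since it depends only on $\Ctx_3$. Now case split on where $b\in\Ctx_1\DisjointAppend\Ctx_3$ sits: if $b\in\Ctx_1$ use the assumed OPE (as in part~(1)); if $b\in\Ctx_3$ or $b\in\Cstr_3$ the binding is syntactically present in $\Ctx_2\DisjointAppend\Ctx_3$ as well; and if $b=(\TVar\Disjoint\TVar')\in\Cstr_{13}$, then the domain-kinded binding $\TVar:\KDomain\POLYShape$ of $\Ctx_1$ lies in $\Ctx_2$ by the assumed OPE, so $\TVar\in\Dom{\CtxRestrictOnlyDom{\Ctx_2}}$ (the restriction operator keeps exactly the domain bindings), whence $b\in\Cstr_{23}$, which is part of $\Ctx_2\DisjointAppend\Ctx_3$.

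The argument is essentially bookkeeping; the only spot needing a little care is the $\Cstr_{13}$ case, where one transports a domain-kinded binding from $\Ctx_1$ to $\Ctx_2$ and then relies on $\CtxRestrictOnlyDom{\cdot}$ preserving membership of domain bindings—ensuring that every cross-constraint generated for the smaller context is generated again for the larger one. I expect no genuine obstacle beyond keeping the definitions of $\DisjointAppend$ and of the two context-restriction operators straight.
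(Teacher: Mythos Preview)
Your proposal is correct and follows essentially the same route as the paper: unfold the OPE definition, take axioms (1) and (2) from the well-formedness hypotheses, and for axiom (3) do a case split on where the binding lives, with the only nontrivial case being $\Cstr_{13}$, which is handled by transporting domain bindings from $\Ctx_1$ to $\Ctx_2$ via the assumed OPE to get $\Cstr_{13}\subseteq\Cstr_{23}$. The paper presents part~(1) as a simplification of part~(2) rather than the other way around, and phrases the $\Cstr_{13}$ case more tersely (observing $\Dom{\Ctx_1}\subseteq\Dom{\Ctx_2}$ directly), but the substance is identical.
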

\begin{proof}
  \
  \begin{enumerate}
  \item Same as (2) but without the additional complication of constraints.
  \item
    OPE Axiom (1) and (2) follow by assumption.
    For Axiom (3) we need to show that
    \begin{align*}
    \forall b \in \Ctx_1,\Ctx_3,\Cstr_{13},\Cstr_{3}.\ b \in \Ctx_2,\Ctx_3,\Cstr_{23},\Cstr_{3}
    \end{align*}
    where the $\Cstr_i$ are defined as in Figure~\ref{fig:polyvgr-disjoint-ctx-extension}.
    Any binding from $\Ctx_1$ is also contained in $\Ctx_2$, due to Axiom (3) of $\OPE{\Ctx_1}{\Ctx_2}$.
    Hence, it also holds that $\Dom{\Ctx_1} \subseteq \Dom{\Ctx_2}$, which implies $\Cstr_{13} \subseteq \Cstr_{23}$.
    \qedhere
  \end{enumerate}
\end{proof}

\begin{lemma}[Context Formation and Concatenation]
\label{lem:ctx-form-and-cat}
  $\PVGRIsCtx{\Ctx_1,\Ctx_2} \iff \PVGRIsCtx{\Ctx_1\DisjointAppend\Ctx_2}$
\end{lemma}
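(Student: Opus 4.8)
The plan is to exploit that, by the definition in Figure~\ref{fig:polyvgr-disjoint-ctx-extension}, the context $\Ctx_1\DisjointAppend\Ctx_2$ is literally $\Ctx_1,\Ctx_2$ with a trailing block of disjointness constraints $\Cstr_2,\Cstr_{12}$ appended. So the equivalence reduces to a single elementary fact about appending constraint bindings to an already well-formed context: stripping them off preserves well-formedness (for the $\Leftarrow$ direction), and re-adding them preserves well-formedness, because every such binding has the form $\TVar_1\Disjoint\TVar_2$ with $\TVar_1,\TVar_2$ among the domain-kinded type variables already declared in $\Ctx_1$ or $\Ctx_2$ (for the $\Rightarrow$ direction). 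If $\Cstr_2,\Cstr_{12}$ happens to be empty, the two contexts coincide and there is nothing to prove.

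For $\Rightarrow$, I would isolate the auxiliary statement: if $\PVGRIsCtx{\Ctx}$ and $\Cstr$ is a list of disjointness constraints all of whose domains are type variables declared with a domain kind in $\Ctx$, then $\PVGRIsCtx{\Ctx,\Cstr}$. This follows by induction on the length of $\Cstr$: the step applies \TirName{CF-ConsCstr}, whose premises are $\PVGRIsCtx{\text{current context}}$ (from the induction hypothesis) and $\PVGRHasKind{\text{current context}}{\TVar_i}{\KDomain{\POLYShape_i}}$ for the two domains $\TVar_i$ of the constraint being added. The latter is obtained from \TirName{K-Var} together with weakening of the kinding judgment under context extension, since $\TVar_i$ carries a kind $\KDomain{\POLYShape_i}$ somewhere inside the current context. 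To instantiate this with $\Ctx := \Ctx_1,\Ctx_2$ and $\Cstr := \Cstr_2,\Cstr_{12}$, I only need to check the side condition: by their definitions the variables mentioned in $\Cstr_2,\Cstr_{12}$ lie in $\Dom{\CtxRestrictOnlyDom{\Ctx_1}}$ or $\Dom{\CtxRestrictOnlyDom{\Ctx_2}}$, and $\CtxRestrictOnlyDom{\cdot}$ retains exactly the domain-kinded type-variable bindings (Figure~\ref{fig:context-restriction}), hence each such variable is declared with a domain kind in $\Ctx_1,\Ctx_2$. The $\Leftarrow$ direction is even simpler: context formation is closed under taking prefixes, since each of \TirName{CF-ConsKind}, \TirName{CF-ConsType}, \TirName{CF-ConsCstr} carries the well-formedness of the one-smaller context as a premise, so repeated inversion from the right turns $\PVGRIsCtx{\Ctx_1,\Ctx_2,\Cstr_2,\Cstr_{12}}$ into $\PVGRIsCtx{\Ctx_1,\Ctx_2}$.

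The argument is routine; the only ingredient external to this excerpt is weakening of the kinding judgment (equivalently, $\PVGRHasKind{\Ctx}{\TVar}{\Kind}$ whenever $\TVar:\Kind\in\Ctx$ and $\PVGRIsCtx\Ctx$), which is the expected companion lemma and follows from the OPE machinery already present in the development. There is no freshness obligation to discharge, because \TirName{CF-ConsCstr} introduces no new names into the domain of the context, so the main (mild) obstacle is simply to make the bookkeeping between $\CtxRestrictOnlyDom{\cdot}$, the definition of $\DisjointAppend$, and the weakening lemma line up cleanly.
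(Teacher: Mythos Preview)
Your proposal is correct and mirrors the paper's own proof: the $\Leftarrow$ direction peels off the trailing constraints by repeated inversion (prefix-closure of context formation), and the $\Rightarrow$ direction re-appends $\Cstr_2,\Cstr_{12}$ via \TirName{CF-ConsCstr}, using that every constrained domain is a domain-kinded type variable already declared in $\Ctx_1,\Ctx_2$ and hence well-kinded there. One bookkeeping remark: in the paper this lemma precedes Lemma~\ref{lem:weakening} and is in fact used in its clause~(3), so when you invoke kinding weakening for the variable lookup you should rely only on clause~(5) (whose \TirName{K-Var} case uses the OPE axiom directly and does not cycle back through the present lemma), which is exactly the harmless dependency you anticipated.
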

\begin{proof}
  \
  \begin{itemize}
  \item[$\Leftarrow$:]
    Straightforward, because
    $\Ctx_1\DisjointAppend\Ctx_2 = \Ctx_1,\Ctx_2,\Cstr_{12},\Cstr_2$,
    so we can just split off the constraints from the context
    formation by repeated case analysis.
  \item[$\Rightarrow$:]
    To append the constraints $\Cstr_{12}$ and  $\Cstr_2$ to the context formation,
    we need to repeatedly apply \textsc{CF-ConsCstr}, which requires
    all constraint axioms to use well-kinded domains. By definition of
    $\Cstr_{12}$ and $\Cstr_{2}$, all domains are type variables from
    $\Dom{\Ctx_1} \cup \Dom{\Ctx_2}$, and are hence well-kinded.
    \qedhere
  \end{itemize}
\end{proof}

\begin{lemma}[Weakening]
  \label{lem:weakening}
  If $\OPE{\Ctx_1}{\Ctx_2}$ then
  \begin{mathpar}
    \inferrule*[Left={\normalfont(1)}]{
      \PVGRCstrEntail{\Ctx_1}{\Cstr}
    }{
      \PVGRCstrEntail{\Ctx_2}{\Cstr}
    }
    \and
    \inferrule*[Left={\normalfont(2)}]{
      \PVGRIsCtx{\Ctx_1,\Ctx_3}
    }{
      \PVGRIsCtx{\Ctx_2,\Ctx_3}
    }
    \and
    \inferrule*[Left={\normalfont(3)}]{
      \PVGRIsCtx{\Ctx_1\DisjointAppend\Ctx_3}
    }{
      \PVGRIsCtx{\Ctx_2\DisjointAppend\Ctx_3}
    }
    \and
    \inferrule*[Left={\normalfont(4)}]{
      \PVGRIsKind{\Ctx_1}{\Kind}
    }{
      \PVGRIsKind{\Ctx_2}{\Kind}
    }
    \and
    \inferrule*[Left={\normalfont(5)}]{
      \PVGRHasKind{\Ctx_1}{\Typ}{\Kind}
    }{
      \PVGRHasKind{\Ctx_2}{\Typ}{\Kind}
    }
    \\
    \inferrule*[Left={\normalfont(6)}]{
      \PVGRHasValType{\Ctx_1}\Val\Typ
    }{
      \PVGRHasValType{\Ctx_2}\Val\Typ
    }
    \and
    \inferrule*[Left={\normalfont(7)}]{
      \PVGRHasExpType{\Ctx_1}{\St_1}\Exp{\Ctx_3}{\St_2}{\Typ_2}
    }{
      \PVGRHasExpType{\Ctx_2}{\St_1}\Exp{\Ctx_3}{\St_2}{\Typ_2}
    }
    \and
    \inferrule*[Left={\normalfont(8)}]{
      \PVGRHasConfType{\Ctx_1}{\St}{\Cfg}
    }{
      \PVGRHasConfType{\Ctx_2}{\St}{\Cfg}
    }
  \end{mathpar}
\end{lemma}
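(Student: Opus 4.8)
The plan is to prove all eight parts by a single simultaneous induction on the height of the hypothesis derivation --- the context-formation derivation for parts~(2) and~(3), and the relevant judgment derivation for the rest. The parts are genuinely interdependent: \textsc{K-All} in part~(5) appeals to part~(2) on its context-formation premise, while \textsc{CF-ConsKind} in part~(2) appeals to part~(5) (and part~(4)) on its kinding premise. In every such case, however, the recursive appeal is to a strictly smaller subderivation, so the simultaneous induction is well-founded. At the leaves --- \textsc{T-Var}, \textsc{K-Var}, \textsc{T-Chan}, \textsc{CE-Axiom} --- the required binding or axiom of $\Ctx_1$ is relocated into $\Ctx_2$ by OPE axiom~(3).

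Part~(1) is immediate: every entailment rule other than \textsc{CE-Axiom} leaves the context untouched in its premises, so one simply relays $\OPE{\Ctx_1}{\Ctx_2}$ to the inductive hypothesis. Part~(4) reduces to part~(5) (for \textsc{KF-Dom}) and to itself (for \textsc{KF-Arr}), the constant kinds being trivial. For parts~(2) and~(3) I would induct on $\Ctx_3$: the empty case is OPE axiom~(2); for $\Ctx_3 = \Ctx_3', b$ I invert the context-formation rule, apply the part~(2)/(3) hypothesis to obtain $\PVGRIsCtx{\Ctx_2,\Ctx_3'}$, re-derive the well-kindedness premise for $b$ via parts~(4)/(5), and re-apply the rule (the freshness side conditions being handled as noted in the last paragraph), using \emph{Context Formation and Concatenation} to move between ``$,$'' and ``$\DisjointAppend$'' where needed.

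The heart of the proof is the treatment of rules that reshape the context before recursing. Rules with an ordinary extension in a premise (\textsc{K-All}, \textsc{T-TAbs}, \textsc{T-Abs}, \textsc{T-Let}, \textsc{T-Case}, \textsc{T-NuAccess}, the $\nu$-rules) are handled by first transporting $\OPE{\Ctx_1}{\Ctx_2}$ across the extension with \emph{Context Extension Preserves OPE} --- in its plain form for $\EVar:\Typ$, $\TVar:\Kind$, and $\Cstr$ extensions, and in its $\DisjointAppend$ form for \textsc{K-Arr}, \textsc{T-Let}, and \textsc{T-NuChan} --- and then invoking the inductive hypothesis on the premise. Rules with a restriction in a premise (\textsc{K-Lam}, \textsc{K-Send}, \textsc{K-Recv}) are handled the same way, transporting the OPE across $\CtxRestrictNonDom{\cdot}$ with \emph{Context Restriction Preserves OPE} and then adding the bound variable by a plain extension. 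All remaining rules --- the purely structural kinding and typing rules, \textsc{K-DomMerge}/\textsc{K-StMerge}/\textsc{T-TApp} (whose disjointness premises go through part~(1)), and \textsc{T-Send} (whose auxiliary kindings go through part~(5), whose value premises go through part~(6), and whose type conversions are context-free) --- recurse on their premises with the OPE unchanged and re-apply the rule.

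I expect the only real obstacle to be bookkeeping: ensuring that at each rule I have in hand exactly the OPE instance and the ambient context-well-formedness facts (a routine regularity invariant of the judgments) that the supporting lemmas demand, and --- the one genuinely delicate point --- that the bound-variable freshness side conditions in parts~(2) and~(3) survive enlarging $\Ctx_1$ to $\Ctx_2$. Since $\Ctx_2$ may carry names absent from $\Ctx_1$, the statement is read, as usual, up to $\alpha$-renaming of the bindings introduced in $\Ctx_3$ (the Barendregt variable convention), which is precisely how these lemmas are invoked downstream.
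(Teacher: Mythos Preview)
Your proposal is correct and matches the paper's proof essentially line for line: a mutual induction on the derivation to be weakened, with parts~(2)/(3) handled by an inner induction on $\Ctx_3$ (reducing~(3) to~(2) via \emph{Context Formation and Concatenation}), the binder-extending rules (\textsc{K-All}, \textsc{K-Arr}, \textsc{T-Let}, \textsc{T-NuChan}, \dots) handled by \emph{Context Extension Preserves OPE}, the restricting rules (\textsc{K-Lam}, \textsc{K-Send}, \textsc{K-Recv}) by \emph{Context Restriction Preserves OPE}, and the Barendregt convention for freshness. The only nit is that \textsc{T-Chan} is not a leaf---it recurses on its kinding premise via part~(5)---but this is cosmetic.
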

\begin{proof}
  By mutual induction on the derivation to be weakened:
  \begin{enumerate}
  \item
    \begin{itemize}
    \item \IndCase{CE-Axiom} Direct consequence of Axiom~(3) of $\OPE{\Ctx_1}{\Ctx_2}$.
    \item All other cases are immediate by the induction hypotheses.
    \end{itemize}
  \item By induction on $\Ctx_3$:
    \begin{itemize}
    \item \IndCase{$\Ctx_3 = \Empty$}
      Immediate from Axiom (2) of $\OPE{\Ctx_1}{\Ctx_2}$.
    \item \IndCase{$\Ctx_3 = \Ctx_3',\TVar:\Kind$}
      In this case we have
      \begin{align*}
        \inferrule*[Right=CF-ConsKind]{
          \PVGRIsCtx{\Ctx_1,\Ctx_3'} \\
          \PVGRHasKind{\Ctx_1,\Ctx_3'}{\TVar}{\Kind}
        }{
          \PVGRIsCtx{\Ctx_1,\Ctx_3',\TVar:\Kind}
        }
      \end{align*}
      and need to show
      \begin{align*}
        \inferrule*[Right=CF-ConsKind]{
          \inferrule*[Left=(a)]{ }{
            \PVGRIsCtx{\Ctx_2,\Ctx_3'}
          } \\
          \inferrule*[Right=(b)]{ }{
            \PVGRHasKind{\Ctx_2,\Ctx_3'}{\TVar}{\Kind}
          }
        }{
          \PVGRIsCtx{\Ctx_2,\Ctx_3',\TVar:\Kind}
        }
      \end{align*}
      (\textsc{a}) follows from the inner induction hypothesis;
      (\textsc{b}) follows from the outer induction hypothesis for (5).
    \item The cases for value-level and constraint bindings are analogous to the previous case.
    \end{itemize}

  \item Follows by applying Lemma~\ref{lem:ctx-form-and-cat} to both
    the premise and conclusion of (2).

  \item All cases are immediate by the induction hypotheses.
  \item
    \begin{itemize}
    \item \IndCase{K-Var}
      Follows directly from Axiom (3) of $\OPE{\Ctx_1}{\Ctx_2}$.
    \item \IndCase{K-Lam, K-Send, K-Recv}
      Here we have assumptions using context restriction, like
      $\PVGRHasKind{\CtxRestrictNonDom{\Ctx_1},\TVar:\KDomain{\POLYShape}}{\St}{\KSt}$.
      In order to apply the induction hypothesis on those assumptions, we rely on
      Lemma~\ref{lem:context-restriction-preserves-ope}, which given
      $\OPE{\Ctx_1}{\Ctx_2}$ yields
      $\OPE{\CtxRestrictNonDom{\Ctx_1}}{\CtxRestrictNonDom{\Ctx_2}}$.
    \item \IndCase{K-Arr}
      Here we have assumptions using disjoint context concatenation, like
      $\PVGRHasKind{\Ctx_1 \DisjointAppend \Ctx_2'}{\St_2}{\KSt}$.
      In order to apply the induction hypothesis on those assumptions, we rely on
      Lemma~\ref{lem:context-extension-preserves-ope}.(2), which given
      $\OPE{\Ctx_1}{\Ctx_2}$ yields
      $\OPE{(\Ctx_1 \DisjointAppend \Ctx_2')}{(\Ctx_2 \DisjointAppend \Ctx_2')}$.
    \item All other cases are immediate by the induction hypotheses.
      Going under binders requires the Barendregt Convention and
      Lemma~\ref{lem:context-extension-preserves-ope}.(1) to extend the OPE.
    \end{itemize}
  \item
    \begin{itemize}
    \item \IndCase{T-Var}
      Same as \textsc{K-Var}.
    \item All other cases are immediate by the induction hypotheses.
      Going under binders requires the Barendregt Convention and
      Lemma~\ref{lem:context-extension-preserves-ope}.(1) to extend the OPE.
    \end{itemize}
  \item
    \begin{itemize}
    \item \IndCase{T-Let}
      Same as \textsc{K-Arr}.
    \item All other cases are immediate by the induction hypotheses.
    \end{itemize}
  \item
    \begin{itemize}
    \item \IndCase{T-NuChan}
      Same as \textsc{K-Arr}.
    \item All other cases are immediate by the induction hypotheses.
      Going under binders requires the Barendregt Convention and
      Lemma~\ref{lem:context-extension-preserves-ope}.(1) to extend the OPE.
      \qedhere
    \end{itemize}
  \end{enumerate}
\end{proof}


\newcommand*\Sub{\sigma}
\newcommand*\SubId{\textrm{id}}
\newcommand*\PVGRHasSubstType[3]{\vdash {#1} : {#2} \Rightarrow {#3}}

\begin{definition}[Substitution Typing]
  \label{def:sub-typing}
  We write $\PVGRHasSubstType{\Sub}{\Ctx_1}{\Ctx_2}$ iff
  \begin{enumerate}
  \item $\PVGRIsCtx{\Ctx_1}$
  \item $\PVGRIsCtx{\Ctx_2}$
  \item $\forall (\EVar:\Typ) \in \Ctx_1.\ \PVGRHasValType{\Ctx_2}{\Sub\EVar}{\Sub\Typ}$,
  \item $\forall (\TVar:\Kind) \in \Ctx_1.\ \PVGRHasKind{\Ctx_2}{\Sub\TVar}{\Sub\Kind}$, and
  \item $\forall (\POLYDom_1 \Disjoint \POLYDom_2) \in \Ctx_1.\ \PVGRCstrEntail{\Ctx_2}{\Sub\POLYDom_1 \Disjoint \Sub\POLYDom_2}$.
  \end{enumerate}
\end{definition}

\begin{lemma}[Typing of the Identity Substitution]
  \label{lem:sub-id-typing}
  If $\PVGRIsCtx\Ctx$, then $\PVGRHasSubstType{\SubId}{\Ctx}{\Ctx}$.
\end{lemma}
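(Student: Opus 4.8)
The plan is to unfold Definition~\ref{def:sub-typing} with $\Sub = \SubId$. Since the identity substitution acts trivially on variables, kinds, types, and domains, conditions~(1) and~(2) are literally the hypothesis $\PVGRIsCtx\Ctx$, and conditions~(3)--(5) reduce to the following: for every binding $b$ occurring in $\Ctx$, the corresponding ``reflexive'' judgment must hold \emph{in the full context} $\Ctx$, namely $\PVGRHasValType\Ctx\EVar\Typ$ when $b = (\EVar:\Typ)$, $\PVGRHasKind\Ctx\TVar\Kind$ when $b = (\TVar:\Kind)$, and $\PVGRCstrEntail\Ctx{\POLYDom_1 \Disjoint \POLYDom_2}$ when $b = (\POLYDom_1 \Disjoint \POLYDom_2)$. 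Each of these follows from the matching axiom rule (\textsc{T-Var}, \textsc{K-Var}, \textsc{CE-Axiom}) together with the already-established Weakening lemma.

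First I would record the auxiliary fact that context formation is prefix-closed: if $\PVGRIsCtx{\Ctx_1,\Ctx_2}$ then $\PVGRIsCtx{\Ctx_1}$. This is an immediate induction on the derivation, since \textsc{CF-ConsKind}, \textsc{CF-ConsType}, and \textsc{CF-ConsCstr} all carry the well-formedness of the shorter context as a premise, and \textsc{CF-Empty} is the base case. Given this, fix a binding $b \in \Ctx$ and write $\Ctx = \Ctx_L, b, \Ctx_R$. By prefix closure, $\PVGRIsCtx{\Ctx_L, b}$, and the appropriate axiom rule (with $\Ctx_L$ playing the role of the ``$\Ctx$'' that appears in that rule) yields the required judgment, but in the smaller context $\Ctx_L, b$.

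It remains to lift the judgment from $\Ctx_L, b$ to $\Ctx$. Since $\PVGRIsCtx{\Ctx_L, b}$ (prefix closure), $\PVGRIsCtx\Ctx$ (hypothesis), and every binding of $\Ctx_L, b$ occurs in $\Ctx$, Definition~\ref{def:order-preserving-embedding} gives $\OPE{(\Ctx_L, b)}{\Ctx}$. Then Weakening (Lemma~\ref{lem:weakening}, parts~(6), (5), and~(1) for the value, kind, and constraint cases respectively) transports the judgment to $\Ctx$, discharging the corresponding clause of Definition~\ref{def:sub-typing}. I do not expect a genuine obstacle here: the only non-bookkeeping ingredient is prefix closure of $\PVGRIsCtx{\cdot}$, which is routine, and everything else is a mechanical assembly of axiom rules with the prior Weakening lemma. (Alternatively one could prove the statement by direct induction on the derivation of $\PVGRIsCtx\Ctx$, but routing through OPE and Weakening is shorter.)
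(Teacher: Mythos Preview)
Your proposal is correct and follows essentially the same approach as the paper: conditions (1)--(2) are the hypothesis, and conditions (3)--(5) are discharged via \textsc{T-Var}, \textsc{K-Var}, and \textsc{CE-Axiom}. You are simply more explicit than the paper, spelling out the prefix-closure and weakening steps that the paper leaves implicit when it invokes those axiom rules for a binding that need not sit at the end of $\Ctx$.
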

\begin{proof}
  (1) and (2) follow by assumption.

  (3), (4), and (5) follow via \textsc{T-Var}, \textsc{K-Var}, and \textsc{CE-Axiom},
  respectively.
  \qedhere

\end{proof}

\newpage

\begin{lemma}[Extending Substitution Typings]
  \label{lem:sub-ext-typing}
  \
  \begin{enumerate}
  \item
    $
    \inferrule{
      \PVGRHasSubstType
        {\Sub}
        {\Ctx_1}
        {\Ctx_2}
      \\
      \PVGRHasKind
        {\Ctx_1}
        {\Typ}
        {\KType}
      \\
      \PVGRHasValType
        {\Ctx_2}
        {\Val}
        {\Sub\Typ}
      \\
      \EVar \not\in \Dom{\Ctx_1}
    }{
      \PVGRHasSubstType
        {(\Sub,\EVar\mapsto\Val)}
        {(\Ctx_1,\EVar:\Typ)}
        {\Ctx_2}
    }
    $
  \item
    $
    \inferrule{
      \PVGRHasSubstType
        {\Sub}
        {\Ctx_1}
        {\Ctx_2}
      \\
      \PVGRIsKind
        {\Ctx_1}
        {\Kind}
      \\
      \PVGRHasKind
        {\Ctx_2}
        {\Typ}
        {\Sub\Kind}
      \\
      \TVar \not\in \Dom{\Ctx_1}
    }{
      \PVGRHasSubstType
        {(\Sub,\TVar\mapsto\Typ)}
        {(\Ctx_1,\TVar:\Kind)}
        {\Ctx_2}
    }
    $
  \item
    $
    \inferrule{
      \PVGRHasSubstType
        {\Sub}
        {\Ctx_1}
        {\Ctx_2}
      \\
      \PVGRIsCtx
        {\Ctx_1,\Cstr}
      \\
      \PVGRCstrEntail
        {\Ctx_2}
        {\Sub\Cstr}
    }{
      \PVGRHasSubstType
        {\Sub}
        {(\Ctx_1,\Cstr)}
        {\Ctx_2}
    }
    $
  \end{enumerate}
\end{lemma}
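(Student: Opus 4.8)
The plan is to prove each of the three statements by directly verifying the five clauses of Definition~\ref{def:sub-typing} for the extended substitution and extended context. Clauses~(1) and~(2) are always immediate: clause~(2), $\PVGRIsCtx{\Ctx_2}$, is inherited verbatim from the premise $\PVGRHasSubstType{\Sub}{\Ctx_1}{\Ctx_2}$, and clause~(1) follows by one application of the appropriate context-formation rule (\textsc{CF-ConsType} in part~(1), \textsc{CF-ConsKind} in part~(2), and the hypothesis $\PVGRIsCtx{\Ctx_1,\Cstr}$ directly in part~(3)), whose side conditions are precisely the remaining hypotheses of the lemma. The real work is in clauses~(3)--(5), which range over the bindings of the \emph{extended} context; I would split that range into the bindings already in $\Ctx_1$ and the single new binding, and handle the two parts separately.

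For part~(1), extending by a value binding $\EVar:\Typ$, the new binding contributes only to clause~(3): since $\Typ$ is a type it contains no expression variable, so $(\Sub,\EVar\mapsto\Val)\Typ = \Sub\Typ$, and the goal $\PVGRHasValType{\Ctx_2}{(\Sub,\EVar\mapsto\Val)\EVar}{(\Sub,\EVar\mapsto\Val)\Typ}$ is exactly the hypothesis $\PVGRHasValType{\Ctx_2}{\Val}{\Sub\Typ}$. For every binding already present in $\Ctx_1$ the extended substitution agrees with $\Sub$ (the new variable $\EVar$ is fresh and, being an expression variable, cannot occur in any type, kind, or domain), so clauses~(3)--(5) follow immediately from the corresponding clauses of $\PVGRHasSubstType{\Sub}{\Ctx_1}{\Ctx_2}$.

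Part~(2), extending by a type variable $\TVar:\Kind$, is where the only genuine technical point lies. For the new binding, clause~(4) asks for $\PVGRHasKind{\Ctx_2}{(\Sub,\TVar\mapsto\Typ)\TVar}{(\Sub,\TVar\mapsto\Typ)\Kind}$; since $\PVGRIsKind{\Ctx_1}{\Kind}$ forces the free variables of $\Kind$ to lie in $\Dom{\Ctx_1}$ and $\TVar\notin\Dom{\Ctx_1}$, we get $(\Sub,\TVar\mapsto\Typ)\Kind = \Sub\Kind$, and the goal is the hypothesis $\PVGRHasKind{\Ctx_2}{\Typ}{\Sub\Kind}$. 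For the bindings already in $\Ctx_1$, the key observation is that $\TVar$ is fresh for all of $\Ctx_1$: $\PVGRIsCtx{\Ctx_1}$ forces each component of $\Ctx_1$ (a type in a value binding, a kind in a type binding, a domain in a constraint) to be well-formed in its own prefix, hence to have all its free variables in $\Dom{\Ctx_1}$, which excludes $\TVar$; therefore $(\Sub,\TVar\mapsto\Typ)$ coincides with $\Sub$ on every such component, and clauses~(3)--(5) reduce to the corresponding clauses of $\PVGRHasSubstType{\Sub}{\Ctx_1}{\Ctx_2}$. I expect this freshness argument---which relies on the scoping invariant of well-formed contexts, in the same spirit as the Barendregt Convention already assumed in the weakening proof---to be the main, if minor, obstacle.

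Part~(3), extending by a constraint list $\Cstr$, adds no variable bindings and leaves $\Sub$ unchanged, so clauses~(3) and~(4) are inherited verbatim from the premise. For clause~(5), a disjointness binding $\POLYDom_1\Disjoint\POLYDom_2$ of $\Ctx_1,\Cstr$ is either in $\Ctx_1$, and then discharged by clause~(5) of the premise, or in $\Cstr$, in which case $\Sub\POLYDom_1\Disjoint\Sub\POLYDom_2$ is one of the conjuncts of $\Sub\Cstr$ and we appeal to the fact that $\PVGRCstrEntail{\Ctx_2}{\Sub\Cstr}$ entails each of its conjuncts---a one-line induction on $\Cstr$ by inversion of \textsc{CE-Cons} (Figure~\ref{fig:polyvgr-constraint-entailment}). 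With that in place all five clauses hold in each of the three cases, which establishes the lemma; apart from the freshness observation in part~(2) and this decomposition of list entailment, the argument is pure bookkeeping.
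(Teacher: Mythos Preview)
Your proposal is correct and matches the paper's approach: the paper records only ``Straightforward case analysis and rule applications,'' and your direct verification of the five clauses of Definition~\ref{def:sub-typing}, together with the freshness observations to show the extended substitution agrees with $\Sub$ on the old bindings, is exactly the intended argument spelled out in full.
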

\begin{proof}
  Straightforward case analysis and rule applications.
  \qedhere
\end{proof}


\begin{lemma}[Weakening Substitution Typings]
  \label{lem:sub-weaken-typing}
    $$
    \inferrule{
      \PVGRHasSubstType
        {\Sub}
        {\Ctx_1}
        {\Ctx_2}
      \\
      \PVGRIsCtx{\Ctx_2,\Ctx_2'}
    }{
      \PVGRHasSubstType
        {\Sub}
        {\Ctx_1}
        {\Ctx_2,\Ctx_2'}
    }
    $$
\end{lemma}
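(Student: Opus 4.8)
The statement is a direct corollary of the Weakening lemma (Lemma~\ref{lem:weakening}) once we observe that extending the \emph{target} context of a substitution by a well-formed suffix yields an order-preserving embedding. So the plan is: unfold Definition~\ref{def:sub-typing} for the goal $\PVGRHasSubstType{\Sub}{\Ctx_1}{\Ctx_2,\Ctx_2'}$, establish the OPE $\OPE{\Ctx_2}{\Ctx_2,\Ctx_2'}$, and then discharge each of the five clauses using the matching clause of the hypothesis $\PVGRHasSubstType{\Sub}{\Ctx_1}{\Ctx_2}$ together with Weakening.

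\textbf{Key steps.} First I would check $\OPE{\Ctx_2}{\Ctx_2,\Ctx_2'}$: axiom~(1), $\PVGRIsCtx{\Ctx_2}$, is clause~(2) of the hypothesis $\PVGRHasSubstType{\Sub}{\Ctx_1}{\Ctx_2}$; axiom~(2), $\PVGRIsCtx{\Ctx_2,\Ctx_2'}$, is the second premise of the lemma; axiom~(3), that every binding of $\Ctx_2$ occurs in $\Ctx_2,\Ctx_2'$, is immediate. Next I would verify the five clauses of $\PVGRHasSubstType{\Sub}{\Ctx_1}{\Ctx_2,\Ctx_2'}$: clause~(1), $\PVGRIsCtx{\Ctx_1}$, is clause~(1) of the hypothesis; clause~(2), $\PVGRIsCtx{\Ctx_2,\Ctx_2'}$, is the second premise; for clause~(3), given $(\EVar:\Typ)\in\Ctx_1$ the hypothesis gives $\PVGRHasValType{\Ctx_2}{\Sub\EVar}{\Sub\Typ}$, and Lemma~\ref{lem:weakening}.(6) with the OPE above yields $\PVGRHasValType{\Ctx_2,\Ctx_2'}{\Sub\EVar}{\Sub\Typ}$; clause~(4) is analogous via Lemma~\ref{lem:weakening}.(5) applied to each $\PVGRHasKind{\Ctx_2}{\Sub\TVar}{\Sub\Kind}$; clause~(5) is analogous via Lemma~\ref{lem:weakening}.(1) applied to each $\PVGRCstrEntail{\Ctx_2}{\Sub\POLYDom_1 \Disjoint \Sub\POLYDom_2}$.

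\textbf{Main obstacle.} There is essentially none; the only point requiring a moment of care is that we are growing the target context of $\Sub$, not the source, so $\Sub$ itself, the bindings of $\Ctx_1$, and all substitution instances $\Sub\EVar$, $\Sub\Typ$, $\Sub\Kind$, $\Sub\POLYDom_i$ stay fixed — only the context in which the value-typing, kinding, and entailment judgments are checked enlarges, which is exactly what Weakening handles. No $\alpha$-renaming or Barendregt-style side conditions are needed here because we never pass under a binder.
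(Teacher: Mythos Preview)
Your proposal is correct and essentially identical to the paper's own proof: both unfold Definition~\ref{def:sub-typing}, discharge clauses~(1) and~(2) directly from the premises, and obtain clauses~(3)--(5) by applying the corresponding parts of Lemma~\ref{lem:weakening} to the judgments supplied by the hypothesis $\PVGRHasSubstType{\Sub}{\Ctx_1}{\Ctx_2}$. You are merely slightly more explicit than the paper in spelling out the OPE $\OPE{\Ctx_2}{\Ctx_2,\Ctx_2'}$ needed to invoke Weakening.
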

\begin{proof}
  By Definition~\ref{def:sub-typing}, we have to prove:
  \begin{enumerate}
  \item $\PVGRIsCtx{\Ctx_1}$, which follows by
    $\PVGRHasSubstType{\Sub}{\Ctx_1}{\Ctx_2}$.
  \item $\PVGRIsCtx{\Ctx_2,\Ctx_2'}$, which follows by assumption.
  \item $\forall (\EVar:\Typ) \in \Ctx_1.\
    \PVGRHasValType{\Ctx_2,\Ctx_2'}{\Sub\EVar}{\Sub\Typ}$.
    Let $(\EVar:\Typ) \in \Ctx_1$, then by
    $\PVGRHasSubstType{\Sub}{\Ctx_1}{\Ctx_2}$ it follows that
    $$\PVGRHasValType{\Ctx_2}{\Sub\EVar}{\Sub\Typ}$$
    which by weakening via Lemma~\ref{lem:weakening} yields
    $$\PVGRHasValType{\Ctx_2,\Ctx_2'}{\Sub\EVar}{\Sub\Typ}$$
  \item $\forall (\TVar:\Kind) \in \Ctx_1.\
    \PVGRHasKind{\Ctx_2,\Ctx_2'}{\Sub\TVar}{\Sub\Kind}$,
    which follows similarly by weakening.
  \item $\forall (\POLYDom_1 \Disjoint \POLYDom_2) \in \Ctx_1.\
    \PVGRCstrEntail{\Ctx_2,\Ctx_2'}{\Sub\POLYDom_1 \Disjoint \Sub\POLYDom_2}$,
    which follows similarly by weakening.
    \qedhere
  \end{enumerate}
\end{proof}

\begin{lemma}[Lifting Substitution Typings]
  \label{lem:sub-lift-typing}
  \
  \begin{enumerate}
  \item
    $
    \inferrule{
      \PVGRHasSubstType
        {\Sub}
        {\Ctx_1}
        {\Ctx_2}
      \\
      \PVGRHasKind
        {\Ctx_1}
        {\Typ}
        {\KType}
      \\
      \PVGRHasKind
        {\Ctx_2}
        {\Sub\Typ}
        {\KType}
      \\
      \EVar \not\in \Dom{\Ctx_1}
      \\
      \EVar \not\in \Dom{\Ctx_2}
    }{
      \PVGRHasSubstType
        {(\Sub,\EVar\mapsto\EVar)}
        {(\Ctx_1,\EVar:\Typ)}
        {(\Ctx_2,\EVar:\Sub\Typ)}
    }
    $
  \item
    $
    \inferrule{
      \PVGRHasSubstType
        {\Sub}
        {\Ctx_1}
        {\Ctx_2}
      \\
      \PVGRIsKind
        {\Ctx_1}
        {\Kind}
      \\
      \PVGRIsKind
        {\Ctx_2}
        {\Sub\Kind}
      \\
      \TVar \not\in \Dom{\Ctx_1}
      \\
      \TVar \not\in \Dom{\Ctx_2}
    }{
      \PVGRHasSubstType
        {(\Sub,\TVar\mapsto\TVar)}
        {(\Ctx_1,\TVar:\Kind)}
        {(\Ctx_2,\TVar:\Sub\Kind)}
    }
    $
  \item
    $
    \inferrule{
      \PVGRHasSubstType
        {\Sub}
        {\Ctx_1}
        {\Ctx_2}
      \\
      \PVGRIsCtx
        {\Ctx_1,\Cstr}
      \\
      \PVGRIsCtx
        {\Ctx_2,\Sub\Cstr}
    }{
      \PVGRHasSubstType
        {\Sub}
        {(\Ctx_1,\Cstr)}
        {(\Ctx_2,\Cstr)}
    }
    $
  \item
    $
    \inferrule{
      \PVGRHasSubstType
        {\Sub}
        {\Ctx_1}
        {\Ctx_2}
      \\
      \PVGRIsCtx
        {\Ctx_1,\Ctx}
      \\
      \PVGRIsCtx
        {\Ctx_2,\Sub\Ctx}
    }{
      \PVGRHasSubstType
        {(\Sub,\SubId)}
        {(\Ctx_1,\Ctx)}
        {(\Ctx_2,\Sub\Ctx)}
    }
    $
  \item
    $
    \inferrule{
      \PVGRHasSubstType
        {\Sub}
        {\Ctx_1}
        {\Ctx_2}
      \\
      \PVGRIsCtx
        {\Ctx_1 \DisjointAppend \Ctx}
      \\
      \PVGRIsCtx
        {\Ctx_2 \DisjointAppend \Sub\Ctx}
    }{
      \PVGRHasSubstType
        {(\Sub,\SubId)}
        {(\Ctx_1 \DisjointAppend \Ctx)}
        {(\Ctx_2 \DisjointAppend \Sub\Ctx)}
    }
    $
  \end{enumerate}
\end{lemma}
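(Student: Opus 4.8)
The five parts are proved by assembling two basic operations on substitution typings: weakening the target context (Lemma~\ref{lem:sub-weaken-typing}) and extending the source context by a single binding (Lemma~\ref{lem:sub-ext-typing}). Parts (1)--(3) are each a single such assembly, part (4) is an induction on the added context using (1)--(3), and part (5) reduces to (4) plus some bookkeeping for the disjointness constraints that $\DisjointAppend$ inserts.

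For (1), I would first obtain $\PVGRIsCtx{\Ctx_2,\EVar:\Sub\Typ}$ from the premises via \textsc{CF-ConsType}, then weaken the given $\PVGRHasSubstType\Sub{\Ctx_1}{\Ctx_2}$ to $\PVGRHasSubstType\Sub{\Ctx_1}{\Ctx_2,\EVar:\Sub\Typ}$ by Lemma~\ref{lem:sub-weaken-typing}, and finally apply Lemma~\ref{lem:sub-ext-typing}.(1) with the value $\EVar$, which has type $\Sub\Typ$ in $\Ctx_2,\EVar:\Sub\Typ$ by \textsc{T-Var}. This produces exactly the desired $\PVGRHasSubstType{(\Sub,\EVar\mapsto\EVar)}{(\Ctx_1,\EVar:\Typ)}{(\Ctx_2,\EVar:\Sub\Typ)}$. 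Part (2) is identical with \textsc{CF-ConsKind}, \textsc{K-Var}, and Lemma~\ref{lem:sub-ext-typing}.(2) in place of their value-level analogues. For (3), weaken the target to $\Ctx_2,\Sub\Cstr$ (well formed by hypothesis) and apply Lemma~\ref{lem:sub-ext-typing}.(3); its side condition $\PVGRCstrEntail{\Ctx_2,\Sub\Cstr}{\Sub\Cstr}$ holds because every conjunct of $\Sub\Cstr$ occurs verbatim in the extended target, so \textsc{CE-Axiom} (together with \textsc{CE-Cons}) discharges it.

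Part (4) is an induction on the structure of $\Ctx$. The base case $\Ctx=\Empty$ is the hypothesis $\PVGRHasSubstType\Sub{\Ctx_1}{\Ctx_2}$. In each cons case I would invert the context-formation derivations of the two given well-formed extended contexts $\Ctx_1,\Ctx$ and $\Ctx_2,\Sub\Ctx$ to strip off the last binding; this simultaneously supplies well-formedness of $\Ctx_1,\Ctx'$ and $\Ctx_2,\Sub\Ctx'$ for the induction hypothesis and the side conditions (a kinding, a kind-formation, or a context-formation judgment, plus the relevant freshness) that parts (1)--(3) require. Then apply the induction hypothesis to get $\PVGRHasSubstType{(\Sub,\SubId)}{(\Ctx_1,\Ctx')}{(\Ctx_2,\Sub\Ctx')}$ and extend it by the removed binding using part (1), (2), or (3) according to its shape.

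Part (5) carries the real content. By Lemma~\ref{lem:ctx-form-and-cat} the premises $\PVGRIsCtx{\Ctx_1\DisjointAppend\Ctx}$ and $\PVGRIsCtx{\Ctx_2\DisjointAppend\Sub\Ctx}$ yield $\PVGRIsCtx{\Ctx_1,\Ctx}$ and $\PVGRIsCtx{\Ctx_2,\Sub\Ctx}$, so part (4) gives $\PVGRHasSubstType{(\Sub,\SubId)}{(\Ctx_1,\Ctx)}{(\Ctx_2,\Sub\Ctx)}$; weakening to the target $\Ctx_2\DisjointAppend\Sub\Ctx$ via Lemma~\ref{lem:sub-weaken-typing} then leaves only the two batches of disjointness constraints that $\DisjointAppend$ inserts into $\Ctx_1\DisjointAppend\Ctx$ (Figure~\ref{fig:polyvgr-disjoint-ctx-extension}) to be added on the source side by repeated use of Lemma~\ref{lem:sub-ext-typing}.(3). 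The internal batch, forcing the domain variables of $\Ctx$ pairwise disjoint, is fixed by $(\Sub,\SubId)$ and is already present in the target, so \textsc{CE-Axiom} suffices. The delicate batch is the cross batch: a constraint $\TVar_1\Disjoint\TVar_2$ with $\TVar_1\in\Dom{\CtxRestrictOnlyDom{\Ctx_1}}$ and $\TVar_2\in\Dom{\CtxRestrictOnlyDom{\Ctx}}$ becomes $\Sub\TVar_1\Disjoint\TVar_2$, and $\Sub\TVar_1$ need not be a variable but can be a compound domain (a merge, a projection, or $\TDomZero$). To discharge $\PVGRCstrEntail{\Ctx_2\DisjointAppend\Sub\Ctx}{\Sub\TVar_1\Disjoint\TVar_2}$ I would prove an auxiliary lemma: if every free domain variable $\alpha$ of a domain $d$ satisfies $\PVGRCstrEntail{\Ctx'}{\alpha\Disjoint\beta}$, then $\PVGRCstrEntail{\Ctx'}{d\Disjoint\beta}$; this is a structural induction on $d$ using \textsc{CE-Empty} for $\TDomZero$, \textsc{CE-Merge} for a merge, \textsc{CE-ProjSplit} for a projection, and \textsc{CE-Sym} throughout. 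It applies here because, by inspection of the domain kinding rules, every free variable of the $\Ctx_2$-well-kinded domain $\Sub\TVar_1$ lies in $\Dom{\CtxRestrictOnlyDom{\Ctx_2}}$, and for each such $\alpha$ the axiom $\alpha\Disjoint\TVar_2$ is a member of the cross batch of $\Ctx_2\DisjointAppend\Sub\Ctx$ (using $\Dom{\CtxRestrictOnlyDom{\Sub\Ctx}}=\Dom{\CtxRestrictOnlyDom{\Ctx}}$). I expect this auxiliary lemma and the accounting of which constraints land where to be the only non-routine part; the rest is rule-chasing.
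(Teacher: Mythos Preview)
Your proposal matches the paper's proof step for step in parts (1)--(4) and follows the same outline for (5). Your handling of the cross constraints in (5) is in fact more careful than the paper's: the paper asserts that each substituted cross constraint $\Sub\TVar_1 \Disjoint \TVar_2$ is literally a member of the target's cross-constraint set $\Cstr_2'$ and invokes \textsc{CE-Axiom}, glossing over the case where $\Sub\TVar_1$ is a compound domain, whereas your auxiliary lemma (structural induction on the domain using \textsc{CE-Empty}, \textsc{CE-Merge}, \textsc{CE-ProjSplit}, \textsc{CE-Sym}) closes that gap.
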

\begin{proof}
  \
  \begin{enumerate}
  \item
    First, we weaken the substitution typing
    \begin{align*}
      \inferrule*[Right=Lemma~\ref{lem:sub-weaken-typing}]{
        \PVGRHasSubstType
          {\Sub}
          {\Ctx_1}
          {\Ctx_2}
        \\
        \inferrule*[Right=CF-ConsType]{
          \PVGRIsCtx{\Ctx_2} \\
          \PVGRHasKind{\Ctx_2}{\Sub\Typ}{\KType} \\
          \EVar \not\in \Dom{\Ctx_2}
        }{
          \PVGRIsCtx{\Ctx_2,\EVar:\Sub\Typ}
        }
      }{
        \PVGRHasSubstType
          {\Sub}
          {\Ctx_1}
          {\Ctx_2,\EVar:\Sub\Typ}
      }
    \end{align*}
    Then we extend the weakened substitution typing
    \begin{align*}
      \inferrule*[Right=Lemma~\ref{lem:sub-ext-typing}]{
        \PVGRHasSubstType
          {\Sub}
          {\Ctx_1}
          {\Ctx_2,\EVar:\Sub\Typ}
        \\
        \PVGRHasKind{\Ctx_1}{\Typ}{\KType}
        \\
        \inferrule*[Right=\textsc{T-Var}]{ }{
          \PVGRHasValType{\Ctx_2,\EVar:\Sub\Typ}{\EVar}{\Sub\Typ}
        }
        \\
        \EVar \not\in \Dom{\Ctx_1}
      }{
        \PVGRHasSubstType
          {\Sub,\EVar\mapsto\EVar}
          {\Ctx_1,\EVar:\Typ}
          {\Ctx_2,\EVar:\Sub\Typ}
      }
    \end{align*}
  \item Same as (1).
  \item
    First, we weaken the substitution typing
    \begin{align*}
      \inferrule*[Right=Lemma~\ref{lem:sub-weaken-typing}]{
        \PVGRHasSubstType
          {\Sub}
          {\Ctx_1}
          {\Ctx_2}
        \\
        \PVGRIsCtx{\Ctx_2,\Sub\Cstr}
      }{
        \PVGRHasSubstType
          {\Sub}
          {\Ctx_1}
          {\Ctx_2,\Sub\Cstr}
      }
    \end{align*}
    Then we extend the weakened substitution typing
    \begin{align*}
      \inferrule*[Right=Lemma~\ref{lem:sub-ext-typing}]{
        \PVGRHasSubstType
          {\Sub}
          {\Ctx_1}
          {\Ctx_2,\Sub\Cstr}
        \\
        \PVGRIsCtx{\Ctx_1,\Cstr}
        \\
        \PVGRCstrEntail{\Ctx_2,\Sub\Cstr}{\Sub\Cstr}
      }{
        \PVGRHasSubstType
          {\Sub,\EVar\mapsto\EVar}
          {\Ctx_1,\Cstr}
          {\Ctx_2,\Sub\Cstr}
      }
    \end{align*}
    where $\PVGRCstrEntail{\Ctx_2,\Sub\Cstr}{\Sub\Cstr}$ follows via repeated applications of
    \textsc{CE-Split}, \textsc{CE-Axiom}, and \textsc{CE-Merge}.
  \item Follows from (1) to (3) by induction on $\Ctx$.
  \item
    In this case we have
    \begin{align*}
      \Ctx_1\DisjointAppend\Ctx &= \Ctx_1,\Ctx,\Cstr_1,\Cstr_1' \\
      \Ctx_2\DisjointAppend\Sub\Ctx &= \Ctx_2,\Sub\Ctx,\Cstr_2,\Cstr_2'
    \end{align*}
    with
    \begin{align*}
      \Cstr_1 &=
        \{ \TVar_1 \Disjoint \TVar_2 \mid
           \TVar_1, \TVar_2 \in \Dom{\CtxRestrictOnlyDom{\Ctx}},
           \TVar_1 \neq \TVar_2 \}
      \\
      \Cstr_2 &=
        \{ \TVar_1 \Disjoint \TVar_2 \mid
           \TVar_1, \TVar_2 \in \Dom{\CtxRestrictOnlyDom{\Sub\Ctx}},
           \TVar_1 \neq \TVar_2 \}
      \\
      \Cstr_1' &=
        \{ \TVar_1 \Disjoint \TVar_2 \mid
           \TVar_1 \in \Dom{\CtxRestrictOnlyDom{\Ctx_1}},
           \TVar_2 \in  \Dom{\CtxRestrictOnlyDom{\Ctx}} \}
      \\
      \Cstr_2' &=
        \{ \TVar_1 \Disjoint \TVar_2 \mid
           \TVar_1 \in \Dom{\CtxRestrictOnlyDom{\Ctx_2}},
           \TVar_2 \in  \Dom{\CtxRestrictOnlyDom{\Sub\Ctx}} \}
    \end{align*}

    Via (4) follows
    \begin{align*}
      \PVGRHasSubstType
        {\Sub}
        {\Ctx_1,\Ctx}
        {\Ctx_2,\Sub\Ctx}
    \end{align*}

    We have $\Sub\Cstr_1 = \Cstr_1$, because $\Cstr_1$
    contains by definition only variables from $\Ctx$, so $\Sub$ behaves as the identity.
    Furthermore, we have $\Cstr_1 = \Cstr_2$, because $\Dom{\CtxRestrictOnlyDom{\Sub\Ctx}} = \Dom{\CtxRestrictOnlyDom{\Ctx}}$.
    Hence, we can apply (3) to the previous result and rewrite
    $\Sub\Cstr_1$ to $\Cstr_2$, which yields
    \begin{align*}
      \PVGRHasSubstType
        {\Sub}
        {\Ctx_1,\Ctx,\Cstr_1}
        {\Ctx_2,\Sub\Ctx,\Cstr_2}
    \end{align*}

    To conclude with
    \begin{align*}
      \PVGRHasSubstType
        {\Sub}
        {\Ctx_1,\Ctx,\Cstr_1,\Cstr_1'}
        {\Ctx_2,\Sub\Ctx,\Cstr_2,\Cstr_2'}
    \end{align*}
    we need to show that for any
    $(\POLYDom_1 \Disjoint \POLYDom_2) \in \Ctx_1,\Ctx,\Cstr_1,\Cstr_1'$
    it holds that
    \begin{align*}
        \PVGRCstrEntail
          {\Ctx_2,\Sub\Ctx,\Cstr_2,\Cstr_2'}
          {\Sub\POLYDom_1 \Disjoint \Sub\POLYDom_2}
    \end{align*}
    If the constraint axiom is in $\Ctx_1,\Ctx,\Cstr_1$, then the result
    follows by the previous substitution typing and weakening.
    If the constraint axiom is in $\Cstr_1'$, then
    $\Sub\POLYDom_1$ contains only variables from $\CtxRestrictOnlyDom{\Ctx_2}$
    and
    $\Sub\POLYDom_2$ contains only variables from $\CtxRestrictOnlyDom{\Sub\Ctx}$,
    so $(\Sub\POLYDom_1 \Disjoint \Sub\POLYDom_2)$ is part of
    $\Cstr_2'$ and can be proved via \textsc{CE-Axiom}.
    \qedhere
  \end{enumerate}
\end{proof}

\begin{lemma}[Context Restriction preserves Substitution Typing]
  \label{lem:context-restriction-preserves-sub-typing}
  \begin{align*}
    \inferrule{
      \PVGRHasSubstType{\Sub}{\Ctx_1}{\Ctx_2}
    }{
      \PVGRHasSubstType{\Sub}{\CtxRestrictNonDom{\Ctx_1}}{\CtxRestrictNonDom{\Ctx_2}}
    }
  \end{align*}
\end{lemma}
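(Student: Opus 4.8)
The plan is to unfold Definition~\ref{def:sub-typing} at the goal $\PVGRHasSubstType{\Sub}{\CtxRestrictNonDom{\Ctx_1}}{\CtxRestrictNonDom{\Ctx_2}}$ and discharge its five clauses. Two of them are vacuous: by its definition in Figure~\ref{fig:context-restriction}, $\CtxRestrictNonDom{\cdot}$ deletes every expression-variable binding and every disjointness binding, so clauses~(3) and~(5) quantify over the empty set. Hence only three obligations remain: $\PVGRIsCtx{\CtxRestrictNonDom{\Ctx_1}}$, $\PVGRIsCtx{\CtxRestrictNonDom{\Ctx_2}}$, and, for every retained binding $(\TVar:\Kind)\in\CtxRestrictNonDom{\Ctx_1}$, the kinding $\PVGRHasKind{\CtxRestrictNonDom{\Ctx_2}}{\Sub\TVar}{\Sub\Kind}$.

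The first two obligations are immediate. By Definition~\ref{def:sub-typing}, the hypothesis $\PVGRHasSubstType{\Sub}{\Ctx_1}{\Ctx_2}$ supplies $\PVGRIsCtx{\Ctx_1}$ and $\PVGRIsCtx{\Ctx_2}$, and Lemma~\ref{lem:context-restriction-preserves-kind-formation}.3 converts these into $\PVGRIsCtx{\CtxRestrictNonDom{\Ctx_1}}$ and $\PVGRIsCtx{\CtxRestrictNonDom{\Ctx_2}}$.

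For the third obligation, fix a retained binding $(\TVar:\Kind)\in\CtxRestrictNonDom{\Ctx_1}$. By the definition of $\CtxRestrictNonDom{\cdot}$ this binding also lies in $\Ctx_1$, and $\Kind$ is one of the four retained (non-domain) kinds $\KShape$, $\KSession$, $\KDomain\POLYShape\to\KType$, $\KDomain\POLYShape\to\KSt$ --- in particular $\Sub\Kind$ is again of one of these forms, since a substitution can only rewrite the shape inside a $\KDomain{\cdot}$. Clause~(4) of $\PVGRHasSubstType{\Sub}{\Ctx_1}{\Ctx_2}$ then gives $\PVGRHasKind{\Ctx_2}{\Sub\TVar}{\Sub\Kind}$, and all that is left is to restrict this kinding from $\Ctx_2$ to $\CtxRestrictNonDom{\Ctx_2}$.

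That last step is the real content of the proof, and the main obstacle. Lemma~\ref{lem:context-restriction-preserves-kind-formation}.1 is exactly the case $\Sub\Kind=\KShape$; I would first lift it to all four non-domain kinds --- i.e.\ prove that $\PVGRHasKind{\Ctx}{\Typ}{\Kind}$ with $\Kind$ a retained kind implies $\PVGRHasKind{\CtxRestrictNonDom{\Ctx}}{\Typ}{\Kind}$ --- by the same induction on the kinding derivation as part~1: the variable case is immediate since bindings of retained kind survive $\CtxRestrictNonDom{\cdot}$; \textsc{K-Send}, \textsc{K-Recv} and \textsc{K-Lam} already kind their state, payload, and body subterms under $\CtxRestrictNonDom{\cdot}$ of the ambient context, so idempotence of $\CtxRestrictNonDom{\cdot}$ (with part~1 handling the shape premises) closes them; the remaining session-type constructors and the application case go through as in part~1. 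Crucially this generalization never refers to substitution typing, so there is no circularity; once it is available, applying it to $\PVGRHasKind{\Ctx_2}{\Sub\TVar}{\Sub\Kind}$ discharges the third obligation and completes the proof.
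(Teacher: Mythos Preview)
Your proof is correct and follows essentially the same decomposition as the paper: clauses~(1) and~(2) via Lemma~\ref{lem:context-restriction-preserves-kind-formation}.3, clauses~(3) and~(5) vacuous, and clause~(4) by strengthening the kinding $\PVGRHasKind{\Ctx_2}{\Sub\TVar}{\Sub\Kind}$ to $\CtxRestrictNonDom{\Ctx_2}$. The only difference is presentational: where the paper discharges the strengthening step with an informal observation (``types of those kinds have free type variables only at positions which are themselves restricted with $\CtxRestrictNonDom{\cdot}$''), you make the same claim precise by spelling out the induction that extends Lemma~\ref{lem:context-restriction-preserves-kind-formation}.1 from $\KShape$ to all four retained kinds. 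Your handling of \textsc{K-App} inherits the paper's implicit assumption (already present in its proof of part~1) that types of arrow kind arise only via \textsc{K-Lam}; modulo that, the argument is complete.
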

\begin{proof}
  \
  \begin{itemize}
  \item
  Axioms (1) and (2) follow via
  Lemma~\ref{lem:context-restriction-preserves-kind-formation}.3.
  \item
  Axiom (3) and (5) hold trivially, since $\CtxRestrictNonDom\cdot$
  removes all value-level and constraint bindings.
  \item
  For Axiom (4), let $(\TVar : \Kind) \in \CtxRestrictNonDom{\Ctx_1}$.
  From $\PVGRHasSubstType{\Sub}{\Ctx_1}{\Ctx_2}$ we know
  $\PVGRHasKind{\Ctx_2}{\Sub\TVar}{\Sub\Kind}$, but we need to prove
  $\PVGRHasKind{\CtxRestrictNonDom{\Ctx_2}}{\Sub\TVar}{\Sub\Kind}$.
  By definition of $\CtxRestrictNonDom{\cdot}$, we know that
  $$\Kind \in \{ \KShape, \KSession, \KDomain\POLYShape \to \KType, \KDomain\POLYShape \to \KSt \}.$$
  Types of those kinds, like $\Sub\TVar$, have free type variables only at positions, which
  are themselves restricted with $\CtxRestrictNonDom{\cdot}$, so
  $\PVGRHasKind{\CtxRestrictNonDom{\Ctx_2}}{\Sub\TVar}{\Sub\Kind}$ is a
  valid strenghtening of $\PVGRHasKind{\Ctx_2}{\Sub\TVar}{\Sub\Kind}$.
  \qedhere
  \end{itemize}
\end{proof}

\begin{lemma}[Substitution Preserves Derivations]
  \label{lem:sub-pres-typing}
  If $\PVGRHasSubstType{\Sub}{\Ctx_1}{\Ctx_2}$, then
  \begin{mathpar}
    \inferrule*[Left={\normalfont(1)}]{
      \PVGRCstrEntail{\Ctx_1}{\Cstr}
    }{
      \PVGRCstrEntail{\Ctx_2}{\Sub\Cstr}
    }
    \and
    \inferrule*[Left={\normalfont(2)}]{
      \PVGRIsCtx{\Ctx_1,\Ctx_3}
    }{
      \PVGRIsCtx{\Ctx_2,\Sub\Ctx_3}
    }
    \and
    \inferrule*[Left={\normalfont(3)}]{
      \PVGRIsCtx{\Ctx_1\DisjointAppend\Ctx_3}
    }{
      \PVGRIsCtx{\Ctx_2\DisjointAppend\Sub\Ctx_3}
    }
    \and
    \inferrule*[Left={\normalfont(4)}]{
      \PVGRIsKind{\Ctx_1}{\Kind}
    }{
      \PVGRIsKind{\Ctx_2}{\Sub\Kind}
    }
    \and
    \inferrule*[Left={\normalfont(5)}]{
      \PVGRHasKind{\Ctx_1}{\Typ}{\Kind} \and
    }{
      \PVGRHasKind{\Ctx_2}{\Sub\Typ}{\Sub\Kind}
    }
    \\
    \inferrule*[Left={\normalfont(6)}]{
      \PVGRHasValType{\Ctx_1}{\Val}{\Typ}
    }{
      \PVGRHasValType{\Ctx_2}{\Sub\Val}{\Sub\Typ}
    }
    \and
    \inferrule*[Left={\normalfont(7)}]{
      \PVGRHasKind{\Ctx_1}{\St_1}{\KSt} \and
      \PVGRHasExpType{\Ctx_1}{\St_1}{\Exp}{\Ctx_3}{\St_2}{\Typ}
    }{
      \PVGRHasExpType{\Ctx_2}{\Sub\St_1}{\Sub\Exp}{\Sub\Ctx_3}{\Sub\St_2}{\Sub\Typ}
    }
    \and
    \inferrule*[Left={\normalfont(8)}]{
      \PVGRTypeConv{\Typ_1}{\Typ_2}
    }{
      \PVGRTypeConv{\Sub\Typ_1}{\Sub\Typ_2}
    }
  \end{mathpar}
\end{lemma}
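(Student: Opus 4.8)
The plan is to prove all eight clauses simultaneously by mutual induction on the derivation named in the premise of each clause, keeping the substitution typing $\PVGRHasSubstType{\Sub}{\Ctx_1}{\Ctx_2}$ as an ambient hypothesis that we grow whenever we pass under a binder. Three families of cases are immediate: the \emph{lookup} cases \textsc{CE-Axiom}, \textsc{K-Var}, and \textsc{T-Var}, which are resolved directly by clauses (5), (4), and (3) of Definition~\ref{def:sub-typing}, respectively; the purely \emph{structural} cases (type application, pairs, unit, channel and access-point types, the non-binding session/shape/domain/state/constraint constructors, \textsc{T-Val}, \textsc{T-App}, \textsc{T-Proj}, etc.), where $\Sub$ commutes with the constructor and the claim follows from the induction hypotheses; and clauses (2) and (3), which we handle by an inner induction on $\Ctx_3$, using clause (5) of the main statement on the kinding side-conditions of \textsc{CF-ConsKind}/\textsc{CF-ConsType}/\textsc{CF-ConsCstr}, and, for the disjoint-append case, Lemma~\ref{lem:ctx-form-and-cat} to strip and re-attach the generated constraint blocks $\Cstr_2,\Cstr_{12}$ — here we use that $\Sub$ maps domain variables to domains, so those synthesised disjointness axioms remain well-kinded after substitution.

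The substantive work is in the cases that bind a variable or invoke the auxiliary context operators. For every rule that extends the context with a fresh type variable, expression variable, or constraint — \textsc{K-All}, \textsc{K-Lam}, \textsc{K-Send}, \textsc{K-Recv}, \textsc{K-Arr}, \textsc{T-Abs}, \textsc{T-TAbs}, \textsc{T-Let}, \textsc{T-NuChan}, \textsc{T-NuAccess} — we invoke the Barendregt convention to assume the bound name avoids $\Dom{\Ctx_1}$ and $\Dom{\Ctx_2}$, and then extend the substitution typing with the matching clause of Lemma~\ref{lem:sub-lift-typing} (clauses (1)--(3) for a single binder, clause (4) for a whole context, clause (5) for a disjoint-append context) before applying the induction hypothesis to the subderivation. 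In \textsc{K-Lam}, \textsc{K-Send}, and \textsc{K-Recv}, where the subderivation lives under a restricted context, we first massage $\PVGRHasSubstType{\Sub}{\Ctx_1}{\Ctx_2}$ into $\PVGRHasSubstType{\Sub}{\CtxRestrictNonDom{\Ctx_1}}{\CtxRestrictNonDom{\Ctx_2}}$ via Lemma~\ref{lem:context-restriction-preserves-sub-typing} and then extend that. In \textsc{K-Arr}, \textsc{T-Let}, and \textsc{T-NuChan} we reach the premises over $\Ctx_1 \DisjointAppend \Ctx_2$ using Lemma~\ref{lem:sub-lift-typing}.5, and we additionally check that $\Sub\Ctx_2 = \CtxRestrictOnlyDom{\Sub\Ctx_2}$ still holds, which is exactly the fact that $\Sub$ preserves only-domain contexts. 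Clause (7) carries the extra hypothesis $\PVGRHasKind{\Ctx_1}{\St_1}{\KSt}$; it is re-established at each inductive step — for \textsc{T-Let} from the rule's own kinding premise on the merged state, for the communication rules by inversion on state formation — and it is consumed where \textsc{T-Val} threads the state unchanged. Clause (8) is an easy induction on the conversion derivation: $\Sub$ commutes with $\beta$-reduction of type-level abstractions and pairs (the one point to spell out, for \textsc{TC-TApp}, is that substitution and type-variable substitution commute under the freshness convention) and with the self-dual, involution, and flip rewrites of $\Dual{\cdot}$.

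The main obstacle I expect is not any single case but the bookkeeping around the auxiliary operators $\CtxRestrictNonDom{\cdot}$, $\CtxRestrictOnlyDom{\cdot}$, and $\_\DisjointAppend\_$ under substitution: one has to be sure that replacing a domain variable by a composite domain term neither destroys the invariant that the ``created'' part of an arrow type (the $\Ctx_2$ of \textsc{K-Arr}) contains only domains, nor invalidates the freshness constraints $\Cstr_2,\Cstr_{12}$ that $\_\DisjointAppend\_$ introduces. Lemmas~\ref{lem:context-restriction-preserves-sub-typing}, \ref{lem:ctx-form-and-cat}, and \ref{lem:sub-lift-typing} are designed precisely to absorb this difficulty, so the real content of the proof is verifying their hypotheses at each use — principally the freshness side-conditions, which the Barendregt convention discharges — after which the induction is routine.
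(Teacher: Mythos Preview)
Your proposal is correct and follows essentially the same route as the paper: mutual induction on the derivation, lookup cases discharged by the axioms of Definition~\ref{def:sub-typing}, structural cases by the induction hypotheses, binder cases by Barendregt plus Lemma~\ref{lem:sub-lift-typing}, and the restricted-context cases (\textsc{K-Lam}, \textsc{K-Send}, \textsc{K-Recv}) via Lemma~\ref{lem:context-restriction-preserves-sub-typing}. One small slip: you list \textsc{T-NuChan} and \textsc{T-NuAccess} among the binder cases, but those are configuration-typing rules and the lemma has no clause for configurations, so they simply do not arise here.
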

\begin{proof}
  By mutual induction on the derivations subject to substitution:
  \begin{enumerate}
  \item By induction on $\PVGRCstrEntail{\Ctx_1}{\Cstr}$:
    \begin{itemize}
    \item \IndCase{CE-Axiom}
      Here we have $\Ctx_1 = \Ctx_1',\POLYDom_1 \Disjoint \POLYDom_2$ and
      $$\PVGRCstrEntail{\Ctx_1',\POLYDom_1 \Disjoint \POLYDom_2}{\POLYDom_1 \Disjoint \POLYDom_2}$$
      We need to show
      $$\PVGRCstrEntail{\Ctx_2}{\Sub\POLYDom_1 \Disjoint \Sub\POLYDom_2}$$
      which follows immediately from Axiom (5) of the substitution typing:
      \begin{align*}
        \forall (\POLYDom_1 \Disjoint \POLYDom_2) \in \Ctx_1.\ \PVGRCstrEntail{\Ctx_2}{\Sub\POLYDom_1 \Disjoint \Sub\POLYDom_2}
      \end{align*}
    \item \IndCase{CE-Sym, CE-Empty, CE-Split, CE-Merge, CE-Empty, CE-Cons}
      Immediate from the induction hypotheses.
    \end{itemize}
  \item
    Analogous to the corresponding case in Lemma~\ref{lem:weakening} (Weakening).
  \item Follows by applying Lemma~\ref{lem:ctx-form-and-cat} to both
    the premise and conclusion of (2).
  \item
    \begin{itemize}
    \item \IndCase{KF-Type, KF-Session, KF-State, KF-Shape}
      Trivial.
    \item \IndCase{KF-Dom, KF-Arr}
      Immediate from the induction hypotheses.
    \end{itemize}
  \item
    \begin{itemize}
    \item \IndCase{K-Var}
      Immediate from Axiom (4) of the substitution typing.
    \item \IndCase{K-App}
      Immediate from the induction hypotheses.
    \item \IndCase{K-Lam}
      We first apply the induction hypothesis to the first subderivation
      $$\PVGRHasKind{\Ctx_1}{\POLYShape}{\KShape}$$
      which yields
      $$\PVGRHasKind{\Ctx_2}{\Sub\POLYShape}{\KShape}.$$

      To be able to apply the induction hypothesis to the second subderivation,
      we first lift the substitution typing and kindings over the context restriction
      \begin{mathpar}
        \inferrule*[right=Lemma~\ref{lem:context-restriction-preserves-sub-typing}]{
          \PVGRHasSubstType{\Sub}{\Ctx_1}{\Ctx_2}
        }{
          \PVGRHasSubstType{\Sub}{\CtxRestrictNonDom{\Ctx_1}}{\CtxRestrictNonDom{\Ctx_2}}
        }
        \and
        \inferrule*[right=Lemma~\ref{lem:context-restriction-preserves-kind-formation}]{
          \inferrule*[Right=KF-Dom]{
            \PVGRHasKind{\Ctx_1}{\POLYShape}{\KShape}
          }{
            \PVGRIsKind{\Ctx_1}{\KDomain\POLYShape}
          }
        }{
          \PVGRIsKind{\CtxRestrictNonDom{\Ctx_1}}{\KDomain\POLYShape}
        }
        \and
        \inferrule*[Right=Lemma~\ref{lem:context-restriction-preserves-kind-formation}]{
          \inferrule*[Right=KF-Dom]{
            \PVGRHasKind{\Ctx_2}{\Sub\POLYShape}{\KShape}
          }{
            \PVGRIsKind{\Ctx_2}{\KDomain{\Sub\POLYShape}}
          }
        }{
          \PVGRIsKind{\CtxRestrictNonDom{\Ctx_2}}{\KDomain{\Sub\POLYShape}}
        }
      \end{mathpar}
      and then lift the substitution typing over the new domain binding
      \begin{align*}
        \inferrule*[Right=Lemma~\ref{lem:sub-lift-typing}]{
          \PVGRHasSubstType{\Sub}{\CtxRestrictNonDom{\Ctx_1}}{\CtxRestrictNonDom{\Ctx_2}}
          \and
          \PVGRIsKind{\CtxRestrictNonDom{\Ctx_1}}{\KDomain\POLYShape}
          \and
          \PVGRIsKind{\CtxRestrictNonDom{\Ctx_2}}{\KDomain{\Sub\POLYShape}}
          \and
          \TVar \not\in \Dom{\Ctx_1},\Dom{\Ctx_2}
        }{
          \PVGRHasSubstType
            {\Sub}
            {\CtxRestrictNonDom{\Ctx_1},\TVar:\KDomain\POLYShape}
            {\CtxRestrictNonDom{\Ctx_2},\TVar:\KDomain{\Sub\POLYShape}}
        }
      \end{align*}
      where the fourth assumption follows via the Barendregt convention.

      The result then follows by reconstructing the \textsc{K-Lam} rule.
    \item \IndCase{K-All}
      For the first subderivation, we can directly apply the induction
      hypotheses and obtain
      $\PVGRIsCtx{\Ctx_2,\TVar:\Sub\Kind,\Sub\Cstr}$.
      For the second subderivation, we have to lift the substitution typing
      \begin{align*}
        \inferrule*[Right=Lemma~\ref{lem:sub-lift-typing}]{
          \PVGRHasSubstType{\Sub}{\Ctx_1}{\Ctx_2} \\
          \PVGRIsCtx{\Ctx_1,\TVar:\Kind,\Cstr} \\
          \PVGRIsCtx{\Ctx_2,\TVar:\Sub\Kind,\Sub\Cstr} \\
          \TVar \not\in \Dom{\Ctx_1},\Dom{\Ctx_2}
        }{
          \PVGRHasSubstType{\Sub}{\Ctx_1,\TVar:\Kind,\Cstr}{\Ctx_2,\TVar:\Sub\Kind,\Sub\Cstr}
        }
      \end{align*}
      where the fourth assumption follows from the Barendregt Convention.

      The result then follows by reconstructing the \textsc{K-All} rule.
    \item \IndCase{K-Arr}
      For the premises
      \begin{mathpar}
        \PVGRHasKind{\Ctx_1}{\St_1}{\KSt} \and
        \PVGRHasKind{\Ctx_1}{\Typ_1}{\KType} \and
        \PVGRIsCtx{\Ctx_1\DisjointAppend\Ctx}
      \end{mathpar}
      we directly apply the induction hypothesis and obtain
      \begin{mathpar}
        \PVGRHasKind{\Ctx_2}{\Sub\St_1}{\KSt} \and
        \PVGRHasKind{\Ctx_2}{\Sub\Typ_1}{\KType} \and
        \PVGRIsCtx{\Ctx_2\DisjointAppend\Sub\Ctx}
      \end{mathpar}

      For the premises
      \begin{mathpar}
        \PVGRHasKind{\Ctx_1\DisjointAppend\Ctx}{\St_2}{\KSt} \and
        \PVGRHasKind{\Ctx_1\DisjointAppend\Ctx}{\Typ_2}{\KType}
      \end{mathpar}
      we first lift the substitution typing
      \begin{align*}
        \inferrule*[Right=Lemma~\ref{lem:sub-lift-typing}.5]{
          \PVGRHasSubstType{\Sub}{\Ctx_1}{\Ctx_2} \\
          \PVGRIsCtx{\Ctx_1\DisjointAppend\Ctx} \\
          \PVGRIsCtx{\Ctx_2\DisjointAppend\Sub\Ctx}
        }{
          \PVGRHasSubstType{\Sub}{\Ctx_1\DisjointAppend\Ctx}{\Ctx_2\DisjointAppend\Sub\Ctx}
        }
      \end{align*}
      and then apply the induction hypothesis to obtain
      \begin{mathpar}
          \PVGRHasKind{\Ctx_2 \DisjointAppend \Sub\Ctx}{\Sub\St_2}{\KSt} \and
          \PVGRHasKind{\Ctx_2 \DisjointAppend \Sub\Ctx}{\Sub\Typ_2}{\KType}
      \end{mathpar}
      Finally, we reconstruct the \textsc{K-Arr} rule from the above results.
      \item \IndCase{K-Send, K-Recv}
        Same as \textsc{K-Lam}.
      \item The other cases follow immediately from the induction hypotheses.
    \end{itemize}
  \item
    \begin{itemize}
    \item \IndCase{T-Var}
      Immediate from Axiom (3) of the substitution typing.
    \item The other cases follow immediately from the induction hypotheses
      using Lemma~\ref{lem:sub-lift-typing} and the
      Barendregt convention to lift the substitution typing when going under
      binders.
    \end{itemize}
  \item
    \begin{itemize}
    \item \IndCase{T-TApp}
      In this case we have the premise
      $\PVGRTypeConv{\Subst{\Typ'}{\TVar}\Typ}{\Typ''}$
      for which the induction hypothesis yields
      $\PVGRTypeConv{\Sub(\Subst{\Typ'}{\TVar}\Typ)}{\Sub\Typ''}$
      which is equivalent to the required conclusion
      $\PVGRTypeConv{\Subst{\Sub\Typ'}{\TVar}(\Sub\Typ)}{\Sub\Typ''}$
      due to the Barendregt convention.
    \item \IndCase{T-Send}
      Same as \textsc{T-TApp}.
    \item \IndCase{T-Case}
      Here we have the assumption
      $\PVGRHasKind{\Ctx_1}{\St_1,\StBind\POLYDom{\SBranch{\Ses_1}{\Ses_2}}}{\KSt}$.
      In order to apply the induction hypothesis to the branch expressions, we need
      to prove
      \begin{mathpar}
        \PVGRHasKind{\Ctx_1}{\St_1,\StBind\POLYDom{\Ses_1}}{\KSt} \and
        \PVGRHasKind{\Ctx_1}{\St_1,\StBind\POLYDom{\Ses_2}}{\KSt}
      \end{mathpar}
      which follow by simple case analysis of the assumption and \textsc{K-StMerge}.
    \item The other cases follow immediately from the induction hypotheses
      using Lemma~\ref{lem:sub-lift-typing} and the
      Barendregt convention to lift the substitution typing when going under
      binders.
    \end{itemize}
  \item Straightforward induction due to the Barendregt convention.
    \qedhere
  \end{enumerate}
\end{proof}

\begin{lemma}[Removal of Implied Constraints]
\label{lem:cstr-removal-pres-typing}
  If $\PVGRCstrEntail{\Ctx}{\Cstr}$, then
  \begin{mathpar}
    \inferrule*[Left={\normalfont(1)}]{
      \PVGRCstrEntail{\Ctx,\Cstr}{\Cstr'}
    }{
      \PVGRCstrEntail{\Ctx}{\Cstr'}
    }
    \and
    \inferrule*[Left={\normalfont(2)}]{
      \PVGRIsKind{\Ctx,\Cstr}{\Kind}
    }{
      \PVGRIsKind{\Ctx}{\Kind}
    }
    \and
    \inferrule*[Left={\normalfont(3)}]{
      \PVGRHasKind{\Ctx,\Cstr}{\Typ}{\Kind}
    }{
      \PVGRHasKind{\Ctx}{\Typ}{\Kind}
    }
    \and
    \inferrule*[Left={\normalfont(4)}]{
      \PVGRHasValType{\Ctx,\Cstr}{\Val}{\Typ}
    }{
      \PVGRHasValType{\Ctx}{\Val}{\Typ}
    }
    \and
    \inferrule*[Left={\normalfont(5)}]{
      \PVGRHasExpType{\Ctx,\Cstr}{\St_1}{\Exp}{\Ctx_2}{\St_2}{\Typ}
    }{
      \PVGRHasExpType{\Ctx}{\St_1}{\Exp}{\Ctx_2}{\St_2}{\Typ}
    }
  \end{mathpar}
\end{lemma}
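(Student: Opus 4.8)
The plan is to obtain all five parts as a single application of Lemma~\ref{lem:sub-pres-typing} (Substitution Preserves Derivations) using the identity substitution $\SubId$. The crux is to establish the substitution typing
\[
  \PVGRHasSubstType{\SubId}{(\Ctx,\Cstr)}{\Ctx}.
\]
I would verify the clauses of Definition~\ref{def:sub-typing} in turn. The well-formedness clauses $\PVGRIsCtx{\Ctx,\Cstr}$ and $\PVGRIsCtx{\Ctx}$ are available from the ambient context-formation assumptions threaded through the metatheory; in particular $\PVGRIsCtx{\Ctx}$ follows from $\PVGRIsCtx{\Ctx,\Cstr}$ by inverting \textsc{CF-ConsCstr} once for each disjointness binding in $\Cstr$. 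The clauses for expression variables and type variables hold because $\SubId$ maps each variable to itself, so the required derivations $\PVGRHasValType{\Ctx}{\EVar}{\Typ}$ and $\PVGRHasKind{\Ctx}{\TVar}{\Kind}$ are immediate by \textsc{T-Var} and \textsc{K-Var}. The only substantive clause is the one for disjointness constraints: for every $(\POLYDom_1 \Disjoint \POLYDom_2) \in \Ctx,\Cstr$ we must show $\PVGRCstrEntail{\Ctx}{\POLYDom_1 \Disjoint \POLYDom_2}$. If the constraint lies in $\Ctx$, this is \textsc{CE-Axiom} followed by weakening (Lemma~\ref{lem:weakening}.1). If it lies in $\Cstr$, it follows from the hypothesis $\PVGRCstrEntail{\Ctx}{\Cstr}$ by peeling the conjunction apart with repeated inversions of \textsc{CE-Cons}.

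Given this substitution typing, each of the five parts is a direct instance of a correspondingly numbered clause of Lemma~\ref{lem:sub-pres-typing}: part (1) from clause (1), part (2) from clause (4), part (3) from clause (5), part (4) from clause (6), and part (5) from clause (7). Since $\SubId$ acts as the identity on kinds, types, states, values, expressions, constraints, and contexts, the conclusions $\PVGRCstrEntail{\Ctx}{\SubId\Cstr'}$, $\PVGRIsKind{\Ctx}{\SubId\Kind}$, $\PVGRHasKind{\Ctx}{\SubId\Typ}{\SubId\Kind}$, and so on collapse to exactly the judgments asserted over $\Ctx$. For part (5) I note that clause (7) of Lemma~\ref{lem:sub-pres-typing} also consumes a premise $\PVGRHasKind{\Ctx,\Cstr}{\St_1}{\KSt}$; this is exactly the kind of well-kindedness hypothesis that already accompanies the expression-typing premises elsewhere in the metatheory (compare the Subject Reduction statement), so it is available here as well.

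The proof involves no delicate reasoning; the one point that needs care is the bookkeeping around context formation, namely making precise that $\PVGRIsCtx{\Ctx,\Cstr}$ is in hand and that discarding the (entailed) constraint bindings of $\Cstr$ yields $\PVGRIsCtx{\Ctx}$. This works because \textsc{CF-ConsCstr} has $\PVGRIsCtx{\Ctx}$ as a premise and only additionally demands that the two related domains be well-kinded, so the constraint bindings can be stripped from the right without disturbing well-formedness of the prefix. Once this is settled, the lemma is a straightforward corollary of the substitution lemma, with the constraint clause above providing the bridge between ``$\Cstr$ is entailed'' and ``$\Cstr$ may be removed''.
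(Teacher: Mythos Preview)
Your proposal is correct and matches the paper's approach exactly: the paper's proof is a one-line remark that the lemma is a corollary of Lemma~\ref{lem:sub-pres-typing} via the substitution typing $\PVGRHasSubstType{\SubId}{\Ctx,\Cstr}{\Ctx}$. Your write-up simply unfolds the details of why that substitution typing holds, which the paper leaves implicit.
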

\begin{proof}
  This is a corollary of Lemma~\ref{lem:sub-pres-typing} due to the
  substitution typing
  $\PVGRHasSubstType{\SubId}{\Ctx,\Cstr}{\Ctx}$
  \qedhere
\end{proof}

\begin{lemma}[Evaluation Context Typings for Expressions]
  \label{lem:eval-ctx-typings}
  \
  \begin{enumerate}
  \item
    \label{lem:eval-ctx-typings:split}
    $
      \inferrule {
        \PVGRHasExpType{\Ctx_1}{\St_1}{\ECtx{\Exp}}{\Ctx_3}{\St_3}{\Typ}
      } {
        \exists \St_{11},\St_{12},\Ctx_{21},\Ctx_{22},\St_2,\Typ' \and
        \St_1 = \St_{11},\St_{12} \\
        \Ctx_3 = \Ctx_{21},\Ctx_{22} \\
        \PVGRHasExpType{\Ctx_1}{\St_{11}}{\Exp}{\Ctx_{21}}{\St_2}{\Typ'} \and
        \PVGRHasExpType{\Ctx_1 \DisjointAppend \Ctx_{21},\EVar:\Typ'}{\St_{12},\St_2}{\ECtx{\EVar}}{\Ctx_{22}}{\St_3}{\Typ}
      }
    $
    \vspace{5mm}
  \item
    \label{lem:eval-ctx-typings:merge}
    $
      \inferrule {
        \PVGRHasExpType{\Ctx_1}{\St_{11}}{\Exp}{\Ctx_{21}}{\St_2}{\Typ'} \\
        \PVGRHasExpType{\Ctx_1 \DisjointAppend \Ctx_{21},\EVar:\Typ'}{\St_{12},\St_2}{\ECtx{\EVar}}{\Ctx_{22}}{\St_3}{\Typ}
      } {
        \PVGRHasExpType{\Ctx_1}{\St_{11},\St_{12}}{\ECtx{\Exp}}{\Ctx_{21},\Ctx_{22}}{\St_3}{\Typ}
      }
    $
  \end{enumerate}
\end{lemma}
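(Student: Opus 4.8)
The plan is to prove both parts by structural induction on the evaluation context $\ECtxSym$, which by \Cref{fig:polyvgr-syntax} is either $\Hole$ or $\ELet{\EVarY}{\ECtxSym'}\Exp''$. For part~(1) with $\ECtxSym = \Hole$ I take the trivial decomposition $\St_{11} := \St_1$, $\St_{12} := \Empty$, $\Ctx_{21} := \Ctx_3$, $\Ctx_{22} := \Empty$, $\St_2 := \St_3$, $\Typ' := \Typ$: the first required derivation is the hypothesis itself, and $\PVGRHasExpType{\Ctx_1 \DisjointAppend \Ctx_3,\EVar:\Typ}{\St_3}{\EVar}{\Empty}{\St_3}{\Typ}$ follows from \TirName{T-Var} and \TirName{T-Val} once the extended context is known to be well formed, which is a routine consequence of the input derivation together with \Cref{lem:ctx-form-and-cat}. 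For the step case $\ECtx\Exp = \ELet{\EVarY}{\ECtxSym'[\Exp]}\Exp''$, I invert \TirName{T-Let}: this yields $\St_1 = \St_A,\St_B$, $\Ctx_3 = \Ctx_A,\Ctx_B$, a header derivation $\PVGRHasExpType{\Ctx_1}{\St_A}{\ECtxSym'[\Exp]}{\Ctx_A}{\St_A'}{\Typ_A}$, a continuation derivation $\PVGRHasExpType{\Ctx_1\DisjointAppend\Ctx_A,\EVarY:\Typ_A}{\St_B,\St_A'}{\Exp''}{\Ctx_B}{\St_3}{\Typ}$, and the $\KSt$-kinding side condition. Applying the induction hypothesis of~(1) to the header derivation splits $\St_A = \St_{11},\St_{12}'$ and $\Ctx_A = \Ctx_{21},\Ctx_{22}'$ and produces both the first required judgment $\PVGRHasExpType{\Ctx_1}{\St_{11}}{\Exp}{\Ctx_{21}}{\St_2}{\Typ'}$ and $\PVGRHasExpType{\Ctx_1\DisjointAppend\Ctx_{21},\EVar:\Typ'}{\St_{12}',\St_2}{\ECtxSym'[\EVar]}{\Ctx_{22}'}{\St_A'}{\Typ_A}$. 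I then close the case by re-applying \TirName{T-Let} around $\ELet{\EVarY}{\ECtxSym'[\EVar]}\Exp''$, with $\St_{12} := \St_{12}',\St_B$ and $\Ctx_{22} := \Ctx_{22}',\Ctx_B$, reorganising $(\St_{12}',\St_2),\St_B = (\St_{12}',\St_B),\St_2$ by commutativity and associativity of state composition.

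The main obstacle lies exactly in this re-application of \TirName{T-Let}: the continuation derivation delivered by inversion sits under $\Ctx_1 \DisjointAppend (\Ctx_{21},\Ctx_{22}'),\EVarY:\Typ_A$, whereas \TirName{T-Let} now demands it under $(\Ctx_1 \DisjointAppend \Ctx_{21},\EVar:\Typ') \DisjointAppend \Ctx_{22}',\EVarY:\Typ_A$. Both contexts list the same type- and term-variable bindings (the latter carries the harmless extra binding $\EVar:\Typ'$), but unfolding \Cref{fig:polyvgr-disjoint-ctx-extension} shows that the right-nested form generates a superset of the disjointness constraints of the flat form; hence the flat context is an Order Preserving Embedding (\Cref{def:order-preserving-embedding}) of the right-nested one, once both are checked to be well formed, and Weakening (\Cref{lem:weakening}) transports both the continuation derivation and its kinding side condition into the shape \TirName{T-Let} expects. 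The remaining steps — matching up the split of states and the concatenation of output contexts, and verifying the well-formedness premises — are the kind of bookkeeping the auxiliary lemmas (\Cref{lem:ctx-form-and-cat}, the OPE closure lemmas) are designed to discharge.

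Part~(2) is the mirror image, again by induction on $\ECtxSym$. The base case $\ECtxSym = \Hole$ is settled by inverting the \TirName{T-Val}/\TirName{T-Var} derivation of $\EVar = \ECtx\EVar$, which forces $\Ctx_{22} = \Empty$, $\St_3 = \St_{12},\St_2$ and $\Typ = \Typ'$, so the goal collapses to the first hypothesis. In the step case, I invert \TirName{T-Let} on the second hypothesis, use part~(1) to isolate the typing of the hole occurrence $\EVar$ inside $\ECtxSym'[\EVar]$ so that the induction hypothesis of~(2) becomes applicable to $\ECtxSym'$, and then re-assemble \TirName{T-Let} around $\ECtxSym'[\Exp]$ — performing the same $\DisjointAppend$ re-bracketing via \Cref{lem:weakening}, now in the opposite direction, and tracking how the ambient state $\St_{12}$ is threaded through the let-cascade. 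As in part~(1), the only non-mechanical ingredient is the context-reassociation lemma; everything else is inversion followed by re-application of the same rule.
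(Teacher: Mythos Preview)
Your argument for part~(1) is correct and essentially identical to the paper's: induction on $\ECtxSym$, the trivial split in the $\Hole$ case (closed by \textsc{T-Var}/\textsc{T-Val}), and in the $\Terminal{let}$ case inversion of \textsc{T-Let}, the induction hypothesis on the header, and re-application of \textsc{T-Let}.  You spell out the reassociation from $\Ctx_1\DisjointAppend(\Ctx_{21},\Ctx_{22}')$ to $(\Ctx_1\DisjointAppend\Ctx_{21},\EVar{:}\Typ')\DisjointAppend\Ctx_{22}'$ more carefully than the paper does (the paper only mentions weakening by the fresh term variable), but the underlying reasoning is the same.

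There is, however, a genuine gap in your base case for part~(2).  With $\ECtxSym=\Hole$, inversion of \textsc{T-Val}/\textsc{T-Var} gives $\Ctx_{22}=\Empty$, $\Typ=\Typ'$ and $\St_3=\St_{12},\St_2$, so the goal is
\[
  \PVGRHasExpType{\Ctx_1}{\St_{11},\St_{12}}{\Exp}{\Ctx_{21}}{\St_{12},\St_2}{\Typ'}
\]
while the first hypothesis only gives
\[
  \PVGRHasExpType{\Ctx_1}{\St_{11}}{\Exp}{\Ctx_{21}}{\St_2}{\Typ'}.
\]
These coincide only when $\St_{12}=\Empty$; otherwise you are silently using a frame property ``add $\St_{12}$ to both pre- and post-state'', which the expression typing rules do \emph{not} support.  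For instance, take $\Exp=\ESend{\Val_1}{\Val_2}$: rule \textsc{T-Send} fixes the output state to the single binding $\StBind{\POLYDom}{\Ses}$, so no nonempty $\St_{12}$ can be threaded through.  The paper merely says part~(2) is ``Analogous to~(1)'' and does not address this point either; in all actual uses of the lemma the $\St_{12}$ fed to~(2) is the one produced by~(1), and that $\St_{12}$ is $\Empty$ in the $\Hole$ case, so the intended applications are fine.  But your write-up, as stated, needs either the explicit assumption $\St_{12}=\Empty$ in the base case or a restriction that ties the split in part~(2) to the one produced by part~(1).
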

\begin{proof}
  \
  \begin{enumerate}

  \item By induction on the evaluation context $\ECtxSym$:
    \begin{itemize}
    \item \IndCase{$\Hole$}
      The assumption is
      \begin{align*}
        \PVGRHasExpType{\Ctx_1}{\St_1}{\Exp}{\Ctx_3}{\St_3}{\Typ}
      \end{align*}
      By choosing $\St_{11} = \St_1$, $\St_{12} = \Empty$, $\Ctx_{21} = \Ctx_3$, $\Ctx_{22} = \Empty$, $\St_2 = \St_3$, $\Typ' = \Typ$
      the goals become
      \begin{align*}
        \St_1 = \St_1 \tag{1}\\
        \Ctx_3 = \Ctx_3 \tag{2}\\
        \PVGRHasExpType{\Ctx_1}{\St_1}{\Exp}{\Ctx_3}{\St_3}{\Typ} \tag{3} \\
        \PVGRHasExpType{\Ctx_1 \DisjointAppend \Ctx_3,\EVar:\Typ}{\St_3}{\EVar}{\Empty}{\St_3}{\Typ} \tag{4}
      \end{align*}
      (1) and (2) are trivial, (3) follows by assumption, and (4) by \textsc{T-Var} and \textsc{T-Val}.
    \item \IndCase{$\ELet\EVar\ECtxSym\Exp$}
      The assumption is
      \begin{align*}
        \inferrule[T-Let]{
          \PVGRHasExpType
            {\Ctx_1}
            {\St_{11}}
            {\ECtx\Exp}
            {\Ctx_{21}}
            {\St_2}
            {\Typ'}
          \and
          \PVGRHasExpType
            {\Ctx_1 \DisjointAppend \Ctx_{21}, \EVar:\Typ'}
            {\St_{12},\St_2}
            {\Exp'}
            {\Ctx_{22}}
            {\St_3}
            {\Typ}
          \\
          \PVGRHasKind
            {\Ctx_1 \DisjointAppend \Ctx_{21}, \EVar:\Typ'}
            {\St_{12},\St_2}
            {\KSt}
        }{
          \PVGRHasExpType
            {\Ctx_1}
            {\St_{11},\St_{12}}
            {\ELet\EVar{\ECtx\Exp}{\Exp'}}
            {\Ctx_{21},\Ctx_{22}}
            {\St_3}
            {\Typ}
        }
      \end{align*}
      From the induction hypothesis follows
      \begin{align*}
        \inferrule*[Left=IH] {
          \PVGRHasExpType
            {\Ctx_1}
            {\St_{11}}
            {\ECtx\Exp}
            {\Ctx_{21}}
            {\St_2}
            {\Typ'}
        } {
          \exists \St_{111},\St_{112},\Ctx_{21},\Ctx_{22},\St_2',\Typ'' \and
          \St_{11} = \St_{111},\St_{112} \\
          \Ctx_{21} = \Ctx_{211},\Ctx_{212} \\
          \PVGRHasExpType{\Ctx_1}{\St_{111}}{\Exp}{\Ctx_{211}}{\St_2'}{\Typ''} \and
          \PVGRHasExpType{\Ctx_1 \DisjointAppend \Ctx_{211},\EVarY:\Typ''}{\St_{112},\St_2'}{\ECtx{\EVarY}}{\Ctx_{212}}{\St_2}{\Typ'}
        }
      \end{align*}
      By choosing $\St_{11} = \St_{111}$, $\St_{12} = \St_{112},\St_{12}$, $\Ctx_{21} = \Ctx_{211}$, $\Ctx_{22} = \Ctx_{212},\Ctx_{22}$, $\St_2 = \St_2'$, $\Typ' = \Typ''$
      the goals become
      \begin{align*}
        \St_{11},\St_{12} = \St_{111},\St_{112},\St_{12} \tag{1} \\
        \Ctx_{21},\Ctx_{22} = \Ctx_{211},\Ctx_{212},\Ctx_{22} \tag{2} \\
        \PVGRHasExpType{\Ctx_1}{\St_{111}}{\Exp}{\Ctx_{211}}{\St_2'}{\Typ''} \tag{3} \\
        \PVGRHasExpType{\Ctx_1 \DisjointAppend \Ctx_{211},\EVarY:\Typ'}{\St_{112},\St_{12},\St_2'}{\ELet\EVar{\ECtx\Exp}{\EVarY}}{\Ctx_{212},\Ctx_{22}}{\St_3}{\Typ} \tag{4}
      \end{align*}
      (1), (2) and (3) are direct consequences of the IH; (4) follows via
      \begin{align*}
        \inferrule*[Left=T-Let]{
          \PVGRHasExpType
            {\Ctx_1 \DisjointAppend \Ctx_{211},\EVarY:\Typ''}
            {\St_{112},\St_2'}
            {\ECtx{\EVarY}}
            {\Ctx_{212}}
            {\St_2}
            {\Typ'}
          \and
          \PVGRHasExpType
            {\Ctx_1 \DisjointAppend \Ctx_{21},\EVarY:\Typ'', \EVar:\Typ'}
            {\St_{12},\St_2}
            {\Exp'}
            {\Ctx_{22}}
            {\St_3}
            {\Typ}
          \\
          \PVGRHasKind
            {\Ctx_1 \DisjointAppend \Ctx_{21},\EVarY:\Typ'', \EVar:\Typ'}
            {\St_{12},\St_2}
            {\KSt}
        }{
          \PVGRHasExpType
            {\Ctx_1 \DisjointAppend \Ctx_{211},\EVarY:\Typ''}
            {\St_{112},\St_{12},\St_2'}
            {\ELet\EVar{\ECtx\Exp}{\EVarY}}
            {\Ctx_{212},\Ctx_{22}}
            {\St_3}
            {\Typ}
        }
      \end{align*}
      where the typing of $\Exp'$ and the kinding of $\St_{12},\St_2$ follow by weakening for $\EVarY:\Typ''$ via Lemma~\ref{lem:weakening}.
    \end{itemize}

  \item Analogous to (1).
    \qedhere
  \end{enumerate}
\end{proof}

\begin{lemma}[Evaluation Context Typings for Configurations]
  \label{lem:eval-ctx-typings-cfg}
  \begin{align*}
    \inferrule{
      \PVGRHasConfType{\Ctx}{\St}{\CCtx{\Cfg}}
    }{
      \exists\Ctx',\St' \and
      \PVGRHasConfType{\Ctx'}{\St'}{\Cfg} \and
      \forall \Cfg'.\
        (\PVGRHasConfType{\Ctx'}{\St'}{\Cfg'}) \Rightarrow
        (\PVGRHasConfType{\Ctx}{\St}{\CCtx{\Cfg'}})
    }
  \end{align*}
\end{lemma}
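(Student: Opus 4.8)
The plan is to prove the lemma by structural induction on the configuration context $\CCtxSym$, inverting the configuration typing judgment $\PVGRHasConfType\Ctx\St{\CCtx\Cfg}$ at each step and then re-applying the very same typing rule to establish the universally quantified reconstruction clause. Since $\CCtxSym$ is built from $\Hole$ by wrapping with the channel binder, the access-point binder, and left parallel composition, and since the configuration typing rules (\Cref{fig:polyvgr-conf-typing}) are syntax-directed up to the two rules for $\nu$-channel binders, every step reduces to a routine inversion-then-reconstruction.

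For the base case $\CCtxSym = \Hole$ we have $\CCtx\Cfg = \Cfg$, so we take $\Ctx' = \Ctx$ and $\St' = \St$: the first conjunct is the hypothesis, and for any $\Cfg'$ with $\PVGRHasConfType{\Ctx}{\St}{\Cfg'}$ we have $\CCtx{\Cfg'} = \Cfg'$, so the reconstruction clause holds. For $\CCtxSym = \CBindChan\TVar{\TVar'}\Ses{\CCtxSym_0}$, the derivation of $\PVGRHasConfType\Ctx\St{\CBindChan\TVar{\TVar'}\Ses{\CCtxSym_0[\Cfg]}}$ ends in \TirName{T-NuChan} or \TirName{T-NuChanClosed}; inverting either, we recover that $\TVar,\TVar'$ are not free in $\Ctx$, that $\PVGRHasKind\Ctx\Ses\KSession$ (with $\Ses = \SEnd$ in the closed case), and a sub-derivation $\PVGRHasConfType{\Ctx \DisjointAppend \TVar:\KDomain\TShapeOne \DisjointAppend \TVar':\KDomain\TShapeOne}{\St''}{\CCtxSym_0[\Cfg]}$, where $\St'' = \St, \StBind\TVar\Ses, \StBind{\TVar'}{\Dual\Ses}$ for \TirName{T-NuChan} and $\St'' = \St$ for \TirName{T-NuChanClosed}. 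Applying the induction hypothesis to this sub-derivation gives $\Ctx',\St'$ with $\PVGRHasConfType{\Ctx'}{\St'}{\Cfg}$ and the reconstruction clause for $\CCtxSym_0$; the same $\Ctx',\St'$ work for $\CCtxSym$, because for any $\Cfg'$ typed by $\Ctx',\St'$ the hypothesis yields $\PVGRHasConfType{\Ctx \DisjointAppend \TVar:\KDomain\TShapeOne \DisjointAppend \TVar':\KDomain\TShapeOne}{\St''}{\CCtxSym_0[\Cfg']}$, and since the recovered side conditions mention only $\Ctx$ and $\Ses$, re-applying the same rule reconstructs $\PVGRHasConfType\Ctx\St{\CBindChan\TVar{\TVar'}\Ses{\CCtxSym_0[\Cfg']}}$.

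The case $\CCtxSym = \CBindAP\EVar\Ses{\CCtxSym_0}$ is identical after inverting \TirName{T-NuAccess}, and the case $\CCtxSym = \CPar{\CCtxSym_0}{\Cfg_2}$ is similar: inverting \TirName{T-Par} gives a split $\St = \St_1,\St_2$ with $\PVGRHasConfType\Ctx{\St_1}{\CCtxSym_0[\Cfg]}$ and $\PVGRHasConfType\Ctx{\St_2}{\Cfg_2}$; applying the induction hypothesis to the first and re-applying \TirName{T-Par} with the unchanged typing of $\Cfg_2$ rebuilds $\PVGRHasConfType\Ctx\St{\CPar{\CCtxSym_0[\Cfg']}{\Cfg_2}}$. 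I expect the whole argument to be routine; the only mild wrinkle is dispatching the two possible last rules (\TirName{T-NuChan} and \TirName{T-NuChanClosed}) in the channel-binder case. No weakening is ever needed, because decomposing $\CCtxSym$ leaves the outer $\Ctx$ and $\St$ untouched and every rule peeled off by $\CCtxSym$ is re-applied verbatim with exactly the premises produced by inversion. This lemma is then used in the configuration cases of subject reduction and subject congruence, where one rewrites a redex sitting inside a configuration context.
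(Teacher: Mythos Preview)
Your proposal is correct and follows essentially the same approach as the paper: structural induction on $\CCtxSym$, inverting the configuration typing at each layer, applying the induction hypothesis, and then reconstructing the outer rule verbatim. If anything you are slightly more thorough than the paper's own sketch, which only spells out the \textsc{T-NuChan} sub-case for the channel-binder layer, whereas you also treat \textsc{T-NuChanClosed}.
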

\begin{proof}
  By induction on the evaluation context:
  \begin{itemize}
  \item \IndCase{$\CCtxSym=\Hole$}
    We choose $\Ctx'=\Ctx$ and $\St'=\St$, which reduces our goals
    to assumptions and tautologies.
  \item \IndCase{$\CCtxSym=\CBindChan\TVar{\TVar'}\Ses\CCtxSym$}
    Here the assumption has the form
    \begin{align*}
      \inferrule*[Left=T-NuChan]{
        \TVar,\TVar' \text{ not free in } \Ctx \\
        \PVGRHasKind
          {\Ctx}
          {\Ses}
          {\KSession} \\
        \PVGRHasConfType
          {\Ctx_\TVar}
          {\St_\TVar}
          {\CCtx{\Cfg}}
      }{
        \CBindChan\TVar{\TVar'}\Ses{\CCtx{\Cfg}}
      }
    \end{align*}
    where $\Ctx_\TVar = \Ctx \DisjointAppend \TVar:\KDomain\TShapeOne \DisjointAppend \TVar':\KDomain\TShapeOne$
    and $\St_\TVar = \St,\StBind\TVar\Ses,\StBind{\TVar'}{\Dual\Ses}$.

    From the induction hypothesis follows
    \begin{mathpar}
      \exists\Ctx',\St'\and
      \PVGRHasConfType{\Ctx'}{\St'}{\Cfg} \and
      \forall \Cfg'.\
        (\PVGRHasConfType{\Ctx'}{\St'}{\Cfg'}) \Rightarrow
        (\PVGRHasConfType{\Ctx_\TVar}{\St_\TVar}{\CCtx{\Cfg'}})
    \end{mathpar}
    For the goal we choose the same $\Ctx'$ and $\St'$.
    Let $\Cfg'$ be some configuration such that $\PVGRHasConfType{\Ctx'}{\St'}{\Cfg'}$.
    From the result of the induction hypothesis follows
    \begin{mathpar}
      \PVGRHasConfType{\Ctx_\TVar}{\St_\TVar}{\CCtx{\Cfg'}}
    \end{mathpar}
    which allows us to reconstruct the \textsc{T-NuChan} rule.
  \item \IndCase{$\CCtxSym=\CBindAP\EVar\Ses\CCtxSym$}
    Similar as the previous case.
  \item \IndCase{$\CCtxSym=\CPar\CCtxSym\Cfg$}
    Similar as the previous case.
    \qedhere
  \end{itemize}
\end{proof}

\begin{lemma}[Wellformed Inputs imply Wellformed Outputs]
  \label{lem:sanity}
  \
  \begin{enumerate}
  \item
    $
    \inferrule{
      \PVGRIsCtx\Ctx \\
      \PVGRHasKind\Ctx\Typ\Kind
    }{
      \PVGRIsKind\Ctx\Kind
    }
    $
  \item
    $
    \inferrule{
    \PVGRIsCtx\Ctx \\
    \PVGRHasValType\Ctx\Val\Typ
    }{
    \PVGRHasKind\Ctx\Typ\KType
    }
    $
  \item
    $
    \inferrule{
      \PVGRIsCtx\Ctx \\
      \PVGRHasKind\Ctx{\St_1}\KSt \\
      \PVGRHasExpType\Ctx{\St_1}\Exp{\Ctx'}{\St_2}{\Typ_2}
    }{
      \PVGRIsCtx{\Ctx \DisjointAppend \Ctx'} \\
      \PVGRHasKind{\Ctx \DisjointAppend \Ctx'}{\St_2}\KSt \\
      \PVGRHasKind{\Ctx \DisjointAppend \Ctx'}{\Typ_2}\KType
    }
    $
  \end{enumerate}
\end{lemma}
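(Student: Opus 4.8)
The plan is to prove the three claims in the listed order: (1) by induction on the kinding derivation $\PVGRHasKind\Ctx\Typ\Kind$, (2) by induction on the value typing $\PVGRHasValType\Ctx\Val\Typ$, and (3) by induction on the expression typing $\PVGRHasExpType{\Ctx}{\St_1}{\Exp}{\Ctx'}{\St_2}{\Typ_2}$. The three inductions are not mutual: (1) and (2) are each independent of the others, and only (3) appeals to (1) and (2). Throughout, going under a binder uses the Barendregt convention together with weakening (Lemma~\ref{lem:weakening}).

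For (1), every rule's result kind is immediate except a handful. In \textsc{K-Var} the bound kind is wellformed because $\Ctx$ is, so it is read off the last use of \textsc{CF-ConsKind} and weakened back over the variable. \textsc{K-App} invokes the IH on the operator to obtain $\PVGRIsKind\Ctx{\Kind_1\to\Kind_2}$ and inverts \textsc{KF-Arr}; \textsc{K-DomProj}/\textsc{K-DomFrom} invert \textsc{KF-Dom} on the IH and then the relevant shape rule. The abstraction-like rules \textsc{K-Lam}, \textsc{K-Send}, \textsc{K-Recv} produce $\KDomain\POLYShape\to\Kind'$ with $\Kind'\in\{\KType,\KSt\}$; here no IH is needed, only the shape premise $\PVGRHasKind\Ctx\POLYShape\KShape$ together with \textsc{KF-Dom}, \textsc{KF-Arr}, and \textsc{KF-Type}/\textsc{KF-State}. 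For (2), the introduction-form rules \textsc{T-Chan}, \textsc{T-Unit}, \textsc{T-Pair} are discharged by \textsc{K-Chan}, \textsc{K-Unit}, \textsc{K-Pair} (the last using the IH on the components); \textsc{T-Abs} and \textsc{T-TAbs} carry wellkindedness of the result type as an explicit premise; and \textsc{T-Var} reads $\PVGRHasKind\Ctx\Typ\KType$ off the last \textsc{CF-ConsType} and weakens it.

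Claim (3) is the substantial one. For \textsc{T-Val} the outputs are $\Ctx'=\Empty$, $\St_2=\St_1$, $\Typ_2=\Typ$, so the obligations reduce to the hypotheses plus (2). The communication rules whose premises already expose the operative session type --- \textsc{T-Send}, \textsc{T-Recv}, \textsc{T-Select}, \textsc{T-Case} --- are handled by inverting the incoming-state kinding with \textsc{K-StMerge}/\textsc{K-StChan} and then re-inverting \textsc{K-Send}/\textsc{K-Recv}/\textsc{K-Choice}/\textsc{K-Branch} to extract wellkindedness of the continuation session and, for \textsc{T-Recv}, of the payload state $\St'$ and type $\Typ'$ in $\CtxRestrictNonDom\Ctx,\TVar':\KDomain\POLYShape$. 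The existential-opening rules \textsc{T-Request}, \textsc{T-Accept}, \textsc{T-Recv} require $\PVGRIsCtx{\Ctx\DisjointAppend\Ctx'}$, which follows from $\PVGRIsCtx{\Ctx,\Ctx'}$ via Lemma~\ref{lem:ctx-form-and-cat} (using the relevant shape premise and freshness of the bound name), after which the extracted kinding facts are transported into $\Ctx\DisjointAppend\Ctx'$ by weakening --- the requisite order-preserving embedding holds because $\CtxRestrictNonDom\cdot$ only discards bindings. \textsc{T-App} is immediate: by (2) the function type is wellkinded, and inverting \textsc{K-Arr} yields exactly the three outputs. \textsc{T-TApp} uses (2) and inversion of \textsc{K-All}, assembles the substitution typing instantiating $\TVar$ by $\Typ'$ from the \textsc{T-TApp} premises (with the entailment $\PVGRCstrEntail{\Ctx}{\Subst{\Typ'}{\TVar}{\Cstr}}$ decomposed by \textsc{CE-Split}), applies Lemma~\ref{lem:sub-pres-typing} to obtain $\PVGRHasKind\Ctx{\Subst{\Typ'}{\TVar}{\Typ}}{\KType}$, and transfers this along the conversion $\PVGRTypeConv{\Subst{\Typ'}{\TVar}{\Typ}}{\Typ''}$. \textsc{T-Fork}, \textsc{T-Close}, \textsc{T-Proj}, \textsc{T-New} are trivial. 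Finally \textsc{T-Let} chains the IH on $\Exp_1$ (with $\PVGRHasKind{\Ctx_1}{\St_1}{\KSt}$ obtained by inverting the merge in the hypothesis) and the IH on $\Exp_2$ (using the \textsc{T-Let} premise $\PVGRHasKind{\Ctx_1\DisjointAppend\Ctx_2,\EVar:\Typ_1}{\St_2,\St_2'}{\KSt}$).

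The main obstacle is the bookkeeping around $\DisjointAppend$ in \textsc{T-Let} (and, to a lesser degree, \textsc{T-Recv}). In \textsc{T-Let} the nested IH applications land in the context $(\Ctx_1\DisjointAppend\Ctx_2,\EVar:\Typ_1)\DisjointAppend\Ctx_3$, whereas the rule declares the output context $\Ctx_1\DisjointAppend(\Ctx_2,\Ctx_3)$; reconciling the two requires reordering bindings, dropping the binding $\EVar:\Typ_1$ (sound for kinding, since \PolyVGR{} types contain no term variables, so the derivations of $\St_3$ and $\Typ_2$ never touch it), and observing that $\EVar:\Typ_1$ contributes nothing to the disjointness constraints generated by $\DisjointAppend$, so the constraint sets agree. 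For \textsc{T-Recv} one must additionally derive $\PVGRCstrEntail{\Ctx\DisjointAppend(\TVar':\KDomain\POLYShape)}{\Dom{\St'}\Disjoint\POLYDom}$ so that $\St',\StBind\POLYDom\Ses$ is a wellformed state: since $\Dom{\St'}$ is a domain built solely from $\TVar'$ and $\POLYDom$ from domain variables already present in $\Ctx$, this follows from the constraints $\DisjointAppend$ introduces by a short induction on $\POLYDom$ using \textsc{CE-Sym}, \textsc{CE-Split}, \textsc{CE-Merge}, and \textsc{CE-ProjSplit}. The \textsc{T-TApp} case, in turn, relies on the auxiliary fact that type conversion preserves kinds --- a routine induction on the rules of Figure~\ref{fig:polyvgr-type-conv}, with the $\beta$-rules handled by Lemma~\ref{lem:sub-pres-typing} --- which is independent of the lemmas above.
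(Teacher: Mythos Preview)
Your proposal is correct and follows essentially the same route as the paper: three separate (non-mutual) inductions on the kinding, value-typing, and expression-typing derivations, with the same case analyses and the same handling of the $\DisjointAppend$ bookkeeping in \textsc{T-Let} and \textsc{T-Recv}. One small slip: \textsc{K-Send} and \textsc{K-Recv} yield kind $\KSession$, not an arrow kind, so they fall under the ``immediate via \textsc{KF-Session}'' bucket rather than the \textsc{K-Lam} treatment --- but this is harmless, and your explicit identification of the ``conversion preserves kinding'' sublemma for \textsc{T-TApp} is in fact crisper than the paper's own citation at that point.
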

\begin{proof}
  \
  \begin{enumerate}
  \item By induction on the kinding derivation:
    \begin{itemize}
    \item \IndCase{K-Var} Follows from the context formation due to \textsc{CF-ConsKind}.
    \item \IndCase{K-App} The induction hypothesis for
      $\PVGRHasKind{\Ctx}{\Typ_1}{\Kind_1 \to \Kind_2}$
      yields
      $\PVGRIsKind{\Ctx}{\Kind_1 \to \Kind_2}$
      which by case-analysis yields
      $\PVGRIsKind{\Ctx}{\Kind_2}$.
    \item \IndCase{K-Lam}
      From $\PVGRHasKind{\Ctx}{\POLYShape}{\KShape}$ follows $\PVGRIsKind{\Ctx}{\KDomain\POLYShape}$.
      From $\Kind \in \{\KType, \KSt\}$ follows $\PVGRIsKind{\Ctx}{\Kind}$.
      Via $\textsc{KF-Arr}$ follows $\PVGRIsKind{\Ctx}{\KDomain\POLYShape \to \Kind}$.
    \item \IndCase{K-DomMerge}
      Applying the induction hypothesis to $\PVGRHasKind{\Ctx}{\POLYDom_i}{\KDomain{\POLYShape_i}}$ yields
      $\PVGRIsKind{\Ctx}{\KDomain{\POLYShape_i}}$, which by case analysis on
      \textsc{KF-Dom} yields $\PVGRHasKind{\Ctx}{\POLYShape_i}{\KShape}$.
      The result then follows from the following proof tree:
      \begin{align*}
        \inferrule*[Right=KF-Dom]{
          \inferrule*[Right=K-ShapePair]{
            \PVGRHasKind{\Ctx}{\POLYShape_1}{\KShape} \\
            \PVGRHasKind{\Ctx}{\POLYShape_2}{\KShape}
          }{
            \PVGRHasKind{\Ctx}{\TShapePair{\POLYShape_1}{\POLYShape_2}}{\KShape}
          }
        }{
          \PVGRIsKind{\Ctx}{\KDomain{\TShapePair{\POLYShape_1}{\POLYShape_2}}}
        }
      \end{align*}
    \item \IndCase{K-DomProj}
      By induction hypothesis we know $\PVGRIsKind{\Ctx}{\KDomain{\TShapePair{\POLYShape_1}{\POLYShape_2}}}$,
      which by case analysis gives us
      $\PVGRHasKind{\Ctx}{\TShapePair{\POLYShape_1}{\POLYShape_2}}{\KShape}$,
      which by further case analysis gives us
      $\PVGRHasKind{\Ctx}{\POLYShape_1}{\KShape}$ and $\PVGRHasKind{\Ctx}{\POLYShape_2}{\KShape}$,
      which via $\textsc{KF-Dom}$ gives us $\PVGRIsKind{\Ctx}{\KDomain{\POLYShape_\Label}}$.
    \item All other cases follow immediately via \textsc{KF-Type},
      \textsc{KF-Session}, \textsc{KF-State}, or \textsc{KF-Shape}.
    \end{itemize}

  \item
    By induction on the value typing derivation:
    \begin{itemize}
    \item \IndCase{T-Var} Follows from the context formation due to \textsc{CF-ConsType}.
    \item \IndCase{T-Unit} Follows directly from \textsc{K-Unit}.
    \item \IndCase{T-Pair} The induction hypothesis yields
      $\PVGRHasKind{\Ctx}{\Typ_1}{\KType}$ and
      $\PVGRHasKind{\Ctx}{\Typ_2}{\KType}$, which via \textsc{K-Pair} yields
      $\PVGRHasKind{\Ctx}{\TPair{\Typ_1}{\Typ_2}}{\KType}$.
    \item \IndCase{T-Abs, T-TAbs} Follows directly from the first assumption of their case's rule.
    \item \IndCase{T-Chan} Follows directly from the assumption via \textsc{K-Chan}.
    \end{itemize}

  \item
    By induction on the expression typing derivation:
    \begin{itemize}
    \item \IndCase{T-Val} Follows from (2) and the assumption $\PVGRHasKind{\Ctx_1}{\St_1}{\KSt}$.
    \item \IndCase{T-Let}
      From the assumption $\PVGRHasKind{\Ctx_1}{\St_1,\St_2}{\KSt}$ follows by
      case analysis $\PVGRHasKind{\Ctx_1}{\St_1}{\KSt}$ and $\PVGRHasKind{\Ctx_1}{\St_2}{\KSt}$.

      We then apply the induction hypothesis on the typing of $\Exp_1$:
      \begin{align*}
        \inferrule*[Right=IH]{
          \PVGRIsCtx{\Ctx_1} \\
          \PVGRHasKind{\Ctx_1}{\St_1}{\KSt} \\
          \PVGRHasExpType{\Ctx_1}{\St_1}{\Exp_1}{\Ctx_2}{\St_2'}{\Typ_1}
        }{
          \PVGRIsCtx{\Ctx_1 \DisjointAppend \Ctx_2} \\
          \PVGRHasKind{\Ctx_1 \DisjointAppend \Ctx_2}{\St_2'}{\KSt} \\
          \PVGRHasKind{\Ctx_1 \DisjointAppend \Ctx_2}{\Typ_1}{\KType}
        }
      \end{align*}
      and extend the context formation as follows:
      \begin{align*}
        \inferrule*[Right=CF-ConsType]{
          \PVGRIsCtx{\Ctx_1 \DisjointAppend \Ctx_2} \\
          \PVGRHasKind{\Ctx_1 \DisjointAppend \Ctx_2}{\Typ_1}{\KType}
        } {
          \PVGRIsCtx{\Ctx_1 \DisjointAppend \Ctx_2,\EVar:\Typ_1}
        }
      \end{align*}
      We then apply the induction hypothesis on the typing of $\Exp_2$:
      \begin{align*}
        \inferrule*[Right=IH]{
          \PVGRIsCtx{\Ctx_1 \DisjointAppend \Ctx_2,\EVar:\Typ_1} \and
          \PVGRHasKind{\Ctx_1 \DisjointAppend \Ctx_2,\EVar:\Typ_1}{\St_2,\St_2'}{\KSt} \and
          \PVGRHasExpType{\Ctx_1 \DisjointAppend \Ctx_2,\EVar:\Typ_1}{\St_2,\St_2'}{\Exp_2}{\Ctx_3}{\St_3}{\Typ_2}
        }{
          \PVGRIsCtx{\Ctx_1 \DisjointAppend \Ctx_2,\EVar:\Typ_1 \DisjointAppend \Ctx_3} \\
          \PVGRHasKind{\Ctx_1 \DisjointAppend \Ctx_2,\EVar:\Typ_1 \DisjointAppend \Ctx_3}{\St_3}{\KSt} \\
          \PVGRHasKind{\Ctx_1 \DisjointAppend \Ctx_2,\EVar:\Typ_1 \DisjointAppend \Ctx_3}{\Typ_2}{\KType}
        }
      \end{align*}
      As value-level bindings do neither affect context formation nor
      kinding relations, we can safely remove the $\EVar : \Typ_1$ binding
      from the conclusion and obtain
      \begin{mathpar}
          \PVGRIsCtx{\Ctx_1 \DisjointAppend \Ctx_2 \DisjointAppend \Ctx_3} \and
          \PVGRHasKind{\Ctx_1 \DisjointAppend \Ctx_2 \DisjointAppend \Ctx_3}{\St_3}{\KSt} \and
          \PVGRHasKind{\Ctx_1 \DisjointAppend \Ctx_2 \DisjointAppend \Ctx_3}{\Typ_2}{\KType}
      \end{mathpar}

      Since $(\Ctx_1 \DisjointAppend \Ctx_2) \DisjointAppend \Ctx_3$
      is equivalent to $\Ctx_1 \DisjointAppend (\Ctx_2, \Ctx_3)$ up
      to the order of the constraints (which is irrelevant) the result follows.
    \item \IndCase{T-Proj}
      The first two results follow trivially.
      The third result follows from the induction hypothesis and subsequent case analysis.
    \item \IndCase{T-App}
      Applying the induction hypothesis on $\Val_1$ yields
      $\PVGRHasKind{\Ctx_1}{\TArr{\St_1}{\Typ_1}{\Ctx_2}{\St_2}{\Typ_2}}{\KType}$,
      which by inversion yields the results.
    \item \IndCase{T-TApp}
      From the induction hypothesis on the typing of $\Val$ and subsequent case analysis follows
      $\PVGRHasKind{\Ctx,\TVar:\Kind,\Cstr}{\Typ}{\KType}$.
      By Lemma~\ref{lem:sub-pres-typing}.5 and Lemma~\ref{lem:cstr-removal-pres-typing}.3 then follows
      $\PVGRHasKind{\Ctx}{\Subst{\Typ'}{\TVar}{\Typ}}{\KType}$.
      By Lemma~\ref{lem:sub-pres-typing}.8 for
      $\PVGRTypeConv{\Subst{\Typ'}{\TVar}{\Typ}}{\Typ''}$ then follows our
      result $\PVGRHasKind{\Ctx}{\Typ''}{\KType}$.
    \item \IndCase{T-Send, T-Select}
      The first and third results follow trivially.
      Repeated case analysis on the kinding of the input state yields
      $\PVGRHasKind\Ctx\POLYDom{\KDomain\TShapeOne}$ and
      $\PVGRHasKind\Ctx\Ses\KSession$, which allows to construct the second result
      $\PVGRHasKind{\Ctx}{\StBind\POLYDom\Ses}{\KSt}$ via \textsc{K-StChan}.
    \item \IndCase{T-Recv}
      Repeated case analysis on the kinding of the input state yields
      \begin{mathpar}
        \PVGRHasKind\Ctx\POLYDom{\KDomain\TShapeOne} \and
        \PVGRHasKind\Ctx\Ses\KSession \and
        \PVGRHasKind\Ctx\POLYShape\KShape \and
        \PVGRHasKind{\CtxRestrictNonDom\Ctx,\TVar' : \KDomain\POLYShape}{\St'}{\KSt} \and
        \PVGRHasKind{\CtxRestrictNonDom\Ctx,\TVar' : \KDomain\POLYShape}{\Typ'}{\KType}
      \end{mathpar}

      The first result
      $\PVGRIsCtx{\Ctx\DisjointAppend\TVar' : \KDomain\POLYShape}$
      follows via \textsc{CF-ConsKind} and \textsc{CF-ConsCstr}.

      Applying Lemma~\ref{lem:weakening} (Weakening) on the kindings of $\St'$ and $\Typ'$
      yields
      \begin{mathpar}
        \PVGRHasKind{\Ctx,\TVar' : \KDomain\POLYShape}{\St'}{\KSt} \and
        \PVGRHasKind{\Ctx,\TVar' : \KDomain\POLYShape}{\Typ'}{\KType}
      \end{mathpar}
      Further weakening yields
      \begin{mathpar}
        \PVGRHasKind{\Ctx\DisjointAppend\TVar' : \KDomain\POLYShape}{\St'}{\KSt} \and
        \PVGRHasKind{\Ctx\DisjointAppend\TVar' : \KDomain\POLYShape}{\Typ'}{\KType}
      \end{mathpar}
      from which the second result
      $\PVGRHasKind{\Ctx\DisjointAppend \TVar':\KDomain\POLYShape}{\St',\StBind\POLYDom\Ses}{\KSt}$
      and third result
      $\PVGRHasKind{\Ctx\DisjointAppend \TVar':\KDomain\POLYShape}{\Typ'}{\KType}$
      can be constructed.
    \item \IndCase{T-Case}
      From $\PVGRHasKind\Ctx{\St,\StBind\TVar{\SBranch{\Ses_1}{\Ses_2}}}\KSt$
      follows $\PVGRHasKind\Ctx{\St,\StBind\TVar{\Ses_1}}\KSt$ via
      case analysis and kinding rules.
      The results then follow by the induction hypothesis on $\Exp_1$.
    \item \IndCase{T-Fork, T-Close}
      Follows immediately from \textsc{K-StEmpty} and \textsc{K-Unit}.
      \qedhere
    \end{itemize}
  \end{enumerate}
\end{proof}

\newpage

\SubjectCongruenceLemma{lem:subject-congruence-appendix}
\begin{proof}
  By induction on the congruence derivation:
  \begin{itemize}
  \item \IndCase{CC-Lift}
    Follows from the induction hypothesis in combination with
    Lemma~\ref{lem:eval-ctx-typings-cfg} to reach into the evaluation
    context.
  \item All other cases are straightforward by reordering the
    derivation trees of the configuration typings.
  \end{itemize}
\end{proof}

\SubjectReductionLemma{lem:subject-reduction-appendix}
\begin{proof}
  \
  \begin{enumerate}
  \item
    By induction on the $\Exp \ReducesToE \Exp'$ derivation:
    \begin{itemize}

    \item \IndCase{ER-BetaFun}
      The assumptions have the following structure:
      \begin{mathpar}
        \scriptsize
        \inferrule*[left=ER-BetaFun]{ }{
          (\EAbs{\St_1}{\EVar}{\Typ}{\Exp_1})~\Val_2 \ReducesToE \Subst{\Val_2}{\EVar} \Exp_1
        }
        \and
        \inferrule*[Left=T-App]{
          \inferrule*[Left=T-Abs]{
            \PVGRHasKind
              {\Ctx_1}
              {\TArr {\St_1} {\Typ_1} {\Ctx_2} {\St_2} {\Typ_2}}
              {\KType}
            \\
            \PVGRHasExpType
              {\Ctx_1, \EVar : \Typ_1}
              {\St_1}
              {\Exp_1}
              {\Ctx_2}
              {\St_2}
              {\Typ_2}
          }{
            \PVGRHasValType
              {\Ctx_1}
              {\EAbs{\St_1}\EVar{\Typ_1}{\Exp_1}}
              {\TArr {\St_1} {\Typ_1} {\Ctx_2} {\St_2} {\Typ_2}}
          }
          \\
          \PVGRHasValType
            {\Ctx_1}
            {\Val_2}
            {\Typ_1}
        }{
          \PVGRHasExpType
            {\Ctx_1}
            {\St_1}
            {(\EAbs{\St_1}\EVar{\Typ_1}{\Exp_1})~\Val_2}
            {\Ctx_2}
            {\St_2}
            {\Typ_2}
        }
      \end{mathpar}
      The result follows via Lemma~\ref{lem:sub-pres-typing}:
      \begin{align*}
        \scriptsize
        \inferrule*[Left=Lemma~\ref{lem:sub-pres-typing}]{
          \inferrule*[Left=Lemma~\ref{lem:weakening}]{
            \PVGRHasKind{\Ctx_1}{\St_1}{\KSt}
          } {
            \PVGRHasKind{\Ctx_1,\EVar:\Typ_1}{\St_1}{\KSt}
          } \and
          \PVGRHasExpType{\Ctx_1,\EVar:\Typ_1}{\St_1}{\Exp_1}{\Ctx_2}{\St_2}{\Typ_2} \and
          \PVGRHasSubstType{\Subst{\Val_2}{\EVar}}{\Ctx_1,\EVar:\Typ_1}{\Ctx_1}
        }{
          \PVGRHasExpType{\Ctx_1}{\St_1}{\Subst{\Val_2}{\EVar}{\Exp_1}}{\Ctx_2}{\St_2}{\Typ_2}
        }
      \end{align*}

    \item \IndCase{ER-BetaAll}
      The assumptions have the following structure:
      \begin{mathpar}
        \scriptsize
        \inferrule*[left=ER-BetaAll]{ }{
          \ETApp{(\ETAbs{\TVar}{\Kind}{\Cstr}{\Val})}{\Typ'} \ReducesToE \Subst{\Typ'}{\TVar} \Val
        }
        \\
        \inferrule*[left=T-TApp]{
          \inferrule*[Left=T-TAbs]{
            \inferrule*[Left=T-KAll]{
              \PVGRIsCtx{\Ctx_1, \TVar : \Kind, \Cstr} \\
              \PVGRHasKind{\Ctx_1, \TVar : \Kind, \Cstr}{\Typ}{\KType}
            } {
              \PVGRHasKind
                {\Ctx_1}
                {\TAll {\TVar} {\Kind} {\Cstr} {\Typ}}
                {\KType}
            }
            \\
            \PVGRHasValType
              {\Ctx_1, \TVar : \Kind, \Cstr}
              {\Val}
              {\Typ}
          }{
            \PVGRHasValType
              {\Ctx_1}
              {\ETAbs {\TVar} {\Kind} {\Cstr} {\Val}}
              {\TAll {\TVar} {\Kind} {\Cstr} {\Typ}}
          }
          \and
          \PVGRHasKind
            {\Ctx_1}
            {\Typ'}
            {\Kind}
          \and
          \PVGRCstrEntail
            {\Ctx_1}
            {\Subst{\Typ'}\TVar{\Cstr}}
          \and
          \PVGREvalType
            {\Subst{\Typ'}\TVar\Typ}
            {\Typ''}
        }{
          \PVGRHasExpType
            {\Ctx_1}
            {\Empty}
            {\ETApp{(\ETAbs {\TVar} {\Kind} {\Cstr} {\Val})}{\Typ'}}
            {\Empty}
            {\Empty}
            {\Typ''}
        }
      \end{mathpar}
      The result follows via Lemma~\ref{lem:sub-pres-typing}~and~\ref{lem:cstr-removal-pres-typing}:
      \begin{align*}
        \scriptsize
        \inferrule*[Left=T-Val]{
          \inferrule*[Left=Lemma~\ref{lem:cstr-removal-pres-typing}]{
            \PVGRCstrEntail{\Ctx_1}{\Subst{\Typ'}{\TVar}\Cstr} \and
            \inferrule*[left=Lemma~\ref{lem:sub-pres-typing}]{
              \PVGRHasValType{\Ctx_1,\TVar:\Kind,\Cstr}{\Val}{\Typ} \and
              \PVGRHasSubstType
                {\Subst{\Typ'}{\TVar}}
                {(\Ctx_1,\TVar:\Kind,\Cstr)}
                {(\Ctx_2,\Subst{\Typ'}{\TVar}\Cstr)}
            }{
              \PVGRHasValType{\Ctx_1,\Subst{\Typ'}{\TVar}\Cstr}{\Subst{\Typ'}{\TVar}\Val}{\Subst{\Typ'}{\TVar}\Typ}
            }
          }{
            \PVGRHasValType{\Ctx_1}{\Subst{\Typ'}{\TVar}\Val}{\Subst{\Typ'}{\TVar}\Typ}
          }
        }{
          \PVGRHasExpType{\Ctx_1}{\Empty}{\Subst{\Typ'}{\TVar}\Val}{\Empty}{\Empty}{\Subst{\Typ'}{\TVar}\Typ}
        }
      \end{align*}

    \item \IndCase{ER-BetaLet}
      The assumptions have the following structure:
      \begin{align*}
        \scriptsize
        \rulePVGRExprRedBetaLet
        \and
        \inferrule*[left=T-Let]{
          \inferrule*[Left=T-Val]{
            \PVGRHasValType
              {\Ctx_1}
              {\Val_1}
              {\Typ_1}
          } {
            \PVGRHasExpType
              {\Ctx_1}
              {\Empty}
              {\Val_1}
              {\Empty}
              {\Empty}
              {\Typ_1}
          }
          \and
          \PVGRHasExpType
            {\Ctx_1, \EVar:\Typ_1}
            {\St_2}
            {\Exp_2}
            {\Ctx_3}
            {\St_3}
            {\Typ_2}
          \and
          \PVGRHasKind
            {\Ctx_1, \EVar:\Typ_1}
            {\St_2}
            {\KSt}
        }{
          \PVGRHasExpType
            {\Ctx_1}
            {\St_2}
            {\ELet\EVar{\Val_1}{\Exp_2}}
            {\Ctx_3}
            {\St_3}
            {\Typ_2}
        }
      \end{align*}
      The result follows via:
      \begin{align*}
        \scriptsize
        \inferrule*[Left=Lemma~\ref{lem:sub-pres-typing}]{
          \inferrule*[Left=Lemma~\ref{lem:weakening}]{
            \PVGRHasKind{\Ctx_1}{\St_2}{\KSt}
          } {
            \PVGRHasKind{\Ctx_1,\EVar:\Typ_1}{\St_2}{\KSt}
          } \and
          \PVGRHasExpType{\Ctx_1,\EVar:\Typ_1}{\St_2}{\Exp_2}{\Ctx_3}{\St_3}{\Typ_2} \and
          \PVGRHasSubstType{\Subst{\Val_1}{\EVar}}{\Ctx_1,\EVar:\Typ_1}{\Ctx_1}
        }{
          \PVGRHasExpType{\Ctx_1}{\St_2}{\Subst{\Val_1}{\EVar}\Exp_2}{\Ctx_3}{\St_3}{\Typ_2}
        }
      \end{align*}

    \item \IndCase{ER-BetaPair}
      The assumptions have the following structure:
      \begin{align*}
        \scriptsize
        \rulePVGRExprRedBetaPair
        \and
        \inferrule*[left=T-Proj]{
          \inferrule*[Left=T-Pair]{
            \PVGRHasValType
              {\Ctx}
              {\Val_1}
              {\Typ_1}
            \\
            \PVGRHasValType
              {\Ctx}
              {\Val_2}
              {\Typ_2}
          }{
            \PVGRHasValType
              {\Ctx}
              {\EPair {\Val_1} {\Val_2}}
              {\TPair {\Typ_1} {\Typ_2}}
          }
        }{
          \PVGRHasExpType
            {\Ctx}
            {\Empty}
            {\EProj \Label {\EPair {\Val_1} {\Val_2}}}
            {\Empty}
            {\Empty}
            {\Typ_\Label}
        }
      \end{align*}
      The result follows via
      \begin{align*}
        \scriptsize
        \inferrule*[Left=T-Val]{
          \PVGRHasValType
            {\Ctx}
            {\Val_\Label}
            {\Typ_\Label}
        } {
          \PVGRHasExpType
            {\Ctx}
            {\Empty}
            {\Val_\Label}
            {\Empty}
            {\Empty}
            {\Typ_\Label}
        }
      \end{align*}

    \item \IndCase{ER-Lift}
      The assumptions have the following structure:
      \begin{align*}
        \scriptsize
        \rulePVGRExprRedLift
        \and
        \PVGRHasExpType{\Ctx_1}{\St_1}{\ECtx{\Exp_1}}{\Ctx_3}{\St_3}{\Typ}
      \end{align*}
      We first extract the typing of $\Exp_1$ from the evaluation context:
      \begin{align*}
        \scriptsize
        \inferrule*[left=Lemma~\ref{lem:eval-ctx-typings}.\ref{lem:eval-ctx-typings:split}]{
          \PVGRHasExpType{\Ctx_1}{\St_1}{\ECtx{\Exp_1}}{\Ctx_3}{\St_3}{\Typ}
        } {
          \exists \St_{11},\St_{12},\Ctx_{21},\Ctx_{22},\St_2,\Typ' \and
          \St_1 = \St_{11},\St_{12} \\
          \Ctx_3 = \Ctx_{21},\Ctx_{22} \\
          \PVGRHasExpType{\Ctx_1}{\St_{11}}{\Exp_1}{\Ctx_{21}}{\St_2}{\Typ'} \and
          \PVGRHasExpType{\Ctx_1\DisjointAppend \Ctx_{21},\EVar:\Typ'}{\St_{12},\St_2}{\ECtx{\EVar}}{\Ctx_{22}}{\St_3}{\Typ}
        }
      \end{align*}
      From $\St_1 = \St_{11},\St_{12}$ and $\PVGRHasKind{\Ctx_1}{\St_1}{\KSt}$ follows via inversion
      \begin{mathpar}
        \PVGRHasKind{\Ctx_1}{\St_{11}}{\KSt} \and
        \PVGRHasKind{\Ctx_1}{\St_{12}}{\KSt}
      \end{mathpar}
      Then we apply the induction hypothesis:
      \begin{align*}
        \scriptsize
        \inferrule*[Left=IH]{
          \PVGRIsCtx{\Ctx_1} \and
          \PVGRHasKind{\Ctx_1}{\St_{11}}{\KSt} \and
          \PVGRHasExpType{\Ctx_1}{\St_{11}}{\Exp_1}{\Ctx_{21}}{\St_2}{\Typ'} \and
          \Exp_1 \ReducesToE \Exp_2
        } {
          \PVGRHasExpType{\Ctx_1}{\St_{11}}{\Exp_2}{\Ctx_{21}}{\St_2}{\Typ'}
        }
      \end{align*}
      Then we plug the typing of $\Exp_2$ back into the evaluation context:
      \begin{align*}
        \scriptsize
        \inferrule*[Left=Lemma~\ref{lem:eval-ctx-typings}.\ref{lem:eval-ctx-typings:merge}]{
          \PVGRHasExpType{\Ctx_1}{\St_{11}}{\Exp_2}{\Ctx_{21}}{\St_2}{\Typ'} \and
          \PVGRHasExpType{\Ctx_1\DisjointAppend \Ctx_{21},\EVar:\Typ'}{\St_{12},\St_2}{\ECtx{\EVar}}{\Ctx_{22}}{\St_3}{\Typ}
        }{
          \PVGRHasExpType{\Ctx_1}{\St_1}{\ECtx{\Exp_2}}{\Ctx_3}{\St_3}{\Typ_2}
        }
      \end{align*}
    \end{itemize}

  \item
    By induction on the $\Cfg \ReducesToC \Cfg'$ derivation.
    For the sake of readability, we apply
    Lemma~\ref{lem:eval-ctx-typings} and
    \ref{lem:eval-ctx-typings-cfg} informally to talk about the typings
    inside evaluation contexts.
    \begin{itemize}

    \item \IndCase{CR-Expr}
      Immediate from (1).

    \item \IndCase{CR-Fork}
      The assumptions have the following structure:
      \begin{mathpar}
        \scriptsize
        \rulePVGRCfgRedFork
        \and
        \inferrule*[Right=Lemma~\ref{lem:eval-ctx-typings-cfg}]{
          \inferrule*[Right=T-Exp]{
            \inferrule*[Right=Lemma~\ref{lem:eval-ctx-typings}]{
              \inferrule*[Right=T-Fork]{
                \PVGRHasValType
                  {\Ctx}
                  {\Val}
                  {\TArr{\St_1}{\TUnit}{\cdot}{\cdot}{\TUnit}}
              } {
                \PVGRHasExpType
                  {\Ctx}
                  {\St_1}
                  {\EFork\Val}
                  {\Empty}
                  {\Empty}
                  {\Typ}
              }
            } {
              \PVGRHasExpType
                {\Ctx}
                {\St_1,\St_2}
                {\ECtx{\EFork\Val}}
                {\Ctx'}
                {\Empty}
                {\Typ}
            }
          }{
            \PVGRHasConfType
              {\Ctx}
              {\St_1,\St_2}
              {\ECtx{\EFork\Val}}
          }
        }{
          \PVGRHasConfType
            {\Ctx_0}
            {\St_0}
            {\CCtx{\ECtx{\EFork\Val}}}
        }
      \end{mathpar}

      The result follows via
      \begin{align*}
        \scriptsize
        \inferrule*[Left=Lemma~\ref{lem:eval-ctx-typings-cfg}]{
          \inferrule*[Left=T-Par]{
            \inferrule*[Left=T-Exp]{
              \inferrule*[Left=T-App]{
                \PVGRHasValType
                  {\Ctx}
                  {\Val}
                  {\TArr{\St_1}{\TUnit}{\cdot}{\cdot}{\TUnit}}
                \\
                \inferrule*[Right=T-Unit]{ }{
                  \PVGRHasValType
                    {\Ctx}
                    {\EUnit}
                    {\TUnit}
                }
              }{
                \PVGRHasExpType
                  {\Ctx}
                  {\St_1}
                  {\EApp{\Val}{\EUnit}}
                  {\Empty}
                  {\Empty}
                  {\TUnit}
              }
            }{
              \PVGRHasConfType
                {\Ctx}
                {\St_1}
                {\EApp{\Val}{\EUnit}}
            }
            \and
            \inferrule*[Right=Lemma~\ref{lem:eval-ctx-typings}]{ }{
              \PVGRHasConfType
                {\Ctx}
                {\St_2}
                {\ECtx{\EVar}}
            }
          }{
            \PVGRHasConfType
              {\Ctx}
              {\St_1,\St_2}
              {\CPar
                {\EApp{\Val}{\EUnit}}
                {\ECtx{\EVar}}
              }
          }
        }{
          \PVGRHasConfType
            {\Ctx_0}
            {\St_0}
            {\CCtx{\CPar
              {\EApp{\Val}{\EUnit}}
              {\ECtx{\EVar}}
            }}
        }
      \end{align*}

    \item \IndCase{CR-New}
      The assumptions have the following structure:
      \begin{mathpar}
        \scriptsize
        \rulePVGRCfgRedNew
        \and
        \inferrule*[Left=Lemma~\ref{lem:eval-ctx-typings-cfg}]{
          \inferrule*[Left=T-Exp]{
            \inferrule*[Left=Lemma~\ref{lem:eval-ctx-typings}]{
              \inferrule*[Left=T-New]{
                \PVGRHasKind
                  {\Ctx}
                  {\Ses}
                  {\KSession}
              } {
                \PVGRHasExpType
                  {\Ctx}
                  {\St_1}
                  {\New\Ses}
                  {\Empty}
                  {\Empty}
                  {\TAccessPoint\Ses}
              }
            } {
              \PVGRHasExpType
                {\Ctx}
                {\St}
                {\ECtx{\ENew\Ses}}
                {\Ctx'}
                {\Empty}
                {\Typ}
            }
          }{
            \PVGRHasConfType
              {\Ctx}
              {\St}
              {\ECtx{\ENew\Ses}}
          }
        }{
          \PVGRHasConfType
            {\Ctx_0}
            {\St_0}
            {\CCtx{\ECtx{\ENew\Ses}}}
        }
      \end{mathpar}
      The result follows via
      \begin{align*}
        \scriptsize
        \inferrule*[Left=Lemma~\ref{lem:eval-ctx-typings-cfg}]{
          \inferrule*[Left=T-NuAccess]{
            \EVar \text{fresh} \\
            \PVGRHasKind
              {\Ctx}
              {\Ses}
              {\KSession} \\
            \inferrule*[Right=T-Exp]{
              \inferrule*[Right=Lemma~\ref{lem:eval-ctx-typings}]{
                \inferrule*[Right=T-Val]{
                  \inferrule*[Right=T-Var]{ } {
                    \PVGRHasValType
                      {\Ctx, \EVar:\TAccessPoint\Ses}
                      {\EVar}
                      {\TAccessPoint\Ses}
                  }
                }{
                  \PVGRHasExpType
                    {\Ctx, \EVar:\TAccessPoint\Ses}
                    {\Empty}
                    {\EVar}
                    {\Empty}
                    {\Empty}
                    {\TAccessPoint\Ses}
                }
              }{
                \PVGRHasExpType
                  {\Ctx, \EVar:\TAccessPoint\Ses}
                  {\St}
                  {\ECtx{\EVar}}
                  {\Ctx'}
                  {\Empty}
                  {\Typ}
              }
            }{
              \PVGRHasConfType
                {\Ctx, \EVar:\TAccessPoint\Ses}
                {\St}
                {\ECtx{\EVar}}
            }
          }{
            \PVGRHasConfType
              {\Ctx}
              {\St}
              {\CBindAP\EVar\Ses{\ECtx{\EVar}}}
          }
        }{
          \PVGRHasConfType
            {\Ctx_0}
            {\St_0}
            {\CCtx{\CBindAP\EVar\Ses{\ECtx{\EVar}}}}
        }
      \end{align*}

    \item \IndCase{CR-RequestAccept}
      The assumptions have the following structure:
      \begin{mathpar}
        \scriptsize
        \rulePVGRCfgRedRequestAccept
        \and
        \inferrule*[Right=Lemma~\ref{lem:eval-ctx-typings-cfg}]{
          \PVGRHasConfType{\Ctx}{\St}{\CBindAP\EVar\Ses\Cfg}
        }{
          \PVGRHasConfType{\Ctx_0}{\St_0}{\CCtx{\CBindAP\EVar\Ses\Cfg}}
        }
      \end{mathpar}
      Applying Lemma~\ref{lem:subject-congruence-appendix} to the configuration
      typing and the congruency yields a configuration typing, which by inversion has the the following structure, where $\St = \St_1,\St_2,\St_3$ are the channels used by $\ECtxSym_1$, $\ECtxSym_2$ and $\Cfg'$, respectively:
      \begin{mathpar}
        \scriptsize
        \inferrule*[Right=T-NuAccess]{
          \EVar \text{ not free in } \Ctx \\
          \PVGRHasKind
            {\Ctx}
            {\Ses}
            {\KSession} \\
          \inferrule*[Right=T-Par]{
            \inferrule*[Left=T-Par]{
              (1)
              \\
              (2)
            }{
              \PVGRHasConfType
                {\Ctx, \EVar:\TAccessPoint\Ses}
                {\St_1,\St_2}
                {(
                  \CPar{
                    \ECtxA{\ERequest\EVar}
                  }{
                    \ECtxB{\EAccept\EVar}
                  }
                )}
            }
            \\
            \PVGRHasConfType
              {\Ctx, \EVar:\TAccessPoint\Ses}
              {\St_3}
              {\Cfg'}
          }{
            \PVGRHasConfType
              {\Ctx, \EVar:\TAccessPoint\Ses}
              {\St_1,\St_2,\St_3}
              {(
                \CPar{
                  \CPar{
                    \ECtxA{\ERequest\EVar}
                  }{
                    \ECtxB{\EAccept\EVar}
                  }
                }{
                  \Cfg'
                }
              )}
          }
        }{
          \PVGRHasConfType
            {\Ctx}
            {\St_1,\St_2,\St_3}
            {
              \CBindAP\EVar\Ses{(
                \CPar{
                  \CPar{
                    \ECtxA{\ERequest\EVar}
                  }{
                    \ECtxB{\EAccept\EVar}
                  }
                }{
                  \Cfg'
                }
              )}
            }
        }
      \end{mathpar}
      where
      \begin{enumerate}
      \item[(1)] 
        \adjustbox{valign=t}{
        $
          \scriptsize
          \inferrule*[Right=T-Exp]{
            \inferrule*[Right=Lemma~\ref{lem:eval-ctx-typings}]{
              \inferrule*[Right=T-Request]{
                \PVGRHasValType{\Ctx, \EVar:\TAccessPoint\Ses}{\EVar}{\TAccessPoint\Ses}
              }{
                \PVGRHasExpType
                  {\Ctx, \EVar:\TAccessPoint\Ses}
                  {\Empty}
                  {\ERequest\EVar}
                  {\TVar:\KDomain\TShapeOne}
                  {\StBind\TVar\Ses}
                  {\TChan\TVar}
              }
            }{
              \PVGRHasExpType
                {\Ctx, \EVar:\TAccessPoint\Ses}
                {\St_1}
                {\ECtxA{\ERequest\EVar}}
                {\Ctx_1',\TVar:\KDomain\TShapeOne}
                {\Empty}
                {\Typ}
            }
          }{
            \PVGRHasConfType
              {\Ctx, \EVar:\TAccessPoint\Ses}
              {\St_1}
              {\ECtxA{\ERequest\EVar}}
          }
        $
        }
        \vspace{3mm}
      \item[(2)]
        \adjustbox{valign=t}{
        $
          \scriptsize
          \inferrule*[Right=T-Exp]{
            \inferrule*[Right=Lemma~\ref{lem:eval-ctx-typings}]{
              \inferrule*[Right=T-Accept]{
                \PVGRHasValType{\Ctx, \EVar:\TAccessPoint\Ses}{\EVar}{\TAccessPoint\Ses}
              }{
                \PVGRHasExpType
                  {\Ctx, \EVar:\TAccessPoint\Ses}
                  {\Empty}
                  {\EAccept\EVar}
                  {\TVar:\KDomain\TShapeOne}
                  {\StBind\TVar{\Dual\Ses}}
                  {\TChan\TVar}
              }
            }{
              \PVGRHasExpType
                {\Ctx, \EVar:\TAccessPoint\Ses}
                {\St_1}
                {\ECtxB{\EAccept\EVar}}
                {\Ctx_2',\TVar:\KDomain\TShapeOne}
                {\Empty}
                {\Typ}
            }
          }{
            \PVGRHasConfType
              {\Ctx, \EVar:\TAccessPoint\Ses}
              {\St_1}
              {\ECtxB{\EAccept\EVar}}
          }
        $
        }
      \end{enumerate}
      The result follows via
      \begin{align*}
        \scriptsize
        \inferrule*[Right=Lemma~\ref{lem:eval-ctx-typings-cfg}]{
          \inferrule*[Right=T-NuAccess]{
            \EVar \text{ not free in } \Ctx \and
            \PVGRHasKind
              {\Ctx}
              {\Ses}
              {\KSession} \and
            \inferrule*[Right=T-NuChan]{
              \TVar,\TVar' \text{ fresh} \\
              \PVGRHasKind
                {\Ctx}
                {\Ses}
                {\KSession} \\
              (3)
            }{
              \PVGRHasConfType
                {\Ctx,\EVar:\TAccessPoint\Ses}
                {\St_1,\St_2,\St_3}
                {
                  \CBindChan\TVar{\TVar'}\Ses{(
                    \CPar{
                      \CPar{
                        \ECtxA{\EChan\TVar}
                      }{
                        \ECtxB{\EChan{\TVar'}}
                      }
                    }{
                      \Cfg'
                    }
                  )}
                }
            }
          }{
            \PVGRHasConfType
              {\Ctx}
              {\St_1,\St_2,\St_3}
              {
                \CBindAP\EVar\Ses{
                  \CBindChan\TVar{\TVar'}\Ses{(
                    \CPar{
                      \CPar{
                        \ECtxA{\EChan\TVar}
                      }{
                        \ECtxB{\EChan{\TVar'}}
                      }
                    }{
                      \Cfg'
                    }
                  )}
                }
              }
          }
        }{
            \PVGRHasConfType
              {\Ctx_0}
              {\St_0}
              {\CCtx{
                \CBindAP\EVar\Ses{
                  \CBindChan\TVar{\TVar'}\Ses{(
                    \CPar{
                      \CPar{
                        \ECtxA{\EChan\TVar}
                      }{
                        \ECtxB{\EChan{\TVar'}}
                      }
                    }{
                      \Cfg'
                    }
                  )}
                }
              }}
        }
      \end{align*}
      where
        \vspace{3mm}
      \begin{enumerate}
      \item[(3)]
        \adjustbox{valign=t}{
        $
        \scriptsize
            \inferrule*[left=T-Par]{
              \inferrule*[Left=T-Par]{
                (4)
                \\
                (5)
              }{
                \PVGRHasConfType
                  {\Ctx'}
                  {\St_1,\St_2,\StBind\TVar\Ses,\StBind{\TVar'}{\Dual\Ses}}
                  {(
                    \CPar{
                      \ECtxA{\EChan\TVar}
                    }{
                      \ECtxB{\EChan{\TVar'}}
                    }
                  )}
              }
              \\
              \PVGRHasConfType
                {\Ctx'}
                {\St_3}
                {\Cfg'}
            }{
              \PVGRHasConfType
                {\Ctx'}
                {\St_1,\St_2,\St_3,\StBind\TVar\Ses,\StBind{\TVar'}{\Dual\Ses}}
                {(
                  \CPar{
                    \CPar{
                      \ECtxA{\EChan\TVar}
                    }{
                      \ECtxB{\EChan{\TVar'}}
                    }
                  }{
                    \Cfg'
                  }
                )}
            }
        $
        }
        \vspace{3mm}
        \\
        for $\Ctx' = \Ctx, \EVar:\TAccessPoint\Ses \DisjointAppend \TVar:\KDomain\TShapeOne \DisjointAppend \TVar':\KDomain\TShapeOne$
        \vspace{5mm}
      \item[(4)]
        \adjustbox{valign=t}{
        $
        \scriptsize
        \inferrule*[Right=T-Exp]{
          \inferrule*[Right=Lemma~\ref{lem:eval-ctx-typings}]{
            \inferrule*[Right=T-Val]{
              \inferrule*[Right=T-Chan]{
                \inferrule*[Right=T-TVar]{ }{
                  \PVGRHasKind{\Ctx'}\TVar{\KDomain\TShapeOne}
                }
              }{
                \PVGRHasValType
                  {\Ctx'}
                  {\EChan\TVar}
                  {\TChan\TVar}
              }
            }{
              \PVGRHasExpType
                {\Ctx'}
                {\Empty}
                {\EChan\TVar}
                {\Empty}
                {\Empty}
                {\TChan\TVar}
            }
          }{
            \PVGRHasExpType
              {\Ctx'}
              {\St_1,\StBind\TVar\Ses}
              {\ECtxA{\EChan\TVar}}
              {\Ctx_1'}
              {\Empty}
              {\Typ}
          }
        }{
          \PVGRHasConfType
            {\Ctx'}
            {\St_1,\StBind\TVar\Ses}
            {\ECtxA{\EChan\TVar}}
        }
        $
        }
        \vspace{3mm}
      \item[(5)]
        Similar to (4).
      \end{enumerate}
      Note that the channels, which in the pre-reduction tree are
      introduced existentially by the $\ERequest\cdot$ and $\EAccept\cdot$ operations,
      are in the post-reduction tree provided from the outside via the $\nu$-Binder.
      Lemma~\ref{lem:eval-ctx-typings} is strong enough to support this.

    \item \IndCase{CR-SendRecv}
      The assumptions have the following structure:
      \begin{mathpar}
        \scriptsize
        \inferrule*[left=CR-SendRecv]{
          \Cfg \Cong (
            \CPar{
            \CPar
              {\ECtxA{\ESend\Val{\EChan\TVar}}}
              {\ECtxB{\ERecv{\EChan\TVar'}}}}{\Cfg'}
          )
        }{
          \CBindChan {\TVar} {\TVar'}
          {\Ses'} {\Cfg}
          \ReducesToC
          \CBindChan {\TVar} {\TVar'} {\Ses} {(
            \CPar{
            \CPar
              {\ECtxA{\EUnit}}
              {\ECtxB{\Val}}}{\Cfg'}
          )}
        }
        \\
        \inferrule*[left=Lemma~\ref{lem:eval-ctx-typings-cfg}]{
          \inferrule*[Left=T-NuChan]{
            \TVar,\TVar' \text{ not free in } \Ctx \\
            \PVGRHasKind
              {\Ctx}
              {\Ses'}
              {\KSession} \\
            \PVGRHasConfType
              {\Ctx'}
              {\St,\StBind\TVar{\Ses'},\StBind{\TVar'}{\Dual{\Ses'}}}
              {\Cfg}
          }{
            \PVGRHasConfType
              {\Ctx}
              {\St}
              {\CBindChan\TVar{\TVar'}{\Ses'}{\Cfg}}
          }
        }{
          \PVGRHasConfType
            {\Ctx_0}
            {\St_0}
            {\CCtx{\CBindChan\TVar{\TVar'}{\Ses'}{\Cfg}}}
        }
      \end{mathpar}
      where $\Ctx' = \Ctx \DisjointAppend \TVar:\KDomain\TShapeOne \DisjointAppend \TVar':\KDomain\TShapeOne$ and $\Ses' = \SSend{\TVar''}{\KDomain\POLYShape}{\St'}{\Typ'}\Ses$.

      Applying Lemma~\ref{lem:subject-congruence-appendix} to the configuration
      typing of $\Cfg$ and the congruency yields a configuration
      typing, which by inversion reveals the following structure with
      $\St = \St_!,\St_1,\St_2,\St_3$, where $\St_!$ are the channels
      that are sent and received, and $\St_1$, $\St_2$, and $\St_3$ are the
      channels used in $\ECtxSym_1$, $\ECtxSym_2$, and $\Cfg'$, respectively:
      \begin{mathpar}
        \scriptsize
        \inferrule*[Right=T-Par]{
          \inferrule*[Right=T-Par]{
            (1)
            \\
            (2)
          }{
            \PVGRHasConfType
              {\Ctx'}
              {\St_!,\St_1,\St_2,\StBind\TVar{\Ses'},\StBind{\TVar'}{\Dual{\Ses'}}}
              {(
              \CPar
                {\ECtxA{\ESend{\Val}{\EChan\TVar}}}
                {\ECtxB{\ERecv{\EChan\TVar'}}}
              )}
          }
          \\
          \PVGRHasConfType
            {\Ctx'}
            {\St_3}
            {\Cfg'}
        }{
          \PVGRHasConfType
            {\Ctx'}
            {\St_!,\St_1,\St_2,\St_3,\StBind\TVar{\Ses'},\StBind{\TVar'}{\Dual{\Ses'}}}
            {(
              \CPar{
                \CPar
                  {\ECtxA{\ESend{\Val}{\EChan\TVar}}}
                  {\ECtxB{\ERecv{\EChan\TVar'}}}
              }{
                \Cfg'
              }
            )}
        }
      \end{mathpar}
      where
      \begin{enumerate}
      \item[(1)] 
        \adjustbox{valign=t}{
        $
          \scriptsize
          \inferrule*[Right=T-Exp]{
            \inferrule*[Right=Lemma~\ref{lem:eval-ctx-typings}]{
              \inferrule*[Right=T-Send]{
                \PVGRHasKind \Ctx {\POLYDom} {\KDomain\POLYShape} \and
                \PVGRTypeConv {\Subst{\POLYDom}{\TVar''}{\St'}} {\St_!} \and
                \PVGRTypeConv {\Subst{\POLYDom}{\TVar''}{\Typ'}} {\Typ''} \\
                \PVGRHasValType{\Ctx}{\Val}{\Typ''} \and
                \PVGRHasValType{\Ctx}{\EChan\TVar}{\TChan \TVar}
              }{
                \PVGRHasExpType
                  {\Ctx'}
                  {\St_!,\StBind\TVar{\Ses'}}
                  {\ESend{\Val}{\EChan\TVar}}
                  {\Empty}
                  {\StBind\TVar\Ses}
                  {\TUnit}
              }
            }{
              \PVGRHasExpType
                {\Ctx'}
                {\St_!,\St_1,\StBind\TVar{\Ses'}}
                {\ECtxA{\ESend{\Val}{\EChan\TVar}}}
                {\Ctx_1'}
                {\Empty}
                {\Typ_1}
            }
          }{
            \PVGRHasConfType
              {\Ctx'}
              {\St_!,\St_1,\StBind\TVar{\Ses'}}
              {\ECtxA{\ESend{\Val}{\EChan\TVar}}}
          }
        $
        }
        \vspace{3mm}
      \item[(2)]
        \adjustbox{valign=t}{
        $
          \scriptsize
          \inferrule*[Right=T-Exp]{
            \inferrule*[Right=Lemma~\ref{lem:eval-ctx-typings}]{
              \inferrule*[Right=T-Recv]{
                \PVGRHasKind \Ctx {\TVar'} {\KDomain\TShapeOne} \and
                \PVGRHasValType{\Ctx}{\EChan\TVar'}{\TChan{\TVar'}}
              }{
                \PVGRHasExpType
                  {\Ctx'}
                  {\StBind{\TVar'}{\Dual{\Ses'}}}
                  {\ERecv{\EChan\TVar'}}
                  {\TVar'':\KDomain\POLYShape}
                  {\St',\StBind{\TVar'}{\Dual\Ses}}
                  {\Typ'}
              }
            }{
              \PVGRHasExpType
                {\Ctx'}
                {\St_2,\StBind{\TVar'}{\Dual{\Ses'}}}
                {\ECtxB{\ERecv{\EChan\TVar'}}}
                {\Ctx_2',\TVar'':\KDomain\POLYShape}
                {\Empty}
                {\Typ_2}
            }
          }{
            \PVGRHasConfType
              {\Ctx'}
              {\St_2,\StBind{\TVar'}{\Dual{\Ses'}}}
              {\ECtxB{\ERecv{\EChan\TVar'}}}
          }
        $
        }
      \end{enumerate}

      The result follows via
      \begin{mathpar}
        \scriptsize
        \inferrule*[Right=Lemma~\ref{lem:eval-ctx-typings-cfg}]{
          \inferrule*[Right=T-NuChan]{
            \TVar,\TVar' \text{ not free in } \Ctx \\
            \PVGRHasKind
              {\Ctx}
              {\Ses}
              {\KSession} \\
            \inferrule*[Right=T-Par]{
              \inferrule*[Right=T-Par]{
                (3) \\ (4)
              }{
                \PVGRHasConfType
                  {\Ctx'}
                  {\St_!,\St_1,\St_2,\StBind\TVar{\Ses},\StBind{\TVar'}{\Dual{\Ses}}}
                  {
                    \CPar
                      {\ECtxA{\EUnit}}
                      {\ECtxB{\Val}}
                  }
              }
              \\
              \PVGRHasConfType
                {\Ctx'}
                {\St_3}
                {\Cfg'}
            }{
              \PVGRHasConfType
                {\Ctx'}
                {\St_!,\St_1,\St_2,\St_3,\StBind\TVar{\Ses},\StBind{\TVar'}{\Dual{\Ses}}}
                {
                  \CPar{
                    \CPar
                      {\ECtxA{\EUnit}}
                      {\ECtxB{\Val}}
                  }{
                    \Cfg'
                  }
                }
            }
          }{
            \PVGRHasConfType
              {\Ctx}
              {\St_!,\St_1,\St_2,\St_3}
              {\CBindChan {\TVar} {\TVar'} {\Ses} {(
                \CPar{
                  \CPar
                    {\ECtxA{\EUnit}}
                    {\ECtxB{\Val}}
                }{
                  \Cfg'
                }
              )}}
          }
        }{
          \PVGRHasConfType
            {\Ctx_0}
            {\St_0}
            {\CCtx{\CBindChan {\TVar} {\TVar'} {\Ses} {(
              \CPar{
                \CPar
                  {\ECtxA{\EUnit}}
                  {\ECtxB{\Val}}
              }{
                \Cfg'
              }
            )}}}
        }
      \end{mathpar}
      where
      \begin{enumerate}
      \item[(3)] 
        \adjustbox{valign=t}{
        $
          \scriptsize
          \inferrule*[Right=T-Exp]{
            \inferrule*[Right=Lemma~\ref{lem:eval-ctx-typings}]{
              \inferrule*[Right=T-Val]{
                \inferrule*[Right=T-Unit]{ }{
                  \PVGRHasValType
                    {\Ctx'}
                    {\EUnit}
                    {\TUnit}
                }
              }{
                \PVGRHasExpType
                  {\Ctx'}
                  {\Empty}
                  {\EUnit}
                  {\Empty}
                  {\Empty}
                  {\TUnit}
              }
            }{
              \PVGRHasExpType
                {\Ctx'}
                {\St_1,\StBind\TVar{\Ses}}
                {\ECtxA{\EUnit}}
                {\Ctx_1'}
                {\Empty}
                {\Typ_1}
            }
          }{
            \PVGRHasConfType
              {\Ctx'}
              {\St_1,\StBind\TVar{\Ses}}
              {\ECtxA{\EUnit}}
          }
        $
        }
        \vspace{3mm}
      \item[(4)] 
        \adjustbox{valign=t}{
        $
          \scriptsize
          \inferrule*[Right=T-Exp]{
            \inferrule*[Right=Lemma~\ref{lem:eval-ctx-typings}]{
              \inferrule*[Right=T-Val]{
                \PVGRHasValType
                  {\Ctx'}
                  {\Val}
                  {\Typ'}
              }{
                \PVGRHasExpType
                  {\Ctx'}
                  {\Empty}
                  {\Val}
                  {\Empty}
                  {\Empty}
                  {\Typ'}
              }
            }{
              \PVGRHasExpType
                {\Ctx'}
                {\St_!,\St_2,\StBind{\TVar'}{\Dual\Ses}}
                {\ECtxB{\Val}}
                {\Ctx_2'}
                {\Empty}
                {\Typ_2}
            }
          }{
            \PVGRHasConfType
              {\Ctx'}
              {\St_!,\St_2,\StBind{\TVar'}{\Dual\Ses}}
              {\ECtxB{\Val}}
          }
        $
        }
      \end{enumerate}

    \item \IndCase{CR-SelectCase, CR-Close}
      Similar to the previous cases.

    \end{itemize}
  \end{enumerate}
\end{proof}

\begin{lemma}[Context inversion]\label{lem:context-inversion}
  \begin{mathpar}
    \inferrule{
      \PVGRIsCtx{\Ctx} \\
      \IsOuter\Ctx \\
      \PVGRHasKind{\Ctx}{\St}{\KSt} \\
      \PVGRHasConfType{\Ctx}{\St}{\CCtx\Exp} \\
    }{
      \exists \Ctx',\St', T.\
      \PVGRIsCtx{\Ctx'} \and
      \IsOuter{\Ctx'} \and
      \PVGRHasKind{\Ctx'}{\St'}{\KSt} \and
      \PVGRHasExpType{\Ctx'}{\St'}{\Exp}{\Empty}{\Empty}{T}
    }
  \end{mathpar}
\end{lemma}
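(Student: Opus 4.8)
The plan is to strip off the configuration context $\CCtxSym$ by induction on its structure, reach the expression process $\CExp\Exp$ sitting in its hole, and then invert the single configuration typing rule that can type a bare expression. Concretely, I would prove by induction on $\CCtxSym$ the statement: from $\PVGRHasConfType{\Ctx}{\St}{\CCtx\Exp}$ together with $\PVGRIsCtx\Ctx$, $\IsOuter\Ctx$, and $\PVGRHasKind{\Ctx}{\St}{\KSt}$, one obtains $\Ctx'$ and $\St'$ with $\PVGRHasConfType{\Ctx'}{\St'}{\Exp}$ and with the three side conditions $\PVGRIsCtx{\Ctx'}$, $\IsOuter{\Ctx'}$, $\PVGRHasKind{\Ctx'}{\St'}{\KSt}$ re-established along the way. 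This is essentially Lemma~\ref{lem:eval-ctx-typings-cfg} augmented with the extra bookkeeping of well-formedness, outerness, and well-kindedness, so it is cleaner to redo the induction here than to invoke that lemma as a black box.

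For the base case $\CCtxSym = \Hole$ we have $\CCtx\Exp = \CExp\Exp$, and the only configuration typing rule whose conclusion matches a bare expression is \TirName{T-Exp}; its premise has the form $\PVGRHasExpType{\Ctx}{\St}{\Exp}{\Ctx''}{\Empty}{\Typ}$, so inversion gives a derivation with empty output state, which is exactly the shape required by the conclusion of the lemma, and we take $\Ctx' := \Ctx$, $\St' := \St$, the side conditions holding by hypothesis. For $\CCtxSym = \CBindAP\EVar\Ses{\CCtxSym_0}$, inversion of \TirName{T-NuAccess} gives $\PVGRHasKind{\Ctx}{\Ses}{\KSession}$ and $\PVGRHasConfType{\Ctx,\EVar\colon\TAccessPoint\Ses}{\St}{\CCtxSym_0[\Exp]}$; the extended context is well-formed by \TirName{CF-ConsType} using that kinding, it still satisfies $\IsOuter$ because the new binding is an access-point binding, and $\St$ is unchanged and hence still well-kinded, so the induction hypothesis applies. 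For $\CCtxSym = \CPar{\CCtxSym_0}{\Cfg_1}$, inversion of \TirName{T-Par} splits $\St = \St_1,\St_2$ with $\PVGRHasConfType{\Ctx}{\St_1}{\CCtxSym_0[\Exp]}$, and inversion of \TirName{K-StMerge} on $\PVGRHasKind{\Ctx}{\St_1,\St_2}{\KSt}$ yields $\PVGRHasKind{\Ctx}{\St_1}{\KSt}$, so the induction hypothesis applies with $\St_1$.

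The only case with real content is $\CCtxSym = \CBindChan{\TVar}{\TVar'}\Ses{\CCtxSym_0}$ (and the closed variant, which is analogous). Inversion of \TirName{T-NuChan} gives $\PVGRHasKind{\Ctx}{\Ses}{\KSession}$ and $\PVGRHasConfType{\Ctx \DisjointAppend \TVar\colon\KDomain\TShapeOne \DisjointAppend \TVar'\colon\KDomain\TShapeOne}{\St,\StBind\TVar\Ses,\StBind{\TVar'}{\Dual\Ses}}{\CCtxSym_0[\Exp]}$. To feed this to the induction hypothesis I must re-establish the invariants for the larger context and state: well-formedness of $\Ctx \DisjointAppend \TVar\colon\KDomain\TShapeOne \DisjointAppend \TVar'\colon\KDomain\TShapeOne$ follows from $\PVGRIsCtx\Ctx$, the well-formedness of $\KDomain\TShapeOne$ (via \TirName{K-ShapeChan} and \TirName{KF-Dom}), \TirName{CF-ConsKind}, and Lemma~\ref{lem:ctx-form-and-cat}; the $\IsOuter$ predicate is preserved since the two added bindings are type-variable bindings and the constraints produced by $\DisjointAppend$ are constraint bindings; and the enlarged state is well-kinded by \TirName{K-StMerge}, using Lemma~\ref{lem:weakening} to lift $\PVGRHasKind{\Ctx}{\St}{\KSt}$ into the larger context, \TirName{K-StChan} (with \TirName{K-Dual} for $\StBind{\TVar'}{\Dual\Ses}$) for the two new channel entries, and the disjointness facts supplied by $\DisjointAppend$. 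I expect this state-rekinding step, and more generally making the $\IsOuter$ and well-kindedness invariants thread cleanly through every constructor of $\CCtxSym$, to be the main obstacle; the remaining work is routine rule inversion and reassembly.
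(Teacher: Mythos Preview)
Your approach is essentially the paper's: the paper's entire proof is the single line ``By induction on $\PVGRHasConfType{\Ctx}{\St}{\CCtx\Exp}$,'' and your induction on $\CCtxSym$ is the same argument (the configuration typing rules are syntax-directed, so the two inductions coincide), just fleshed out with the obvious case analysis and the bookkeeping for $\IsOuter$ and the state kinding.

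One small point worth flagging: in the base case you correctly observe that inversion of \TirName{T-Exp} yields $\PVGRHasExpType{\Ctx}{\St}{\Exp}{\Ctx''}{\Empty}{\Typ}$ with a possibly nonempty existential context $\Ctx''$, yet the lemma as stated asks for $\Empty$ in that position. This is a wrinkle in the lemma statement rather than in your argument; the only consumer of the lemma is progress for expressions (Lemma~\ref{lem:progress-congruence}), which is happy with any output context, so the discrepancy is harmless. But your sentence ``which is exactly the shape required'' overstates the match; you should either note the mismatch or silently strengthen the conclusion to allow an arbitrary output context.
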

\begin{proof}
  By induction on $\PVGRHasConfType{\Ctx}{\St}{\CCtx\Exp}$.
\end{proof}

\begin{restatable}[Canonical forms]{lemma}{CanonicalForms}\label{lem:canonical-forms}
  Suppose that $\PVGRHasValType{\Ctx}\Val\Typ$ and $\IsOuter\Ctx$.
  \begin{itemize}
  \item If $\Typ$ is ${\TArr{\St}{\Typ}{\Ctx'}{\St'}{\Typ'}}$, then $\Val$ is $\EAbs{\St}\EVar{\Typ}\Exp$, for some $\Exp$.
  \item If $\Typ$ is $\TPair{\Typ_1}{\Typ_2}$, then $\Val$ is $\EPair {\Val_1} {\Val_2}$, for some $\Val_1$ and $\Val_2$.
  \item If $\Typ$ is ${\TAll\TVar\Kind{\Cstr}\Typ}$, then $\Val$ is $\ETAbs {\TVar} {\Kind} {\Cstr} {\Val_1}$, for some $\Val_1$.
  \item If $\Typ$ is $\TUnit$, then $\Val$ is $\EUnit$.
  \item If $\Typ$ is $\TChan\POLYDom$, then $\Val$ is
    $\EChan\POLYDom$, for some $\POLYDom$.
  \item If $\Typ$ is $\TAccessPoint\Ses$, then $\Val$ is $\EVar$, for
    some $\EVar \in \Dom{\Ctx}$.
  \end{itemize}
\end{restatable}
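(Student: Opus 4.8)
The plan is to proceed by case analysis on the last rule of the derivation of $\PVGRHasValType\Ctx\Val\Typ$, equivalently on the head constructor of $\Val$. The crucial structural observation I would rely on is that the value typing judgment (Figure~\ref{fig:polyvgr-val-typing}) contains no conversion rule --- conversion is invoked only in the expression rules \TirName{T-Send} and \TirName{T-TApp} --- so each value former receives a type whose top-level constructor is uniquely determined, and the standard inversion lemmas apply without complication. Concretely, \TirName{T-Abs} produces only arrow types $\TArr{\St_1}{\Typ_1}{\Ctx_2}{\St_2}{\Typ_2}$, \TirName{T-Pair} only products $\TPair{\Typ_1}{\Typ_2}$, \TirName{T-TAbs} only quantified types $\TAll\TVar\Kind\Cstr\Typ$, \TirName{T-Unit} only $\TUnit$, and \TirName{T-Chan} only channel types $\TChan\POLYDom$; none of these rules can produce an access-point type $\TAccessPoint\Ses$.

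The only rule applicable uniformly to all of the listed type shapes is \TirName{T-Var}, which assigns to a variable $\EVar$ the type recorded for it in $\Ctx$. Here the hypothesis $\IsOuter\Ctx$ is what does the work: by the definition of $\IsOuter{\cdot}$, every value-level binding in $\Ctx$ has the form $\EVar : \TAccessPoint\Ses$ for some session type $\Ses$. Hence, were the last rule \TirName{T-Var} with $\Typ$ of one of the first five shapes, I would obtain $\Typ = \TAccessPoint\Ses$, contradicting the assumed shape of $\Typ$; this eliminates \TirName{T-Var} in those five cases and leaves exactly one possible rule, which directly yields the claimed form of $\Val$, with the existentially quantified subterms read off from its premises. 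In the remaining case $\Typ = \TAccessPoint\Ses$, no rule other than \TirName{T-Var} has a conclusion of this shape, so $\Val = \EVar$ with $(\EVar : \TAccessPoint\Ses) \in \Ctx$, and in particular $\EVar \in \Dom\Ctx$.

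Since the value typing rules never relate two distinct type constructors, a single pass of case analysis suffices and no induction on the derivation is required --- the recursive premises of \TirName{T-Pair} are not inspected. I expect the only point needing care to be the interaction with $\IsOuter\Ctx$ just described, i.e.\ recording the precise content of the $\IsOuter{\cdot}$ definition and invoking it to prevent a variable from masquerading as, say, a $\lambda$-abstraction; everything else is immediate by inversion.
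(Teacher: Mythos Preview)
Your proposal is correct and matches the paper's approach exactly: the paper states only ``By inversion of the value typing judgment $\PVGRHasValType{\Ctx}\Val\Typ$,'' and what you have written is precisely the unpacking of that inversion, including the key use of $\IsOuter\Ctx$ to rule out \TirName{T-Var} in the first five cases. Your explicit note that value typing has no conversion rule is the crucial observation that makes the inversion syntax-directed.
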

\begin{proof}
  By inversion of the value typing judgment $\PVGRHasValType{\Ctx}\Val\Typ$.
\end{proof}
\ProgressLemma*
\begin{proof}
  The proof is by induction on the expression $\Exp$.

  \textbf{Case }$\Val$: $\IsValue\Val$ holds.

  \textbf{Case }$\ELet\EVar{\Exp_1}{\Exp_2}$: By the IH for $\Exp_1$ we have three cases:
  \begin{itemize}
  \item if $\IsValue{\Exp_1}$, then $\ELet\EVar{\Exp_1}{\Exp_2} \ReducesToE \Subst{\Exp_1}{\EVar} \Exp_2$ by \TirName{ER-BetaLet};
  \item if $\IsComm{\Exp_1}$, then $\IsComm{(\ELet\EVar{\Exp_1}{\Exp_2})}$;
  \item if $\Exp_1 \ReducesToE \Exp_1'$, then the $\Terminal{let}$ reduces, too.
  \end{itemize}

  \textbf{Case }$\EApp{\Val_1}{\Val_2}$:
  Inversion of $\PVGRHasExpType{\Ctx}{\St}{\EApp{\Val_1}{\Val_2}}{\Ctx'}{\St'}{\Typ'}$ yields
  \begin{gather}
    \PVGRHasValType
    {\Ctx}
    {\Val_1}
    {\TArr{\St}{\Typ}{\Ctx'}{\St'}{\Typ'}}
    \\
    \PVGRHasValType
    {\Ctx}
    {\Val_2}
    {\Typ}
  \end{gather}
  By Lemma~\ref{lem:canonical-forms}, $\Val_1 = \EAbs{\St}\EVar{\Typ}{\Exp_1}$, for some $\Exp_1$. Hence, $\EApp{\Val_1}{\Val_2} \ReducesToE$  by \TirName{ER-BetaFun}.

  \textbf{Case }$\EProj \Label \Val$:
  Inversion of $\PVGRHasExpType{\Ctx}{\St}{\EProj \Label \Val}{\Ctx'}{\St'}{\Typ'}$ yields
  \begin{gather}
    \PVGRHasValType{\Ctx}{\Val}{\TPair{\Typ_1}{\Typ_2}}
  \end{gather}
  By Lemma~\ref{lem:canonical-forms}, $\Val = \EPair {\Val_1} {\Val_2}$, for some $\Val_1$ and $\Val_2$. Hence, $\EProj \Label \Val \ReducesToE $ by \TirName{ER-BetaPair}.

  \textbf{Case }$\ETApp\Val{\Typ''} $:
  Inversion of $\PVGRHasExpType{\Ctx}{\St}{\ETApp\Val{\Typ''}}{\Ctx'}{\St'}{\Typ'}$ yields
  \begin{gather}
    \PVGRHasValType
      {\Ctx}
      {\Val}
      {\TAll\TVar\Kind{\Cstr}\Typ}
    \\
    \PVGRHasKind
      {\Ctx}
      {\Typ''}
      {\Kind}
    \\
    \PVGRCstrEntail
      {\Ctx}
      {\Subst{\Typ''}\TVar{\Cstr}}
    \\
    \PVGRTypeConv
      {\Subst{\Typ''}\TVar\Typ}
      {\Typ'}
  \end{gather}
  By Lemma~\ref{lem:canonical-forms}, $\Val$ is $\ETAbs {\TVar} {\Kind} {\Cstr} {\Val_1}$, for some $\Val_1$. Hence, $\ETApp\Val{\Typ''} \ReducesToE $ by \TirName{ER-BetaAll}.

  \textbf{Case }$\EFork\Val $: Inversion of
  $\PVGRHasExpType{\Ctx}{\St}{\EFork\Val}{\Ctx'}{\St'}{\Typ'}$ yields
  $\Ctx' = \Empty$, $\St' = \Empty$, $\Typ' = \TUnit$, and
  \begin{gather}
    \PVGRHasValType
    {\Ctx}
    {\Val}
    {\TArr{\St}{\TUnit}{\cdot}{\cdot}{\TUnit}}
  \end{gather}
  By Lemma~\ref{lem:canonical-forms}, $\Val$ is
  $\EAbs{\St}\EVar{\TUnit}{\Exp_1}$, hence
  $\IsComm{(\EFork\Val)}$.

  \textbf{Case }$\ENew\Ses $: Inversion of
  $\PVGRHasExpType{\Ctx}{\St}{\ENew\Ses}{\Ctx'}{\St'}{\Typ'}$ yields
  $\St = \Empty$,  $\Ctx' = \Empty$, $\St' = \Empty$, and $\Typ' =
  \Ap\Ses$.
  Hence $\IsComm{(\ENew\Ses)}$.

  \textbf{Case }$\EAccept\Val $: Inversion of
  $\PVGRHasExpType{\Ctx}{\St}{\EAccept\Val}{\Ctx'}{\St'}{\Typ'}$
  yields
  \begin{gather}
    \PVGRHasValType{\Ctx}{\Val}{\TAccessPoint\Ses}
  \end{gather}
  By Lemma~\ref{lem:canonical-forms}, $\Val$ is
  $\EVar$, hence
  $\IsComm{(\EAccept\Val)}$.

  \textbf{Case }$\ERequest\Val $: by similar reasoning, $\Val = \EVar$
  and $\IsComm{(\ERequest\Val)}$.

  \textbf{Case }$\ESend{\Val_1}{\Val_2} $: Inversion of
  $\PVGRHasExpType{\Ctx}{\St}{\ESend{\Val_1}{\Val_2}}{\Ctx'}{\St'}{\Typ'}$
  yields
  \begin{itemize}
  \item $\St = {\St_1,
      \StBind\POLYDom{\SSend{\TVar'}{\KDomain\POLYShape}{\St'}{\Typ''}\Ses}}$,
  \item $\Ctx' = \Empty$,
  \item  $\St' = {\StBind\POLYDom\Ses}$,
  \item $\Typ' = {\TUnit}$, and
  \end{itemize}
  \begin{gather}
    \PVGRHasKind \Ctx {\POLYDom'} {\KDomain\POLYShape} \\
    \PVGRTypeConv {\Subst{\POLYDom'}{\TVar'}{\St'}} {\St_1} \\
    \PVGRTypeConv {\Subst{\POLYDom'}{\TVar'}{\Typ''}} \Typ \\
    \PVGRHasKind \Ctx {\POLYDom} {\KDomain\TShapeOne} \\
    \PVGRHasValType{\Ctx}{\Val_1}{\Typ} \\
    \PVGRHasValType{\Ctx}{\Val_2}{\TChan \POLYDom}
  \end{gather}
  By Lemma~\ref{lem:canonical-forms}, $\Val_2$ is
  $\EChan\POLYDom$, hence
  $\IsComm{(\ESend{\Val_1}{\Val_2})}$.

  \textbf{Case }$\ERecv\Val $: by similar reasoning as in the previous
  case, $\Val = \EChan\POLYDom$ and $\IsComm{(\ERecv\Val)}$.

  \textbf{Case }$\ESelect\Label\Val $: by similar reasoning, $\Val =
  \EChan\POLYDom$ and $\IsComm{(\ESelect\Label\Val)}$.

  \textbf{Case }$\ECase\Val{\Exp_1}{\Exp_2} $:  by similar reasoning, $\Val =
  \EChan\POLYDom$ and $\IsComm{(\ECase\Val{\Exp_1}{\Exp_2})}$.

  \textbf{Case }$\EClose\Val $:  by similar reasoning, $\Val =
  \EChan\POLYDom$ and $\IsComm{(\EClose\Val)}$.
\end{proof}

\ProgressConfigurations*
\begin{proof}
  Suppose that $\neg\IsFinal\Cfg$ and $\neg\IsDeadlock\Cfg$. Hence,
  one of the three items in Definition~\ref{def:is-deadlock} must be
  violated and we show that $\Cfg$ reduces in each case.

  \textbf{Suppose item~\ref{item:1} is violated.}
  Hence, there is some $\CCtxSym$ such that $\Cfg = \CCtx\Exp$ and
  $\neg\IsValue\Exp$ and $\neg\IsComm\Exp$ or $\Exp =
  \ECtx{\EFork\Val}$ or $\Exp = \ECtx{\ENew\Ses}$, for some
  $\ECtxSym$, $\Val$, $\Ses$. 

  If $\Exp = \ECtx{\EFork\Val}$, then $\Val =
  \EAbs{\St}\EVar{\TUnit}{\Exp_1}$ and $\Cfg$ reduces as follows
  \begin{gather*}
    \CCtx{\ECtx{\EFork{\EAbs{\St}\EVar{\TUnit}{\Exp_1}}}}
    \ReducesToC
    \CCtx{\CPar{\ECtx{\EUnit}}{\EApp{(\EAbs{\St}\EVar{\TUnit}{\Exp_1})}\EUnit}}
  \end{gather*}

  If $\Exp = \ECtx{\ENew\Ses}$, then $\Cfg$ reduces by \TirName{CR-New}.

  Otherwise, by Lemma~\ref{lem:progress-congruence} (which is
  applicable because of context inversion, Lemma~\ref{lem:context-inversion}), there exists some
  $\Exp'$ such that  $\Exp\ReducesToE\Exp'$. Hence, $\CCtx\Exp
  \ReducesToC \CCtx{\Exp'}$.

  \textbf{Suppose item~\ref{item:2} is violated.}
  That is, there are configuration and evaluation contexts $\CCtxSym$, $\CCtxSym_1$,
  $\CCtxSym_2$, $\ECtxSym_1$, and $\ECtxSym_2$ such that $\Cfg =
  \CCtx{\CBindAP\EVar\Ses\Cfg'}$ and
  $\Cfg' = \CCtxA{\ECtxA{\ERequest\EVar}}$ and
  $\Cfg' = \CCtxB{\ECtxB{\EAccept\EVar}}$.
  Exploiting congruence we can find a configuration context
  $\CCtxSym'$ and process $\Cfg''$ such that
  $\Cfg \Cong
  \CCtxSym'[\CBindAP\EVar\Ses{\CPar{\CPar{\ECtxA{\ERequest\EVar}}{\ECtxB{\EAccept\EVar}}}{\Cfg''}}]$,
  which reduces by \TirName{CR-RequestAccept}.

  \textbf{Suppose item~\ref{item:3} is violated.}
  Consider the case for $\ESend\_\_$ and $\ERecv$.
  That is, there are configuration and evaluation contexts $\CCtxSym$, $\CCtxSym_1$,
  $\CCtxSym_2$, $\ECtxSym_1$, and $\ECtxSym_2$ such that $\Cfg =
  \CCtx{\CBindChan{\TVar_1}{\TVar_2}\Ses \Cfg'}$ and
  $\Cfg' = \CCtxA{\ECtxA{\ESend\Val{\EChan{\TVar_\Label}}}}$ and
  $\Cfg' = \CCtxB{\ECtxB{\ERecv{\EChan{\TVar_{3-\Label}}}}}$.
  Exploiting congruence we can find a configuration context
  $\CCtxSym'$ and process $\Cfg''$ such that
  \begin{gather*}
    \Cfg \Cong
    \CCtxSym'[\CBindChan{\TVar_1}{\TVar_2}\Ses{\CPar{\CPar{\ECtxA{\ESend\Val{\EChan{\TVar_\Label}}}}{\ECtxB{\ERecv{\EChan{\TVar_{3-\Label}}}}}}{\Cfg''}}]   
  \end{gather*}
  which reduces by \TirName{CR-SendRecv}.

  The remaining cases are similar.

\end{proof}
\end{document}